\pdfoutput=1
\documentclass[
  aps,
  prx,
  reprint,
  amsmath,
  amssymb,
  superscriptaddress,
  nofootinbib,
  noeprint,
  longbibliography,
  floatfix,
  onecolumn,
  nobibnotes
]{revtex4-2}

\usepackage{xprintlen}
\usepackage[T1]{fontenc}
\usepackage{times}
\usepackage{graphicx}
\usepackage[table,usenames,dvipsnames]{xcolor}
\usepackage{multirow}

\usepackage{hyperref}
\hypersetup{
  colorlinks=true,
  hypertexnames=false,
  linktocpage,
  colorlinks=true, 
  urlcolor=magenta!90!black,    
  linkcolor=blue!60!black, 
  citecolor=black!60 
}

\usepackage{physics,siunitx}
\usepackage{bbm,mathtools}
\usepackage{amsthm,amssymb,amsmath,mathrsfs}
\usepackage{xfrac} 
\usepackage{MnSymbol} 
\usepackage{youngtab}
\usepackage{empheq}

\usepackage{booktabs}
\usepackage{enumitem}
\usepackage{placeins} 

\usepackage[capitalize,compress]{cleveref}

\usepackage{complexity}
\usepackage{pgfplots}
\usepackage{tikz}
\usepackage{orcidlink}
\usepackage{pgffor}
\usepackage[ruled,vlined,linesnumbered]{algorithm2e}
\SetKwInput{KwInit}{Initial Input}
\SetKwInput{KwQ}{Query}
\SetKwProg{Def}{def}{}{}

\usepackage{bm}

\pgfplotsset{compat=1.18}
\usepgfplotslibrary{
  fillbetween,
  }

\usetikzlibrary{
  pgfplots.groupplots,
  arrows.meta,
  backgrounds,
  chains,
  decorations.pathreplacing,
  decorations.pathmorphing,
  fit,
  intersections,
  through,
  positioning,
  shapes,
  shapes.geometric,
  shapes.arrows,
  calc,
  3d
  }

\makeatletter 
 \hypersetup{
  pdftitle = {Learning Noise-Robust Stabilizer Structure via Bell Sampling},
  pdfauthor = {Letian Tang, Thomas Steckmann},
  pdfkeywords = {
    Bell sampling, 
    stabilizer nullity, 
    magic, 
    Pauli noise, 
    stabilizer learning
    }
  }
\makeatother

\usepackage{xprintlen}
\usepackage{mathtools}

\newclass{\stoqma}{StoqMA}
\newclass{\classP}{P}
\newclass{\bqp}{BQP}
\newclass{\qcam}{QCAM}
\newclass{\postbqp}{postBQP}
\newclass{\posta}{postA}
\newclass{\postiqp}{postIQP}
\newclass{\classa}{A}
\newclass{\bpp}{BPP}
\newclass{\fbpp}{FBPP}
\newclass{\pp}{PP}
\newclass{\cocp}{coC_=P}
\newclass{\ph}{PH}
\newclass{\np}{NP}
\newclass{\conp}{coNP}
\newclass{\gapp}{GapP}
\newclass{\approxclass}{Apx}
\newclass{\gapclass}{Gap}
\newclass{\sharpP}{\#P}
\newclass{\ma}{MA}
\newclass{\am}{AM}
\newclass{\qma}{QMA}

\newclass{\hog}{HOG}
\newclass{\quath}{QUATH}
\newclass{\bog}{BOG}
\newclass{\xeb}{XEB}
\newclass{\xhog}{XHOG}
\newclass{\xquath}{XQUATH}
\newclass{\maxcut}{MAXCUT}
\newclass{\sat}{SAT}
\newclass{\maxtwosat}{MAX2SAT}
\newclass{\twosat}{2SAT}
\newclass{\threesat}{3SAT}
\newclass{\sharpsat}{\#SAT}
\newclass{\se}{Sign Easing}
\newclass{\classx}{X}

\newtheorem{theorem}{Theorem}
\newtheorem{proposition}[theorem]{Proposition}
\newtheorem{definition}[theorem]{Definition}
\newtheorem{lemma}[theorem]{Lemma}
\newtheorem{problem}[theorem]{Problem}

\newtheorem{assumption}[theorem]{Assumption}
\newtheorem{observation}[theorem]{Observation}
\newtheorem{notation}[theorem]{Notation}
\newtheorem{remark}[theorem]{Remark}

\newtheorem{corollary}[theorem]{Corollary}

\DeclareMathOperator{\supp}{supp}

\DeclareMathOperator{\spn}{span}
\DeclareMathOperator{\rad}{rad}

\newcommand{\ceil}[1]{\lceil #1 \rceil}

\newcommand{\strictFloor}[1]{\lfloor #1 \rfloor_<}

\newcommand{\proj}[1]{\ket{#1}\bra{#1}}

\DeclarePairedDelimiterX{\infdivx}[2]{(}{)}{%
  #1\;\delimsize\|\;#2%
}

\numberwithin{theorem}{section} 

\newcommand{\restatedname}{}
\newtheorem*{restatedthm}{\restatedname}
\newenvironment{restated}[1]{\renewcommand{\restatedname}{#1 (restated)}\begin{restatedthm}}{\end{restatedthm}}

\begin{document}

\title{Learning Noise-Robust Stabilizer Structure via Bell Sampling}

\author{Letian Tang\,\orcidlink{0000-0001-8882-9187}}
\affiliation{Joint Center for Quantum Information and Computer Science,
University of Maryland and NIST, College Park, Maryland 20742, USA}
\affiliation{Department of Physics,
University of Maryland, College Park, Maryland 20742, USA}
\affiliation{Department of Physics, Carnegie Mellon University, Pittsburgh, PA 15213, United States}
\affiliation{Department of Computer Science, Carnegie Mellon University, Pittsburgh, PA 15213, United States}
\author{Thomas Steckmann\,\orcidlink{0000-0001-6012-2948}}
\affiliation{Joint Center for Quantum Information and Computer Science,
University of Maryland and NIST, College Park, Maryland 20742, USA}
\affiliation{Joint Quantum Institute,
University of Maryland and NIST, College Park, Maryland 20742, USA}

\date{\today}

\begin{abstract}
Bell sampling efficiently learns the stabilizer group, and hence the stabilizer nullity, of pure states with little magic, a primitive underlying tomography and certification of low-magic states. Real devices, however, only prepare noisy images of the target, which we model as a Pauli channel acting on the ideal state; the resulting noisy state generically has no exact Pauli stabilizers. Existing structure learners for mixed states rely on such exact stabilizers. Agnostic tomography does not, but its guarantees scale exponentially in the nullity $t$. We show that the ideal stabilizer group remains encoded in the noisy state as the set of Pauli operators under which it is invariant. We give Bell-sampling algorithms that learn the group to accuracy $\varepsilon$ for states of nullity at most $t$. The difficulty is that noisy samples break the group structure pure-state learners rely on, so more samples can erase structure rather than reveal it. We resolve this with a finder--verifier scheme that extracts stabilizer structure from partially corrupted sample batches while certifying proposed updates. For sufficiently weak global depolarizing noise, both samples and runtime are $\mathrm{poly}(n,t,1/\varepsilon)$ when the channel's total error probability is at most of order $\varepsilon n/t$. Up to a constant total error probability ($\approx 0.16$), samples remain polynomial while classical processing costs $2^t$, and for any error probability below one a $4^t$ overhead suffices. Arbitrary Pauli channels exhibit the same three regimes, with more restrictive thresholds. For weak Pauli noise, this closes much of the gap between agnostic learners such as stabilizer bootstrapping and polynomial pure-state algorithms.
\end{abstract}

\maketitle

\section{Introduction}\label{eq:introduction}

Magic, or non-stabilizerness, describes how far a quantum state lies from the set of stabilizer states~\cite{veitch2014}, and it can be quantified by measures such as the stabilizer R\'enyi entropy that are estimable from Bell measurements~\cite{leone2022a, haug2023}. Stabilizer states are classically tractable in two senses: they can be efficiently simulated by the Gottesman--Knill theorem~\cite{gottesman1998,aaronson2004a} and they can be learned efficiently from $O(n)$ copies by Bell sampling~\cite{montanaro2017}. Both results extend to states with little magic, through simulation and learning algorithms whose cost is polynomial in the number of qubits and exponential only in a magic measure such as the stabilizer nullity~\cite{bravyi2016,bravyi2019, grewal2025, hangleiter2024}. Magic is also the resource that fault-tolerant quantum computation must account for. The Eastin--Knill theorem rules out a transversal universal gate set under standard assumptions~\cite{eastin2009,zeng2007,bravyi2013}, and for many codes of practical interest the gates that must be supplied non-transversally are precisely the non-Clifford gates, which are typically implemented by consuming distilled magic states~\cite{beverland2022, bravyi2005}. Measures of magic therefore quantify both the cost of preparing a state fault-tolerantly~\cite{beverland2020,howard2017} and the cost of simulating or learning it classically.

A growing body of work has sought learning-theoretic handles on magic~\cite{anshu2024}. Given copies of an unknown state, how efficiently can one exploit limited magic to learn the state itself? 
Work on these questions began with exact stabilizer states, which can be tested and learned efficiently from Bell measurements~\cite{gross2021, montanaro2017, chang2026}, and has since extended to progressively broader classes: states of low stabilizer nullity or prepared with few $T$ gates~\cite{leone2024, grewal2025, hangleiter2024, cho2026}, states of bounded stabilizer rank or extent~\cite{arunachalam2025, arunachalam2026}, Clifford-scrambled product states~\cite{haug2026scrambled, aguilar2026}, and, under query access, the corresponding few-$T$ circuits and Clifford-structured unitaries~\cite{lai2022, dutt2026}. In most of these works, learning the stabilizer structure (the stabilizer group, or a Clifford that compresses the magic onto few qubits) is the first step toward full tomography of a pure input. A smaller line of work treats arbitrary inputs, either by agnostic tomography, which returns the best approximation within a class~\cite{chen2025, grewal2026agnostic, bakshi2025, qi2026}, or by tolerant testing and witnessing of magic in mixed states~\cite{iyer2024,haug2025witness}.

These questions are no longer purely theoretical~\cite{miller2026}. In early fault-tolerant devices~\cite{bluvstein2024, salesrodriguez2025} and in the verification of logical state preparation, the logical states of interest are typically low-magic states for which existing learning algorithms are formulated, and certifying their non-Clifford content is a natural diagnostic~\cite{leitch2026, oliviero2022}. But these pure states are not experimentally accessible; instead, only noisy access is ever possible. This work studies a basic question in that practical setting: how efficiently can the stabilizer structure of a pure state be learned from measurements of its noisy image?

A natural, albeit fragile, measure of magic is the stabilizer nullity---defined by the stabilizer group of an $n$-qubit pure state $\ket{\psi}$, with density matrix $\psi = \proj{\psi}$,
\begin{align}
  \mathrm{Stab}(\psi) &= \{\sigma_i \in \mathcal{P}^n : \sigma_i\ket{\psi} = \pm\ket{\psi}\},\nonumber\\
  \qquad
  \nu(\psi) &= n - \dim \left(\mathrm{Stab}(\psi)\right).
\end{align}
Here, $\mathcal{P}^n$ is the $n$-qubit Pauli group up to phases, which we identify with $\mathbb{F}_2^{2n}$ through the binary symplectic representation; $\mathrm{Stab}(\psi)$ is then a subspace, and $\dim$ denotes its $\mathbb{F}_2$-dimension. The nullity $\nu$ is defined such that it is a magic monotone, i.e., invariant under Clifford operations and increasing with increasing magic. A single $T$ gate changes the nullity by at most one, so $\nu$ is a lower bound on the number of $T$ gates used to prepare $\ket{\psi}$. Throughout, $t$ denotes a supplied upper bound $\nu\leqslant t$; the $T$-count of any preparation circuit provides one~\cite{beverland2020}. Additionally, stabilizer nullity is equivalent to the notion of $\nu$-compressibility: there exists some Clifford $C$ that compresses the state to $\nu$ qubits, such that $C|\psi\rangle = |\phi\rangle \otimes |0\rangle^{\otimes (n-\nu)}$. This property is used in tomography of low-magic states, because once $\nu$ and $C$ are known, full state tomography need only be performed on $\nu$ qubits of the full $n$-qubit system~\cite{grewal2025, hangleiter2024, chia2024}.

Both the stabilizer group and the algorithms that learn it are most naturally phrased through the Pauli expansion of the state. Writing an $n$-qubit state as $\rho=\sum_{i=1}^{4^n} a_i\sigma_i$, its \textit{Pauli support} is the subspace spanned by the Paulis with nonzero coefficients,\footnote{Here the Pauli support is the span of the nonzero labels, not the set of labels itself.}
\begin{equation}
\supp(\rho) = \spn\{\sigma_i \in \mathcal{P}^n: a_i \neq 0\}.
\label{eq:pauli-support}
\end{equation}
For a pure state, the stabilizer group corresponds to the centralizer of the Pauli support,
\begin{equation}
\mathrm{Stab}(\psi)= C\left(\supp \psi\right),
\label{eq:stab-centralizer}
\end{equation}
so learning $\mathrm{Stab}(\psi)$ amounts to learning the subspace $\supp(\psi)$.

Stabilizer nullity is a fragile measure because exact stabilizer structure is discontinuous. An arbitrarily small rotation $R_Z(\theta)\ket{+}$ removes the stabilizer $X$ of $\ket{+}$, while moving the state by only $O(\theta)$ in trace distance. With finitely many copies, stabilizer structure can therefore be resolved only up to a finite accuracy. We make this precise in Sec.~\ref{sec:approx-support}.

Learning the stabilizer group is the natural first step in learning a low-magic state: its dimension certifies the nullity, and once it is known, full tomography reduces to a $\nu$-qubit problem. Montanaro first showed that transversal two-copy measurement in the Bell basis---Bell sampling---exposes the subgroup structure of stabilizer states, and that learning the sign-free description of the stabilizer group is efficient in $n$ in both sample and time complexity, requiring $O(n)$ samples~\cite{montanaro2017}. A sequence of works~\cite{grewal2025, hangleiter2024} extended Bell sampling beyond exact stabilizer states, independently arriving at algorithms that learn the approximate stabilizer structure of $t$-doped or $\nu$-compressible states using $\mathrm{poly}(n,\nu)$ samples and time at fixed accuracy.

The underlying mechanism is as follows. A Bell sample, the outcome of measuring two copies of $\ket{\psi}$ in the Bell basis, is a length-$2n$ bitstring encoding a Pauli operator in $\supp(\psi)$ shifted by an unknown hidden Pauli $\sigma_k$ (Proposition~\ref{prop:hidden-shift}). The bitwise XOR of two Bell samples, a \textit{Bell difference sample}, cancels the shift and lies entirely in $\supp(\psi)$. The span of sufficiently many difference samples therefore reproduces the support with high probability, and its centralizer, which is efficiently computable, is the stabilizer group. Components of the support that are sampled only rarely may never be observed, so these algorithms naturally recover the stabilizer structure only approximately. The same group-theoretic structure underlies a broader view of stabilizer learning as a hidden subgroup problem~\cite{hinsche2026}.

Beyond pure inputs, even the learning problem itself is less settled. The most developed line is agnostic tomography, which accepts an arbitrary mixed input and returns the low-magic state of nearly largest fidelity with it~\cite{chen2025, grewal2026agnostic}. Tolerant testers likewise accept mixed inputs, but decide whether the state is close to a stabilizer state rather than recovering one~\cite{iyer2024}. Structure learners for mixed states, by contrast, have so far required a strong promise: that a subgroup of Pauli operators stabilizes the observed state exactly, $\sigma_i\rho = \pm\rho$, or equivalently has expectation values of unit magnitude~\cite{grewal2025, hinsche2026, cho2026}.

The stabilizer group admits a second characterization that makes the effect of noise transparent. We call a Pauli $\sigma_i$ a \emph{strong} symmetry of a state $\rho$ if $\sigma_i\rho = \pm\rho$, and a \emph{weak} symmetry if it commutes with $\rho$, i.e., $\sigma_i\rho\sigma_i^\dagger = \rho$. The strong symmetries form the usual stabilizer group; the weak symmetries form a group containing it, which we call the \emph{weak stabilizer} of $\rho$. Conjugation by $\sigma_i$ flips the sign of exactly those Pauli operators that anticommute with it, so the weak stabilizer consists of the Paulis commuting with the entire support of $\rho$:
\begin{equation}
    S \equiv \{\sigma_i \in \mathcal{P}^n : [\sigma_i, \rho] = 0\} = C(\supp\rho).
    \label{eq:weak-symmetry}
\end{equation}
For a pure state the two notions coincide: $\sigma_i\psi\sigma_i^\dagger=\psi$ implies $\sigma_i\ket{\psi}\propto\ket{\psi}$, so $S=\mathrm{Stab}(\psi)$, as in Eq.~\eqref{eq:stab-centralizer}. Noise separates them. Even arbitrarily weak depolarizing noise makes $\rho=\mathcal{E}(\psi)$ full rank, and a full-rank state has no strong Pauli symmetries other than the identity; the stabilizer group, and the learning algorithms built on it, therefore no longer apply directly.

The weak symmetries, however, survive. A Pauli channel rescales each Pauli coefficient of the state, so it can shrink the support but never enlarge it, and hence $\mathrm{Stab}(\psi) \subseteq C(\supp\rho)$. When the total error probability $\lambda$ of the channel is below $1/2$, no coefficient is rescaled to zero (Sec.~\ref{sec:noise-model}) and the inclusion becomes an equality. For global depolarizing noise, this holds for any $\lambda<1$. Our target is thus the stabilizer group of the underlying pure state, which for such support-preserving channels is the weak stabilizer of the noisy state:
\begin{equation}
    \mathrm{Stab}(\psi) = S.
\end{equation}
Hinsche, Eisert, and Carrasco distinguish this weak invariance from the strong, unit-expectation condition required by their Fourier-sampling learner~\cite{hinsche2026}; here we target the weak stabilizer directly. This allows us to naturally generalize the notion of the stabilizer nullity by instead considering the dimension of the weak stabilizer,
\begin{equation}
\nu \equiv n - \dim\left(C(\supp\rho)\right)  = \nu(\psi).
\end{equation}
This quantity will be the central focus of our learning algorithm, as it allows one to learn, despite noise, an approximate weak stabilizer: an unsigned group containing the original stabilizer group, with nullity $\tilde\nu\leqslant\nu$.

Learning $C(\supp\rho)$ from Bell samples is not, however, a direct application of the pure-state algorithms. As noisy sampling produces erroneous difference samples lying outside the support---and these do not respect the group structure of the support---the pure-state stabilizers need not commute with them. Let $B$ be the span of a batch of $M$ noisy difference samples. Under nonzero global depolarizing noise, this noisy group converges to the full Pauli group as $M\rightarrow \infty$, with a trivial centralizer. Yet for sufficiently low noise and a carefully chosen batch size $M$, $B$ still retains useful information about $C(\supp\rho)$. The central observation of this work is that a partially corrupted batch still carries some of this structure, because each bad sample can remove only one target direction from the radical of the span. Verified directions can then be used to post-select later batches, which under depolarizing noise suppresses the bad-sample rate exponentially in the number of accepted directions. Making this rigorous requires a verifier that certifies every proposed update from one reusable set of Bell samples, and a bound connecting the purity the verifier certifies to the samples the finder observes.

\section{Summary of Results}\label{sec:summary-of-results}
We give an algorithm that learns an approximate weak stabilizer $\tilde S$, and with it the nullity $\tilde\nu=n-\dim\tilde S\leqslant\nu$, from noisy Bell samples. Our main case is global depolarizing noise. The algorithm alternates between a finder, which proposes updates to a working group from the span of Bell difference samples, and a verifier, which accepts an update only if it costs little purity. Depending on the noise strength $\lambda$ and the accuracy $\varepsilon$, the finder can use the whole span, part of it, or single samples, which gives three regimes with different sample and runtime costs in $n$ and $t$.

\subsection{Problem Setup}
Before discussing the algorithm in depth, we formalize the notation used throughout this manuscript and define the relevant scale quantities: the nullity $\nu$, the approximation error $\varepsilon$, and the noise strength $\lambda$.
\subsubsection{Approximate support}\label{sec:approx-support}
As discussed in Sec.~\ref{eq:introduction}, exact stabilizer structure is discontinuous: states with different stabilizer nullity can be arbitrarily close in trace distance.

This motivates an approximate learning problem. Rather than demanding exact recovery of $\mathrm{Stab}(\psi)$, we allow for additional stabilizer constraints that are compatible with the state up to accuracy $\varepsilon$. To apply this notion directly to a noisy mixed state, write
\begin{equation}
    \rho=\sum_{i\in\mathcal P^n} a_i\sigma_i,
\end{equation}
and, for a commuting Pauli group $\tilde S$ (in practice, a candidate update to the working group), define the Pauli twirl
\begin{equation}
    \mathcal T_{\tilde S}(\rho)
    \equiv \frac{1}{|\tilde S|}
    \sum_{g\in\tilde S} g\rho g
    =\sum_{i\in C(\tilde S)}a_i\sigma_i .
\end{equation}
The twirl projects $\rho$ onto the Pauli components which commute with $\tilde S$, i.e., onto its centralizer $C(\tilde S)$. We quantify the retained support using the \emph{projected purity}, defined as
\begin{equation}
    \gamma_{\tilde S}
    \equiv
    \tr\!\left[\mathcal T_{\tilde S}(\rho)^2\right]
    =
    2^n\sum_{i\in C(\tilde S)}a_i^2,
\end{equation}
and then define the corresponding purity loss function
\begin{equation}
    \Delta\gamma(\tilde S)
    \equiv
    \gamma-\gamma_{\tilde S}
    =
    2^n\sum_{i\notin C(\tilde S)}a_i^2,
    \qquad
    \gamma\equiv\tr(\rho^2).
\end{equation}

Our task is to find an \emph{$\varepsilon$-approximate weak stabilizer} $\tilde S$ satisfying
\begin{equation}
    S\subseteq\tilde S,
    \qquad
    \Delta\gamma(\tilde S)\leqslant\varepsilon .
    \label{eq:approx-support-centralizer}
\end{equation}
The first condition ensures that the learned group contains the exact algebraic structure of the target state, while the second limits how many additional constraints may be imposed: $C(\tilde S)$ must retain all but $\varepsilon$ of the purity. Equivalently, we search for a lower-nullity description of the underlying pure state $\psi$ while controlling the error incurred by that compression.

The accuracy parameter $\varepsilon$ has a direct state-space interpretation. If $\rho$ is obtained from a pure state $\psi$ by Pauli noise of total error probability $\lambda<1/2$, then for fixed commuting $\tilde S$,
\begin{equation}
    \sqrt{\frac{\Delta\gamma(\tilde S)}{2}}
    \;\leqslant\;
    \inf_{\phi:\,\tilde S\subseteq\mathrm{Stab}(\phi)}
    D_{\rm tr}(\psi,\phi)
    \;\leqslant\;
    \frac{\sqrt{\Delta\gamma(\tilde S)}}{1-2\lambda} .
    \label{eq:purity-trace-distance}
\end{equation}
Hence, for sufficiently small constant noise, $\varepsilon$ is, up to constant factors, the square of the trace-distance resolution at which stabilizer structure can be distinguished. A justification is given in Appendix~\ref{app:fourier}.

\subsubsection{Noise Model}\label{sec:noise-model}
We are given copies of $\rho = \mathcal{E}(\psi)$, where $\mathcal{E}(\rho)=\sum_Q p_Q\,Q\rho Q$ is a Pauli channel with total error probability $\lambda \equiv \sum_{Q\neq I} p_Q$. It rescales each Pauli coefficient by its fidelity,
\begin{equation}
    \mathcal{E}(\sigma_i) = f_i\,\sigma_i,
    \qquad
    f_i = \sum_Q p_Q (-1)^{\langle i, Q\rangle}
    \geqslant 1-2\lambda ,
    \label{eq:pauli-fidelities}
\end{equation}
so for $\lambda<1/2$ it preserves $\supp\rho$. Our expository case is global depolarizing noise,
\begin{align}
\mathcal{E}_\lambda (\psi)
= (1-\lambda)\,\psi + \frac{\lambda}{2^n}I,
\label{eq:depolarizing}
\end{align}
for which $f_i = 1-\lambda$ for all $\sigma_i \neq I$; its total error probability is $\lambda(1-4^{-n})\leqslant\lambda$, so bounds stated for Pauli channels apply to it with the same $\lambda$.

This noise model is salient in the early fault-tolerant regime. An encoded state in the code space of an $[[n,k,d]]$ stabilizer code has nullity at most $k$, and the code provides the stabilizer structure: every Pauli that stabilizes a code state commutes with the code's stabilizer group, which supplies $n-k$ independent stabilizers and leaves only the $k$ logical qubits to carry magic~\cite{nielsen2010,gottesmanbook}. Syndrome measurement collapses errors of weight below $d/2$, including coherent errors, into a Pauli channel on $\psi$: it decoheres any error into Pauli errors with a definite syndrome, and correctable Paulis that share a syndrome differ by a stabilizer, so they act identically on $\psi$~\cite{nielsen2010,gottesmanbook}. Other circuit operations, such as the Bell measurement, are Clifford, and their faults are Pauli under twirling~\cite{knill2005,wallman2016}. For families of unencoded Clifford$+T$ circuits and those forming unitary designs, we leave the particular extension to the Bell sampling regime for future work.

\subsubsection{Problem Definition}
We can now formalize the question explored in this work:

\begin{problem}[Main Problem]
\label{prob:main}
Given Bell-sampling access to a noisy state
\begin{equation}
\rho=\mathcal E(\psi),
\qquad
\psi=|\psi\rangle\langle\psi|,
\end{equation}
can we recover the stabilizer structure of the underlying pure state $\psi$, up to a prescribed accuracy?

\begin{description}[font=\normalfont\bfseries\itshape,leftmargin=2em]
\item[Promise]
The ideal $n$-qubit state $\psi$ has stabilizer nullity $\nu\leqslant t<n$, and the noise channel belongs to a specified class of global depolarizing or Pauli channels characterized by a noise-strength parameter $\lambda$.

\item[Input]
Copies of $\rho$, an accuracy parameter $\varepsilon>0$, and a confidence parameter $\delta\in(0,1)$.

\item[Output]
With probability at least $1-\delta$, return a commuting Pauli group $\tilde S$ that is an $\varepsilon$-approximate weak stabilizer of $\rho$, for which
\begin{equation}
    S\subseteq\tilde S,
    \qquad
    \Delta\gamma(\tilde S)
    =\gamma-\gamma_{\tilde S}
    \leqslant\varepsilon.
\end{equation}
Here, $S$ denotes the target weak stabilizer. Equivalently, $\tilde S$ defines an approximate compression of the underlying pure state $\psi$ to
\begin{equation}
    \tilde\nu=n-\dim\tilde S\leqslant\nu
\end{equation}
logical qubits, with approximation error controlled by $\varepsilon$.

\end{description}
\end{problem}

When required by the algorithm, $\lambda$ may either be supplied as part of the noise-model promise or bounded from the measurement data; in particular, for white noise it can be inferred from the purity of $\rho$.

\subsection{Main Results}
\subsubsection{Algorithm}

Our algorithm iteratively constructs an $\varepsilon$-approximate weak stabilizer by extracting the algebraic structure of the Bell difference samples and computing second-moment quantities using the Bell sample output distribution. It consists of two subroutines, shown together in Fig.~\ref{fig:pipeline}:
\begin{enumerate}
    \item A \textit{verifier} reuses one set of Bell samples to estimate the incremental purity loss, $\Delta \gamma$, associated with a candidate group $\tilde S$, with a guarantee that holds uniformly over all queries. If the candidate entails a sufficiently small $\Delta \gamma$, it is accepted to extend the current accepted group and consumes some of the allocated purity-loss budget.
    \item A \textit{finder} processes independent batches of Bell difference samples to learn useful structure from the span of each batch. It proposes candidate groups to the verifier based on this span by computing the radical (the intersection of the centralizer of the span with the span).
\end{enumerate}

\begin{figure*}[t]
\centering
\includegraphics{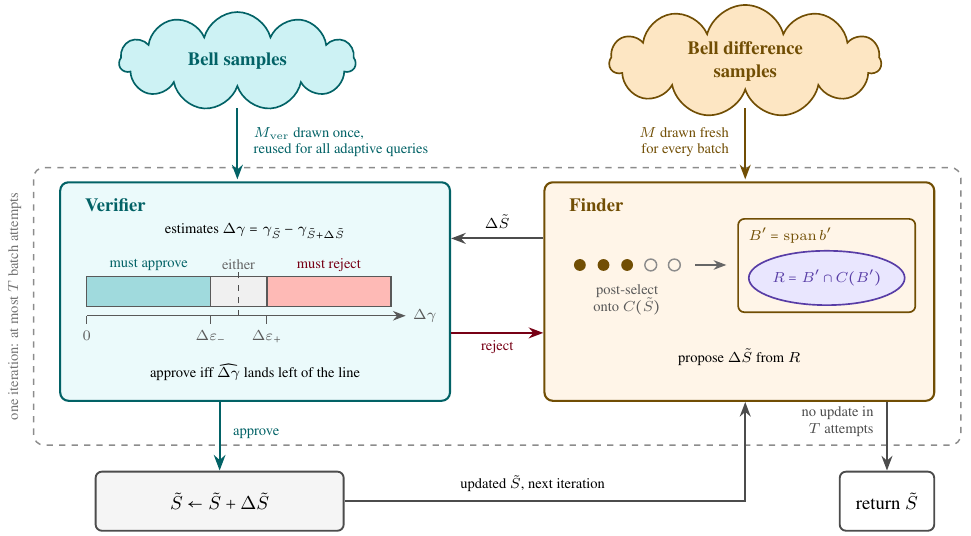}
\caption{The two subroutines and the loop they form. The verifier draws $M_{\mathrm{ver}}$ Bell samples once at initialization and reuses them for every query, so its sample cost is fixed before the finder starts; the finder draws a fresh batch of $M$ Bell difference samples each time. Each batch is post-selected onto $C(\tilde S)$, spanned, and reduced to its radical $R$, from which an increment $\Delta\tilde S$ is proposed. The verifier must approve when the true purity drop falls below $\Delta\varepsilon_-$ and must reject when it exceeds $\Delta\varepsilon_+$; between the two thresholds, which bracket the accuracy spent on a single update, either answer is allowed, and the algorithm decides by comparing its estimate $\widehat{\Delta\gamma}$ with the midpoint. An approved increment is applied and the iteration ends; otherwise the finder proposes again, for at most $T$ batches. The clean and intermediate modes of Sec.~\ref{sec:up-down} differ only in what is offered from $R$---the whole radical at once, or single elements in turn, so that in the intermediate mode several proposals may come from one batch. The loop is schematic: the final stopping checks and the noisy-regime fallback of Sec.~\ref{sec:top-down} are not shown.}
\label{fig:pipeline}
\end{figure*}
The central tension for the finder is the choice of batch size. Some Bell difference samples in a batch are \emph{good}, in the sense that they are consistent with the target weak stabilizer, while others are \emph{bad} samples introduced by noise. More samples are needed for the span to reveal stabilizer structure, but each bad sample adds a commutation constraint and can shrink the radical of the batch.

The radical degrades in a controlled way. Once the good samples in a batch span the target, each bad sample removes at most one target direction from the radical of the span, and the radical contains at most $t$ directions outside the target (Sec.~\ref{sec:bad-samples}). Under depolarizing noise, post-selecting later batches on the accepted group also suppresses bad samples: each accepted direction halves their rate, while the rate of good samples stays bounded below. The first update is therefore the bottleneck. How many good samples a batch needs is set by the purity--escape bound, Lemma~\ref{lemma:belldiff-and-gamma}: whenever a candidate group loses purity, Bell difference samples leave its centralizer at a proportional rate.

These facts give three algorithmic regimes:

\begin{itemize}
\item \textit{Clean regime.} A batch can be made large enough that its good samples ensure the whole radical has small purity loss, with few enough bad samples to preserve useful target directions. Whole-radical updates then keep both sample and classical costs polynomial.

\item \textit{Intermediate regime.} A batch can no longer be made large enough to certify the whole radical without admitting too many bad samples, but its radical still contains new target directions. The finder instead tests random elements of the radical. Because the radical contains at most $t$ directions outside the target, a fraction $\Omega(2^{-t})$ of these elements are useful, so about $2^t$ tests on the same verifier data find an update at polynomial sample cost.

\item \textit{Noisy regime.} Bad samples are too frequent for any batch guarantee, so the finder proposes individual Bell difference samples to the verifier, giving an \(O(4^t)\) overhead in the worst case.
\end{itemize}

These three situations lead to Theorems~\ref{thm:informal-clean-depo}--\ref{informal-arbi}.

\subsubsection{Informal Results}
We report our three results for depolarizing noise to leading order in $1\ll t\ll n$, and $\varepsilon\ll1$ for the clean regime, with $t$ the supplied bound $\nu\leqslant t$. The constant $0.159 \approx 1-2^{-1/4}$ is where $(1-\lambda)^4=\frac12$: beyond it, our bounds on the first batch allow bad samples to outnumber good ones. The linear clean-regime threshold $\frac{n}{4t}\varepsilon$ is the weak-noise ($\lambda\ll1$) form of the formal condition. The formal results in Sec.~\ref{sec:up-down-depo} retain the finite-size and accuracy margins.

In the case of global depolarizing noise, when the noise $\lambda$ is sufficiently small compared to accuracy $\varepsilon$, the problem can be solved efficiently in both sample and time complexity:
\begin{theorem}[Clean Regime, Informal]
\label{thm:informal-clean-depo}
If $\lambda < \min\{0.159, \frac{n}{4t} \varepsilon\}$, there is an algorithm that solves Problem~\ref{prob:main} in $\mathrm{poly}(\cdot)$ time and samples.
\end{theorem}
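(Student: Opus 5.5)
We prove Theorem~\ref{thm:informal-clean-depo} by analysing the finder--verifier loop of Fig.~\ref{fig:pipeline} in clean mode, where the whole radical is proposed at once. We spend the accuracy budget $\varepsilon$ over at most $t$ accepted updates, since each accepted nontrivial increment raises $\dim\tilde S$ by at least one and $\tilde\nu\geqslant 0$.

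\textbf{Verifier.} We draw $M_{\mathrm{ver}}$ Bell samples once. For a commuting candidate $\tilde S$, the purity loss $\Delta\gamma(\tilde S)=2^n\sum_{i\notin C(\tilde S)}a_i^2$ is a linear functional of the Bell-sample distribution: it is the probability that a difference sample escapes $C(\tilde S)$, up to the normalization of Proposition~\ref{prop:hidden-shift}. It can therefore be estimated by an empirical average of bounded indicators. The queries range over subgroups of $\mathbb F_2^{2n}$, of which there are at most $2^{O(n^2)}$. Hoeffding's inequality with a union bound then gives uniform accuracy $\Delta\varepsilon_+-\Delta\varepsilon_-\sim\varepsilon/t$ from $M_{\mathrm{ver}}=\mathrm{poly}(n,t,1/\varepsilon,\log 1/\delta)$ samples. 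Soundness follows directly: every accepted update costs at most $\Delta\varepsilon_+$, so the total loss is at most $t\Delta\varepsilon_+\leqslant\varepsilon$. Containment $S\subseteq\tilde S$ is maintained because the finder only proposes elements of the radical that commute with the good samples, and these lie in $\supp\rho$, whose centralizer is $S$.

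\textbf{Finder.} We must show that while $\Delta\gamma(\tilde S)>\varepsilon$ (or while the working group is not yet within budget), a batch of $M$ post-selected difference samples yields, with constant probability, a radical $R$ with two properties. First, $R$ strictly enlarges the group. Second, $\Delta\gamma(\tilde S+R)\leqslant\Delta\varepsilon_-$. We split the batch into good samples, drawn from the pure component with weight about $(1-\lambda)^4$ on the difference distribution, and bad samples, with rate $1-(1-\lambda)^4$.

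For the good samples we use Lemma~\ref{lemma:belldiff-and-gamma}. Any subgroup whose purity loss exceeds $\Delta\varepsilon_-$ is escaped by a good sample with probability $\Omega(\Delta\varepsilon_-)$. Union-bounding over at most $2^{O(nt)}$ relevant candidate groups, $M\sim nt/\varepsilon$ good samples suffice for every surviving candidate in the radical to have small loss. For the bad samples we use the structural fact of Sec.~\ref{sec:bad-samples}: each bad sample removes at most one target direction from the radical, and at most $t$ spurious directions appear. The expected number of bad samples is about $4\lambda M\sim 4\lambda nt/\varepsilon\cdot(1/n)$ per relevant direction. The hypothesis $\lambda<\frac{n}{4t}\varepsilon$ is exactly what keeps the number of bad samples below the number of new target directions, so that $R$ still contains a useful increment. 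The condition $\lambda<0.159$, equivalently $(1-\lambda)^4>\tfrac12$, ensures good samples outnumber bad ones, so a Chernoff bound applies to the first batch. After the first acceptance, post-selection onto $C(\tilde S)$ halves the bad rate per accepted direction, so later batches are easier.

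\textbf{Main obstacle and conclusion.} The crux is the first-batch counting: matching the good-sample requirement from the purity--escape bound against the bad-sample erosion, with the union bound over candidate groups kept at $\mathrm{poly}(n,t)$ in the exponent. This is where the threshold $\frac{n}{4t}\varepsilon$ arises, and we would first try to prove it with a direct Chernoff bound on the bad-sample count. Given that step, the rest is routine. Using at most $T=O(t\log(t/\delta))$ batches with amplification, radicals computed by Gaussian elimination in $\mathrm{poly}(n)$ time, and a union bound over failure events, the algorithm outputs an $\varepsilon$-approximate weak stabilizer with probability $1-\delta$ using polynomially many samples and polynomial time.
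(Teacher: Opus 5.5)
\textbf{The threshold derivation does not go through.} Your good-sample count comes from a union bound over $2^{O(nt)}$ candidate groups, which gives $M\sim nt/\Delta\varepsilon$. You also split the budget evenly, $\Delta\varepsilon\approx\varepsilon/t$. In the first batch ($\tilde S=\{I\}$) the radical can lose at most about $n-t$ target directions, so the bad-sample count, about $4\lambda M$, must stay below $n-t$. With $M\sim nt/\varepsilon$ this forces $\lambda\lessapprox\varepsilon/(4t)$, a factor $n$ short of $\frac{n}{4t}\varepsilon$, and the $\varepsilon/t$ split costs another factor of $t$. The division by $n$ ``per relevant direction'' has no basis, and even your resulting expression $4\lambda t/\varepsilon\lessapprox1$ gives $\varepsilon/(4t)$.

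Two ideas are missing. First, Lemma~\ref{lemma:good-span-C(S)} uses no union bound. It follows the dimension of $(B_g+S^\star)/S^\star$ inside the $2\nu^\star$-dimensional space $C(S^\star)/S^\star$. While coverage fails, the purity--escape bound makes each new good$^\star$ sample raise this dimension with probability at least $2\Delta\varepsilon(1-\Delta\varepsilon/\gamma_{\tilde S})$. The dimension can rise at most $2\nu^\star\leqslant2t$ times, so a Chernoff bound gives $M_g^\star\approx t/\Delta\varepsilon$, and the radical, lying in $C(B_g^\star)$, is accepted whole. Second, the first jump must receive essentially all of $\varepsilon$. This is affordable only because post-selection under depolarizing noise suppresses $q_b^\star$ by $2^{-\dim\tilde S}$ (Eq.~\eqref{eq:depo-q-b}), so after a first jump of size $1\ll\Lambda\ll n-t$ every later update is exponentially cheaper. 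The bad-count condition $\frac{q_b}{q_g}\frac{t}{\varepsilon}\lessapprox n-t$ is then exactly $\varepsilon\gtrapprox\frac{t}{n-t}\frac{q_b}{q_g}\approx\frac{4\lambda t}{n}$.

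Relatedly, the number of updates is not at most $t$. $\dim\tilde S$ climbs from $0$ to at least $n-t$, so up to $n$ nontrivial updates can occur, and $t\,\Delta\varepsilon_+\leqslant\varepsilon$ does not bound the total loss. The paper's schedule (Theorem~\ref{thm:main-clean-depo}, Appendix~\ref{app:clean-parameters}) instead uses $O(\log(n-t))$ main-stage targets.

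\textbf{Containment is also mishandled.} $S\subseteq\tilde S$ is not maintained; it fails at the start. Radical elements also need not lie in $S$: they commute only with the sampled span, generally a proper subspace of $C(S)$, so they include near-stabilizers, which is the point of clean mode. What is maintained is commutation with $S$ (Lemma~\ref{lemma:compatibility}). Reaching $S\subseteq\tilde S$ needs three ingredients:
\begin{enumerate}
\item Lemma~\ref{lemma:good-span-S}: $B_g+\tilde S\supseteq S^\star$ once there are $\tilde\nu+\nu^\star+\log_2\frac1\delta$ good$^\star$ samples. This uses the invariance of $P_{\mathrm{BellDiff}}$ under shifts by $S$, which makes the $\tilde S$-cosets uniform in $S^\star/\tilde S$.
\item The identity $R^\star\cap S^\star=S^\star\cap C(B_b)$.
\item A terminal stage whose batches contain no bad$^\star$ samples. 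It may apply up to $\min\{t,T\}$ updates, whose cost must be shown negligible.
\end{enumerate}
Balancing $\tilde\nu+t$ good$^\star$ samples against fewer than $\tilde\nu-t$ bad$^\star$ ones is where $\frac{q_b}{q_g}\lessapprox\frac{n-t}{n+t}$, i.e.\ $0.159$, actually comes from.

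\textbf{The verifier's estimator is misidentified.} The purity loss is a second-moment quantity, estimated from single Bell samples via the signed filter $(-1)^{\pi_Y(b)}\bigl(Q_{\tilde S}(b)-Q_{\tilde S'}(b)\bigr)$ of Lemma~\ref{lemma:projected-purity}. It is not the escape probability of difference samples, which is quartic in the Pauli coefficients and includes bad samples. For $s\in S$ and $\tilde S=\{I\}$ that escape probability is $\frac12[1-(1-\lambda)^4]\approx2\lambda$. A verifier built on it would therefore reject genuine stabilizers as soon as $2\lambda$ exceeds the per-update budget, which the hypothesis $\lambda<\frac{n}{4t}\varepsilon$ permits. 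With the correct estimator, your crude union bound over $2^{O(n^2)}$ subgroups does give a polynomial dataset.
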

Here, $\mathrm{poly}(\cdot)$ denotes $\mathrm{poly}\left(n, t, \frac{1}{\varepsilon}, \ln \frac{1}{\delta} \right)$. Explicit resource bounds for the depolarizing schedules are given in Eq.~\eqref{eq:explicit-learning-costs} and Appendix~\ref{app:verifier-complexity}. 
When the noise is bounded by a constant independent of $\varepsilon$, we can still solve the problem with a $2^t$ factor in the time complexity.
\begin{theorem}[Intermediate Regime, Informal]
\label{thm:informal-intermediate-depo}
If $\lambda < 0.159$, there is an algorithm that solves Problem~\ref{prob:main} with  $\mathrm{poly}(\cdot)$ samples and $O\left(2^t \cdot \mathrm{poly}(\cdot)\right)$ time. 
\end{theorem}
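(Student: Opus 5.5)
We prove Theorem~\ref{thm:informal-intermediate-depo} by running the finder--verifier loop of Fig.~\ref{fig:pipeline} in its intermediate mode and bounding the three ingredients separately: the verifier, the batches, and the search inside each radical. \textbf{Verifier.} Draw $M_{\rm ver}=\mathrm{poly}(n,t,1/\varepsilon,\ln 1/\delta)$ Bell samples once. From them, build an estimator of $\Delta\gamma(\tilde S)$ for any commuting group $\tilde S$. The idea is to estimate the probability that a Bell difference sample escapes $C(\tilde S)$ and to convert this probability into purity using the second-moment identity behind Lemma~\ref{lemma:belldiff-and-gamma}. Each estimate is an average of bounded functions of the samples. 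Hoeffding's inequality plus a union bound over all queries therefore gives a uniform guarantee. The number of queries is at most $2^{O(t)}\cdot\mathrm{poly}$, so the union bound costs only a $\ln$ factor, i.e. $O(t)$ extra samples. Consequently the sample cost remains polynomial while the time grows like $2^t$.

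\textbf{Batches.} Since $\lambda<0.159$ we have $(1-\lambda)^4>1/2$, so in the first batch good difference samples strictly outnumber bad ones. A difference sample is good when both constituent Bell samples avoid the noise component; this happens with probability at least $(1-\lambda)^2$ per sample, and the analysis of the first batch loses a further square. Choose $M=\mathrm{poly}$ so that two things hold with high probability.
\begin{itemize}
\item By Lemma~\ref{lemma:belldiff-and-gamma}, whenever the current group $\tilde S$ is not yet $\varepsilon$-good, the good samples escape $C(\tilde S)$ at rate $\Omega(\Delta\gamma)$. Their span then exposes target directions not yet in $\tilde S$, meaning the good span contains some element $g\in S\setminus\tilde S$ whose addition costs purity at most the per-step budget $\Delta\varepsilon_-$.
\item The number of bad samples is at most $\mathrm{poly}$.
\end{itemize}
For later batches we post-select onto $C(\tilde S)$. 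Under depolarizing noise this halves the bad rate with each accepted direction while keeping the good rate bounded below, so the first batch is the worst case.

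\textbf{Search in the radical.} By Sec.~\ref{sec:bad-samples}, each bad sample removes at most one target direction from the radical $R$, and $R$ contains at most $t$ directions outside the target. Hence, provided the batch retains at least one new useful target direction, a uniformly random element of $R$ lies in the useful coset with probability $\Omega(2^{-t})$. We propose $O(2^t\ln(1/\delta'))$ random elements to the verifier, reusing its data. The verifier rejects every proposal costing more than $\Delta\varepsilon_+$ and accepts some proposal costing at most $\Delta\varepsilon_-$. Each acceptance increases $\dim\tilde S$, and the true group has dimension at most $n$, so there are at most $n$ iterations. The total purity loss is at most $n\,\Delta\varepsilon_+\le\varepsilon$ once $\Delta\varepsilon_\pm=\Theta(\varepsilon/n)$. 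The loop stops when the verifier certifies $\Delta\gamma\le\varepsilon$.

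The containment $S\subseteq\tilde S$ must also hold, but certified purity loss alone does not give it. One option is to require every accepted element to commute with all good samples. A cleaner route is to output $\tilde S$ together with the final check; this matches Problem~\ref{prob:main}, since $\tilde S$ need only approximate $S$ up to $\varepsilon$ of purity.

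\textbf{Main obstacle.} The hardest step is guaranteeing that, with fewer good samples than the clean regime needs, the radical still contains a new useful target direction with bounded purity cost at every iteration. This requires quantitative control of the bad-sample count at constant $\lambda$. I would first try to show that the number of bad samples is smaller than the number of independent new target directions exposed by the good samples. The argument would use the constant gap $(1-\lambda)^4-1/2>0$ and a Chernoff bound, and then conclude by counting dimensions in the radical.
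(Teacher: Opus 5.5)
Your skeleton matches the paper's: a reusable verifier, post-selected batches, about $2^t$ random draws from the radical, and halving of the bad rate under depolarizing noise. But several load-bearing steps fail as written.

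\textbf{The verifier.} It cannot work by converting Bell-difference escape into purity. Escape is a fourth-moment quantity, $\Pr[b\in C(H)]=\frac{16^n}{|H|}\sum_{i\in H}a_i^4$ (Lemma~\ref{lemma:projected-quartic-purity}), and Lemma~\ref{lemma:belldiff-and-gamma} is a one-sided inequality, not an identity. Worse, bad samples escape whatever the purity cost. Under depolarizing noise every $g\in S$ costs zero purity, yet $\Pr[b\notin C(g)]=\tfrac12\bigl(1-(1-\lambda)^4\bigr)$. For constant $\lambda$ this far exceeds your per-step threshold $\Theta(\varepsilon/n)$, so your verifier would reject exactly the updates it must accept. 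The paper instead estimates purity drops directly from Bell samples with the sign filter of Lemma~\ref{lemma:projected-purity}.

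Your union bound over the queries actually made is also invalid as stated, because each query depends on earlier verdicts computed from the same data. The estimates must be uniform over all nested commuting pairs. This can be done crudely over the at most $4^{n^2}$ commuting groups, or via the pattern count of Lemma~\ref{lemma:uniform-bell-estimates}; both are still polynomial.

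\textbf{Containment and stopping.} Problem~\ref{prob:main} demands $S\subseteq\tilde S$ exactly, not just small purity loss. Your stopping rule is vacuous, since $\tilde S=\{I\}$ already has $\Delta\gamma=0$. Filtering against ``good samples'' is not implementable, because the algorithm cannot label them. Two ingredients are missing:
\begin{itemize}
\item Lemma~\ref{lemma:compatibility}: under Assumption~\ref{asp:accuracy}, every accepted group commutes with $S$. So elements of $S$ always pass the commutativity check at zero cost, and no update is a wrong turn.
\item A stopping rule that halts only after an iteration of $T=O(\ln(n/\delta))$ fresh batches produces no approved update while $\dim\tilde S\geqslant n-t$. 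This certifies $S\subseteq\tilde S$ with high probability, because each batch succeeds with constant probability whenever $S\not\subseteq\tilde S$.
\end{itemize}

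\textbf{The per-batch step.} It uses the wrong lemma and leaves the threshold underived. Directions of $S$ cost no purity (Observation~\ref{obs:add-S-purity}), so Lemma~\ref{lemma:belldiff-and-gamma} says nothing about whether the good span contains them. Coverage via escape needs about $t/\Delta\varepsilon$ good samples (Lemma~\ref{lemma:good-span-C(S)}). That is the clean-mode mechanism, and it is exactly what confines the clean mode to $\lambda\lessapprox n\varepsilon/4t$.

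The intermediate mode instead uses Lemma~\ref{lemma:good-span-S}. By the $S$-translation invariance of $P_{\mathrm{BellDiff}}$ (Observation~\ref{obs:stab-inv-diff}), each good$^\star$ sample, conditioned on its $S^\star$-coset, has a uniform $\tilde S$-coset. Hence $\tilde\nu+\nu^\star+\log_2\frac1\delta$ good$^\star$ samples already give $S^\star\subseteq B_g+\tilde S$.

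Your ``at most $\mathrm{poly}$ bad samples'' must then be sharpened to fewer than $\tilde\nu-t$ bad$^\star$ samples. This is because $R^\star\cap S^\star=S^\star\cap C(B_b)$ loses at most one of the $\tilde\nu-\nu^\star$ target directions per bad$^\star$ sample. Both counts hold for $M$ in the window $\frac{\tilde\nu+t}{q_g^\star}\lessapprox M\lessapprox\frac{\tilde\nu-t}{q_b^\star}$ (Lemma~\ref{lemma:sampling-window}). The window is nonempty iff $q_b^\star/q_g^\star\lessapprox\frac{\tilde\nu-t}{\tilde\nu+t}$. At the first batch this reads $\frac{1-(1-\lambda)^4}{(1-\lambda)^4}\lessapprox\frac{n-t}{n+t}$. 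It reduces to good outnumbering bad, i.e.\ $\lambda\lessapprox0.159$, only when $n\gg t$. This is precisely the step you list as the main obstacle, so the threshold is not yet established.

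Finally, ``the first batch is the worst case'' needs a separate check at the end. There $\tilde\nu-t=O(1)$ forces zero bad$^\star$ samples and requires a ratio $O(1/(\tilde\nu+t))$. It holds because $q_b^\star$ has by then been suppressed by $2^{-\dim\tilde S}$ with $\dim\tilde S\geqslant n-t-O(1)$, using $n-t\gg1$.
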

Lastly, we can always solve the problem with time and sample complexity exponential in $t$:
\begin{theorem}[Noisy Regime, Informal]
\label{thm:informal-noisy-depo}
For any $\lambda<1$, there is an algorithm that gives an $\varepsilon$-approximate weak stabilizer with $O\left(4^t \cdot \mathrm{poly}\left(\cdot, \frac{1}{1-\lambda} \right)\right)$ time and samples, with success probability at least $1 - \delta$.     
\end{theorem}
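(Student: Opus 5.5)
The plan is to run the finder--verifier loop with the most primitive finder: every Bell difference sample is proposed as an individual candidate increment. The accepted group $\tilde S$ only ever grows, starting from $\tilde S=\{I\}$. The first step is the verifier guarantee. I draw $M_{\rm ver}$ Bell samples once and use them to estimate $\gamma_{\tilde S'}$ for any queried commuting group $\tilde S'$, with a uniform additive guarantee. The projected purity is a second moment of the Bell-sample distribution, namely the probability that a difference sample lies in $C(\tilde S')$, rescaled by $\gamma$. It can therefore be estimated by an empirical average over pairs of stored samples. A Hoeffding bound plus a union bound over the at most $4^t\cdot\mathrm{poly}$ queries costs only a factor $\log(4^t/\delta)=O(t+\log 1/\delta)$ in $M_{\rm ver}$. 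Splitting the budget as $\Delta\varepsilon_\pm$ with at most $t$ accepted updates, I get that no more than $\varepsilon$ purity is ever lost.

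The second step is to show that useful candidates appear among single difference samples at rate $\Omega(4^{-t}(1-\lambda)^2)$. Suppose the current $\tilde S$ still has $\Delta\gamma(\tilde S)>\varepsilon$ relative to the target. I do not want the finder to guess elements of $S$ directly. Instead I would reformulate the goal dually: the algorithm maintains a subspace $V$ spanned by accepted difference samples, and outputs $\tilde S=C(V)\cap(\text{a maximal isotropic completion})$. A proposed sample $x$ is accepted if the group $C(V+x)$ loses little extra purity, which the verifier decides.

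By the purity--escape bound (Lemma~\ref{lemma:belldiff-and-gamma}), whenever $\Delta\gamma$ of the current candidate exceeds $\varepsilon$, a difference sample escapes $V$ with probability proportional to $\Delta\gamma$. The good component of that probability is $(1-\lambda)^2$ times the pure-state rate. The pure-state support has dimension at most $n+t$ modulo $S$, which yields at most $2t$ genuinely new directions in $\supp\psi/ S^\perp$. Among the good escaping samples, one that lands in $\supp\rho$ and has non-negligible weight gives an accepted update. Since $\supp\rho/S^{\perp}$ has only $4^t$ cosets, averaging shows that some coset carrying $\Omega(\varepsilon 4^{-t})$ of the mass is hit with that probability. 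Hence $O(4^t\,\mathrm{poly}/(1-\lambda)^2)$ fresh samples per update suffice. Bad samples from $I/2^n$ are rejected by the verifier because they lie outside $\supp\rho$ and cost purity; this is why no threshold on $\lambda$ is needed.

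Termination comes from the monotone potential $\dim V\le n+t$, which holds because accepted directions stay within $\supp\rho$ up to verifier slack. Combined with at most $O(t)$ productive accepts, this gives the bound $O(4^t\cdot\mathrm{poly}(\cdot,1/(1-\lambda)))$ for both samples and verifier queries. The main obstacle is the coset-counting step: showing that low-weight directions inside $\supp\rho$ which the verifier may accept or reject ambiguously cannot stall progress. I would handle it by charging each ambiguous accept against the $\Delta\varepsilon_+$ budget, which bounds how much ambiguous mass can ever be accepted.
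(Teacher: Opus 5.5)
Your dual reformulation fails at the step the noisy regime depends on. When you add a sample $x$ to the support-side span $V$ and set $V'=\spn(V\cup\{x\})$, the candidate group shrinks, $C(V')\subseteq C(V)$. By Observation~\ref{obs:subgroup-purity} its projected purity can then only increase: $\gamma-\gamma_{C(V')}=2^n\sum_{i\notin V'}a_i^2\leqslant2^n\sum_{i\notin V}a_i^2$. So the verifier's test ``little extra purity loss'' is monotone in the wrong direction, and every proposal passes it, good or bad. A bad sample $x\notin C(S)$ does not cost purity here. What it costs is a true stabilizer direction, since it forces $S\not\subseteq C(V')$, and the purity functional cannot detect that. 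Your claim that bad samples ``lie outside $\supp\rho$ and cost purity'' therefore fails. $V$ absorbs bad samples at a rate close to $1-(1-\lambda)^4$, which tends to one as $\lambda\to1$. Then $C(V)$ collapses, as Algorithm~\ref{alg:noise-free-learning} does under noise, and $S\subseteq\tilde S$ is lost. The purity--escape step you invoke is the clean-mode coverage argument (Lemma~\ref{lemma:good-span-C(S)}), which is exactly what breaks down when bad samples dominate. The coset-averaging claim (``some coset carrying $\Omega(\varepsilon4^{-t})$ of the mass is hit'') is not derived from anything. The termination bound $\dim V\leqslant n+t$ also presupposes that no bad sample was accepted.

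The missing idea is to stay top-down, where the verifier's criterion points the right way, and to use the fact that Bell difference samples land in $S$ itself reasonably often. The paper proposes each sample $b$ as a new stabilizer generator, $\tilde S+\{b\}$, with $\Delta\varepsilon_-=0$ and $\Delta\varepsilon_+=\varepsilon/n$ (Algorithm~\ref{alg:top-down}). Commuting groups have dimension at most $n$, so the accumulated loss never exceeds $\varepsilon$, whichever samples are accepted; bad samples never need to be rejected. Lemma~\ref{lemma:compatibility} keeps every accepted $\tilde S$ commuting with $S$, and any $b\in S\setminus\tilde S$ is accepted at zero cost. By Eq.~\eqref{eq:bell-diff-dist}, $P_{\mathrm{BellDiff}}$ is constant and maximal on $S$. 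Lemma~\ref{lemma:prob-diff-lower} then gives $P_{\mathrm{BellDiff}}(S)\geqslant4^{-\nu}\gamma^2\geqslant4^{-t}(1-\lambda)^4$. Conditioned on landing in $S$, samples are uniform (Observation~\ref{obs:stab-inv-diff}), so $n-\nu+\log_2(2/\delta)$ hits span $S$, giving $T\approx4^tn/\gamma^2$ (Theorem~\ref{thm:top-down}).

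Three further points in your write-up also need fixing:
\begin{enumerate}
\item Starting from $\{I\}$, you need $n-\nu$ accepted updates, not $O(t)$; this is why the per-update budget is $\varepsilon/n$.
\item $\gamma_{\tilde S'}$ is not a rescaled difference-sample probability. That probability equals $\frac{16^n}{|H|}\sum_{i\in H}a_i^4$ and is only bounded below by $\gamma_H^2$. The correct estimator is the filtered first moment of Lemma~\ref{lemma:projected-purity}.
\item A union bound over the number of queries ignores that the queries are chosen adaptively from the same data. Lemma~\ref{lemma:uniform-bell-estimates} handles this.
\end{enumerate}
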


\begin{figure}[ht]
  \centering
  \includegraphics{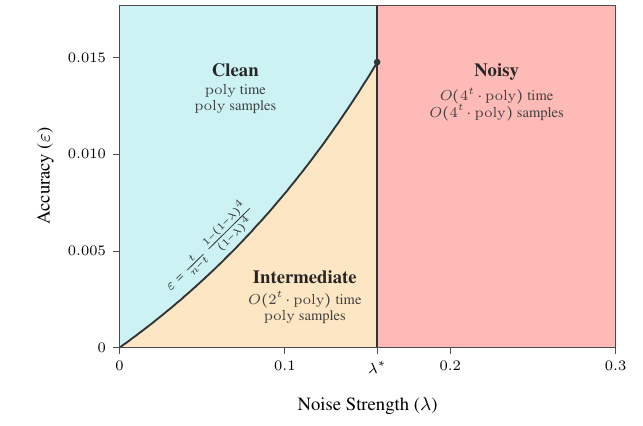}
  \caption{
   Leading-order illustration of the three regimes under global depolarizing noise, as a function of noise strength $\lambda$ and accuracy $\varepsilon$, for $n=2000$ and $t=30$. In the clean regime both time and samples are polynomial; in the intermediate regime samples stay polynomial but time carries a $2^t$ factor. The boundary between the two depends on $\varepsilon$, $\lambda$, $t$ and $n$, and to leading order in $1\ll t\ll n$ and weak noise $\lambda\ll1$ lies at $\lambda\approx n\varepsilon/4t$. Beyond the plotted boundary $\lambda^\star=1-\left(\frac{n+t}{2n}\right)^{1/4}\approx0.156$ only the noisy regime applies, with time and samples exponential in $t$. This threshold approaches $1-2^{-1/4}\approx0.159$ as $t/n\to0$. The boundaries are sufficient conditions for the analyzed algorithms, not lower bounds. Section~\ref{sec:numerics} compares them with simulations (Figs.~\ref{fig:numerics-threshold} and~\ref{fig:numerics-regimes}).}
  \label{fig:phase-diagram}
\end{figure}

These three algorithms partition the parameter space into three regimes, each with a different best-performing algorithm among those we analyze; we make no claim of optimality within these regimes, and the boundaries are sufficient conditions from our analysis rather than sharp algorithmic limits. Figure~\ref{fig:phase-diagram} gives an example. 

For arbitrary Pauli noise, with $\lambda$ the total error probability of Sec.~\ref{sec:noise-model}, the thresholds on $\lambda$ decrease:
\begin{theorem}[Arbitrary Pauli Noise, Informal]
\label{informal-arbi}
There is an algorithm that solves Problem~\ref{prob:main} in different regimes.
\begin{description}[font=\normalfont\bfseries\itshape,leftmargin=2em]
    \item[\textrm{Clean}] When $\lambda \leqslant \Theta\left(\min\left\{\frac{1}{t}, \frac{\varepsilon}{t \min\left\{t, \ln \frac{1}{\delta}\right\}}\right\}\right)$, the algorithm uses $\mathrm{poly}(\cdot)$ time and samples.
    \item[Intermediate] When $\lambda \leqslant \Theta\left(\frac{1}{t}\right)$, the algorithm uses $\mathrm{poly}(\cdot)$ samples and $O\left(2^t \cdot \mathrm{poly}(\cdot)\right)$ time.
    \item[Noisy] When $\lambda<\frac12$, the algorithm uses $O\left(4^t \cdot \mathrm{poly}\left(\cdot, \frac{1}{1-2\lambda} \right)\right)$ time and samples.
\end{description}

\end{theorem}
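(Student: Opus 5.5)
We prove the three regimes by reusing the finder--verifier pipeline of the depolarizing analysis (Theorems~\ref{thm:informal-clean-depo}--\ref{thm:informal-noisy-depo}). The only change is how the noise enters the sample statistics, and this is controlled through the fidelities of Eq.~\eqref{eq:pauli-fidelities}. The verifier carries over with essentially no modification. It estimates $\Delta\gamma(\tilde S)=2^n\sum_{i\notin C(\tilde S)}a_i^2$ from one reusable set of Bell samples, uniformly over all queries. Its concentration argument only uses that the estimator is a bounded second-moment statistic of the Bell output distribution. For $\lambda<1/2$ the channel preserves $\supp\rho$, so $S=\mathrm{Stab}(\psi)$, and every $a_i$ in the support is at least $(1-2\lambda)$ times its noiseless value. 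The purity--escape bound (Lemma~\ref{lemma:belldiff-and-gamma}) therefore still states that a candidate group with purity loss $\Delta\gamma$ is left by Bell difference samples at rate $\Omega(\Delta\gamma)$, with constants degraded by powers of $1-2\lambda$.

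\textbf{Step 1: bounding the bad-sample rate.} Write $\rho=\sum_Q p_Q Q\psi Q$. Each copy is a mixture in which, with probability $1-\lambda$, no error occurs. A Bell difference sample uses four copies, so it is ``good'' (distributed exactly as in the pure case) with probability at least $(1-\lambda)^4$. Its bad probability is then at most $1-(1-\lambda)^4\leqslant 4\lambda$, and this bound holds without any structure on the $p_Q$.

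The difference from the depolarizing case is what happens after post-selection onto $C(\tilde S)$. For white noise, each accepted direction halves the bad rate. For an arbitrary Pauli channel, the error $Q$ may commute with every accepted direction, so this suppression is lost. The bad rate then stays at $O(\lambda)$ per sample for all iterations, rather than only for the first batch.

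\textbf{Step 2: the three regimes.}
\begin{itemize}
\item \emph{Clean regime.} We need batches of size $M$ large enough that the good samples certify the whole radical to purity loss $\Delta\varepsilon\sim\varepsilon/t$. By the purity--escape bound this requires $M\gtrsim t/\varepsilon$ up to logarithmic factors. We also need the expected number of bad samples $\lambda M$ to be $O(1)$. Since there is no exponential suppression, this must hold uniformly over up to $t$ updates. A union bound over the $t$ iterations, or a Chernoff bound with confidence $\delta$, contributes the factor $\min\{t,\ln\frac1\delta\}$. This yields $\lambda\lesssim\varepsilon/(t\min\{t,\ln\frac1\delta\})$. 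The additional requirement $\lambda\lesssim 1/t$ keeps good samples dominant over the whole run.
\item \emph{Intermediate regime.} Here we only need the number of bad samples per batch to stay below a constant fraction of the new target directions. Each bad sample removes at most one target direction from the radical (Sec.~\ref{sec:bad-samples}), and the radical has at most $t$ directions outside the target. A random radical element is therefore useful with probability $\Omega(2^{-t})$, which gives the $2^t$ time overhead at polynomial sample cost. Keeping the total expected corruption across $t$ rounds bounded forces $\lambda t=O(1)$.
\item \emph{Noisy regime.} Single difference samples are proposed to the verifier. A useful sample appears with probability $\Omega(4^{-t}(1-2\lambda)^{O(1)})$: the good-sample mass inside the residual support is reduced by at most fidelity factors of $1-2\lambda$ via Eq.~\eqref{eq:pauli-fidelities}, and the $4^{-t}$ comes from the dimension of the $2t$-dimensional residual space. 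This gives the stated $4^t\,\mathrm{poly}(\cdot,\frac{1}{1-2\lambda})$ cost for every $\lambda<1/2$.
\end{itemize}

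\textbf{Main obstacle.} The hardest step is the clean-regime accounting without post-selection suppression. One must show that bad samples concentrated on operators commuting with $\tilde S$ cannot repeatedly destroy the same target directions across rounds in a way that stalls progress. I would handle this by proving a per-round progress guarantee conditioned on at most $k$ bad samples. This rests on the fact that each bad sample removes at most one direction, so with $k$ bad samples at least $\dim R-k$ target directions survive. Combining this with a Chernoff bound over rounds should produce the $\min\{t,\ln\frac1\delta\}$ factor.
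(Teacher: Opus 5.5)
Your scaffolding matches the paper's: the same finder--verifier pipeline, the bound $p_b\leqslant1-(1-\lambda)^4\leqslant4\lambda$ from the four error-free copies, the loss of post-selection suppression, and the top-down finder for the noisy regime. Your fidelity route $P_{\mathrm{BellDiff}}(S)\geqslant(1-2\lambda)^4 4^{-\nu}$ works there once you add two things: that $P_{\mathrm{BellDiff}}$ is constant and maximal on $S$, and the spanning count of Theorem~\ref{thm:top-down}.

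The derivation of the clean threshold, however, has a genuine gap. You require $\lambda M=O(1)$ in every batch. But every batch must also contain about $\tilde\nu+t$ good$^\star$ samples to span $S^\star$ (Lemma~\ref{lemma:good-span-S}), and initially $\tilde\nu=n$. Your condition would therefore force $\lambda\lessapprox1/n$, not an $n$-independent threshold. The accounting is also inconsistent: a per-update budget $\varepsilon/t$ needs $M\gtrapprox t^2/\varepsilon$ by Lemma~\ref{lemma:good-span-C(S)}, not $t/\varepsilon$. Nor does any union or Chernoff bound turn $\ln\frac1\delta$ into a multiplicative factor on $\lambda M$.

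What is missing is the two-stage structure of Algorithm~\ref{alg:stab-clean}. In the main stage, a batch may contain up to about $\tilde\nu-t$ bad$^\star$ samples. Lemma~\ref{lemma:per-batch-clean} lands at $\tilde\nu_{i+1}\leqslant\strictFloor{K_b}+t$ with $K_b\approx(q_b^\star/q_g^\star)\chi$, so once $q_b^\star/q_g^\star=O(1/t)$, the quantity $\tilde\nu+t$ shrinks geometrically to $2t+O(1)$. Taking for each landing the largest batch it permits, step $i$ costs $\Delta\varepsilon_i\approx t(q_b^\star/q_g^\star)/\ceil{K_b}$. This is a geometric series dominated by its last terms, totaling $O(t\,p_b/p_g)$. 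In the terminal stage, $K_b=1$ forces $\chi\lessapprox q_g^\star/q_b^\star$, so each update costs $\Theta(t\,p_b/p_g)$. Because $S\subseteq\tilde S$ is unobservable, boosting over $T=O(\ln\frac1\delta)$ fresh batches requires applying every nontrivial approved update, at most $\min\{t,T\}$ of them (Observation~\ref{obs:per-iter-clean-terminal}). That, not a union bound, is where $\min\{t,\ln\frac1\delta\}$ comes from.

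Your explanation of the $\Theta(1/t)$ threshold is also off. It is not corruption accumulated over ``$t$ rounds'': the intermediate mode makes up to $n-\nu$ single-Pauli updates, so a per-round accumulation argument would again give $\lambda\lessapprox1/n$. In the paper it is a single-batch condition at the end of the run. When $\tilde\nu-t=O(1)$, Lemma~\ref{lemma:per-batch-intermediate} needs about $\tilde\nu+t\approx2t$ good$^\star$ samples and no bad$^\star$ samples ($K_b=1$), i.e.\ $q_b^\star/q_g^\star\lessapprox c/(\tilde\nu+t)$. Earlier batches only need $q_b^\star/q_g^\star\lessapprox(\tilde\nu-t)/(\tilde\nu+t)$. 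Without the $2^{-\dim\tilde S}$ suppression of the depolarizing case, this last batch is the binding one. The same terminal batch is also the source of $\lambda\lessapprox1/t$ in the clean regime, not ``keeping good samples dominant.''

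Conversely, the ``main obstacle'' you identify does not arise. Batches are fresh, and the per-batch lemmas hold for whatever $\tilde S$ is current given $q_b^\star=p_b$. No cross-round analysis of which target directions bad samples destroy is needed; constant per-batch success plus boosting suffices.
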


\subsubsection{Discussion}
\label{sec:comparison}

In considering the practical challenge brought by noise, we focused on the preserved weak symmetries of the noisy output state, rather than the strong stabilizer symmetries present for pure states~\cite{grewal2025, hangleiter2024} or considered in other work on stabilizer learning for mixed states~\cite{grewal2025, hinsche2026, cho2026}. Our work fills the gap between the polynomial-in-$t$ requirements of pure-state stabilizer structure learners and the exponential-in-$t$ cost of agnostic learning algorithms~\cite{chen2025}. The noise strength in both the global depolarizing and Pauli-channel settings defines a path between these two ends.

\begin{table}[ht]
\centering\small
\renewcommand{\arraystretch}{1.2}
\begin{tabular}{@{}llll@{}}
\toprule
Setting & Input promise & Output & Cost in $t$ \\
\midrule
Pure structure~\cite{grewal2025, hangleiter2024} & pure $\psi$, $\nu(\psi)\leqslant t$ & $\approx\mathrm{Stab}(\psi)$ & poly \\
Mixed, exact promise~\cite{grewal2025} & $\sigma_i\rho=\pm\rho$ on $2^{n-t}$ Paulis & $\approx$ strong stabilizers & poly \\
Agnostic~\cite{chen2025} & any $\rho$, fidelity $\geqslant\tau$ & best-fit nullity-$t$ state & $(2^t/\tau)^{O(\log 1/\varepsilon)}$ \\
\midrule
\multicolumn{4}{@{}l}{\textbf{This work}, $\rho=\mathcal{E}(\psi)$:} \\
\quad clean & \begin{tabular}[t]{@{}l@{}}$\lambda\lesssim\min\{0.159,\tfrac{n\varepsilon}{4t}\}$\\ (Pauli: $O(\varepsilon/(t\min\{t,\ln 1/\delta\}))$)\end{tabular} & $\tilde S\supseteq\mathrm{Stab}(\psi)$ & poly \\
\quad intermediate & $\lambda\lesssim 0.159$ \ (Pauli: $O(1/t)$) & same & poly copies, $2^t$ time \\
\quad noisy & $\lambda<1$ \ (Pauli: $<\tfrac12$) & same & $4^t\gamma^{-2}$ \\
\bottomrule
\end{tabular}
\caption{Learning stabilizer structure: prior guarantees and our three regimes (depolarizing thresholds; Pauli in parentheses). Accuracy metrics differ between rows (trace distance, fidelity loss, projected-purity loss), so costs are compared only in their $t$-dependence; exact statements are in the cited theorems and Eq.~\eqref{eq:explicit-learning-costs}.}
\label{tab:comparison}
\end{table}

Table~\ref{tab:comparison} places our guarantees beside the three closest settings: pure-state structure learning, mixed-state structure learning under an exact promise, and agnostic tomography. Exact-promise learners are efficient but apply only at $\lambda=0$, because depolarizing noise removes every strong stabilizer. Agnostic learners apply at every $\lambda$, since a depolarized state keeps fidelity $\tau\approx1-\lambda$ with $\psi$, but their cost is exponential in $t$ even at $\lambda=0$. In exchange, they need neither the pure pre-noise state nor the Pauli-noise model that our guarantees use.

As noise increases, our guarantees turn exponential in $t$ only at a finite noise strength, and in two stages. Under depolarizing noise, classical runtime becomes exponential first, once $\lambda\gtrapprox\frac{n}{4t}\varepsilon$ and the intermediate regime takes over with its $2^t$ overhead. The number of copies stays polynomial up to a constant $\lambda\approx0.159$, a total error probability of about $16\%$ per copy, and only past it does the noisy regime require $4^t$ scaling. For $n\gg t$, the intermediate regime is squeezed into very low $\varepsilon\lessapprox t/n$, and at coarser accuracy the clean regime reaches the constant threshold; in Fig.~\ref{fig:phase-diagram}, where $n=2000$ and $t=30$, the three regimes meet at $\varepsilon\approx0.015$. Arbitrary Pauli noise shows the same three regimes, with the constant threshold shrunk to order $1/t$.

\subsection{Organization}
\label{sec:organization}
Section~\ref{sec:preliminaries} collects notation and known facts about Bell sampling and noise-free stabilizer learning. Section~\ref{sec:tools} proves the properties of the projected purity and of Bell difference sampling on which our algorithms rest, including the purity--escape bound, Lemma~\ref{lemma:belldiff-and-gamma}. Section~\ref{sec:stab-ver} constructs the verifier, and Section~\ref{sec:learning-algorithms} gives the two finders. The top-down finder starts from $\tilde S=\{I\}$ and adds one verified Pauli at a time, so $C(\tilde S)$ shrinks from all of $\mathcal{P}^n$ toward $C(S)$; the pure-state learners of Sec.~\ref{eq:introduction} instead work bottom-up, growing a span toward $C(S)$; the up-down finder builds each batch span bottom-up and applies the resulting updates top-down through the verifier. Section~\ref{sec:up-down-per-batch} analyzes a single batch, including the radical bounds above, and Section~\ref{sec:end-to-end} chains batches into the guarantees of Theorems~\ref{thm:informal-clean-depo}--\ref{informal-arbi}. Section~\ref{sec:numerics} tests the first-batch thresholds numerically, including a sample-based verifier, entangled inputs and two-qubit gate noise, and runs complete intermediate-mode trajectories under gate noise. Table~\ref{tab:reading-guide} lists where each main statement is proved and what it builds on.

\begin{table}[ht]
    \centering
    \begin{tabular}{l l l l}
    \toprule
    Topic & Read & Main statements & Builds on \\
    \midrule
    Purity and Bell difference samples & Sec.~\ref{sec:tools} & Lemmas~\ref{lemma:prob-diff-lower}, \ref{lemma:belldiff-and-gamma}, \ref{lemma:compatibility} & Preliminaries \\
    The verifier & Sec.~\ref{sec:stab-ver} & Algorithm~\ref{alg:stab-ver}, Theorem~\ref{thm:stab-ver} & Lemma~\ref{lemma:projected-purity}; Lemma~\ref{lemma:uniform-bell-estimates} \\
    Top-down finder (noisy regime) & Sec.~\ref{sec:top-down} & Algorithm~\ref{alg:top-down}, Theorem~\ref{thm:top-down} & The verifier; Lemma~\ref{lemma:prob-diff-lower} \\
    Up-down finder and its $\mathsf{propose}$ rules & Secs.~\ref{sec:up-down}, \ref{sec:up-down-example} & Algorithms~\ref{alg:stab}, \ref{alg:propose-clean}, \ref{alg:propose-intermediate} & The verifier \\
    What one batch achieves & Sec.~\ref{sec:up-down-per-batch} & Lemmas~\ref{lemma:per-batch-intermediate}, \ref{lemma:per-batch-clean} & Lemmas~\ref{lemma:good-span-S}, \ref{lemma:good-span-C(S)}, \ref{lemma:belldiff-and-gamma} \\
    From batches to the full algorithm & Secs.~\ref{sec:batch-to-iter}, \ref{sec:success-modes} & Observations~\ref{obs:per-iter-intermediate}--\ref{obs:per-iter-clean-terminal}, Algorithm~\ref{alg:stab-clean} & Per-batch lemmas \\
    Guarantees, depolarizing noise & Sec.~\ref{sec:up-down-depo} & Theorems~\ref{thm:main-intermediate-depo}, \ref{thm:main-clean-depo} & Per-iteration observations \\
    Guarantees, arbitrary Pauli noise & Sec.~\ref{sec:up-down-arbi} & Theorems~\ref{thm:main-intermediate-arbi}, \ref{thm:main-clean-arbi} & Per-iteration observations \\
    \bottomrule
    \end{tabular}
    \caption{Reading guide and overview of the paper structure.}
    \label{tab:reading-guide}
\end{table}

\section{Preliminaries}
\label{sec:preliminaries}

\subsection{Paulis and Cliffords}
\label{sec:paulis-cliffords}

We work with $n$-qubit systems throughout. Let $\mathcal{P}^n$ denote the $n$-qubit Pauli group modulo global phases. Each of its $4^n$ elements is a tensor product of the single-qubit Pauli matrices $\{I, X, Y, Z\}$. We identify elements of $\mathcal{P}^n$ with binary-symplectic labels $i = (i_X, i_Z) \in \{0,1\}^{2n}$ and write the corresponding Pauli as $\sigma_i$. We fix the sign convention so that $\sigma_i \sigma_j \propto \sigma_{i+j}$ under addition in $\mathbb{F}_2^{2n}$. When it does not cause confusion, we use $i$ interchangeably for the label and the Pauli itself.

Two Paulis either commute or anticommute, as captured by the symplectic inner product
\begin{equation}
\label{eq:symplectic-form}
\expval{i, j} \equiv i_X \cdot j_Z + i_Z \cdot j_X \pmod{2},
\end{equation}
via the rule $\sigma_i \sigma_j = (-1)^{\expval{i,j}} \sigma_j \sigma_i$. In Bell sampling, the $Y$-parity $\pi_Y(i) = i_X \cdot i_Z \bmod 2$ commonly arises because $Y$ is the only Pauli not equal to its transpose. We summarize the four identities that combine $\pi_Y$ and $\expval{\cdot,\cdot}$ in Appendix~\ref{app:bell-sampling} (Lemma~\ref{lemma:useful-properties}).

For a subspace $H \subseteq \mathbb{F}_2^{2n}$, its centralizer $C(H) \equiv \{j : \expval{i,j} = 0 \ \forall i \in H\}$ is its symplectic complement, so $\dim C(H) = 2n - \dim H$ and $C(C(H)) = H$. A subspace $H$ is commuting if and only if $H \subseteq C(H)$, so a commuting subspace has dimension at most $n$. The radical of $H$,
\begin{equation}
\label{eq:radical}
\rad(H) \equiv H \cap C(H),
\end{equation}
consists of the elements of $H$ that commute with all of $H$; it is always commuting.

Any density matrix expands uniquely in the Pauli basis,
\begin{equation}
\label{eq:pauli-expansion}
\rho = \sum_{i \in \{0,1\}^{2n}} a_i\, \sigma_i,\qquad a_i \equiv \frac{1}{2^n}\tr(\rho\, \sigma_i).
\end{equation}
Trace normalization pins $a_0 = 2^{-n}$, and the purity is
\begin{equation}
\label{eq:purity-in-a}
\gamma \equiv \tr(\rho^2) = 2^n \sum_{i \in \{0,1\}^{2n}} a_i^2.
\end{equation}

The Clifford group $\mathcal{C}_n$ is the normalizer of $\mathcal{P}^n$ in the unitary group: $C \in \mathcal{C}_n$ iff $C\, \sigma_i\, C^\dagger = \pm\, \sigma_{\phi(i)}$ for some $\mathbb{F}_2$-linear symplectic permutation $\phi$ of $\{0,1\}^{2n}$. That is, Cliffords map Paulis to Paulis under conjugation. They therefore permute $\{a_i\}$ and map $\supp(\rho)$ and $S$ to their images under $\phi$, leaving their dimensions, and hence the nullity $\nu$, invariant.

\subsection{Bell Sampling}
\label{sec:bell-sampling}

Both subroutines of our algorithm are built on one measurement primitive, Bell sampling~\cite{montanaro2017}, in which two copies of $\rho$ are measured jointly in the $n$-qubit Bell basis. The verifier uses Bell samples to estimate purities, and the finder uses differences of Bell samples to reveal the Pauli support. The $n$-qubit Bell basis is

\begin{equation}
\label{eq:n-qubit-bell}
\ket{\sigma_i} \equiv (\sigma_i \otimes I)\, \ket{\Phi^+_n},\qquad \ket{\Phi^+_n} \equiv \frac{1}{\sqrt{2^n}} \sum_{x \in \{0,1\}^n} \ket{x}\ket{x},
\end{equation}
an orthonormal basis of $2n$ qubits labeled by the same binary-symplectic strings $i \in \{0,1\}^{2n}$ that index the Paulis themselves. Because distinct Paulis $\sigma_i \neq \pm\sigma_j$ give orthogonal Bell states, each outcome of the measurement is a single label $i$, which we call a \emph{Bell sample}. Figure~\ref{fig:bell-measurement} shows the measurement.

\begin{figure}[ht]
\centering
\includegraphics{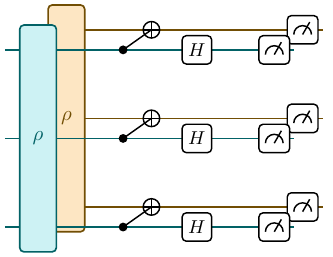}
\caption{The Bell sampling protocol. Two copies of $\rho$ are measured jointly in the transversal $n$-qubit Bell basis: a CNOT and a Hadamard act on each pair of corresponding qubits, followed by a computational-basis measurement. The $2n$ measured bits form the Bell-sample label $i$.}
\label{fig:bell-measurement}
\end{figure}

The outcome distribution is the diagonal of $\rho \otimes \rho$ in this basis. Expanding both copies of $\rho$ in Paulis collapses the double sum over $(j,k)$ to a single sum over $j$ (derivation in Appendix~\ref{app:bell-sampling}): 

\begin{equation}\label{eq:p-bell}
  P_{\mathrm{Bell}}\left( i \right) = \tr \left( \ket{\sigma_i}\bra{\sigma_i}\left( \rho \otimes \rho \right) \right)
  = \sum_{j \in \left\{ 0,1 \right\}^{2n}} (-1)^{\pi_Y(j) + \expval{i,j}}\, a_j^2 .
\end{equation}
A single Bell sample probes the \emph{quadratic} weights $\{a_j^2\}$ rather than the raw Pauli amplitudes $\{a_j\}$. The signs $(-1)^{\pi_Y(j) + \expval{i,j}}$ encode the interference between different Paulis in $\supp(\rho)$ and are the reason $P_{\mathrm{Bell}}$ carries stabilizer structure.

An immediate corollary is that the purity of $\rho$ can be computed as an expectation over Bell samples:

\begin{lemma}
\label{lemma:bell-sampling-known-quantites}
The purity of $\rho$ can be estimated from Bell samples via
\begin{equation}
\label{eq:purity-from-bell}
\gamma = \expval{(-1)^{\pi_Y(i)}}_{i \sim \mathrm{Bell}}.
\end{equation}
\end{lemma}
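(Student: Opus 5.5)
The plan is to substitute the Bell distribution of Eq.~\eqref{eq:p-bell} into the expectation and exchange the two finite sums. This gives
\begin{equation*}
\expval{(-1)^{\pi_Y(i)}}_{i\sim\mathrm{Bell}}
=\sum_{j} (-1)^{\pi_Y(j)} a_j^2 \sum_{i\in\{0,1\}^{2n}} (-1)^{\pi_Y(i)+\expval{i,j}} .
\end{equation*}
Everything then reduces to evaluating the inner character-type sum $G(j)\equiv\sum_i (-1)^{\pi_Y(i)+\expval{i,j}}$. Once it is known, the result should be compared with the purity formula Eq.~\eqref{eq:purity-in-a}, $\gamma=2^n\sum_j a_j^2$.

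The sum $G(j)$ factorizes over qubits. Both $\pi_Y(i)=\sum_q i_{X,q}i_{Z,q}$ and $\expval{i,j}$ are sums of single-qubit terms, so $G(j)=\prod_{q=1}^n g(j_q)$ with
\begin{equation*}
g(j_q)=\sum_{x,z\in\{0,1\}}(-1)^{xz + x j_{Z,q} + z j_{X,q}} .
\end{equation*}
This can be evaluated directly for the four single-qubit labels:
\begin{itemize}
\item For $j_q=I$ the sum is $1+1+1-1=2$.
\item For $X$, $Z$ and $Y$, the same kind of computation gives $\pm2$; for $Y$ it is $-2$.
\end{itemize}
Alternatively, the Gauss-sum form shows $|g|=2$ with sign $(-1)^{j_{X,q}j_{Z,q}}$. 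One can check this via the substitution $x\mapsto x+j_{X,q}$, $z\mapsto z+j_{Z,q}$, which completes the square in the quadratic form $xz$.

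Multiplying over qubits gives $G(j)=2^n(-1)^{\pi_Y(j)}$. The sign $(-1)^{\pi_Y(j)}$ from $G$ cancels the prefactor $(-1)^{\pi_Y(j)}$, leaving $2^n\sum_j a_j^2=\gamma$.

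The only step needing care is the sign of the single-qubit sum, i.e.\ verifying that $g(j_q)=2(-1)^{j_{X,q}j_{Z,q}}$ and that the paper's convention for $\pi_Y$ and the symplectic form matches this. I would check the four cases explicitly, or invoke the identities in Lemma~\ref{lemma:useful-properties}, rather than rely on the completed-square argument alone. The identity is exact, so the empirical mean of $(-1)^{\pi_Y(i)}$ over Bell samples is an unbiased estimator of $\gamma$, bounded in $[-1,1]$. Hoeffding's inequality then gives the sample complexity if a quantitative estimation statement is wanted.
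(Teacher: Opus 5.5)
Your proof is correct and follows essentially the paper's route. The paper obtains the lemma as the $\tilde S=\{I\}$ case of Lemma~\ref{lemma:projected-purity}: it substitutes Eq.~\eqref{eq:p-bell}, exchanges the sums, and evaluates the inner sum using Property~4 of Lemma~\ref{lemma:useful-properties}, which is exactly your identity $G(j)=2^n(-1)^{\pi_Y(j)}$. Your qubit-wise evaluation is a direct proof of that property, and its sign $g(j_q)=2(-1)^{j_{X,q}j_{Z,q}}$ checks out in all four cases under the paper's conventions.
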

\noindent We defer the proof, which is a special case of Lemma~\ref{lemma:projected-purity} in Sec.~\ref{sec:stab-ver}.

As $(-1)^{\pi_Y(i)} \in \{-1, +1\}$ is bounded, Hoeffding gives $M = O(\varepsilon^{-2} \ln(1/\delta))$ Bell samples as sufficient to estimate $\gamma$ to additive error $\varepsilon$ with confidence $1 - \delta$.

Beyond purity estimation, Bell samples carry a structural constraint, which Bell difference sampling exploits. In the pure-state setting, $P_{\mathrm{Bell}}$ is supported on a shift of the Pauli support:

\begin{proposition}[Hidden-shift structure of Bell samples, \cite{montanaro2017, hangleiter2024}]
\label{prop:hidden-shift}
For a pure state $\psi$, there exists a hidden shift $k \in \{0,1\}^{2n}$ (depending on $\psi$) such that
\begin{equation}
\label{eq:hidden-shift}
P_{\mathrm{Bell}}(i) \neq 0 \;\Longrightarrow\; i \oplus k \in \supp(\psi).
\end{equation}
\end{proposition}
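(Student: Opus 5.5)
The plan is to work directly from the Bell distribution, Eq.~\eqref{eq:p-bell}, specialized to a pure state. With $a_j$ the Pauli coefficients of $\psi$, the formula reads $P_{\mathrm{Bell}}(i)=\sum_{j}(-1)^{\pi_Y(j)+\expval{i,j}}a_j^2$. So $P_{\mathrm{Bell}}$ is, up to normalization, a symplectic Fourier transform of the function $j\mapsto(-1)^{\pi_Y(j)}a_j^2$.

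To show that this transform is supported on a coset of $\supp(\psi)$, I will use the centralizer characterization. Let $S=\mathrm{Stab}(\psi)=C(\supp\psi)$, as in Eq.~\eqref{eq:stab-centralizer}. Every $j$ with $a_j\neq0$ lies in $\supp\psi$, and $\supp\psi=C(S)$ by the double-centralizer identity. It therefore suffices to show
\[
\expval{i\oplus k,\,s}=0\qquad\text{for all } s\in S,
\]
for some fixed $k$.

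The key structural fact comes from the stabilizer phases. For $s\in S$ we have $\sigma_s\ket\psi=\epsilon_s\ket\psi$ with $\epsilon_s=\pm1$ (with the fixed sign convention). I would compute the character of $P_{\mathrm{Bell}}$ at $s$, namely
\[
\sum_i (-1)^{\expval{i,s}}P_{\mathrm{Bell}}(i)=\tr\!\big[(\sigma_s\otimes\sigma_s^{*})(\psi\otimes\psi)\big]
\]
up to the Bell-basis convention. This uses $\sigma_s\otimes\overline{\sigma_s}\ket{\sigma_i}=\pm(-1)^{\expval{i,s}}\ket{\sigma_i}$ together with the transpose trick $(A\otimes I)\ket{\Phi^+}=(I\otimes A^T)\ket{\Phi^+}$. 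The right-hand side equals $\epsilon_s\,\overline{\epsilon_s}\cdot(-1)^{\pi_Y(s)}$ times a convention sign, so it is $\pm1$. Equivalently, each $s\in S$ acts on the Bell-outcome distribution with a deterministic character value $\chi(s)\in\{\pm1\}$.

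Because $|\chi(s)|=1$ and $P_{\mathrm{Bell}}$ is a probability distribution, every $i$ in the support satisfies $(-1)^{\expval{i,s}}=\chi(s)$. The map $s\mapsto\chi(s)$ is a character of $S$: multiplicativity follows from the $\mathbb{F}_2$-linearity of $\expval{\cdot,s}$ once one value $i_0$ in the support is fixed. A linear functional on $S$ is then realized by the symplectic form, giving some $k$ with $(-1)^{\expval{k,s}}=\chi(s)$ for all $s\in S$; concretely, take $k=i_0$ for any $i_0$ with $P_{\mathrm{Bell}}(i_0)>0$. Then $\expval{i\oplus k,s}=0$ for every $s\in S$ and every $i$ in the support, which is exactly $i\oplus k\in C(S)=\supp\psi$.

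The main obstacle is the sign bookkeeping in the character step. One has to check that the Bell-state eigenvalue of $\sigma_s\otimes\overline{\sigma_s}$ is $(-1)^{\expval{i,s}}$ times an $i$-independent sign; the $Y$-parity $\pi_Y(s)$ and the transpose conventions appear here, and Lemma~\ref{lemma:useful-properties} should handle them. Alternatively, one can avoid the eigenvalue computation entirely and bound $|\sum_i(-1)^{\expval{i,s}}P_{\mathrm{Bell}}(i)|$ directly from Eq.~\eqref{eq:p-bell}. Summing over $i$ gives $4^n\sum_j(-1)^{\pi_Y(j)}a_j^2\,\delta_{j,s}$-type terms; after normalization this yields $2^n|a_s|^2\cdot 2^n=1$, because $|\tr(\psi\sigma_s)|=1$. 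This route uses only the $a_j$ expansion and the purity of $\psi$, so I would pursue it first.
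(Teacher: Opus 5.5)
Your proposal is correct. The second route, the one you plan to pursue first, is the cleanest. By Eq.~\eqref{eq:p-bell} and Property~3 of Lemma~\ref{lemma:useful-properties},
\[
\sum_i(-1)^{\expval{i,s}}P_{\mathrm{Bell}}(i)=4^n(-1)^{\pi_Y(s)}a_s^2 ,
\]
and $a_s=\pm2^{-n}$ for $s\in\mathrm{Stab}(\psi)$. So the $\pm1$-valued variable $(-1)^{\expval{i,s}}$ has mean $(-1)^{\pi_Y(s)}$, and is therefore constant on the support of $P_{\mathrm{Bell}}$. Any point of that support then serves as $k$, with $i\oplus k\in C(\mathrm{Stab}(\psi))=\supp(\psi)$, where $\supp$ is the span as the paper defines it. Your first route also works, and the sign bookkeeping you worry about is benign: $(\sigma_s\otimes\sigma_s^*)\ket{\sigma_i}=(-1)^{\expval{i,s}}\ket{\sigma_i}$ exactly, which gives the same value $(-1)^{\pi_Y(s)}$.

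The paper gives no proof of its own and defers to Refs.~\cite{montanaro2017, hangleiter2024}. Their argument works at the amplitude level: $P_{\mathrm{Bell}}(i)=2^{-n}\lvert\bra{\psi^*}\sigma_i\ket{\psi}\rvert^2$. Using $\sigma_s\ket{\psi}=\pm\ket{\psi}$ and $\sigma_s\ket{\psi^*}=\pm(-1)^{\pi_Y(s)}\ket{\psi^*}$ shows this amplitude equals $(-1)^{\expval{i,s}+\pi_Y(s)}$ times itself, so it vanishes off the coset.

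Your version is instead a character argument on the distribution, and it has three advantages. It uses only the Pauli-coefficient formula and character orthogonality already in the paper, in the spirit of its proof of Proposition~\ref{prop:pure-state-bell-diff}. It names the coset explicitly as $\{i:\expval{i,s}=\pi_Y(s)\ \forall s\in\mathrm{Stab}(\psi)\}$. It is also quantitative: for a mixed state the same identity gives bias $4^na_s^2<1$, for example $(1-\lambda)^2$ under depolarizing noise. This shows directly that the exact hidden shift is tied to strong stabilizers with $\lvert a_s\rvert=2^{-n}$, and how it degrades under noise.

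The amplitude route, in exchange, writes each Bell probability as an overlap with $\ket{\psi^*}$. That is the form the cited works build on for their further analysis.
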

\begin{proof}
See Refs.~\cite{montanaro2017, hangleiter2024}.
\end{proof}

\subsection{Bell Difference Sampling}
\label{sec:bell-difference-sampling}

XORing two i.i.d.\ Bell samples $i_1, i_2 \sim P_{\mathrm{Bell}}$ cancels the hidden shift of Proposition~\ref{prop:hidden-shift}, so for pure states the difference sample
\begin{equation}
\label{eq:bell-diff-sample}
b \;\equiv\; i_1 \oplus i_2
\end{equation}
lies in the support span $C(S)$ (Proposition~\ref{prop:pure-state-bell-diff} below).
 Its distribution $P_{\mathrm{BellDiff}}$ is the convolution of $P_{\mathrm{Bell}}$ with itself; substituting Eq.~\eqref{eq:p-bell} and collapsing the inner sum via Property~3 of Lemma~\ref{lemma:useful-properties} gives a clean closed form:
\begin{equation}
\label{eq:bell-diff-dist}
P_{\text{BellDiff}}(i)  = \sum_{j \in \mathcal{P}^n} {P_{\mathrm{Bell}}\left( {j} \right)P_{\mathrm{Bell}}\left( {i + j} \right)}    = 4^n \sum_{j \in \mathcal{P}^n} (-1)^{\expval{i,j}} a_j^4.
\end{equation}
The full derivation is in Appendix~\ref{app:bell-sampling}.

Structurally, Eq.~\eqref{eq:bell-diff-dist} is the binary-symplectic Fourier transform of the quartic Pauli weights $\{a_j^4\}$. Where Bell sampling probes the quadratic structure $\{a_j^2\}$, Bell difference sampling probes the $\ell_4$ structure of the Pauli expansion. This shift from $\ell_2$ to $\ell_4$ is what makes the distribution sensitive to the finer group structure our algorithm exploits.

An immediate consequence of the Fourier form is that translating by any element of the weak stabilizer leaves $P_{\mathrm{BellDiff}}$ unchanged:

\begin{observation}
\label{obs:stab-inv-diff}
The Bell difference sampling distribution is invariant under translations by $S$: for any $x \in S, y \in \{0,1\}^{2n}$, $P_{\mathrm{BellDiff}}(y) = P_{\mathrm{BellDiff}}(x + y)$.
\end{observation}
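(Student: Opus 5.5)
The plan is to argue directly from the Fourier form of the difference distribution, Eq.~\eqref{eq:bell-diff-dist}, which writes $P_{\mathrm{BellDiff}}$ as a symplectic Fourier transform of the quartic weights $a_j^4$. Translation invariance of a Fourier transform by $x$ is equivalent to the character $(-1)^{\expval{x,j}}$ being trivial on the support of the transformed function. So the whole proof reduces to showing that every $x\in S$ has vanishing symplectic product with every $j$ for which $a_j\neq 0$.

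Fix $x\in S$ and $y\in\{0,1\}^{2n}$. Substituting $x+y$ into Eq.~\eqref{eq:bell-diff-dist} and using bilinearity of the symplectic form over $\mathbb{F}_2$, namely $\expval{x+y,j}=\expval{x,j}+\expval{y,j}$, gives
\begin{equation}
P_{\mathrm{BellDiff}}(x+y)=4^n\sum_{j}(-1)^{\expval{x,j}}(-1)^{\expval{y,j}}a_j^4 .
\end{equation}
Only terms with $a_j\neq 0$ contribute, so we may restrict the sum to such $j$. Each of these lies in $\supp\rho$ by the definition in Eq.~\eqref{eq:pauli-support}.

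Next we invoke the characterization of the weak stabilizer, Eq.~\eqref{eq:weak-symmetry}: $S=C(\supp\rho)$. This gives $\expval{x,j}=0$ for every $j\in\supp\rho$, so the factor $(-1)^{\expval{x,j}}$ equals $1$ on every surviving term. The sum then reduces term by term to the expression for $P_{\mathrm{BellDiff}}(y)$, which proves the claim.

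There is no real obstacle here. The only point needing care is the linearity of $\expval{\cdot,\cdot}$ in its first argument, which is immediate from Eq.~\eqref{eq:symplectic-form}. One should also note that membership in $S$ only needs to be checked against the Paulis with nonzero coefficients, not against the full span. This is automatic, since commuting with a set of Paulis implies commuting with everything in their span.
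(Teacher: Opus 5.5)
Your proof is correct and is essentially the paper's argument. The paper states the observation as an immediate consequence of the Fourier form in Eq.~\eqref{eq:bell-diff-dist}: since $S=C(\supp\rho)$, the character $(-1)^{\expval{x,j}}$ equals $1$ on every $j$ with $a_j\neq 0$, which is exactly what you verify.
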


\subsection{Pure State (Noise-Free) Stabilizer Learning}
\label{sec:pure-state-learning}

For a pure state $\psi$, the Pauli expansion is highly constrained: every Pauli with a nonzero coefficient is an element of the centralizer of the stabilizer group.

\begin{observation}
\label{obs:pure-state-stab}
For a pure state $\psi$, $a_i \neq 0$ only if $i \in C(S)$.
\end{observation}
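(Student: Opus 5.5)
Take $S=\mathrm{Stab}(\psi)$, which for a pure state coincides with the weak stabilizer $C(\supp\psi)$. The plan is to show directly that any Pauli $\sigma_i$ with $a_i\neq 0$ commutes with every element of $S$.

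First, for each $g\in S$ we have $\sigma_g\ket{\psi}=\pm\ket{\psi}$. Hence $\sigma_g\psi=\pm\psi$ and $\psi\sigma_g^\dagger=\pm\psi$ with the same sign, so $\sigma_g\psi\sigma_g^\dagger=\psi$.

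Next, expand $\psi=\sum_j a_j\sigma_j$ and conjugate term by term. Conjugation by $\sigma_g$ sends $\sigma_j\mapsto(-1)^{\langle g,j\rangle}\sigma_j$, which is the commutation rule of Eq.~\eqref{eq:symplectic-form}. Uniqueness of the Pauli expansion (Eq.~\eqref{eq:pauli-expansion}) then gives $a_j=(-1)^{\langle g,j\rangle}a_j$ for every $j$. Thus $a_j\neq0$ forces $\langle g,j\rangle=0$.

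Since this holds for all $g\in S$, every $j$ with $a_j\neq 0$ lies in $C(S)$, which is the claim. As a consistency check, the same step shows $S\subseteq C(\supp\psi)$, matching Eq.~\eqref{eq:stab-centralizer}.

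There is no real obstacle here. The only care needed is with phases: $\mathcal P^n$ is taken modulo phases, and the sign convention $\sigma_i\sigma_j\propto\sigma_{i+j}$ is fixed. Conjugation removes any global phase on $\sigma_g$ and any $\pm$ sign on the stabilizer, so neither affects the argument.

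An alternative route is to use $a_j=2^{-n}\tr(\psi\sigma_j)$. If $\sigma_j$ anticommutes with some $g\in S$, then
\[
\tr(\psi\sigma_j)=\tr(\sigma_g\psi\sigma_g^\dagger\sigma_j)=\tr(\psi\,\sigma_g^\dagger\sigma_j\sigma_g)=-\tr(\psi\sigma_j),
\]
so $a_j=0$. I would present this trace version, since it is a one-line argument.
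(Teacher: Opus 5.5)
Your proof is correct and matches the paper's reasoning. The paper gives no separate proof of this observation: it treats it as immediate from $S=C(\supp\psi)$ together with $C(C(H))=H$, so that every label with $a_i\neq0$ lies in $\supp\psi=C(S)$. The identification $\mathrm{Stab}(\psi)=C(\supp\psi)$ in the introduction rests on your argument, namely that conjugation by $\sigma_g$ flips the sign of exactly those Pauli components that anticommute with it.
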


Bell difference sampling inherits this constraint:

\begin{proposition}
\label{prop:pure-state-bell-diff}
For a pure state $\psi$, $P_{\mathrm{BellDiff}}(i) \neq 0$ only if $i \in C(S)$.
\end{proposition}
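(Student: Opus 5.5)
We want to show that for a pure state $\psi$, $P_{\mathrm{BellDiff}}(i)\neq 0$ forces $i\in C(S)$. The most direct route is through the hidden-shift structure, Proposition~\ref{prop:hidden-shift}.

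The hidden-shift argument runs in three steps.
\begin{enumerate}
\item By Eq.~\eqref{eq:bell-diff-dist}, $P_{\mathrm{BellDiff}}(i)=\sum_j P_{\mathrm{Bell}}(j)P_{\mathrm{Bell}}(i+j)$. Every term is nonnegative, so a nonzero value requires some $j$ with $P_{\mathrm{Bell}}(j)\neq0$ and $P_{\mathrm{Bell}}(i+j)\neq 0$.
\item Proposition~\ref{prop:hidden-shift} then gives $j\oplus k\in\supp(\psi)$ and $i\oplus j\oplus k\in\supp(\psi)$ for the same shift $k$.
\item Since $\supp(\psi)$ is by definition a subspace (a span, per Eq.~\eqref{eq:pauli-support}), the sum $i$ of these two elements lies in $\supp(\psi)$. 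By Eq.~\eqref{eq:stab-centralizer} and $C(C(H))=H$, we have $\supp(\psi)=C(\mathrm{Stab}(\psi))=C(S)$, which is the claim.
\end{enumerate}

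We could instead argue directly from the Fourier form, which avoids citing the hidden-shift result. For $x\in S$, Observation~\ref{obs:stab-inv-diff} applies, and by Observation~\ref{obs:pure-state-stab} $a_j^4$ is supported on $C(S)$. So
\begin{equation}
P_{\mathrm{BellDiff}}(i)=4^n\sum_{j\in C(S)}(-1)^{\langle i,j\rangle}a_j^4 .
\end{equation}
The quantity $\sum_j a_j^4$ need not be uniform on $C(S)$, so character orthogonality does not apply immediately. Instead we use Observation~\ref{obs:stab-inv-diff}: $P_{\mathrm{BellDiff}}$ is constant on cosets of $S$. That alone does not force vanishing outside $C(S)$, so the Fourier route needs more input than the stated earlier results provide.

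For that reason we commit to the hidden-shift argument. Its only delicate point is step~3, namely identifying $\supp(\psi)$, taken as a span, with $C(S)$; that identification is exactly Eq.~\eqref{eq:stab-centralizer} combined with $C(C(H))=H$. Once Proposition~\ref{prop:hidden-shift} is granted, no real obstacle remains. The one thing to check carefully is that the same shift $k$ applies to both samples, which holds because both are drawn from the same $P_{\mathrm{Bell}}$.
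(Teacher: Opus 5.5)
Your proof is correct. Because the convolution form of $P_{\mathrm{BellDiff}}$ has nonnegative terms, a nonzero value needs some $j$ with $P_{\mathrm{Bell}}(j)P_{\mathrm{Bell}}(i+j)\neq 0$. Proposition~\ref{prop:hidden-shift}, whose single shift $k$ depends only on $\psi$, then puts $j+k$ and $i+j+k$ in the span $\supp(\psi)$, so their sum $i$ lies there too. Finally, $\supp(\psi)=C(S)$ by Eq.~\eqref{eq:stab-centralizer} and $C(C(H))=H$.

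Your argument is the intuition the paper states in Sec.~\ref{sec:bell-difference-sampling}. The formal proof in Appendix~\ref{app:bell-sampling}, however, takes the Fourier route you set aside. You were right that Observation~\ref{obs:stab-inv-diff} alone cannot finish that route. The missing input is one line of stabilizer algebra: for $g\in S$, $\sigma_g\psi=\pm\psi$ gives $|a_{j+g}|=|a_j|$, so the quartic Pauli weights themselves are constant on $S$-cosets. The paper restricts Eq.~\eqref{eq:bell-diff-dist} to $j\in C(S)$ and, for $i\notin C(S)$, picks $g\in S$ with $\expval{i,g}=1$. The fixed-point-free involution $j\mapsto j+g$ then pairs terms of equal weight $a_j^4$ and opposite sign, so $P_{\mathrm{BellDiff}}(i)=0$.

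Your version is shorter and gives a structural reason: the support of a convolution of nonnegative distributions lies in the sumset of their supports, and the shift cancels. Its cost is that it rests on Proposition~\ref{prop:hidden-shift}, which the paper cites rather than proves. The paper's cancellation argument is self-contained given Eq.~\eqref{eq:bell-diff-dist}.
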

\begin{proof}
See Appendix~\ref{app:bell-sampling}.
\end{proof}

Proposition~\ref{prop:pure-state-bell-diff} is the core of noise-free stabilizer learning~\cite{montanaro2017, grewal2024, hangleiter2024}. Every Bell difference sample $b_j$ lies in the support span $C(S)$, and the support of $P_{\mathrm{BellDiff}}$ spans $C(S)$ as an $\mathbb{F}_2$-subspace of $\{0,1\}^{2n}$. Enough samples therefore span $C(S)$ itself, and taking its symplectic complement recovers $S = C(C(S)) = \mathrm{Stab}(\psi)$.

However, exact recovery may require observing very rare samples. We instead allow an $\varepsilon$-approximate weak stabilizer, where sufficiently small components remain unresolved.

\begin{algorithm}[ht]
  \caption{\label{alg:noise-free-learning}Noise-free stabilizer learning~\cite{hangleiter2024}}
  \KwIn{$M$ independent Bell difference samples $b_1, \ldots, b_M \sim P_{\mathrm{BellDiff}}$ from a pure low-nullity state $\psi$.}
  \KwOut{Weak stabilizer estimate $\tilde{S} \leqslant \mathcal{P}^n$.}
  Compute the $\mathbb{F}_2$-span $\hat{\mathcal{G}} = \spn\{b_1, \ldots, b_M\}$\;
  \Return{$\tilde{S} = \rad(\hat{\mathcal{G}}) = \hat{\mathcal{G}} \cap C(\hat{\mathcal{G}})$}\;
\end{algorithm}

Algorithm~\ref{alg:noise-free-learning} gives an $\varepsilon$-approximate weak stabilizer group with $O\left(n + \frac{t+\log \frac{1}{\delta}}{\varepsilon}\right)$ with failure probability at most $\delta$. The proof is deferred as Corollary~\ref{lemma:noise-free-samples} in Sec.~\ref{sec:good-samples}.

With access to $\tilde S$, measuring its commuting generators on additional copies fixes their signs, which Bell data leaves open, and projection onto the most likely joint eigenspace introduces trace-distance error at most $\sqrt{\varepsilon}$ by Eq.~\eqref{eq:purity-trace-distance}. Full-state tomography on the compressed $\tilde\nu = n - \dim\tilde S \leqslant t$-qubit subsystem then yields an $O(\sqrt{\varepsilon})$-accurate approximation to $\psi$ with $O(2^{\tilde\nu} / \varepsilon)$ additional measurements~\cite{grewal2025, hangleiter2024, chia2024}.

Under noise, Proposition~\ref{prop:pure-state-bell-diff} fails: even arbitrarily weak white noise places Bell difference samples outside $C(S)$, so Algorithm~\ref{alg:noise-free-learning} returns $\{I\}$ for large $M$ (Sec.~\ref{eq:introduction}). We therefore target the $\varepsilon$-approximate weak stabilizer of Sec.~\ref{sec:approx-support}.

\subsection{Asymptotic notation}
\label{sec:asymptotic-notation}

Most of our statements are presented to leading order to reduce their mathematical complexity. We read the symbols as follows throughout unless a local regime is specified. Each statement comes with conditions of the form $A\ll B$ or $A\gg B$, such as $n-t\gg1$, $t\gg\ln\frac1\delta$ or $\varepsilon\ll1$. These define the limit in which it holds: the ratio $A/B$ tends to $0$, or to $\infty$. The approximate comparisons $\approx$, $\lessapprox$ and $\gtrapprox$ keep the leading terms and suppress relative corrections that are small in this limit: $A\approx B$ means $A=B\,(1+o(1))$, and $A\lessapprox B$ means $A\leqslant B\,(1+o(1))$, with $\gtrapprox$ analogous. When a ratio is only fixed and small rather than vanishing, these comparisons hold up to a relative error of that size. A sufficient condition stated with $\lessapprox$ needs enough margin to absorb the suppressed corrections. By contrast, $O(\cdot)$, $\Omega(\cdot)$ and $\Theta(\cdot)$ are constant-factor statements with untracked constants, as in the theorems for arbitrary Pauli noise, and $o(1)$ is a quantity that vanishes in the limit. Where exact finite bounds are needed we state them separately, as in Eq.~\eqref{eq:good-span-exact-bound}; elsewhere we stay at leading order to keep the structure visible. Auxiliary choices such as target dimensions, per-update budgets and batch sizes stay symbolic for the same reason; for the clean mode under depolarizing noise, Appendix~\ref{app:clean-parameters} fixes one explicit set of them, with a finite check under which it provably works.

\section{Projected Purity and Bell Difference Sampling}
\label{sec:tools}

Section~\ref{sec:summary-of-results} defined the Pauli twirl $\mathcal T_{\tilde S}$, the projected purity $\gamma_{\tilde S}$ and the purity loss $\Delta\gamma(\tilde S)$ for a commuting Pauli group $\tilde S$. This section proves the properties of these quantities on which the algorithms rest. Section~\ref{sec:purity-norm} records elementary properties of the projected purity. Section~\ref{sec:belldiff-purity} connects the projected purity to Bell difference sampling; its main result, Lemma~\ref{lemma:belldiff-and-gamma}, relates the purity that the verifier certifies to the samples that the finder observes, and is proved in Appendix~\ref{app:density-matrix-property}. Section~\ref{sec:degenerate-support} shows when the weak stabilizer is commuting and why candidate groups with small purity loss cannot conflict with one another. 

\subsection{Projected Purity as a Norm}
\label{sec:purity-norm}

In terms of the Pauli expansion of Eq.~\eqref{eq:pauli-expansion}, the twirl keeps exactly the components $\sigma_i$ with $i \in C(\tilde S)$: conjugating $\sigma_i$ by $g \in \tilde S$ gives $(-1)^{\expval{g,i}}\sigma_i$, and averaging over $\tilde S$ kills every $\sigma_i$ that anticommutes with some element of $\tilde S$. Since $\{\sigma_i/\sqrt{2^n}\}$ is an orthonormal basis for the Hilbert--Schmidt inner product $\tr(A^\dagger B)$, $\mathcal T_{\tilde S}$ is the orthogonal projection onto $\spn\{\sigma_i : i \in C(\tilde S)\}$, the operators commuting with all of $\tilde S$. The projected purity and the purity loss are
\begin{equation}
\label{eq:purity-as-norm}
\gamma_{\tilde S} = \left\| \mathcal T_{\tilde S}(\rho) \right\|_{\mathrm{HS}}^2 = 2^n\sum_{i\in C(\tilde S)}a_i^2,
\qquad
\Delta\gamma(\tilde S) = \left\| \rho - \mathcal T_{\tilde S}(\rho) \right\|_{\mathrm{HS}}^2 = 2^n\sum_{i\notin C(\tilde S)}a_i^2 .
\end{equation}
Here $\gamma_{\tilde S}$ is the squared length of the component of $\rho$ compatible with $\tilde S$ and $\Delta\gamma(\tilde S)$ is the squared distance from $\rho$ to the nearest operator compatible with $\tilde S$.

In the trivial case $\tilde S = \{I\}$ nothing is filtered and $\gamma_{\{I\}} = \gamma$. The twirl and $\gamma_H$ are defined by the same formulas for any Pauli label subgroup $H$, including noncommuting ones. Later, we will use commutativity for the joint eigenspace description (Appendix~\ref{app:fourier}) and for the verifier's product estimator (Lemma~\ref{lemma:projected-purity}).

Two properties of the projected purity will be useful. First, bigger groups filter more, since $H \subseteq H'$ gives $C(H') \subseteq C(H)$:

\begin{observation}
\label{obs:subgroup-purity}
For groups $H, H'$ with $H \subseteq H'$, $\gamma_{H'} \leqslant \gamma_H$.
\end{observation}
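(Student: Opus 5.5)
The plan is to reduce the statement to a comparison of index sets in the Pauli expansion of Eq.~\eqref{eq:pauli-expansion}, using the formula $\gamma_H = 2^n\sum_{i\in C(H)} a_i^2$ from Eq.~\eqref{eq:purity-as-norm}. That formula was stated for arbitrary Pauli label subgroups $H$, so commutativity of $H$ or $H'$ is never needed.

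The first step is to show that the centralizer reverses inclusion. If $H\subseteq H'$ and $j\in C(H')$, then $\expval{i,j}=0$ for every $i\in H'$. In particular this holds for every $i\in H$, so $j\in C(H)$. Hence $C(H')\subseteq C(H)$.

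The second step compares the two purities directly:
\begin{equation}
\gamma_H-\gamma_{H'} \;=\; 2^n\sum_{i\in C(H)\setminus C(H')} a_i^2 \;\geqslant\; 0 .
\end{equation}
Every term is a square of a real coefficient, since $\rho$ is Hermitian and Paulis are Hermitian, so $a_i=2^{-n}\tr(\rho\,\sigma_i)\in\mathbb{R}$.

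An equivalent, more geometric route uses the operator view. $\mathcal T_{H}$ and $\mathcal T_{H'}$ are orthogonal projections onto the spans of $\{\sigma_i: i\in C(H)\}$ and $\{\sigma_i: i\in C(H')\}$, and the second span is contained in the first. Therefore $\mathcal T_{H'}=\mathcal T_{H'}\mathcal T_H$. Since orthogonal projections do not increase the Hilbert--Schmidt norm, $\|\mathcal T_{H'}(\rho)\|_{\mathrm{HS}}\leqslant\|\mathcal T_H(\rho)\|_{\mathrm{HS}}$.

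There is no real obstacle here. The only point requiring care is confirming that the coefficients $a_i$ are real, so that each $a_i^2$ is nonnegative, and that the summation formula for $\gamma_H$ applies to non-commuting groups as well. Both facts were recorded just before the statement.
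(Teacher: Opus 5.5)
Your proof is correct and matches the paper's argument: inclusion reversal of centralizers, $C(H')\subseteq C(H)$, combined with the sum-of-squares formula $\gamma_H=2^n\sum_{i\in C(H)}a_i^2$ from Eq.~\eqref{eq:purity-as-norm}, which holds for noncommuting groups as well. The orthogonal-projection argument you add is an equivalent restatement of the same reasoning.
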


\noindent Second, adding the true weak stabilizer $S = C(\supp(\rho))$ to any group costs no purity, since every $i$ with $a_i \neq 0$ lies in $C(S)$ by definition:

\begin{observation}
\label{obs:add-S-purity}
For any group $H \leqslant \mathcal{P}^n$, $\gamma_{H + S} = \gamma_H$.
\end{observation}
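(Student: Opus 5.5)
The plan is to reduce the statement to a comparison of index sets in the Pauli expansion. By Eq.~\eqref{eq:purity-as-norm}, $\gamma_{H+S} = 2^n\sum_{i\in C(H+S)} a_i^2$ and $\gamma_H = 2^n\sum_{i\in C(H)} a_i^2$. Both sums are over nonnegative terms $a_i^2$, so it suffices to show that the two index sets agree on the labels with $a_i\neq 0$. Labels with $a_i=0$ contribute nothing and can be ignored.

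First I will record the symplectic identity $C(H+S) = C(H)\cap C(S)$. A label commutes with every element of $H+S$ if and only if it commutes with every element of $H$ and every element of $S$. The forward direction holds because $H,S\subseteq H+S$. The reverse direction holds by bilinearity of $\expval{\cdot,\cdot}$, since $\expval{h+s,i}=\expval{h,i}+\expval{s,i}$.

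Second, I will use that $S = C(\supp\rho)$ by Eq.~\eqref{eq:weak-symmetry}. Every $i$ with $a_i\neq0$ lies in $\supp\rho$. Each $s\in S$ commutes with all of $\supp\rho$, so $\expval{s,i}=0$, and hence $i\in C(S)$. Consequently $\{i : a_i\neq 0\}\cap C(H+S) = \{i: a_i\neq0\}\cap C(H)$, and the two weighted sums coincide, giving $\gamma_{H+S}=\gamma_H$.

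No step is a real obstacle. The only point needing care is that $H$ need not be commuting and $H+S$ need not be a commuting group. The formula for $\gamma_H$ via the twirl is stated in the paper for arbitrary label subgroups, so Eq.~\eqref{eq:purity-as-norm} applies without any commutativity assumption. I will not need the identity $C(C(H))=H$.
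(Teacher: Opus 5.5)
Your proof is correct and follows the paper's own one-line justification: every label with $a_i\neq 0$ lies in $C(S)$ because $S=C(\supp\rho)$, so restricting the sum to $C(H+S)=C(H)\cap C(S)$ removes only terms with $a_i=0$. You have simply spelled out the identity $C(H+S)=C(H)\cap C(S)$ and checked that the argument needs no commutativity of $H$, both of which the paper leaves implicit.
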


\subsection{Bell Difference Samples and Purity}
\label{sec:belldiff-purity}

The verifier certifies projected purities, while the finder observes only where Bell difference samples fall. The two lemmas of this subsection connect these quantities. Bell difference samples also fall in the centralizer of any commuting subgroup $H$ with probability at least the squared projected purity of $H$:

\begin{lemma}
\label{lemma:prob-diff-lower}
For a commuting subgroup $H \leqslant \mathcal{P}^n$, $P_{\mathrm{Bell Diff}}(C(H)) \equiv \Pr_{b \sim \mathrm{Bell Diff}}[b \in C(H)] \geqslant \gamma_H^2$.
\end{lemma}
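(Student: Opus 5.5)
The plan is to compute $P_{\mathrm{BellDiff}}(C(H))$ exactly from the Fourier form Eq.~\eqref{eq:bell-diff-dist}, and then lower bound it by a quantity that $\gamma_H$ is known to be at most.

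\textbf{Step 1 (exact Fourier sum).} Sum Eq.~\eqref{eq:bell-diff-dist} over $i\in C(H)$ and exchange the sums. By character orthogonality on the subspace $C(H)\subseteq\mathbb{F}_2^{2n}$, $\sum_{i\in C(H)}(-1)^{\expval{i,j}}$ equals $|C(H)|$ if $j\in C(C(H))=H$ and vanishes otherwise. Since $|C(H)|=4^n/|H|$, this gives
\begin{equation}
P_{\mathrm{BellDiff}}(C(H)) \;=\; \frac{16^n}{|H|}\sum_{j\in H}a_j^4 .
\end{equation}
Cauchy--Schwarz over the $|H|$ terms then yields
\begin{equation}
P_{\mathrm{BellDiff}}(C(H)) \;\geqslant\; \Bigl(\tfrac{4^n}{|H|}\textstyle\sum_{j\in H}a_j^2\Bigr)^2 .
\end{equation}

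\textbf{Step 2 (the main obstacle).} Step 1 is stated in terms of weights on $H$, whereas $\gamma_H$ involves weights on $C(H)$. The remaining task is therefore to show
\begin{equation}
\gamma_H\leqslant \frac{4^n}{|H|}\sum_{j\in H}a_j^2 .
\end{equation}
This is where commutativity of $H$ is used. Fix a phase convention in which the elements of $H$ form a commuting group of Hermitian operators $\{\pm h\}$. Their joint eigenprojectors $\Pi_s=\frac{1}{|H|}\sum_{h\in H}s(h)\,h$ are indexed by the $|H|$ characters $s$ of $H$, and they resolve the identity.

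The twirl is then block diagonal, $\mathcal T_H(\rho)=\sum_s\Pi_s\rho\Pi_s$, because averaging over the group kills the off-diagonal blocks. Hence
\begin{equation}
\gamma_H=\sum_s\tr\bigl[(\Pi_s\rho\Pi_s)^2\bigr]\leqslant\sum_s p_s^2,
\qquad p_s=\tr(\Pi_s\rho).
\end{equation}
The inequality holds blockwise: each $\Pi_s\rho\Pi_s$ is positive semidefinite with trace $p_s$, so its purity is at most $p_s^2$.

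Next, write $p_s=\frac{2^n}{|H|}\sum_{h\in H}s(h)\,a_h$, with the signs of the chosen phase convention absorbed into $s(h)$. Parseval over the character group of $H$ then gives
\begin{equation}
\sum_s p_s^2=\frac{4^n}{|H|}\sum_{h\in H}a_h^2 .
\end{equation}

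\textbf{Step 3 (combine).} Combining Steps 1 and 2 gives $P_{\mathrm{BellDiff}}(C(H))\geqslant\gamma_H^2$, as claimed.

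The delicate points are in Step 2: the sign bookkeeping that makes $\{\Pi_s\}$ a genuine orthogonal resolution, and the check that the characters $s$ range over exactly $|H|$ sign patterns, so that Parseval applies with the right normalization. Both rely on $H$ being commuting.
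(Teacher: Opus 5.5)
Your proof is correct and follows the paper's argument in Appendix~\ref{app:fourier}. It uses the same exact formula $\frac{16^n}{|H|}\sum_{j\in H}a_j^4$ and the same blockwise bound $\gamma_H\leqslant\sum_s p_s^2$ from the dephasing form of the twirl; the only difference is cosmetic: the paper writes the probability as $\|\tau*\tau\|_2^2$ and keeps only the origin entry $(\tau*\tau)(0)=\sum_x\tau_x^2$, while your Cauchy--Schwarz over $H$ followed by Parseval is the Fourier-dual form of that same inequality and reaches the identical intermediate bound $\bigl(\sum_s p_s^2\bigr)^2$.
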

\begin{proof}
See Proposition~\ref{prop:belldiff-lower-bound} in Appendix~\ref{app:fourier}.
\end{proof}

The next lemma refines this bound. For $H \subseteq H'$, it bounds the probability that a Bell difference sample in $C(H)$ also lies in $C(H')$, in terms of the purity drop $\gamma_H - \gamma_{H'}$. The finder relies on it each time it enlarges a candidate group:

\begin{lemma}[Purity--escape bound]
\label{lemma:belldiff-and-gamma}
Let $H$ be a commuting Pauli subgroup and let $H \subseteq H' \subseteq C(H)$, where $H'$ need not be commuting. Write $\Delta\gamma = \gamma_H - \gamma_{H'}$. For every $0 \leqslant \Delta\varepsilon \leqslant \Delta\gamma$,
\begin{equation}
\label{eq:exact-escape-bound}
\mathbb{P}_{b \sim \mathrm{BellDiff}}[b \in C(H') \mid b \in C(H)]
\leqslant 1 - 2\Delta\varepsilon + \frac{2(\Delta\varepsilon)^2}{\gamma_H}.
\end{equation}
There is no small-loss assumption on $\Delta\gamma$.
\end{lemma}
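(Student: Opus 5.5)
The plan is to turn both probabilities in Eq.~\eqref{eq:exact-escape-bound} into collision probabilities of coarse-grained Bell-sample distributions, and then compare the two collision probabilities with a second-moment inequality whose linear term is the purity drop.

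\emph{Step 1 (collision form).} A difference $b=i_1\oplus i_2$ lies in $C(K)$ exactly when $i_1$ and $i_2$ have the same syndrome $s_K(i)=(\expval{i,h})_{h\in K}$, i.e.\ lie in the same coset of $C(K)$. Hence
\[
\Pr[b\in C(K)]=\sum_{s} q_K(s)^2 ,
\]
where $q_K$ is the syndrome distribution of a single Bell sample. For $H\subseteq H'$ the $H'$-syndrome refines the $H$-syndrome, so the target ratio is
\[
\frac{\sum_{s'}q_{H'}(s')^2}{\sum_s q_H(s)^2}.
\]
Equivalently, inserting Eq.~\eqref{eq:bell-diff-dist} and summing the character over $C(K)$ gives the Fourier side
\[
\Pr[b\in C(K)]=\frac{16^n}{|K|}\sum_{j\in K}a_j^4 .
\]
I would keep both forms, since the collision form is what produces the structure below.

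\emph{Step 2 (link to projected purity).} I would show that the $H$-syndrome distribution carries the projected purity. By Eq.~\eqref{eq:p-bell}, the Fourier coefficient of $P_{\mathrm{Bell}}$ at $h$ is $\pm 2^n a_h^2$ (sign $(-1)^{\pi_Y(h)}$). Because $H$ is commuting, $\pi_Y$ restricted to $H$ is affine, so the sign can be absorbed into a relabeling of syndromes; this is the same mechanism behind the product estimator of Lemma~\ref{lemma:projected-purity}. The expected outcome is a representation of $\gamma_H$ as a correlation $\sum_s q_H(s)\,w_H(s)$ with a bounded weight $w_H$. Applied to $H'$, it should express $\Delta\gamma=\gamma_H-\gamma_{H'}$ as the mass that fine cells of $q_{H'}$ lose relative to their coarse parents. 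The hypothesis $H'\subseteq C(H)$ is what makes each coarse cell split into $|H'/H|$ fine cells on which this comparison is well defined, even when $H'$ itself is not commuting.

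\emph{Step 3 (the inequality).} Within each coarse cell $s$, split $q_H(s)$ into its fine parts. The drop in $\sum(\cdot)^2$ from coarse to fine is at least a quadratic expression in the escaped mass. Normalizing by $P(C(H))\geq\gamma_H^2$ (Lemma~\ref{lemma:prob-diff-lower}) and optimizing through Cauchy--Schwarz should give
\[
\Pr[b\in C(H')\mid b\in C(H)]\leq 1-2x+\frac{2x^2}{\gamma_H}
\]
with $x=\Delta\gamma$.

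\emph{Step 4 (extending to every $\Delta\varepsilon$).} To get the bound for every $0\leqslant\Delta\varepsilon\leqslant\Delta\gamma$, I would not use monotonicity of $x\mapsto 1-2x+2x^2/\gamma_H$; this function turns upward at $x=\gamma_H/2$, and $\Delta\gamma$ may exceed that. Instead I would derive the inequality in the variational form $\max_{0\leqslant x\leqslant\Delta\gamma}(2x-2x^2/\gamma_H)$, for example by letting only an $x$-portion of the escaped mass participate in the Cauchy--Schwarz step, so that every $\Delta\varepsilon$ in the range is covered directly.

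\emph{Main obstacle.} Step 2 is the delicate step: matching the $\ell_2$ quantity $\gamma_H$, which lives on $C(H)$, with the $\ell_4$ quantity $\sum_{j\in H}a_j^4$, which lives on $H$, including the $\pi_Y$ signs. If the syndrome bookkeeping fails, my fallback is to work directly in the Fourier form. There I would bound
\[
\frac{|H|}{|H'|}\,\frac{\sum_{H'}a^4}{\sum_H a^4}
\]
using the quadratic-form identity $\sum_{j\in K}a_j^4\propto\sum_{i\in C(K)}P_{\mathrm{Bell}}(i)\,\cdot$ (twirled Bell weights), which is again a Parseval statement.
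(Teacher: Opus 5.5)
Your Step~1 is sound and matches the paper's starting point. The syndrome distribution $q_H$ of one Bell sample is a shift of $\tau*\tau$, where $\tau_x=\tr(\Pi_x\rho)$ are the joint-eigenspace weights of $H$, so $P_H\equiv\Pr[b\in C(H)]=\|\tau*\tau\|_2^2$.

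The gap opens in Step~2, because the representation you expect there cannot exist. The distribution $q_H$ is determined by $\{\tr(\rho\sigma_h)^2\}_{h\in H}$ alone, whereas $\gamma_H$ depends on every coefficient in $C(H)$. For $H=\{I\}$, $q_H$ is a point mass while $\gamma_H=\gamma$ is arbitrary. The same mismatch affects $\Delta\gamma$, which is not a function of the coarse and fine syndrome distributions. The only available link is a pair of \emph{inequalities} that use $\rho\geqslant0$: $\sum_x\tau_x^2\geqslant\gamma_H$, and, for the refined weights $\tau_{xy}$ of a commuting $H'$, $\Delta\gamma_\tau\equiv\sum_x\tau_x^2-\sum_{x,y}\tau_{xy}^2\geqslant\Delta\gamma$. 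The paper proves the latter by writing $\Delta\gamma=\sum_x\sum_{y\neq z}\|\Pi_{xy}\rho\Pi_{xz}\|_{\mathrm{HS}}^2$ and bounding each term by $\tau_{xy}\tau_{xz}$ via $\rho=VV^\dagger$. Your Parseval fallback cannot replace this step. Such identities hold for every Hermitian operator, while both inequalities fail without positivity, e.g.\ for $\frac12(I+cX)$ with $c>1$, $H=\{I\}$, $H'=\langle Z\rangle$.

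Step~3 then normalizes in the wrong direction. To lower-bound $1-R=(P_H-P_{H'})/P_H$ you need an \emph{upper} bound on $P_H$, so $P_H\geqslant\gamma_H^2$ is of no use. The paper uses $P_H\leqslant\max_s(\tau*\tau)(s)=\sum_x\tau_x^2\equiv\gamma_\tau$, and only afterwards weakens $1/\gamma_\tau$ to $1/\gamma_H$.

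Step~4 correctly spots that $x\mapsto2x-2x^2/\gamma_H$ is not monotone, but it lacks the ingredient that fixes this. Keep only the origin cell of $\tau*\tau$, which has mass $\gamma_\tau$ and fine origin entry $\gamma_\tau-\Delta\gamma_\tau$. The drop is then only guaranteed to be $2\Delta\gamma_\tau(\gamma_\tau-\Delta\gamma_\tau)$. This vanishes as $\Delta\gamma_\tau\to\gamma_\tau$ if the escaped mass sits in a single fine entry, and no choice of which portion of that mass to count certifies more. What excludes this configuration is that the fine origin entry $\sum_{x,z}\tau_{xz}^2$ dominates every other fine entry $\sum_{x,z}\tau_{xz}\tau_{x,z+y}$, by Cauchy--Schwarz. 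That yields the second bound $P_H-P_{H'}\geqslant\gamma_\tau\Delta\gamma_\tau$, which covers $\Delta\gamma_\tau\geqslant\gamma_\tau/2$.

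Finally, none of this handles noncommuting $H'$. There is no refined eigenbasis, and passing to a commuting $H''$ with $H\subseteq H''\subseteq H'$ can forfeit most of $\Delta\gamma$. The paper proves separately that $P_{H'}/P_H\leqslant\frac12$ whenever $H'$ contains an anticommuting pair $\sigma_{i_1},\sigma_{i_2}$. In each $H$-eigenspace block, the expectations of $\sigma_{i_1},\sigma_{i_2},\sigma_{i_1+i_2}$ have squared sum at most $\tau_x^2$, and H\"older finishes. This suffices because $2\Delta\varepsilon-2(\Delta\varepsilon)^2/\gamma_H\leqslant\gamma_H/2\leqslant\frac12$.
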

\begin{proof}
The proof uses the eigenspaces of $H$ to compare Bell difference probabilities with projected purity, and is deferred to Appendix~\ref{app:density-matrix-property}, which also shows that the leading constant $2$ is sharp.
\end{proof}

\subsection{Degenerate Supports and No Wrong Turns}
\label{sec:degenerate-support}

Up to this point it has been implied that $S = C(\supp(\rho))$ is a commuting group, so that $\nu = n - \dim S$ is a well-defined nullity. This is automatic for pure states, but need not hold for mixed states. Consider the maximally mixed state $I/2^n$: its support is $\{I\}$, whose centralizer is $\mathcal{P}^n$---a noncommuting group with $\mathbb{F}_2$-dimension $2n$ and $\nu = -n$. The projected purity $\gamma_S$ remains well defined, but $S$ can no longer serve as a commuting stabilizer target.

This failure has a specific form: $S$ is noncommuting only when, in some Clifford frame, one qubit of $\rho$ is maximally mixed and uncorrelated with the remaining qubits; we say such a qubit is \emph{lost}. The lemma below states this quantitatively. If two anticommuting Paulis each nearly commute with $\rho$, in the sense that twirling over either one removes little purity, then the purity of $\rho$ is at most $\frac12$ plus the purity removed by the two twirls. The same statement controls the approximate version we need, in which two low-loss candidate groups fail to commute.

\begin{lemma}[Lost qubit]
\label{lemma:lost-qubit}
Let $g, h \in \mathcal{P}^n$ anticommute, and suppose $2^n \sum_{i \notin C(g)} a_i^2 \leqslant \varepsilon_1$ and $2^n \sum_{i \notin C(h)} a_i^2 \leqslant \varepsilon_2$. Then the purity satisfies $\gamma = \tr(\rho^2) \leqslant \frac12 + \varepsilon_1 + \varepsilon_2$.
\end{lemma}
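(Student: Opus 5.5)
The plan is to partition the Pauli labels by their commutation with $g$ and $h$, and then show that the part of $\rho$ commuting with both can carry purity at most $\frac12$. The other parts are controlled directly by the hypotheses. Write
\begin{equation*}
\gamma = 2^n\sum_{i\in C(g)\cap C(h)} a_i^2 \;+\; 2^n\sum_{i\notin C(g)\cap C(h)} a_i^2 .
\end{equation*}
Every label outside $C(g)\cap C(h)$ lies outside $C(g)$ or outside $C(h)$. The second sum is therefore at most $2^n\sum_{i\notin C(g)}a_i^2 + 2^n\sum_{i\notin C(h)}a_i^2 \leqslant \varepsilon_1+\varepsilon_2$. It remains to show $\gamma_{\langle g,h\rangle} = 2^n\sum_{i\in C(g)\cap C(h)} a_i^2\leqslant \frac12$. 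This is the projected purity for the noncommuting label group $H=\spn\{g,h\}$, which Sec.~\ref{sec:purity-norm} allows, since $C(H)=C(g)\cap C(h)$.

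For this bound I would pass to a Clifford frame. Because $g$ and $h$ anticommute, a Clifford $C$ exists with $C g C^\dagger \propto X_1$ and $C h C^\dagger \propto Z_1$, since any anticommuting pair extends to a symplectic basis. By Sec.~\ref{sec:paulis-cliffords}, Cliffords permute the coefficients $\{a_i\}$ and preserve both purity and the commutation structure. We may therefore assume $g=X_1$ and $h=Z_1$. Then $C(g)\cap C(h)$ is exactly the set of labels acting as $I$ on qubit $1$. The twirl $\mathcal T_H(\rho)=\frac14\sum_{P\in\{I,X_1,Y_1,Z_1\}}P\rho P$ is the full single-qubit depolarization of qubit $1$, so
\begin{equation*}
\mathcal T_H(\rho)=\frac{I_1}{2}\otimes \tr_1\rho .
\end{equation*}
Its Hilbert--Schmidt norm squared is $\gamma_H=\tr\!\big[(I_1/2)^2\big]\,\tr\!\big[(\tr_1\rho)^2\big]=\frac12\tr\!\big[(\tr_1\rho)^2\big]\leqslant\frac12$, because $\tr_1\rho$ is a density matrix. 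This is the identity $\gamma_H=\|\mathcal T_H(\rho)\|_{\mathrm{HS}}^2$ of Eq.~\eqref{eq:purity-as-norm} applied to $H$. Combining the two bounds gives $\gamma\leqslant\frac12+\varepsilon_1+\varepsilon_2$.

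The only step needing care is the Clifford normalization together with the sign and phase bookkeeping. If $g$ maps to $X_1$ only up to a phase, the twirl is unaffected, because conjugation removes phases. The existence of $C$ follows from transitivity of the symplectic group on hyperbolic pairs. If one prefers to avoid Cliffords, there is an alternative I would try: show directly that $\mathcal T_H(\rho)=\frac14\sum_{P\in\{I,g,h,gh\}}P\rho P$ has operator norm at most $\frac12$. The four conjugates average $\rho$ over a representation in which $g,h$ generate a copy of $M_2$, which yields the same bound $\tr[\mathcal T_H(\rho)^2]\leqslant\|\mathcal T_H(\rho)\|_\infty\leqslant\frac12$.
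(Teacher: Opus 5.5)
Your proposal is correct and follows the paper's proof essentially step for step. You split off the component supported on $C(g)\cap C(h)$, bound the remainder by $\varepsilon_1+\varepsilon_2$ via a union bound, and then use a Clifford sending $(g,h)$ to $(X_1,Z_1)$, under which that component becomes $\frac{I}{2}\otimes\tr_1\rho$ with purity at most $\frac12$.
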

\begin{proof}
We split $\rho$ into a component commuting with both $g$ and $h$, which carries purity at most $\frac12$, and a remainder, which carries at most $\varepsilon_1+\varepsilon_2$.
Define $\rho_\parallel = \sum_{i \in C(g) \cap C(h)} a_i \sigma_i$. By Eq.~\eqref{eq:purity-in-a} and a union bound over the excluded labels,
\begin{equation}
\gamma - \|\rho_\parallel\|_{\mathrm{HS}}^2 = 2^n \sum_{i \notin C(g) \cap C(h)} a_i^2 \leqslant \varepsilon_1 + \varepsilon_2 .
\end{equation}
Now choose a Clifford $U$ such that $U \sigma_g U^\dagger = X_1$ and $U \sigma_h U^\dagger = Z_1$. Cliffords preserve commutation, so $U \rho_\parallel U^\dagger$ is exactly the component of $U \rho U^\dagger$ that commutes with both $X_1$ and $Z_1$, i.e.\ acts as the identity on qubit $1$:
\begin{equation}
U \rho_\parallel U^\dagger = \frac{I}{2} \otimes \tr_1\!\left(U \rho U^\dagger\right).
\end{equation}
The purity of this state is $\frac12 \tr[(\tr_1 U \rho U^\dagger)^2] \leqslant \frac12$. The Hilbert--Schmidt norm is unitarily invariant, so $\|\rho_\parallel\|_{\mathrm{HS}}^2 \leqslant \frac12$. Hence $\gamma \leqslant \frac12 + \varepsilon_1 + \varepsilon_2$.
\end{proof}

Two cases will be important. First, when $\varepsilon_1 = \varepsilon_2 = 0$, if $S$ contained an anticommuting pair $g, h$, both would commute with the entire support, so $\gamma \leqslant \frac12$. In fact, $S$ is noncommuting exactly when $\rho$ is Clifford-equivalent to $\frac{I}{2} \otimes \rho'$, a state with one maximally mixed qubit. Second, when $\varepsilon_1 = \varepsilon_2 = \varepsilon$, if $S_1, S_2$ are commuting groups with purity loss at most $\varepsilon$, and $g \in S_1$, $h \in S_2$ anticommute, then $C(S_1) \subseteq C(g)$ and $C(S_2) \subseteq C(h)$, so the hypotheses hold and $\gamma \leqslant \frac12 + 2\varepsilon$. Intuitively, $\rho$ is approximately invariant under two anticommuting Paulis. In the Clifford frame of the proof they become $X_1$ and $Z_1$, and the only operators commuting with both act as the identity on qubit $1$. So $\rho$ is close to a state carrying an identity factor on that logical qubit:
\begin{equation}
U\rho U^\dagger\approx\frac{I}{2}\otimes\rho_{\mathrm{rest}},
\qquad
\Bigl\|U\rho U^\dagger-\frac{I}{2}\otimes\rho_{\mathrm{rest}}\Bigr\|_{\mathrm{HS}}^2\leqslant2\varepsilon,
\end{equation}
with $\rho_{\mathrm{rest}}=\tr_1(U\rho U^\dagger)$. Thus, if two low-loss candidate groups fail to commute, then in some Clifford frame one qubit of $\rho$ is close to maximally mixed and uncorrelated with the remaining qubits.

For the noise models of Sec.~\ref{sec:noise-model} the target is well behaved. By Eq.~\eqref{eq:pauli-fidelities}, no Pauli coefficient is rescaled to zero when $\lambda<\frac12$, or for any $\lambda<1$ under depolarizing noise, so $S=\mathrm{Stab}(\psi)$ exactly.

What the finder analyses need is that low-loss candidates never conflict. We secure this through the accuracy budget.

\begin{assumption}[Accuracy budget]
\label{asp:accuracy}
The total accuracy budget satisfies $\varepsilon<(1-2\lambda)^2/4$ for a Pauli channel with $\lambda<1/2$, and $\varepsilon<(1-\lambda)^2/4$ for depolarizing noise with $\lambda<1$.
\end{assumption}

This restricts only the requested accuracy. A larger $\varepsilon$ can always be lowered below the cap, which strengthens the output guarantee, and the lowered value is then used consistently in the update budgets, the verifier gap, and the cost bounds. The cap applies to the total purity loss accumulated over all accepted updates, not to each update separately.

\begin{lemma}[Compatibility]
\label{lemma:compatibility}
Under Assumption~\ref{asp:accuracy}, any internally commuting Pauli groups $S_1, S_2$ with $\gamma - \gamma_{S_1} \leqslant \varepsilon$ and $\gamma - \gamma_{S_2} \leqslant \varepsilon$ commute with each other.
\end{lemma}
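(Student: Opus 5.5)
The argument is by contradiction, combining Lemma~\ref{lemma:lost-qubit} with a lower bound on the purity $\gamma$ of the noisy state. Suppose $S_1$ and $S_2$ do not commute with each other. Then there exist $g\in S_1$ and $h\in S_2$ that anticommute. Since $g\in S_1$, we have $C(S_1)\subseteq C(g)$, so every label outside $C(g)$ is also outside $C(S_1)$. Hence
\begin{equation}
2^n\sum_{i\notin C(g)}a_i^2\leqslant 2^n\sum_{i\notin C(S_1)}a_i^2=\gamma-\gamma_{S_1}\leqslant\varepsilon ,
\end{equation}
and in the same way $2^n\sum_{i\notin C(h)}a_i^2\leqslant\varepsilon$. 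Lemma~\ref{lemma:lost-qubit} with $\varepsilon_1=\varepsilon_2=\varepsilon$ then gives $\gamma\leqslant\tfrac12+2\varepsilon$. It remains to show that Assumption~\ref{asp:accuracy} forces $\gamma>\tfrac12+2\varepsilon$, which is the contradiction.

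\textbf{Purity lower bound.} Write $\psi=\sum_i c_i\sigma_i$ with $2^n\sum_i c_i^2=1$ and $c_0=2^{-n}$. For a Pauli channel, Eq.~\eqref{eq:pauli-fidelities} gives $a_i=f_ic_i$, with $f_0=1$ and $f_i\geqslant 1-2\lambda\geqslant0$ when $\lambda<\tfrac12$. Therefore
\begin{equation}
\gamma=2^n\sum_i f_i^2c_i^2\geqslant 2^{-n}+(1-2\lambda)^2\bigl(1-2^{-n}\bigr)\geqslant(1-2\lambda)^2 .
\end{equation}
For depolarizing noise, $f_i=1-\lambda$ for every $i\neq0$, and the same computation gives $\gamma\geqslant(1-\lambda)^2$.

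\textbf{Closing the contradiction.} Write $\alpha$ for $1-2\lambda$ (Pauli case) or $1-\lambda$ (depolarizing case), so $\gamma\geqslant\alpha^2$ and the assumption reads $\varepsilon<\alpha^2/4$. This gives $2\varepsilon<\alpha^2/2$, and combining with the lost-qubit bound,
\begin{equation}
\alpha^2\leqslant\gamma\leqslant\tfrac12+2\varepsilon<\tfrac12+\tfrac{\alpha^2}{2},
\end{equation}
which simplifies to $\alpha^2<1$.

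This is the step I expect to be the main obstacle: $\alpha^2<1$ holds for any nonzero noise, so the crude bound $\gamma\geqslant\alpha^2$ is not sharp enough to produce a contradiction. Two refinements can restore it. The first is to keep the exact expression $\gamma=2^n\sum_i f_i^2c_i^2$ rather than discarding terms. The second, which I would try first, is to run the lost-qubit decomposition directly on $\psi$ and not on $\rho$. Since $a_i^2\geqslant\alpha^2c_i^2$ for every $i$, each bound of the form $2^n\sum_{i\notin C(g)}a_i^2\leqslant\varepsilon$ implies the pure-state bound $2^n\sum_{i\notin C(g)}c_i^2\leqslant\varepsilon/\alpha^2$, and likewise for $h$. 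Applying Lemma~\ref{lemma:lost-qubit} to $\psi$, whose purity is $1$, gives
\begin{equation}
1\leqslant\tfrac12+\frac{2\varepsilon}{\alpha^2},
\end{equation}
that is, $\varepsilon\geqslant\alpha^2/4$. This contradicts Assumption~\ref{asp:accuracy} exactly at its stated cap, so this route matches the assumption and I would adopt it as the main line.
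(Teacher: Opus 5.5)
Your adopted main line is correct and is essentially the paper's proof: you transfer the loss bounds from $\rho$ to $\psi$ via $a_i^2\geqslant\alpha^2c_i^2$ (the paper writes this as $\Delta\gamma(H)\geqslant\alpha^2\Delta\gamma_\psi(H)$), then apply Lemma~\ref{lemma:lost-qubit} to the pure state to get $1\leqslant\frac12+2\varepsilon/\alpha^2$, which contradicts Assumption~\ref{asp:accuracy}. You were also right to abandon your first route of applying the lemma to $\rho$ directly: that route needs the condition $\gamma>\frac12+2\varepsilon$, which the paper records as a separate sufficient condition and which does not follow from the cap.
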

\begin{proof}
Let $\Delta\gamma_\psi(H)$ denote the purity loss of $H$ in the underlying pure state $\psi$. By Eq.~\eqref{eq:pauli-fidelities}, a Pauli channel scales each coefficient by $f_i\geqslant1-2\lambda$, so $\Delta\gamma(H)\geqslant(1-2\lambda)^2\Delta\gamma_\psi(H)$. For depolarizing noise, $\Delta\gamma(H)=(1-\lambda)^2\Delta\gamma_\psi(H)$ exactly, since only the identity coefficient escapes the common factor and it lies in every $C(H)$. Either way, Assumption~\ref{asp:accuracy} turns a loss of at most $\varepsilon$ in $\rho$ into a loss below $\frac14$ in $\psi$. If $g\in S_1$ and $h\in S_2$ anticommuted, then $C(S_1)\subseteq C(g)$ and $C(S_2)\subseteq C(h)$, and Lemma~\ref{lemma:lost-qubit} applied to $\psi$ would give $1=\gamma_\psi<\frac12+\frac14+\frac14$, a contradiction.
\end{proof}

Lemma~\ref{lemma:compatibility}, not the numerical cap, is what the finder analyses generally use, and the cap is only a sufficient condition. Another sufficient condition is $\gamma>\frac12+2\varepsilon$, from the approximate reading of Lemma~\ref{lemma:lost-qubit} applied to $\rho$ directly; under depolarizing noise it holds whenever $\lambda<1-\sqrt{\tfrac12+2\varepsilon}$, whose right-hand side approaches $1-1/\sqrt{2}\approx0.293$ as $\varepsilon\to0$. More generally, compatibility can only fail through the nearly erased logical qubit described above. The states we target are noisy images of pure states in which every encoded qubit still carries information at resolution $\varepsilon$, so we expect compatibility to hold well beyond the cap. This is a physical expectation, not a theorem, which is why the cap is the formal assumption. From here on, the finder analyses and recovery guarantees use Assumption~\ref{asp:accuracy}, and hence Lemma~\ref{lemma:compatibility}, unless stated otherwise.

\section{Candidate Verification}
\label{sec:stab-ver}

The first step towards finding the weak stabilizer is the ability to \textit{verify} whether some candidate group $\tilde{S}$, or a proposed extension of it, stays within its purity-loss budget, as in $\Delta\gamma = \gamma - \gamma_{\tilde{S}} \leqslant \varepsilon$. It does not by itself certify $S\subseteq\tilde S$, which an $\varepsilon$-approximate weak stabilizer also requires; that containment is what the finder analyses supply.

We build the verifier in two steps. First, we show that the projected purity $\gamma_{\tilde S}$ is estimable from Bell samples (Lemma~\ref{lemma:projected-purity}) and turn this into a one-shot verifier, Algorithm~\ref{alg:sim-stab-ver}, to fix ideas. Sec.~\ref{sec:adaptive-ver} then upgrades it to the verifier the finders actually call, Algorithm~\ref{alg:stab-ver}: it estimates the \emph{relative} drop $\gamma_{\tilde S} - \gamma_{\tilde S + \Delta \tilde S}$ of a proposed extension and reuses one dataset across adaptive queries (Theorem~\ref{thm:stab-ver}). Sec.~\ref{sec:ver-discussion} collects the conventions that the finder analyses rely on.

In Sec.~\ref{sec:bell-sampling}, we already showed that $\gamma$ can be efficiently estimated. Ref.~\cite{hangleiter2024} gives Bell-sample estimators for single-Pauli overlaps $\tr(\sigma_g\rho\sigma_g\rho)$. For a commuting $\tilde S$, the projected purity is their average, $\gamma_{\tilde S}=\frac{1}{|\tilde S|}\sum_{g\in\tilde S}\tr(\sigma_g\rho\sigma_g\rho)$, and with a bit of engineering the averaged estimator becomes a filter that only checks the generators of $\tilde S$:
\begin{lemma}[Projected Purity]
\label{lemma:projected-purity}
For a commuting candidate group $\tilde{S} = \expval{g_1, g_2, \ldots, g_m}$, Bell sampling gives
\begin{equation}
\label{eq:projected-purity-est}
\gamma_{\tilde{S}} = \expval{(-1)^{\pi_Y(i)} \prod_{k=1}^{m} \left[(-1)^{\pi_Y(g_k) + \expval{g_k,i}} = 1\right]}_{i \sim \mathrm{Bell}}.
\end{equation}
\end{lemma}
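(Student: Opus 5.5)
The plan is to rewrite the product of indicator functions as an average over the group $\tilde S$, and then evaluate the resulting expectation against the explicit Bell distribution of Eq.~\eqref{eq:p-bell}.

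For each generator, write $\left[(-1)^{\pi_Y(g_k)+\expval{g_k,i}}=1\right] = \frac12\left(1+(-1)^{\pi_Y(g_k)+\expval{g_k,i}}\right)$ and expand the product over $k$. This gives a sum over subsets $A\subseteq\{1,\dots,m\}$, with $2^{-m}=1/|\tilde S|$ in front. Each subset corresponds to the group element $g_A=\sum_{k\in A}g_k$.

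The next step is to check that the sign factors multiply consistently. The symplectic part is linear: $\sum_{k\in A}\expval{g_k,i}=\expval{g_A,i}$. The $Y$-parity is quadratic. I would use the identity from Lemma~\ref{lemma:useful-properties}, $\pi_Y(a+b)=\pi_Y(a)+\pi_Y(b)+\expval{a,b}$ (mod 2, possibly up to a term that vanishes when $a,b$ commute). Because $\tilde S$ is commuting, every cross term $\expval{g_k,g_l}$ vanishes, so $\sum_{k\in A}\pi_Y(g_k)=\pi_Y(g_A)$. The right-hand side therefore becomes
\begin{equation*}
\frac{1}{|\tilde S|}\sum_{g\in\tilde S}\expval{(-1)^{\pi_Y(i)+\pi_Y(g)+\expval{g,i}}}_{i\sim\mathrm{Bell}} .
\end{equation*}
This is the step I expect to be the main obstacle. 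It requires getting the exact form of the $Y$-parity identity right, and confirming that commutativity is precisely what makes the generator-product filter equal the group average.

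Next, I would evaluate each group term by substituting Eq.~\eqref{eq:p-bell} and summing over $i$:
\begin{equation*}
\sum_i (-1)^{\pi_Y(i)+\pi_Y(g)+\expval{g,i}}\sum_j (-1)^{\pi_Y(j)+\expval{i,j}}a_j^2 .
\end{equation*}
Using $\pi_Y(i)+\pi_Y(j)+\expval{i,j}=\pi_Y(i+j)$, and then the same identity again with $g$, the inner sum over $i$ becomes a character sum. That character sum should collapse to $2^n(-1)^{\expval{g,j}}$ for every $j$; this is Property~3 of Lemma~\ref{lemma:useful-properties}, the same step used to derive Eq.~\eqref{eq:bell-diff-dist}. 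As a sanity check, the $g=0$ case must reproduce Lemma~\ref{lemma:bell-sampling-known-quantites}, namely $2^n\sum_j a_j^2=\gamma$.

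The result is $2^n\sum_j(-1)^{\expval{g,j}}a_j^2=\tr(\sigma_g\rho\sigma_g\rho)$. Finally, I average over $g\in\tilde S$. The character sum $\frac1{|\tilde S|}\sum_{g\in\tilde S}(-1)^{\expval{g,j}}$ is the indicator of $j\in C(\tilde S)$. This leaves $2^n\sum_{j\in C(\tilde S)}a_j^2=\gamma_{\tilde S}$, as in Eq.~\eqref{eq:purity-as-norm}, which completes the argument.
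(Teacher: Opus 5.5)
Your argument is correct and follows the paper's proof: you expand the generator filter into the group average $\frac{1}{|\tilde S|}\sum_{g\in\tilde S}(-1)^{\pi_Y(i)+\pi_Y(g)+\expval{g,i}}$ using additivity of $\pi_Y$ on commuting elements, evaluate each term against Eq.~\eqref{eq:p-bell}, and finish with the character sum over $\tilde S$. The identity $\pi_Y(a+b)=\pi_Y(a)+\pi_Y(b)+\expval{a,b}$ is exact (Property~1), so your hedge about an extra term can be dropped.

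One citation is wrong. The collapse of the $i$-sum is not Property~3: a pure character sum over $\{0,1\}^{2n}$ gives $4^n$ or $0$, never $2^n$. After your two uses of Property~1 the exponent is $\pi_Y(i+j+g)+\expval{g,j}$, so the sum equals $(-1)^{\expval{g,j}}\sum_k(-1)^{\pi_Y(k)}$. The fact $\sum_k(-1)^{\pi_Y(k)}=2^n$ is the quadratic sum of Property~4. The paper instead applies Property~4 directly and then Property~1.
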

The proof can be found in Appendix~\ref{app:bell-sampling}. As a justification, recall from Lemma~\ref{lemma:bell-sampling-known-quantites} that the purity $\gamma$ can be estimated via $\gamma = \expval{(-1)^{\pi_Y(i)}}_{i \sim \text{Bell}}$. Eq.~\eqref{eq:projected-purity-est} is the usual purity estimator with an additional filter: outcomes contribute their usual sign when all conditions are satisfied, and zero otherwise.

By $\gamma = \expval{(-1)^{\pi_Y(i)}}_{i \sim \text{Bell}}$, we can also estimate the purity drop:
\begin{equation}
\label{eq:purity-drop-from-full}
\gamma - \gamma_{\tilde{S}} = \expval{(-1)^{\pi_Y(i)} \left(1 -\prod_{k = 1}^{s} {\left[ {(-1)^{{\pi_Y}(g_k) + \left\langle {{g_k},i} \right\rangle } = 1} \right]}\right)}_{i \sim \mathrm{Bell}}
\end{equation}
which allows us to \textit{verify} whether a candidate group keeps its purity loss within $\varepsilon$. A formal description is as follows:
\begin{algorithm}[ht]
  \caption{\label{alg:sim-stab-ver}Simple Candidate Verification}
  \KwIn{Generators $g_1, g_2, \ldots, g_s$ for a commuting candidate group $\tilde{S} = \langle g_1, \ldots, g_s\rangle$, accuracy parameter $\varepsilon$, confidence parameter $\delta$.}
  \KwOut{Accept or reject the candidate $\tilde{S}$ according to its estimated purity loss.}
  Set $M \leftarrow \left\lceil\frac{8}{\varepsilon^2} \ln \frac{2}{\delta}\right\rceil$\;
  Obtain $M$ Bell samples for the state, say ${b_1, b_2, \ldots, b_M}$\;
  Compute $\widehat{\Delta\gamma}\leftarrow\frac{1}{M}\sum_{i=1}^M(-1)^{\pi_Y(b_i)}\left(1-\prod_{k=1}^{s}\left[(-1)^{\pi_Y(g_k)+\expval{g_k,b_i}}=1\right]\right)$\;
  \lIf{$\widehat{\Delta \gamma} \leqslant \frac{\varepsilon}{2}$}{accept}
  \lElse{reject}
\end{algorithm}
\begin{proposition}
\label{prop:sim-stab-ver}
Let $\Delta \gamma = \gamma - \gamma_{\tilde{S}}$. With probability at least $1 - \delta$:
\begin{itemize}
    \item If $\gamma - \gamma_{\tilde{S}} \geqslant \varepsilon$, then Algorithm~\ref{alg:sim-stab-ver} is guaranteed to reject.
    \item If $\gamma - \gamma_{\tilde{S}} = 0$ (i.e., $\tilde{S} \subseteq S$ is part of the exact weak stabilizer), then the algorithm is guaranteed to accept.
\end{itemize}
\end{proposition}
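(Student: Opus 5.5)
The plan is a single Hoeffding argument applied to the estimator of Eq.~\eqref{eq:purity-drop-from-full}. For each Bell sample $b_i$ define
\[
X_i \equiv (-1)^{\pi_Y(b_i)}\Bigl(1-\prod_{k=1}^{s}\bigl[(-1)^{\pi_Y(g_k)+\expval{g_k,b_i}}=1\bigr]\Bigr).
\]
The variables $X_1,\dots,X_M$ are i.i.d. Each product of indicators lies in $\{0,1\}$, so $X_i\in\{-1,0,1\}$, a range of width $2$.

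By Lemma~\ref{lemma:bell-sampling-known-quantites} and Lemma~\ref{lemma:projected-purity}, linearity of expectation gives $\mathbb E[X_i]=\gamma-\gamma_{\tilde S}=\Delta\gamma$. This is exactly Eq.~\eqref{eq:purity-drop-from-full}, and it uses the hypothesis that $\tilde S$ is commuting, which Lemma~\ref{lemma:projected-purity} requires. Hence $\widehat{\Delta\gamma}=\frac1M\sum_i X_i$ is an unbiased estimator of $\Delta\gamma$.

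Hoeffding's inequality for variables of range $2$ gives
\[
\Pr\bigl[|\widehat{\Delta\gamma}-\Delta\gamma|\geqslant \tfrac{\varepsilon}{2}\bigr]\leqslant 2\exp\!\Bigl(-\frac{2M(\varepsilon/2)^2}{4}\Bigr)=2\exp\!\Bigl(-\frac{M\varepsilon^2}{8}\Bigr).
\]
With $M\geqslant\frac{8}{\varepsilon^2}\ln\frac2\delta$ the right-hand side is at most $\delta$. So with probability at least $1-\delta$ we have $|\widehat{\Delta\gamma}-\Delta\gamma|<\varepsilon/2$, and both bullets are read off on this event.

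\emph{First bullet.} If $\Delta\gamma\geqslant\varepsilon$, then $\widehat{\Delta\gamma}>\varepsilon-\varepsilon/2=\varepsilon/2$, so the algorithm rejects.

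\emph{Second bullet.} If $\Delta\gamma=0$, then $\widehat{\Delta\gamma}<\varepsilon/2$, so the algorithm accepts. In this case there is a more direct route. Every $i$ with $a_i\neq0$ lies in $C(\tilde S)$, so $P_{\mathrm{Bell}}$ in Eq.~\eqref{eq:p-bell} has no contributions outside $C(\tilde S)$. One could then try to show $X_i\equiv0$ almost surely, giving acceptance deterministically, but that is not needed.

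The only real obstacle is bookkeeping. The estimator's range is $[-1,1]$, not $[0,1]$, which fixes the constant $8$. The unbiasedness claim also rests entirely on Lemma~\ref{lemma:projected-purity}, so the commuting hypothesis on $\tilde S$ must be carried through.
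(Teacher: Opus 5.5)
Your proof is correct and is essentially the paper's argument: you establish unbiasedness of the $\{-1,0,1\}$-valued estimator through Eq.~\eqref{eq:purity-drop-from-full}, apply Hoeffding at deviation $\varepsilon/2$ to get the tail bound $2e^{-M\varepsilon^2/8}\leqslant\delta$, and read off both bullets on the event $|\widehat{\Delta\gamma}-\Delta\gamma|<\varepsilon/2$. Your optional aside would fail for mixed inputs and should be dropped: for $\rho=I/2$ and $\tilde S=\langle Z\rangle$ one has $\Delta\gamma=0$, yet the Bell outcomes labelled by the Paulis $X$ and $Y$ each occur with probability $\tfrac14$ and give $X_i=+1$ and $X_i=-1$ respectively (the same happens for any depolarized state), so acceptance comes only from the concentration argument.
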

\begin{proof}
Let $\varepsilon' = \frac{\varepsilon}{2}$, so $M = \left\lceil\frac{2}{\varepsilon'^2} \ln \frac{2}{\delta}\right\rceil$. Eq.~\eqref{eq:purity-drop-from-full} shows that $\Delta \gamma$ is estimated correctly. Formally, because each sample $i \sim \text{Bell}$ satisfies $(-1)^{\pi_Y(i)}\left(1 - \prod_{k = 1}^{s} {\left[ {(-1)^{{\pi_Y}(g_k) + \left\langle {{g_k},i} \right\rangle } = 1} \right]}\right) \in \{-1,0,1\}$, by Hoeffding's inequality, $P\left(|\widehat{\Delta \gamma} - \Delta \gamma| \geqslant \varepsilon'\right) \leqslant 2 e^{-\frac{1}{2} M \varepsilon'^2} \leqslant \delta$. So with probability at least $1 - \delta$, $|\widehat{\Delta \gamma} - \Delta \gamma|  < \frac{\varepsilon}{2}$. In this case, when $\Delta \gamma = 0$, $\widehat{\Delta \gamma} < \varepsilon' = \frac{\varepsilon}{2}$; thus, the algorithm accepts. Meanwhile, if $\Delta \gamma \geqslant \varepsilon = 2 \varepsilon'$, then $\widehat{\Delta \gamma} > \varepsilon' = \frac{\varepsilon}{2}$; thus, the algorithm rejects.
\end{proof}

\subsection{Adaptive Candidate Verification}
\label{sec:adaptive-ver}

Algorithm~\ref{alg:sim-stab-ver} is a proof of concept. This is not sufficient as a subroutine for the full stabilizer-finding algorithm. Here, a finder algorithm needs to build the group step by step. It will maintain a guess $\tilde{S}$ and request to add another group $\Delta \tilde{S}$ via $\tilde{S} \leftarrow \tilde{S} + \Delta \tilde{S}$ ($\tilde{S} \leftarrow \spn(\tilde{S} \cup \Delta \tilde{S})$). A verifier algorithm will see whether this addition is valid by checking $\Delta \gamma = \gamma_{\tilde{S}} - \gamma_{\tilde{S} + \Delta \tilde{S}} \leqslant \varepsilon$ for some $\varepsilon$. For this purpose, we'll need three upgrades to the simple candidate verifier in Algorithm~\ref{alg:sim-stab-ver}:
\begin{description}
\item[$\pmb{\Delta \gamma}$]  The verifier needs to estimate the relative purity drop $\gamma_{\tilde{S}} - \gamma_{\tilde{S} + \Delta \tilde{S}}$ instead of the full purity drop from the original state $\gamma - \gamma_{\tilde{S} + \Delta \tilde{S}}$.
\item[$\pmb{\Delta \varepsilon_-}$ and $\pmb{\Delta \varepsilon_+}$] It takes as input a pair of parameters $\Delta \varepsilon_-$ and $\Delta \varepsilon_+$ with $0 \leqslant \Delta \varepsilon_- < \Delta \varepsilon_+$. The verifier guarantees acceptance if $\Delta \gamma \leqslant \Delta \varepsilon_-$, and guarantees rejection if $\Delta \gamma > \Delta \varepsilon_+$.
\item[Adaptivity] If the number of requests from the finder is large, it is advantageous for the verifier to draw a large number of samples first and use this fixed set of samples to estimate $\Delta \gamma$ whenever needed. This saves sample complexity. This is adaptive because each future query to the verifier will depend on results from the previous queries.
\end{description}

For our queries, a single set of Bell samples suffices, as in adaptive data analysis~\cite{bassily2016}. The key is to estimate the purity drops for all nested commuting group pairs accurately at once; the finder can then choose its queries adaptively. A crude simultaneous guarantee could already be obtained by union-bounding over all commuting Pauli groups: since each is generated by at most $n$ Paulis, there are at most $4^{n^2}$ candidates, giving $O\!\left(\frac{n^2+\log(1/\delta)}{\Delta\varepsilon_{\rm gap}^2}\right)$ Bell samples.

For a commuting group $H=\langle g_1,\ldots,g_s\rangle$, write the filter in Eq.~\eqref{eq:projected-purity-est} as
\begin{equation}
Q_H(b)=\prod_{k=1}^{s}\left[(-1)^{\pi_Y(g_k)+\expval{g_k,b}}=1\right].
\end{equation}
The filter is independent of which generators we pick.\footnote{This follows from $Q_H(b)=\frac{1}{|H|}\sum_{g\in H}(-1)^{\pi_Y(g)+\expval{g,b}}$ in Eq.~\eqref{eq:projected-purity}.}

The following lemma sharpens the union-bound estimate above  by exploiting additional structure in the filters $Q_H$, reducing the dependence on $n$ from quadratic to essentially linear.
\begin{lemma}[Uniform Bell-sample estimates]
\label{lemma:uniform-bell-estimates}
Let $0<\Delta\varepsilon_{\mathrm{gap}}\leqslant1$ and $0<\delta<1$. From $M$ independent Bell samples, define
\begin{equation}
\widehat{\Delta\gamma}_{H,H'}=\frac1M\sum_{i=1}^M(-1)^{\pi_Y(b_i)}\bigl(Q_H(b_i)-Q_{H'}(b_i)\bigr)
\end{equation}
for every pair of commuting groups $H\subseteq H'$, with purity drop $\Delta\gamma_{H,H'}=\gamma_H-\gamma_{H'}$. A sample size
\begin{equation}
\label{eq:uniform-bell-samples}
M=O\!\left(\frac{n\ln(2/\Delta\varepsilon_{\mathrm{gap}})+\ln(1/\delta)}{(\Delta\varepsilon_{\mathrm{gap}})^2}\right)
\end{equation}
suffices to ensure $|\widehat{\Delta\gamma}_{H,H'}-\Delta\gamma_{H,H'}|\leqslant\Delta\varepsilon_{\mathrm{gap}}/2$ simultaneously for all such pairs, with probability at least $1-\delta$. An explicit choice of $M$ is given in Eq.~\eqref{eq:uniform-bell-sample-choice}.
\end{lemma}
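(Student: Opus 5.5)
Since each commuting group $H$ enters $\widehat{\Delta\gamma}_{H,H'}$ only through the single-group estimator $\hat\gamma_H=\frac1M\sum_i(-1)^{\pi_Y(b_i)}Q_H(b_i)$, I will prove a uniform bound for single groups and deduce the pair bound from it. By Lemma~\ref{lemma:projected-purity}, $\mathbb E[\hat\gamma_H]=\gamma_H$ for every commuting $H$. Because $\widehat{\Delta\gamma}_{H,H'}=\hat\gamma_H-\hat\gamma_{H'}$, it suffices to show that
\begin{equation*}
|\hat\gamma_H-\gamma_H|\leqslant\Delta\varepsilon_{\mathrm{gap}}/4
\end{equation*}
holds simultaneously for all commuting $H$; the triangle inequality then gives the pair statement.

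Counting groups directly is too crude, giving the $4^{n^2}$ bound, so I will use the Fourier form of the filter from the footnote:
\begin{equation*}
Q_H(b)=\frac1{|H|}\sum_{g\in H}\chi_g(b),\qquad \chi_g(b)=(-1)^{\pi_Y(g)+\expval{g,b}}.
\end{equation*}
This makes $\hat\gamma_H-\gamma_H=\frac1{|H|}\sum_{g\in H}(\hat\mu_g-\mu_g)$, where $\hat\mu_g$ is the empirical mean of the $\pm1$ variable $(-1)^{\pi_Y(b)}\chi_g(b)$ and $\mu_g$ is its expectation, which is $\tr(\sigma_g\rho\sigma_g\rho)$.

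A union bound over all $4^n$ Paulis $g$ with Hoeffding already gives $\max_g|\hat\mu_g-\mu_g|\leqslant\Delta\varepsilon_{\mathrm{gap}}/4$ with probability $1-\delta$. This needs $M=O\bigl((n+\ln(1/\delta))/\Delta\varepsilon_{\mathrm{gap}}^2\bigr)$, and averaging over $H$ preserves the bound. This route seems to give the result even without the $\ln(2/\Delta\varepsilon_{\mathrm{gap}})$ factor. I would double-check that each $\hat\mu_g$ really is an average of bounded i.i.d.\ terms with mean $\mu_g$; it should be, since $\hat\gamma_{\langle g\rangle}=(\mu\text{-estimate of }I+\hat\mu_g)/2$ by the same lemma.

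If that check fails, or the sign bookkeeping of $\pi_Y(g)$ under group products breaks linearity, my fallback is a covering or chaining argument over the class of filters $\{(-1)^{\pi_Y}Q_H\}$. These are indicators of affine subspaces $\{b:\chi_{g_k}(b)=1\ \forall k\}$ of $\mathbb F_2^{2n}$, whose VC dimension is $O(n)$. The standard uniform convergence bound for a VC class of dimension $d=O(n)$ gives
\begin{equation*}
M=O\!\left(\frac{d\ln(2/\Delta\varepsilon_{\mathrm{gap}})+\ln(1/\delta)}{\Delta\varepsilon_{\mathrm{gap}}^2}\right),
\end{equation*}
which matches Eq.~\eqref{eq:uniform-bell-samples} exactly. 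I expect this is the authors' route.

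The main obstacle in the VC route is checking the dimension bound. Sets where $\chi_g=1$ for all $g$ in a commuting group are cosets of $C(H)$, so the class consists of affine subspaces of $\mathbb F_2^{2n}$ (including empty sets). A set of $d$ points shattered by affine subspaces forces $2^d\leqslant$ the number of affine subspaces restricted to it, which is $2^{O(n^2)}$. That is too weak. The better argument is that shattering by affine subspaces requires the points to be affinely independent, since an affine subspace containing a set contains its affine span. This gives $d\leqslant 2n+1$. The signed weight $(-1)^{\pi_Y(b)}$ is handled by splitting the sum into its positive and negative parts, each an indicator-class empirical process.
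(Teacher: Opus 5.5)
Your main argument is correct. It takes a genuinely different, and simpler, route than the paper's.

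The point you wanted to double-check holds. For commuting $H$, Property~2 of Lemma~\ref{lemma:useful-properties} makes $g\mapsto\chi_g$ multiplicative on $H$. Hence $Q_H(b)=\frac1{|H|}\sum_{g\in H}\chi_g(b)$ holds pointwise in $b$; this is the step labelled Prop.~2 in the proof of Lemma~\ref{lemma:projected-purity}. Each $\hat\mu_g$ is then an average of i.i.d.\ $\pm1$ variables with mean $2^n\sum_i(-1)^{\expval{g,i}}a_i^2=\tr(\sigma_g\rho\sigma_g\rho)$. So every $\hat\gamma_H-\gamma_H$ is a convex combination of just $4^n$ centered empirical means. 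Hoeffding plus a union bound gives the uniform event with $M=\lceil 32(2n\ln2+\ln(2/\delta))/(\Delta\varepsilon_{\mathrm{gap}})^2\rceil$. This implies the stated $O(\cdot)$ bound, and it also certifies the paper's larger explicit choice of $M$.

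The paper does not use this character structure. It treats each filter as the indicator of an affine subspace of $\mathbb F_2^{2n}$. Using affine bases with $d=2n+1$, it bounds the number of patterns such indicators can cut out on $N$ sample points by $(eN/d)^d$. It then runs a ghost-sample symmetrization with random swaps and applies Hoeffding over the $(2eM/d)^{2d}$ difference patterns. This is essentially your VC fallback carried out by hand.

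That generic argument would apply to any family of affine-subspace filters. It is also what costs the extra $\ln(2/\Delta\varepsilon_{\mathrm{gap}})$ and the large constants. Your linearity argument instead exploits the commutativity the lemma already assumes and avoids symmetrization entirely. It drops the logarithm and yields a stronger statement: every single projected purity $\gamma_H$ is estimated to within $\Delta\varepsilon_{\mathrm{gap}}/4$ simultaneously, not just the pairwise drops. The VC fallback is not needed.
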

\begin{proof}
The condition $Q_H(b)=1$ is a system of affine linear equations in the $2n$ bits of $b$. On any set of sample points, its nonempty intersection with an affine subspace is determined by an affine basis of at most $2n+1$ of those points. This bounds the number of distinct filter patterns on the samples. The number of patterns for $Q_H-Q_{H'}$ is at most the square of this bound. The fixed signs $(-1)^{\pi_Y(b)}$ do not change this count. Symmetrization and Hoeffding's inequality then give Eq.~\eqref{eq:uniform-bell-samples}; Appendix~\ref{app:uniform-verification} gives the details.
\end{proof}

Let $\Delta\varepsilon_{\mathrm{gap}}>0$ be a known lower bound on every query's gap $\Delta\varepsilon_+-\Delta\varepsilon_-$. Lemma~\ref{lemma:uniform-bell-estimates} makes every purity-drop estimate accurate to within $\Delta\varepsilon_{\mathrm{gap}}/2$. This guarantee holds for any number of adaptive queries, using the same samples throughout.

\begin{algorithm}[ht]
    \caption{\label{alg:stab-ver} Adaptive Candidate Verification}
    \KwInit{Minimum query gap $0<\Delta\varepsilon_{\mathrm{gap}}\leqslant1$; failure probability $\delta$.}
    Draw $M$ independent Bell samples $b_1,\ldots,b_M$, with $M$ chosen by Eq.~\eqref{eq:uniform-bell-sample-choice}\;
    \KwData{Current commuting guess $\tilde S$ supplied by the finder, initially $\{I\}$.}
    \KwQ{A candidate increment $\Delta\tilde S$ and thresholds $0\leqslant\Delta\varepsilon_-<\Delta\varepsilon_+\leqslant1$, with $\Delta\varepsilon_+-\Delta\varepsilon_-\geqslant\Delta\varepsilon_{\mathrm{gap}}$.}
    \KwOut{Approve or reject the candidate update $\tilde S'=\tilde S+\Delta\tilde S$.}
    \setcounter{AlgoLine}{0}
    Compute a basis of $\tilde S'$ extending a basis of $\tilde S$; reject if $\tilde S'$ is not commuting\;
    Compute $\widehat{\Delta\gamma}=\frac1M\sum_{i=1}^M(-1)^{\pi_Y(b_i)}\bigl(Q_{\tilde S}(b_i)-Q_{\tilde S'}(b_i)\bigr)$\;
    \lIf{$\widehat{\Delta\gamma}\leqslant\frac{\Delta\varepsilon_++\Delta\varepsilon_-}{2}$}{approve the candidate $\tilde S'$}
    \lElse{reject}
\end{algorithm}

\begin{theorem}
\label{thm:stab-ver}
Let $\Delta\gamma=\gamma_{\tilde S}-\gamma_{\tilde S'}$. With probability at least $1-\delta$, the following hold simultaneously for all queries satisfying Algorithm~\ref{alg:stab-ver}'s gap condition and having commuting $\tilde S'$:
\begin{itemize}
    \item If $\Delta\gamma\leqslant\Delta\varepsilon_-$, the algorithm approves the candidate update.
    \item If $\Delta\gamma>\Delta\varepsilon_+$, the algorithm rejects.
\end{itemize}
\end{theorem}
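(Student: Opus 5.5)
The plan is to reduce Theorem~\ref{thm:stab-ver} to Lemma~\ref{lemma:uniform-bell-estimates} and then do a deterministic threshold argument on the resulting good event. Fix the $M$ Bell samples drawn once at initialization, with $M$ chosen as in Eq.~\eqref{eq:uniform-bell-sample-choice} for the given $\Delta\varepsilon_{\mathrm{gap}}$ and $\delta$. Let $\mathcal G$ be the event that $|\widehat{\Delta\gamma}_{H,H'}-\Delta\gamma_{H,H'}|\leqslant\Delta\varepsilon_{\mathrm{gap}}/2$ holds simultaneously for every nested pair of commuting groups $H\subseteq H'$. By Lemma~\ref{lemma:uniform-bell-estimates}, $\Pr[\mathcal G]\geqslant1-\delta$. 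Because $\mathcal G$ quantifies over all pairs, it is a property of the sample set alone and does not depend on which queries are asked. This is the point that handles adaptivity: the finder's queries may depend on earlier answers, and hence on the samples, yet on $\mathcal G$ every query it could possibly issue is already covered. No union bound over the query sequence is needed.

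Next I would check that the estimate computed in line 2 of Algorithm~\ref{alg:stab-ver} is exactly the estimator of Lemma~\ref{lemma:uniform-bell-estimates} for the pair $(H,H')=(\tilde S,\tilde S')$. The current guess $\tilde S$ is commuting, since it starts at $\{I\}$ and only commuting approved updates are ever applied. The theorem restricts attention to queries whose $\tilde S'$ is commuting. Moreover $\tilde S\subseteq\tilde S'=\tilde S+\Delta\tilde S$ by construction. The filters $Q_{\tilde S}$ and $Q_{\tilde S'}$ do not depend on the chosen generators, by the footnote identity $Q_H(b)=\frac{1}{|H|}\sum_{g\in H}(-1)^{\pi_Y(g)+\expval{g,b}}$. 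So the extended basis used in line 1 is harmless, and $\widehat{\Delta\gamma}=\widehat{\Delta\gamma}_{\tilde S,\tilde S'}$. Lemma~\ref{lemma:projected-purity} identifies its mean with $\gamma_{\tilde S}-\gamma_{\tilde S'}=\Delta\gamma$.

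On $\mathcal G$, the two cases are then direct. Write $m=(\Delta\varepsilon_++\Delta\varepsilon_-)/2$ and use the gap condition $\Delta\varepsilon_+-\Delta\varepsilon_-\geqslant\Delta\varepsilon_{\mathrm{gap}}$.
\begin{itemize}
\item If $\Delta\gamma\leqslant\Delta\varepsilon_-$, then $\widehat{\Delta\gamma}\leqslant\Delta\varepsilon_-+\Delta\varepsilon_{\mathrm{gap}}/2\leqslant\Delta\varepsilon_-+(\Delta\varepsilon_+-\Delta\varepsilon_-)/2=m$, so the algorithm approves.
\item If $\Delta\gamma>\Delta\varepsilon_+$, then $\widehat{\Delta\gamma}>\Delta\varepsilon_+-\Delta\varepsilon_{\mathrm{gap}}/2\geqslant m$, so the algorithm rejects.
\end{itemize}
The tie-breaking matches the algorithm: approval uses the non-strict comparison $\leqslant m$, and the rejection branch gets a strict inequality from the strict hypothesis.

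I expect essentially no obstacle beyond Lemma~\ref{lemma:uniform-bell-estimates} itself, which I assume. The only care needed is to state clearly that the good event is query-independent, so that the guarantee holds for arbitrarily many adaptively chosen queries. I would also note that queries rejected in line 1 for noncommutativity lie outside the theorem's scope.
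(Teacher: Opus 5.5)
Your proposal is correct and follows the paper's proof. Both reduce the theorem to the query-independent uniform-accuracy event of Lemma~\ref{lemma:uniform-bell-estimates}, identify the mean of the estimate via Lemma~\ref{lemma:projected-purity}, and compare the estimate with the midpoint of the thresholds using the gap condition. Your extra checks (commutativity of the current guess, the nesting $\tilde S\subseteq\tilde S'$, generator-independence of the filters, and strictness in the rejection branch) only make explicit what the paper's short proof leaves implicit.
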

\begin{proof}
By Lemma~\ref{lemma:projected-purity}, the purity-drop estimate has mean
\begin{equation}
\Delta\gamma=\gamma_{\tilde S}-\gamma_{\tilde S'}=\expval{(-1)^{\pi_Y(b)}\bigl(Q_{\tilde S}(b)-Q_{\tilde S'}(b)\bigr)}_{b\sim\mathrm{Bell}}.
\end{equation}
On the uniform event from Lemma~\ref{lemma:uniform-bell-estimates}, every query satisfies
\begin{equation}
|\widehat{\Delta\gamma}-\Delta\gamma|
\leqslant\frac{\Delta\varepsilon_{\mathrm{gap}}}{2}
\leqslant\frac{\Delta\varepsilon_+-\Delta\varepsilon_-}{2}.
\end{equation}
The two claims follow by comparing with the midpoint of the thresholds, as Fig.~\ref{fig:verifier-band} illustrates. The uniform event already covers every possible nested commuting group pair, so adaptive choices require no further failure-probability allocation.
\end{proof}

\begin{figure}[ht]
\centering
\includegraphics{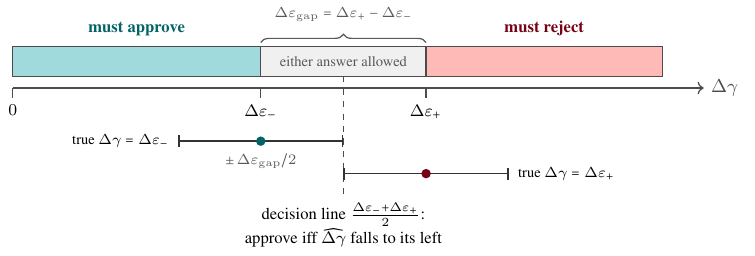}
\caption{The decision requirements of Algorithm~\ref{alg:stab-ver}, on the axis of the true purity drop $\Delta\gamma$. Below $\Delta\varepsilon_-$ the verifier must approve, above $\Delta\varepsilon_+$ it must reject, and in the gap between them either answer is allowed. The algorithm decides by comparing its estimate $\widehat{\Delta\gamma}$ with the midpoint of the two thresholds, and on the uniform event of Lemma~\ref{lemma:uniform-bell-estimates} that estimate is accurate to $\Delta\varepsilon_{\mathrm{gap}}/2$. The two ranges shown are therefore the worst cases: a true $\Delta\gamma$ sitting on either threshold can only move as far as the decision line, never across it, which is exactly Theorem~\ref{thm:stab-ver}.}
\label{fig:verifier-band}
\end{figure}

The verifier uses $M$ Bell samples, or $2M$ state copies, once at initialization. To evaluate queries efficiently, cache the signs $(-1)^{\pi_Y(b_i)}$ and the current filter values $Q_{\tilde S}(b_i)$. A proposal supplied by $r$ generators costs $O(nrM+n^2r)$ bit operations, including basis reduction and commutativity checks; redundant generators can be discarded during this reduction. Storage is $O(nM+n^2)$ bits. The finder decides whether to apply an approved update; the cached filter values change only when it does so.

The cost of Algorithm~\ref{alg:stab-ver} depends, among other things, on the gap $\Delta\varepsilon_+-\Delta\varepsilon_-$. In practice, it is often more convenient to think in terms of a single threshold $\Delta\varepsilon$. As an example, we can do this by choosing
\begin{equation}
\label{eq:verifier-relative-slack}
\Delta\varepsilon_-=(1-o(1))\Delta\varepsilon,
\qquad
\Delta\varepsilon_+=\Delta\varepsilon,
\end{equation}
where the $o(1)$ relative slack lies between $0$ and $1$. The verifier then effectively accepts updates with purity drop below $\Delta\varepsilon$ and rejects updates with purity drop above $\Delta\varepsilon$, up to a vanishingly small window around the threshold. This lets us work with the single parameter $\Delta\varepsilon$, at the cost of an additional factor polynomial in the inverse slack. We can let the slack tend to zero sufficiently slowly, e.g.\ inverse-logarithmically, that this overhead remains polynomial in the problem parameters. From here on, a clean-mode query with threshold $\Delta\varepsilon$ means this choice, and leading-order statements absorb the slack.

\subsection{Discussion}
\label{sec:ver-discussion}

Algorithm~\ref{alg:stab-ver}, the candidate verifier, will be the main subroutine inside a bigger stabilizer-finding algorithm. The verifier guards updates to the stabilizer guess by allowing at most $\Delta \varepsilon$ per step. Hence, after multiple updates, the total allowed purity drop is $\sum\Delta\varepsilon\leqslant\varepsilon$. The finder applies only nontrivial approved updates.

We may restrict algorithmic budgets to $0<\varepsilon\leqslant1$ and $0<\delta\leqslant1/2$: reducing larger requested values to these limits gives a stronger guarantee. The clean and intermediate noise thresholds involving $1/t$ are stated for $t\geqslant1$; when $t=0$, the top-down algorithm already has polynomial cost under the same purity lower bound.

There are two additional considerations:
\begin{description}
    \item[No Wrong Turns] The product-filter estimator above requires $\tilde S$ to be a commuting Pauli group. Accordingly, at each update step we check that $\tilde S+\Delta\tilde S$ is commuting. Our goal is eventually to reach a stage where $S \subseteq \tilde S$, where $S$ denotes the exact weak stabilizer. If the verifier were ever to accept an update $\tilde S \leftarrow \tilde S + \Delta \tilde S$ such that the resulting $\tilde S$ does not commute with $S$, then the commutativity requirement would preclude any future update from reaching $S \subseteq \tilde S$. In this sense, such an update is a ``wrong turn.''

    Lemma~\ref{lemma:compatibility} rules this out. On the uniform event of Lemma~\ref{lemma:uniform-bell-estimates}, every accepted update has true purity drop at most its $\Delta\varepsilon_+$, so the losses telescope:
    \begin{equation}
    \gamma-\gamma_{\tilde S}\leqslant\sum_{\text{accepted updates}}\Delta\varepsilon_+\leqslant\varepsilon,
    \end{equation}
    where the second inequality is how every finder below allocates its budget. The true $S$ has zero loss, so by the lemma the current $\tilde S$ commutes with $S$. A missing element of $S$ therefore passes the commutativity check, and adding it costs no purity (Observation~\ref{obs:add-S-purity}), so the verifier accepts it: no accepted update can block $S\subseteq\tilde S$.

    \item[Sample Reuse] The Bell samples used by the verifier are drawn once and reused across its queries. Bell difference samples used by the finder are drawn separately. Each finder batch still consists of fresh Bell difference samples, independent of the verifier data and of earlier batches.
\end{description}

In the finder analyses below, we assume the verifier is always correct: fix its dataset on the uniform-accuracy event of Lemma~\ref{lemma:uniform-bell-estimates}. This event depends only on the verifier dataset, so fresh finder batches keep their usual distribution conditional on it and on the preceding history. In these analyses and the finder pseudocode, $\delta$ denotes the finder's failure budget under this assumption. The verifier has a separate failure budget $\delta_{\mathrm{ver}}$.
The overall failure probability is at most
$\delta+\delta_{\mathrm{ver}}$.
For convenience, we leave their allocation free subject to
the requested overall confidence.

To summarize what the verifier hands to the finder: given the current commuting guess $\tilde S$ and a proposed commuting extension $\Delta \tilde S$, Algorithm~\ref{alg:stab-ver} accepts if the purity drop $\Delta\gamma \leqslant \Delta\varepsilon_-$ and rejects if $\Delta\gamma > \Delta\varepsilon_+$, using one Bell-sample dataset drawn up front and reused across arbitrarily many adaptive queries (Lemma~\ref{lemma:uniform-bell-estimates}). Its sample cost is therefore fixed before the finder starts; only its runtime grows with the number of queries. The finder, which we turn to next, never has to certify an update itself---its only job is to propose extensions that the verifier is likely to accept.

\section{Learning Algorithms}
\label{sec:learning-algorithms}

With the verifier in hand, what remains is to propose extensions $\Delta \tilde S$ that it will accept. We call this half of the pipeline the finder. This section develops the finders and their proposal rules, and deliberately stops short of proving that they work. Sec.~\ref{sec:up-down-per-batch} establishes what a single batch achieves, and Sec.~\ref{sec:end-to-end} specifies the full schedules and parameter choices and analyzes them.

We start with the top-down algorithm in Sec.~\ref{sec:top-down}, which feeds Bell difference samples to the verifier one at a time. It is a complete pipeline---pseudocode plus Theorem~\ref{thm:top-down}---and costs $O(4^t \cdot \mathrm{poly})$, which is what we use in the noisy regime. Sec.~\ref{sec:up-down} then introduces the up-down algorithm, Algorithm~\ref{alg:stab}: it collects a batch of Bell difference samples, post-selects them against the current guess, and proposes updates from the radical of their span. Its $\mathsf{propose}$ subroutine comes in two modes, clean and intermediate, and Sec.~\ref{sec:up-down-example} defines both through a small worked example.

\subsection{Top-Down Algorithm}
\label{sec:top-down}

\subsubsection{The Algorithm}

With the candidate verifier guarding updates to the guess $\tilde{S}$, we need a finder to propose these updates. The most straightforward idea is to propose updates directly from the Bell difference samples.

\begin{algorithm}[ht]
\caption{
\label{alg:top-down}Top-Down Stabilizer Finding Algorithm}
\KwIn{Accuracy parameter $\varepsilon$, confidence parameters $\delta$ and $\delta_{\mathrm{ver}}$}
\KwOut{An approximate weak stabilizer $\tilde{S}$ for the density matrix.}
Set sufficiently large $T$\;
Initialize $\tilde{S} \leftarrow I$\;
Initialize Algorithm~\ref{alg:stab-ver} with $\Delta\varepsilon_{\mathrm{gap}}=\varepsilon/n$ and failure probability $\delta_{\mathrm{ver}}$\;
\For{$i = 1$ \KwTo $T$}{
Draw Bell difference sample $b \sim \mathrm{BellDiff}$\;
\If{$b \not\in \tilde{S}$ and $b$ commutes with $\tilde{S}$}{
Propose to the verifier to add $\tilde{S} \leftarrow \tilde{S} + \{b\}$ with $\Delta \varepsilon_- = 0$, $ \Delta\varepsilon_+ = \frac{\varepsilon}{n}$; update if the verifier accepts\;
}
}
\Return{$\tilde{S}$}\;
\end{algorithm}

Intuitively, $\tilde{S}$ is like a sieve on the Pauli expansion $\rho = \sum_{i \in \mathcal{P}^n} a_i \sigma_i$: only terms commuting with $\tilde{S}$ contribute to $\gamma_{\tilde{S}}$. As we build up $\tilde{S}$, the sieve becomes finer---more constraints, fewer survivors, so the remaining set of Pauli terms becomes smaller. Because we repeatedly narrow down by adding more constraints, we call this approach the \textit{top-down} algorithm for stabilizer learning.

\subsubsection{Proof of Correctness}

We now show that Algorithm~\ref{alg:top-down} succeeds with $T \approx 4^t n / \gamma^2$ draws. The argument needs only two facts: the verifier always accepts a genuine element of $S$, and enough Bell difference samples land in $S$ to span it.

\begin{assumption}[Algorithm~\ref{alg:top-down}] 
\label{asp:top-down}
To simplify calculations, for the top-down algorithm, we assume
\begin{itemize}
    \item Number of qubits $n \gg 1$;
    \item The failure probability of the algorithm is at most $\delta$, with $\ln \frac{1}{\delta} \ll n$.
\end{itemize}
\end{assumption}
\begin{theorem}
\label{thm:top-down}
By taking $T \approx 4^t \frac{n}{\gamma^2}$, with probability at least $1-\delta$, Algorithm~\ref{alg:top-down} successfully gives an $\varepsilon$-approximate weak stabilizer $\tilde{S}$.
\end{theorem}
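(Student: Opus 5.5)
The plan is to split the claim into its two requirements from Eq.~\eqref{eq:approx-support-centralizer}: a purity-loss bound and the containment $S\subseteq\tilde S$. Throughout I work on the verifier's uniform-accuracy event of Lemma~\ref{lemma:uniform-bell-estimates}, which costs $\delta_{\mathrm{ver}}$, and spend the finder budget $\delta$ only on the randomness of the Bell difference draws.

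\emph{Purity loss.} This part is bookkeeping. Each accepted update has true drop at most $\Delta\varepsilon_+=\varepsilon/n$ by Theorem~\ref{thm:stab-ver}. Each update strictly enlarges a commuting group, and a commuting group has dimension at most $n$, so there are at most $n$ accepted updates. The losses telescope to $\gamma-\gamma_{\tilde S}\leqslant\varepsilon$. The gap $\varepsilon/n-0$ matches the initialization $\Delta\varepsilon_{\mathrm{gap}}=\varepsilon/n$.

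\emph{Containment.} Whenever a draw $b\in S\setminus\tilde S$ occurs, it is added. By the no-wrong-turns argument, Lemma~\ref{lemma:compatibility} together with the telescoped bound shows that the current $\tilde S$ commutes with $S$, so $b$ passes the commutativity check. By Observation~\ref{obs:add-S-purity}, $\gamma_{\tilde S+\{b\}}=\gamma_{\tilde S}$, so the true drop is $0=\Delta\varepsilon_-$ and the verifier must approve. It therefore suffices to show that, within $T$ draws, the draws landing in $S$ span $S$ with probability at least $1-\delta$.

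\emph{Hitting and spanning $S$.} Both facts come from the Fourier form Eq.~\eqref{eq:bell-diff-dist}.
\begin{itemize}
\item \emph{Uniformity on $S$.} For $i\in S$, every $j$ with $a_j\neq 0$ satisfies $\expval{i,j}=0$. Hence $P_{\mathrm{BellDiff}}(i)=4^n\sum_j a_j^4$ is constant on $S$, and conditioned on $b\in S$ the sample is uniform on $S$.
\item \emph{Mass of $S$.} Summing over $S$ gives
\begin{equation}
\Pr[b\in S]=4^n|S|\sum_{j\in C(S)}a_j^4 .
\end{equation}
The nonzero $a_j$ live in $C(S)$, which has size $2^{n+\nu}$. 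Cauchy--Schwarz gives $\sum_j a_j^4\geqslant(\gamma 2^{-n})^2/2^{n+\nu}$. With $|S|=2^{n-\nu}$ this yields
\begin{equation}
\Pr[b\in S]\geqslant\gamma^2 4^{-\nu}\geqslant\gamma^2 4^{-t}.
\end{equation}
\end{itemize}
Given uniformity, each draw landing in $S$ lies outside the current proper subspace $\tilde S\cap S$ with probability at least $1/2$. It therefore suffices to collect $K=O(n+\ln\frac1\delta)$ hits in $S$: fewer than $n-\nu\leqslant n$ successes among $K$ fair-or-better trials has probability at most $\delta/2$ by Chernoff. A second Chernoff bound on the number of hits among $T$ draws, each with success probability $\geqslant\gamma^2 4^{-t}$, shows that $T=O\bigl(4^t(n+\ln\frac1\delta)/\gamma^2\bigr)$ draws yield $K$ hits except with probability $\delta/2$. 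Under Assumption~\ref{asp:top-down} ($\ln\frac1\delta\ll n$, $n\gg1$) this is $T\approx 4^t n/\gamma^2$ to leading order.

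\emph{Main obstacle.} The delicate step is the hitting-probability bound $\Pr[b\in S]\geqslant\gamma^2 4^{-t}$; Lemma~\ref{lemma:prob-diff-lower} does not apply directly because $C(S)$ is not commuting, so I would do the explicit Cauchy--Schwarz over $C(S)$ above. I would also check that draws not in $S$ cannot block progress: an accepted non-$S$ sample only enlarges $\tilde S$ within budget, and it never obstructs later elements of $S$, by the compatibility argument.
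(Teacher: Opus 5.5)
The overall plan is the paper's: the $\varepsilon/n$ telescoping, the no-wrong-turns argument for elements of $S\setminus\tilde S$, and the split into ``enough draws land in $S$'' and ``those draws span $S$.'' There is one gap, in the spanning step. You lower-bound each hit's chance of enlarging $\tilde S\cap S$ by $\tfrac12$ and apply Chernoff to a $\mathrm{Binomial}(K,\tfrac12)$ count. That forces $K\approx 2(n-\nu)$ hits, hence $T\approx 2\cdot 4^{\nu}(n-\nu)/\gamma^2$, which for $t\ll n$ is $2\cdot 4^t n/\gamma^2$.

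Your argument therefore proves $T=O(4^t n/\gamma^2)$, but not the stated $T\approx 4^t n/\gamma^2$. In this paper $\approx$ means equality up to a factor $1+o(1)$ (Sec.~\ref{sec:asymptotic-notation}), so the closing sentence ``this is $T\approx 4^t n/\gamma^2$ to leading order'' does not follow from $K=O(n+\ln\frac1\delta)$.

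The fix is the paper's argument. Every hit $b\in S$ ends up in $\tilde S$, since it is either already there or gets added. The draws are i.i.d.\ and unaffected by the algorithm's decisions, so the hits are i.i.d.\ uniform on $S$ regardless of adaptivity. Then $K$ hits fail to span $S$ only if they all lie in one of the $2^{n-\nu}-1$ hyperplanes of $S$, which has probability at most $2^{n-\nu-K}$. So $K=n-\nu+\log_2\frac{2}{\delta}$ suffices, and your hitting Chernoff bound then gives $T\approx 4^{\nu}(n-\nu)/\gamma^2\leqslant 4^t n/\gamma^2$. You could instead keep your sequential picture and use the exact step probability $1-2^{\dim(\tilde S\cap S)-\dim S}$. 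That probability is close to $1$ except in the last $O(1)$ steps, which also recovers $K\approx n-\nu$.

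Your hitting bound is correct but takes a slightly different route. The paper applies Lemma~\ref{lemma:prob-diff-lower} with $H=S$, which is commuting, to get $\Pr[b\in C(S)]\geqslant\gamma^2$. It then uses that $P_{\mathrm{BellDiff}}$ is maximal on $S$ to transfer this mass to $S$ at cost $|S|/|C(S)|=4^{-\nu}$. Your objection that the lemma ``does not apply'' holds only for the choice $H=C(S)$. Your exact formula $\Pr[b\in S]=4^n|S|\sum_j a_j^4$, combined with Cauchy--Schwarz over $C(S)$, is self-contained and gives the same bound $\gamma^2 4^{-\nu}$.
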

\begin{proof}
Since we set $\Delta \varepsilon_+ = \frac{\varepsilon}{n}$, and any commuting subgroup in $\mathcal{P}^n$ has dimension at most $n$, we would never run out of accuracy: $\gamma - \gamma_{\tilde{S}} \leqslant \sum \Delta \varepsilon \leqslant \varepsilon$. In the meantime, a Pauli in $S$ not yet in $\tilde{S}$ passes both checks of Algorithm~\ref{alg:top-down}: it commutes with $\tilde{S}$ by Lemma~\ref{lemma:compatibility}, and adding it causes no purity drop, so the verifier accepts it (Theorem~\ref{thm:stab-ver}). Therefore, as long as we get enough Bell difference samples to span $S$, $S \subseteq \tilde{S}$, which means $\tilde{S}$ will be an $\varepsilon$-approximate weak stabilizer. Say we need $M$ samples in $S$ for this to happen with high probability. Formally, we want to show:
\begin{enumerate}
    \item With probability at least $1-\delta_1$, we have drawn $M$ samples in $S$;
    \item Conditioned on having $M$ samples in $S$, with probability at least $1 - \delta_2$, these $M$ samples span $S$.
\end{enumerate}
We set $\delta_1 + \delta_2 \leqslant \delta$. For simplicity, say $\delta_1 = \delta_2 = \frac{\delta}{2}$.

\begin{description}
    \item[Proof of 1] By Eq.~\eqref{eq:bell-diff-dist}, $P_{\mathrm{BellDiff}}(i)$ is largest for $i \in S$. By Lemma~\ref{lemma:prob-diff-lower}, $P_{\mathrm{BellDiff}}(C(S)) \geqslant \gamma_{S}^2 = \gamma^2$. So, $P_{\mathrm{BellDiff}}(S) \geqslant \frac{|S|}{|C(S)|}P_{\mathrm{BellDiff}}(C(S)) \geqslant 4^{-\nu} \gamma^2$. Let $q = 4^{-\nu} \gamma^2$; by the Chernoff bound, we need $T \geqslant \frac{M + \sqrt{2M \ln \frac{1}{\delta_1}} + 2 \ln \frac{1}{\delta_1}}{q}$.
    \item[Proof of 2] By Observation~\ref{obs:stab-inv-diff}, when conditioned on being in $S$, the Bell difference distribution is uniform. So we bound success probability by bounding the failure events. If failure happens, then all the samples in $S$ must be contained in a proper subspace of $S$. Since $S$ is a Pauli subgroup of dimension $n - \nu$, there are $2^{n-\nu} - 1$ codimension-1 subspaces (each subspace commutes with an element of $\mathcal{P}^n / C(S)$). For each codimension-1 subspace, the probability of all the Bell difference samples in $S$ falling inside this subspace is bounded above by $\frac{1}{2^M}$. So, the probability of failure is bounded above by $2^{n-\nu-M} - 2^{-M} \leqslant 2^{n - \nu-M}$. So, taking $M \geqslant n - \nu + \log_2 \frac{1}{\delta_2}$ suffices.
\end{description}
Combined, to leading order, it suffices to take $T \gtrapprox \frac{n-\nu}{q} = 4^{\nu} \frac{n-\nu}{\gamma^2}$. Since $\nu$ is unknown, we should take $T \approx  4^t \frac{n}{\gamma^2}$, where we used $0 \leqslant \nu \leqslant t$ and $\ln \frac{1}{\delta_1} = \ln \frac{1}{\delta_2} = \ln \frac{2}{\delta} \ll n$.
\end{proof}
In practice, setting $T$ requires us to know purity $\gamma$, which can be estimated first via Bell sampling (Lemma~\ref{lemma:bell-sampling-known-quantites}).

The top-down algorithm also reaches noncommuting targets: when a Pauli channel with $\lambda\geqslant\frac12$ erases coefficients, $S$ can grow beyond $\mathrm{Stab}(\ket{\psi})$ and become noncommuting (Sec.~\ref{sec:degenerate-support}). Dropping the commutativity check handles this, since the projected purity of any Pauli group is the average of the overlaps $\tr(\sigma_g\rho\sigma_g\rho)$ (Sec.~\ref{sec:stab-ver}), which the verifier can estimate by drawing $g$ at random. We do not pursue this case, since such an $S$ is a symmetry created by the noise.

To summarize, when $\gamma$ is at least a constant, the top-down algorithm uses time and samples scaling as
\begin{equation*}
O\left(4^t \cdot \mathrm{poly}(n, \frac{1}{\varepsilon}, \ln \frac{1}{\delta}, \ln \frac{1}{\delta_{\mathrm{ver}}})\right).
\end{equation*}
It is simple, generic, and somewhat inefficient due to the $4^t$ factor.

\subsection{Up-Down Algorithm}
\label{sec:up-down}

For a pure state, the top-down algorithm takes time exponential in $t$, whereas Algorithm~\ref{alg:noise-free-learning} runs in polynomial time. The up-down algorithm interpolates between the two. Like Algorithm~\ref{alg:noise-free-learning}, it spans Bell difference samples; under noise, some samples lie outside $C(S)$ and shrink the radical of the span (Sec.~\ref{sec:summary-of-results}).

\begin{definition}[Good \& Bad Samples]
\label{def:good-bad}
A Bell difference sample in $C(S)$ is called a \textit{good} sample; conversely, a Bell difference sample not in $C(S)$ is called a \textit{bad} sample.
\end{definition}

The solution to the bad sample problem is to combine bottom-up aggregation with top-down verification:
\begin{itemize}
    \item We maintain a guess $\tilde{S}$ and use the verifier to guard its updates---the top-down idea.
    \item We collect a batch of Bell difference samples to span some subspace $B$ of the Paulis. Then, using $B$, we propose to the verifier to update $\tilde{S}$.
\end{itemize}
The algorithm is outlined in Algorithm~\ref{alg:stab}. It is thus called the up-down algorithm for stabilizer finding.
\begin{algorithm}[ht]
\caption{\label{alg:stab} Up-Down Stabilizer Finding Algorithm (Generic Form)}
\KwIn{Accuracy parameter $\varepsilon$, confidence parameters $\delta$ and $\delta_{\mathrm{ver}}$}
\KwOut{An approximate weak stabilizer $\tilde{S}$ for the density matrix.}
Initialize the adaptive verifier, Algorithm~\ref{alg:stab-ver}, with failure probability $\delta_{\mathrm{ver}}$ and minimum query gap $\Delta\varepsilon_{\mathrm{gap}}$, the smallest gap required by the accuracy schedule specified below; let its guess be $\tilde{S} = \{I\}$\;
Limit the following loop to $n+1$ iterations; report failure if its stopping condition is not reached\;
\Repeat{(no approved nontrivial update was applied and $\dim\tilde S\geqslant n-t$) or $\dim\tilde S=n$}{
    Set $\Delta \varepsilon$, $T$\;
    \For{$i = 1$ \KwTo $T$}{
      Set $M$\;
      Draw $M$ Bell difference samples, say $b = \{b_1, b_2, \ldots, b_M\}$\;
      \textbf{Post-selection:} keep all samples in $C(\tilde{S})$ and remove the others; now we have $b' = \{b_1', b_2', \ldots, b_{M'}'\}$\;
      Set $B' = \spn{b'}$, the span of the selected samples\;
      Run $\mathsf{propose}(B',\Delta \varepsilon)$, which uses $B'$ to propose updates on $\tilde{S}$ to the verifier with accuracy drop $\Delta \varepsilon$\;
      \lIf{verifier approves}{apply the approved update and break}
    }
}
\Return{$\tilde{S}$}\;
\end{algorithm}

Algorithm~\ref{alg:stab} is the common framework for both $\mathsf{propose}$ modes: every batch is fresh and post-selected, the finder proposes updates from the span of the surviving samples, and an update is applied only after the verifier approves it. Its knobs are the choices of $\Delta \varepsilon$, $T$, $M$, and the $\mathsf{propose}$ function. We will spell them out as we analyze the algorithm from Sec.~\ref{sec:up-down-per-batch} to Sec.~\ref{sec:up-down-arbi}, and Sec.~\ref{sec:batch-to-iter} specifies the full schedule for each mode. For now, our goal is to understand the structure of Algorithm~\ref{alg:stab}, which is illustrated in Fig.~\ref{fig:pipeline}.

\subsubsection{Post-Selection}

\label{sec:post-selection}

We first examine the inner loop. In each iteration we draw a batch of Bell difference samples and form their span $B$. The algorithm then uses (a structured subset of) $B$ to propose updates to the verifier.

The key new step is \textit{post-selection}. After drawing samples, we retain only those lying in $C(\tilde{S})$, i.e., those commuting with the current guess $\tilde{S}$. This is well motivated because every admissible update to $\tilde{S}$ must preserve commutativity with $\tilde{S}$; samples that anticommute with $\tilde{S}$ cannot directly support such updates. Moreover, by Lemma~\ref{lemma:compatibility}, the target weak stabilizer $S$ commutes with $\tilde{S}$, so post-selection does not discard the ``useful'' part of the distribution relevant for reaching $S \subseteq \tilde{S}$. At the same time, post-selection helps control \emph{bad samples}. Even if the raw BellDiff distribution contains many bad samples (those outside $C(S)$), restricting attention to $C(\tilde{S})$ can filter out a large fraction of samples that would otherwise contaminate the spanned subspace, making it substantially easier to extract stabilizer information from the remaining data.

Post-selection also changes the algebraic perspective. Once $\tilde{S}$ is fixed, the algorithm can only enlarge it, so what matters about a proposed update is not the specific Pauli representative but its class modulo $\tilde{S}$. For example, when proposing a single-generator extension $\tilde{S} \leftarrow \tilde{S} + \{b\}$, translating $b$ by any other element in $\tilde{S}$ yields the same enlarged subgroup.  More generally, when proposing $\tilde S \leftarrow \tilde S + A$, only the ``new directions'' contributed by $A$ beyond $\tilde{S}$ are relevant; this is captured by the quotient $(\tilde S + A)/\tilde S \;\cong\; A/(A\cap \tilde S)$ by the second isomorphism theorem.

In summary, after post-selection the effective sample space is $C(\tilde S)$ rather than $\mathcal P^n$, and proposed updates should be interpreted modulo $\tilde S$. To emphasize this shift, we use a superscript ${}^\star$ to denote quantities defined with post-selection in mind. 

\begin{notation}[Starred ($^\star$) Quantities]
\label{notation:starred}
A superscript ${}^\star$ indicates that the corresponding quantity is defined with post-selection onto $C(\tilde S)$ in mind, and hence is interpreted relative to the ambient space $C(\tilde S)$. A detailed list of these quantities can be found later in Table~\ref{tab:quantities-batch}.
\end{notation}

\subsubsection{Loop Structure}

\label{sec:loop-structure}
In the inner loop, we take a \emph{batch} of samples and propose an update to the verifier. When the noise is low enough, Sec.~\ref{sec:up-down-per-batch} shows that this yields a (useful) update with at least constant probability. In the end, we want $S \subseteq \tilde{S}$ with high probability.

To achieve such a goal, we group batches into \emph{iterations}. Each iteration consists of (at most) $T$ batch attempts (and the associated post-processing), until $\tilde{S}$ is updated. Within Algorithm~\ref{alg:stab}, one iteration corresponds to one execution of the repeat block, where we take at most $T$ batches via the for loop inside. This way, one of the following happens:
\begin{itemize}
    \item The verifier accepts one update and exits the loop, or
    \item no update is accepted, which we interpret as evidence that $S \subseteq \tilde{S}$, and we are successful. 
\end{itemize}
Suppose we showed that one batch yields a successful update with probability at least a constant when $S \not\subseteq \tilde{S}$. Then repetition boosts this event: we should pick $T = O(\ln \frac{n}{\delta})$ to achieve a per-iteration success probability at least $1 - \frac{\delta}{n+1}$. What counts as a successful update depends on the mode---an approved update need not be one---and Sec.~\ref{sec:batch-to-iter} makes this precise.

The simplest stopping rule is to run iterations until no accepted updates occur. In practice, we may use a better stopping condition, as in Algorithm~\ref{alg:stab}:
\begin{center}
\textbf{Repeat} $\dots$ \textbf{until} (\emph{no approved nontrivial update was applied} and $\dim \tilde{S} \geqslant n - t$) or $\dim \tilde{S} = n$
\end{center}
This stopping condition is justified by two facts:
\begin{itemize}
    \item $S \subseteq \tilde{S}$ only happens when $\dim \tilde{S} \geqslant \dim S \geqslant n - t$.
\item If $\dim \tilde{S} = n$, we have already reached a maximal commuting group, and $S$ commutes with it by Lemma~\ref{lemma:compatibility}, so $S \subseteq \tilde{S}$.
\end{itemize}
For the runtime bound, cap the total at $n+1$ iterations. There are at most $n$ nontrivial updates, and an iteration without an update after $S\subseteq\tilde S$ stops the algorithm. Thus either stopping before recovery or exhausting the cap requires an iteration without an update before recovery. The per-iteration bound above limits the probability of any such event to $\delta$.

The one piece of Algorithm~\ref{alg:stab} we have not described is $\mathsf{propose}$ itself. Rather than define it abstractly, we let a small example dictate its form.

\subsection{An Example}
\label{sec:up-down-example}

In this section we use a schematic example to motivate the construction of the $\mathsf{propose}$ function in its two modes---the \emph{clean} mode and the \emph{intermediate} mode---and to illustrate the intuition behind Algorithm~\ref{alg:stab}. The exact details---parameter choices, correctness proofs, and the noise thresholds that separate the regimes---are deferred to Sec.~\ref{sec:up-down-per-batch} and Sec.~\ref{sec:end-to-end}.

Consider a state $\rho$ on $n = 5$ qubits, with weak stabilizer:
\begin{equation}
S = \expval{Z_1, Z_2},
\qquad \dim S = 2, \quad \nu = 3.
\end{equation}
Qubitwise, $S$ and its centralizer $C(S)$ read
\begin{equation}
S \;=\; \boxed{\sfrac{I}{Z}} \; \boxed{\sfrac{I}{Z}} \; \boxed{I} \; \boxed{I} \; \boxed{I},
\qquad
C(S) \;=\; \boxed{\sfrac{I}{Z}} \; \boxed{\sfrac{I}{Z}} \; \boxed{\sfrac{\sfrac{I}{X}}{\sfrac{Y}{Z}}} \; \boxed{\sfrac{\sfrac{I}{X}}{\sfrac{Y}{Z}}} \; \boxed{\sfrac{\sfrac{I}{X}}{\sfrac{Y}{Z}}}.
\end{equation}
Each box ranges independently over the Paulis shown. We will use this box notation throughout without further comment.

Fix $\varepsilon \ll 1$, the purity-loss budget, at a small constant. We design the example $\rho$ such that including $Z_3$ in $\tilde{S}$ results in a purity loss $\Delta \gamma = O(\varepsilon^2)$, well below the $\varepsilon$ threshold. By contrast, including $Z_4$ causes a purity loss slightly above $\varepsilon$, say $2 \varepsilon$. In other words, $\tilde{S} = \langle Z_1, Z_2, Z_3\rangle$ is an \emph{acceptable} weak stabilizer, whereas $\tilde{S} = \langle Z_1, Z_2, Z_4\rangle$ is a \emph{fake} one. Having any other linearly independent Pauli in $\tilde{S}$ would result in a significant $O(1)$ purity drop and is thus unacceptable.
\begin{equation}
\underbrace{\boxed{\sfrac{I}{Z}} \quad \boxed{\sfrac{I}{Z}}}_{\text{true weak stabilizer}}
\qquad
\underbrace{\boxed{\sfrac{I}{Z}}}_{\substack{\downarrow\\\text{acceptable}}}
\qquad
\underbrace{\boxed{\sfrac{I}{Z}}}_{\text{fake}}
\qquad \boxed{I}
\label{eq:example-classification}
\end{equation}

As discussed in Sec.~\ref{sec:bell-difference-sampling}, the Bell difference distribution is tied to the Pauli coefficient structure of $\rho$.  In our case, this means that depending on the noise, most samples are good samples, living in $C(S) = \boxed{\sfrac{I}{Z}} \; \boxed{\sfrac{I}{Z}} \; \boxed{\sfrac{\sfrac{I}{X}}{\sfrac{Y}{Z}}} \; \boxed{\sfrac{\sfrac{I}{X}}{\sfrac{Y}{Z}}} \; \boxed{\sfrac{\sfrac{I}{X}}{\sfrac{Y}{Z}}}$. However, since imposing the additional directions $Z_3$ and $Z_4$ incurs only $O(\varepsilon^2)$ and $O(\varepsilon)$ purity loss, respectively, most good samples will be in $C(S + \langle Z_3, Z_4\rangle) = \boxed{\sfrac{I}{Z}} \; \boxed{\sfrac{I}{Z}} \; \boxed{\sfrac{I}{Z}} \; \boxed{\sfrac{I}{Z}} \; \boxed{\sfrac{\sfrac{I}{X}}{\sfrac{Y}{Z}}}$, varying only in the last qubit. Only with probability $O(\varepsilon)$ will we see a sample that is not $I$ or $Z$ on qubit 4 (but still $I$ or $Z$ on qubit 3). The probability that a good sample does not have $I$ or $Z$ on qubit 3 is $O(\varepsilon^2)$.
\begin{figure}[ht]
\centering
\includegraphics{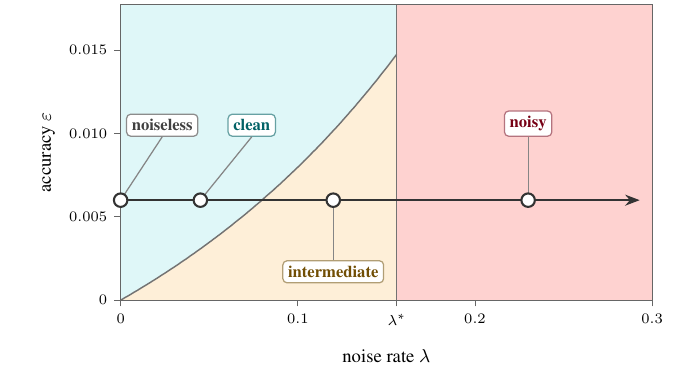}
\caption{We consider fixing $\varepsilon \ll 1$ and increasing $\lambda$ starting from $\lambda = 0$. Inside the diagram of Fig.~\ref{fig:phase-diagram}, this corresponds to traversing a horizontal line from left to right: starting from no noise at $\lambda = 0$, we pass through the clean regime, the intermediate regime, and finally the noisy regime. In each regime, the behavior of Bell difference samples changes qualitatively, requiring a different $\mathsf{propose}$ strategy if applicable.} 
\label{fig:phase-line-series}
\end{figure}

To better understand our example, we hold the accuracy parameter $\varepsilon \ll 1$ fixed and increase the noise rate $\lambda$ from 0, following the trajectory of Fig.~\ref{fig:phase-line-series} through the clean, intermediate, and noisy regimes. Each regime requires a different $\mathsf{propose}$ strategy, if applicable.

\begin{description}[leftmargin=*]
\item[Noiseless ($\lambda = 0$)]

Keep taking Bell difference samples and let $B$ be the group they span. Since the noise rate is 0, all samples are good and live in $C(S) = \boxed{\sfrac{I}{Z}}\;\boxed{\sfrac{I}{Z}}\;\boxed{\sfrac{\sfrac{I}{X}}{\sfrac{Y}{Z}}}\;\boxed{\sfrac{\sfrac{I}{X}}{\sfrac{Y}{Z}}}\;\boxed{\sfrac{\sfrac{I}{X}}{\sfrac{Y}{Z}}}$. However, as mentioned, since imposing both $Z_3$ and $Z_4$ incurs only $O(\varepsilon)$ purity loss, most samples concentrate in $C(S + \langle Z_3, Z_4\rangle) = \boxed{\sfrac{I}{Z}}\;\boxed{\sfrac{I}{Z}}\;\boxed{\sfrac{I}{Z}}\;\boxed{\sfrac{I}{Z}}\;\boxed{\sfrac{\sfrac{I}{X}}{\sfrac{Y}{Z}}}$. If we take $O(1)$ samples, at most we can get $B$ to span $C(S+\langle Z_3, Z_4\rangle)$, which is not useful.

However, if we take $\Omega(\frac{1}{\varepsilon})$ samples, we will eventually see a sample that is not $I$ or $Z$ on qubit 4 (but still $I$ or $Z$ on qubit 3). In this way, we can get $B$ to span $C(S+\langle Z_3\rangle)$. So, we simply output $\tilde{S} = C(B) =  \langle Z_1, Z_2, Z_3 \rangle = S + \langle Z_3 \rangle$, which solves the problem. There is no need for interactive updates protected by the verifier. This is exactly the idea of Ref.~\cite{hangleiter2024}.

\item[Clean regime]

With low noise (bounded above by some value $\Theta(\varepsilon)$), by the time we have drawn enough samples to span $C(S_{\text{acc}})$ for an acceptable $S_{\text{acc}}$, we may also get a few bad samples. However, if noise is sufficiently low, then there are very few bad samples, such that $B$ is just $C(S_{\text{acc}}) + \text{some bad samples}$. We can still compute $\Delta \tilde{S} = C(B)$, except now the bad samples will make $\Delta \tilde{S}$ smaller, so $\Delta \tilde{S} \subseteq S_{\text{acc}}$. This still leads to progress towards $S_{\text{acc}}$ and motivates Algorithm~\ref{alg:stab} with $\mathsf{propose}$ defined in Algorithm~\ref{alg:propose-clean}. Note that instead of proposing $\Delta \tilde{S} = C(B)$, we propose $\Delta \tilde{S} = \rad(B) = B \cap C(B)$. This choice is not necessary, but it ensures that $\Delta \tilde{S}$ is a commuting group and will be convenient later. We call this the \emph{clean mode} of the up-down algorithm.

\item[Intermediate regime]

If the noise rate is higher, say $\omega(\varepsilon)$, then by the time we have drawn enough samples to span $C(S_{\text{acc}})$ for an acceptable $S_{\text{acc}}$, we will have drawn so many bad samples that $B$ spans the whole Pauli group. In this case, $\Delta \tilde{S} = C(B) = \{I\}$ is trivial and useless. 

Therefore, we can no longer wait for $B$ to span $C(S_{\text{acc}})$. Instead, with $O(n)$ samples, we can get $B$ to span $S$, the true weak stabilizer. If the noise rate is less than some value $\Theta(1)$, then there will be a sufficiently large portion of $\rad B$ that is in $S$. Therefore, we'll need to iterate over $\rad(B)$ and propose to the verifier to add each Pauli in $\rad(B)$ to $\tilde{S}$, until the verifier accepts. This is the idea of $\mathsf{propose}$ in Algorithm~\ref{alg:propose-intermediate}. Each batch of samples could yield multiple linearly independent Paulis for successful updates, but for simplicity we stop once one is found. Here, Algorithm~\ref{alg:stab} operates in the \emph{intermediate mode}.

\item[Noisy regime]
If the noise rate is too large, bad samples are too frequent and contaminate any attempts to build up $B$. We no longer have the same progress guarantee, so we turn to the top-down algorithm in Algorithm~\ref{alg:top-down}.
\end{description}

The example thus suggests two $\mathsf{propose}$ rules, written out in Algorithms~\ref{alg:propose-clean} and~\ref{alg:propose-intermediate}. In clean mode, we hand the verifier the whole radical $R = \rad(B')$ at once. In intermediate mode, we draw elements of $R$ at random and hand them over one at a time, up to $L$ attempts per batch.

\begin{algorithm}[ht]
\caption{$\mathsf{propose}$ -- clean mode\label{alg:propose-clean}}
\KwIn{post-selected sample span $B'$; accuracy budget $\Delta\varepsilon$}
\Def{$\mathsf{propose}(B', \Delta\varepsilon)$}{
  Compute $R \leftarrow \rad(B')$\;
  \lIf{$R\subseteq\tilde S$}{reject}
  Query Algorithm~\ref{alg:stab-ver} to propose incrementing $\tilde{S}$ by $\Delta\tilde{S} = R$ with threshold $\Delta \varepsilon$; output the result\;
}
\end{algorithm}

\begin{algorithm}[ht]
\caption{$\mathsf{propose}$ -- intermediate mode\label{alg:propose-intermediate}}
\KwIn{post-selected sample span $B'$; accuracy budget $\Delta\varepsilon$}
\Def{$\mathsf{propose}(B', \Delta\varepsilon)$}{
  Set $L$\;
  Compute $R \leftarrow \rad(B')$\;
  \lIf{$R\subseteq\tilde S$}{reject}
  \For{$i = 1$ \KwTo $L$}{
    Draw $r$ uniformly at random from $R\setminus\tilde{S}$\;
    Query Algorithm~\ref{alg:stab-ver} to increment $\tilde{S}$ by $\Delta\tilde{S} = \{r\}$ with budget $\Delta\varepsilon_- = 0$, $\Delta \varepsilon_+ = \Delta \varepsilon$\;
    \lIf{the verifier approves}{return the candidate update $\tilde{S} \leftarrow \tilde{S} + \{r\}$}
  }
  \textbf{reject}\;
}
\end{algorithm}

At this point the finder--verifier framework is in place: the verifier (Algorithm~\ref{alg:stab-ver}), the top-down finder (Algorithm~\ref{alg:top-down}), the generic up-down finder (Algorithm~\ref{alg:stab}), and its two $\mathsf{propose}$ modes (Algorithms~\ref{alg:propose-clean} and~\ref{alg:propose-intermediate}). The next two sections choose the knobs---the batch size $M$, the number of attempts $T$, the per-update budget $\Delta\varepsilon$---and find the noise thresholds under which each mode succeeds, from the bottom up. Sec.~\ref{sec:up-down-per-batch} asks when a \emph{single} batch makes progress; Sec.~\ref{sec:end-to-end} chains batches into iterations and iterations into the final guarantees, specifying the full schedule for each mode along the way. A reader who only wants the end results can skip ahead to Theorems~\ref{thm:main-intermediate-depo}--\ref{thm:main-clean-arbi}.

\section{Single-Batch Progress}
\label{sec:up-down-per-batch}

Having established the structure of Algorithm~\ref{alg:stab}, we now prove its correctness and specify the parameter choices. The proof spans this section and the next; here we focus on a single batch and establish the corresponding success conditions for the clean and intermediate modes. Sec.~\ref{sec:batch-setup} fixes the notation for one batch, Sec.~\ref{sec:good-samples} lower-bounds the number of good samples needed before the radical carries useful structure, and Sec.~\ref{sec:bad-samples} upper-bounds the number of bad samples the radical can tolerate. The outputs of this section are Lemma~\ref{lemma:per-batch-intermediate} and Lemma~\ref{lemma:per-batch-clean}, the per-batch success conditions of the two modes.
\subsection{Setup}
\label{sec:batch-setup}

At the start of each batch, the algorithm has a current guess $\tilde{S}$. The goal is to enlarge $\tilde{S}$ until it contains the true weak stabilizer $S$. For this purpose, we define the target group for the current batch as $S^\star \equiv S + \tilde{S}$. Thus, progress means updating $\tilde{S}$ in a way that moves it closer to containing $S^\star$. The verifier guards each attempted update by allowing at most a purity drop of $\Delta \varepsilon$. We would like each batch to have a constant probability of producing a successful update toward $S^\star$. Define $\nu = n - \dim S$, $\tilde{\nu} = n - \dim \tilde{S}$, and $\nu^\star = n - \dim S^\star$ to be the nullities of $S$, $\tilde{S}$, and $S^\star$ respectively. Initially, $\dim \tilde{S} = 0$ and $\tilde{\nu} = n$. As the algorithm goes on, $\dim \tilde{S}$ increases and $\tilde{\nu}$ decreases. At the same time, $\nu^\star$ also decreases, but at a much slower pace, until eventually $\tilde{S}$ contains $S$, which means $S^\star = \tilde{S}$, so $\tilde{\nu} = \nu^\star$.

A Bell difference sample surviving post-selection lies in $C(\tilde{S})$ by construction. Following the principle from Notation~\ref{notation:starred}, define good$^\star$ and bad$^\star$ samples as good and bad samples among these surviving samples. In other words, good$^\star$ samples lie in $C(S^\star) = C(S) \cap C(\tilde{S})$, and bad$^\star$ samples lie in $C(\tilde{S}) \setminus C(S^\star)$. The post-selected span $B'$ (as in Algorithm~\ref{alg:stab}) is therefore generated by a mixture of good$^\star$ and bad$^\star$ samples. As we show below, good$^\star$ samples carry information useful for updating $\tilde{S}$ toward $S^\star$, while bad$^\star$ samples can corrupt it. Figure~\ref{fig:batch-groups} shows how these groups nest.

\begin{figure}[ht]
\centering
\includegraphics{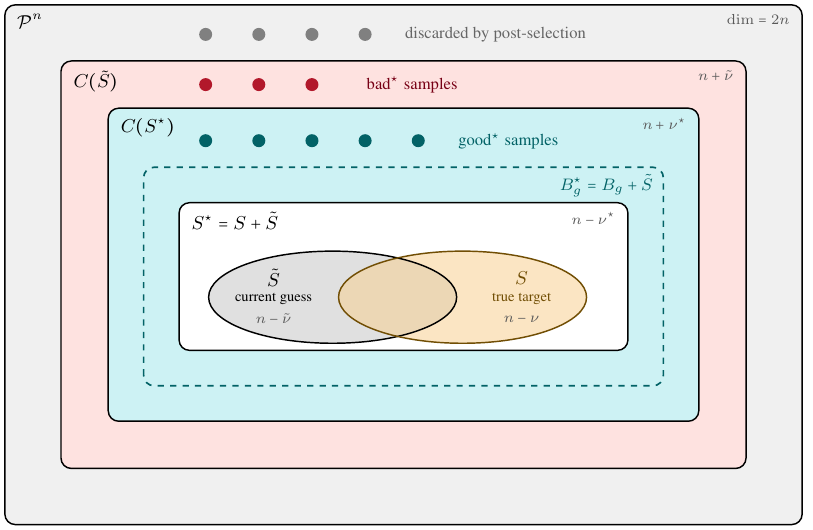}
\caption{The groups of a single batch, drawn by nesting, and where a Bell difference sample can land. Post-selection keeps only the samples inside $C(\tilde S)$; among those, good$^\star$ samples lie in $C(S^\star)$ and bad$^\star$ samples in $C(\tilde S) \setminus C(S^\star)$. $S$ and $\tilde S$ need not contain one another, which is why the target for the batch is their join $S^\star = S + \tilde S$. Each region is labelled with its $\mathbb{F}_2$-dimension; these are symmetric about $n$ because $\dim C(H) = 2n - \dim H$. The gaps used in the proofs follow by subtraction: $\dim C(S^\star)/S^\star = 2\nu^\star$, and $\dim S^\star/\tilde S = \dim C(\tilde S)/C(S^\star) = \tilde\nu - \nu^\star$. The dashed box illustrates the span $B_g^\star$ of good$^\star$ samples that we seek (Sec.~\ref{sec:good-samples}): it contains $S^\star$ and, in the clean mode, captures nearly all the projected purity.}
\label{fig:batch-groups}
\end{figure}

The per-batch success probability is thus controlled by two quantities:
\begin{itemize}
    \item $p_g^\star$, the probability that a Bell difference sample is good$^\star$;
    \item $p_b^\star$, the probability that a Bell difference sample is bad$^\star$.
\end{itemize}

Suppose we draw a batch of $M$ Bell difference samples. As a proof device, we first reveal each sample's type: good$^\star$, bad$^\star$, or filtered out by post-selection. Let $M_g^\star$ and $M_b^\star$ denote the counts of good$^\star$ and bad$^\star$ samples; the remaining $M - M_g^\star - M_b^\star$ are filtered out and do not contribute to $B'$.

Conditioned on this classification, we prove two conditions:
\begin{description}
    \item[Good$^\star$ sample condition]
    If $M_g^\star \geqslant K_g$ for a suitable lower bound $K_g$, then with high probability the good$^\star$ samples contain useful information for updating $\tilde{S}$ toward $S^\star$.

    \item[Bad$^\star$ sample condition]
    If $M_b^\star \leqslant K_b$ for a suitable upper bound $K_b$, then with high probability the bad$^\star$ samples do not corrupt this useful information too much.
\end{description}

We will then use a lower bound $q_g^\star$ on $p_g^\star$ and an upper bound $q_b^\star$ on $p_b^\star$, namely
\begin{equation}
    p_g^\star \geqslant q_g^\star,
    \qquad
    p_b^\star \leqslant q_b^\star.
\end{equation}
The quantities $q_g^\star$ and $q_b^\star$ will be computed in Secs.~\ref{sec:up-down-depo} and~\ref{sec:up-down-arbi} from the error model. If $q_g^\star$ is sufficiently large relative to $q_b^\star$, then we can choose a batch size $M$ such that both the good$^\star$ and bad$^\star$ sample conditions hold with high probability. This yields the desired per-batch correctness guarantee.

Sec.~\ref{sec:good-samples} proves the good$^\star$ sample condition, while Sec.~\ref{sec:bad-samples} proves the bad$^\star$ sample condition and combines the two conditions into a per-batch success guarantee. The relevant quantities for these sections are summarized in Table~\ref{tab:quantities-batch}.

\begin{table}[h]
    \centering
    \renewcommand{\arraystretch}{1.25}
    \begin{tabular}{c c c}
    \hline
      Where used & Concept  &  Interpretation  \\
      \hline
      \multirow{10}{*}{Both sections}
      & $S^\star = S+\tilde S$ & Target group of the batch \\
      & $B'$ & Span of the surviving (post-selected) samples \\
      & $B_g$, $B_b$ & Spans of the good$^\star$ and bad$^\star$ samples \\
      & $(B')^\star = B'+\tilde S$ & $B'$ with the accepted directions \\
      & $B_g^\star = B_g+\tilde S$ & Good$^\star$ span with the accepted directions \\
      & $\mathfrak{B}_g^\star = B_g+S^\star$ & Good$^\star$ span with the full target (analysis only) \\
      \cline{2-3}
      & $\Delta \varepsilon$  & Maximum allowed purity drop \\
      & $\nu, \tilde{\nu}, \nu^\star$ & $n - \dim S, n - \dim \tilde{S}, n - \dim S^\star$ \\
      & good$^\star$ samples & Useful surviving samples \\
      & bad$^\star$ samples & Corrupting surviving samples \\
      \hline
      Sec.~\ref{sec:good-samples} 
      & $M_g^\star$ & Number of good$^\star$ samples in a batch \\
      \hline
      \multirow{7}{*}{Sec.~\ref{sec:bad-samples}}
      & $R = \rad(B')$ & Radical of the surviving-sample span \\
      & $R^\star = R+\tilde S = \rad((B')^\star)$ & The radical with the accepted directions \\
      \cline{2-3}
      & $M_b^\star$ & Number of bad$^\star$ samples in a batch \\
      & $p_g^\star$ & Probability of a good$^\star$ sample in $\mathrm{BellDiff}$ \\
      & $p_b^\star$ & Probability of a bad$^\star$ sample in $\mathrm{BellDiff}$ \\
      & $q_g^\star$ & Lower bound on $p_g^\star$ \\
      & $q_b^\star$ & Upper bound on $p_b^\star$ \\
      \hline
    \end{tabular}
    \caption{Relevant quantities for the per-batch success analysis. On groups and spans, a star adds the already accepted directions $\tilde S$; $\mathfrak{B}_g^\star$ also adds the full target. Quantities that depend on $S$, including the good$^\star$/bad$^\star$ split, are analysis objects that the algorithm does not know.}
    \label{tab:quantities-batch}
\end{table}

Two conditions govern whether the good$^\star$ samples carry enough useful information. Both analyses use the first condition; the clean analysis also uses the second, a purity-coverage condition. If both conditions can hold before total corruption by bad$^\star$ samples, then we are in the \emph{clean} regime; if only the first can hold, we are in the \emph{intermediate} regime. If even the first fails, we are in the \emph{noisy} regime, where these bounds no longer guarantee progress for the up-down algorithm.

\subsection{Good Samples}
\label{sec:good-samples}

Suppose our batch contains $M_g^\star$ good samples. Say these samples are $b_1, b_2, \ldots, b_{M_g^\star}$, and they span the subgroup $B_g = \spn\{b_1, \ldots, b_{M_g^\star}\}$. As mentioned in Sec.~\ref{sec:post-selection}, since the goal is to extend $\tilde{S}$ by some other group $\Delta \tilde{S}$ based on these samples, what matters are not these samples themselves but these samples modulo $\tilde{S}$. In other words, only the new directions provided by these samples matter. To capture this mathematically, we consider $B_g^\star = B_g + \tilde{S}$, or alternatively, $B_g^\star / \tilde{S} = \spn\{b_1^\star, b_2^\star, \ldots, b_{M_g^\star}^\star\}$ where $b_i^\star = b_i + \tilde{S}$.

We want $B_g^\star$ to be as large as possible. Since all the good$^\star$ samples are in $C(S^\star)$, ideally, the largest it can be is $B_g^\star = C(S^\star)$. In practice, however, this is too much to ask. At the bare minimum, progress is possible so long as $S^\star \subseteq B_g^\star$:

\begin{lemma}
\label{lemma:good-span-S}
If $M_g^\star \geqslant \tilde{\nu} + \nu^\star + \log_2 \frac{1}{\delta}$ ($0 < \delta < 1$), then with success probability at least $1 - \delta$, $S^\star \subseteq B_g^\star$.
\end{lemma}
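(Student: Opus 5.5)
The plan is to mirror part 2 of the proof of Theorem~\ref{thm:top-down}: conditioned on being good$^\star$, the samples are translation-invariant under $S$, so their images in the quotient $C(S^\star)/\tilde S$ spread uniformly over cosets of $S^\star/\tilde S$. A union bound over the hyperplanes that fail to contain $S^\star$ then shows that these images span enough to force $S^\star \subseteq B_g^\star$.

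\textbf{Reformulation.} The inclusion $S^\star \subseteq B_g^\star$ fails iff there is a linear functional on $C(S^\star)/\tilde S$ that vanishes on $B_g^\star/\tilde S$ but not on $S^\star/\tilde S$. By symplectic duality, such functionals are $x\mapsto \expval{x,c}$ with $c\in C(\tilde S)$, taken modulo $C(C(S^\star))=S^\star$. So the bad event is: there exists $c\in C(\tilde S)/S^\star$ such that $c$ commutes with every good$^\star$ sample, yet $c$ does not commute with all of $S^\star$. Equivalently, $c\notin C(S^\star)$.

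\textbf{Counting witnesses.} The number of candidate witnesses is at most $|C(\tilde S)/S^\star| = 2^{(n+\tilde\nu)-(n-\nu^\star)} = 2^{\tilde\nu+\nu^\star}$, using the dimensions in Fig.~\ref{fig:batch-groups}. This count is exactly where the exponent $\tilde\nu+\nu^\star$ in the statement comes from. A cleaner count via duality of $C(S^\star)/\tilde S$ gives the same bound, since $\dim C(S^\star)/\tilde S = \tilde\nu+\nu^\star$.

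\textbf{Probability per witness.} Fix a witness $c$ with $c\notin C(S^\star)$. Since $c\in C(\tilde S)$, $c$ must anticommute with some $s\in S$. Conditioned on being good$^\star$, the law of a sample is the restriction of $P_{\mathrm{BellDiff}}$ to $C(S^\star)$. By Observation~\ref{obs:stab-inv-diff}, this law is invariant under translation by $s$, because $C(S^\star)$ is closed under adding $s\in S\subseteq S^\star\subseteq C(S^\star)$. Translation by $s$ exchanges the two sets $\{b:\expval{b,c}=0\}$ and $\{b:\expval{b,c}=1\}$ inside $C(S^\star)$. Hence each good$^\star$ sample commutes with $c$ with probability exactly $1/2$.

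By independence, all $M_g^\star$ samples commute with $c$ with probability $2^{-M_g^\star}$. A union bound over witnesses gives a failure probability of at most $2^{\tilde\nu+\nu^\star-M_g^\star}\leqslant\delta$ under the stated hypothesis.

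\textbf{Main obstacle.} The delicate step is the conditioning. The revealed-type proof device fixes the counts $M_g^\star$, and given the counts, the good$^\star$ samples must be i.i.d.\ from $P_{\mathrm{BellDiff}}$ restricted to $C(S^\star)$. This holds because the classification depends only on each sample individually and the batch is fresh. I also need to make sure that invariance is applied with an element $s$ of $S$, not merely of $S^\star$, since Observation~\ref{obs:stab-inv-diff} only covers translations by $S$. This is why the witness must be shown to anticommute with an element of $S$ itself, which uses $c\in C(\tilde S)$ and $S^\star=S+\tilde S$.
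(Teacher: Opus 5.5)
Your proof is correct, but it takes a different route from the paper.

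\textbf{The paper's route.} It reveals each good$^\star$ sample in two stages. First it reveals the sample's $S^\star$-coset in $C(S^\star)/S^\star$; this fixes the kernel $K$, of dimension $k\geqslant M_g^\star-2\nu^\star$. It then reveals the $\tilde S$-coset within that $S^\star$-coset. A surjectivity argument shows that $\phi_{\tilde S}(K)$ is spanned by $k$ i.i.d.\ uniform vectors of $S^\star/\tilde S$. A union bound over the $2^{\tilde\nu-\nu^\star}-1$ hyperplanes of $S^\star/\tilde S$ then finishes the proof.

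\textbf{Your route.} You dualize over the whole quotient $C(S^\star)/\tilde S$. You then take a union bound over symplectic witnesses $c\in C(\tilde S)/S^\star$ that are nonzero on $S^\star/\tilde S$. Each such witness anticommutes with some $s\in S$, so the $S$-invariance of the conditioned law makes it exactly balanced. This avoids both the kernel construction and the argument that $\phi_{\tilde S}(K)$ is uniform. You need only single-sample balance plus independence, and it is closer in spirit to the hyperplane argument in Theorem~\ref{thm:top-down}.

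\textbf{The bounds agree.} If you count only the witnesses that are nonzero on $S^\star/\tilde S$, you get $2^{\tilde\nu+\nu^\star}-2^{2\nu^\star}=2^{2\nu^\star}(2^{\tilde\nu-\nu^\star}-1)$. This is exactly the paper's $(2^{\tilde\nu-\nu^\star}-1)2^{-k}$ at the worst case $k=M_g^\star-2\nu^\star$.

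\textbf{What the paper's route buys.} It gives a conditional, data-dependent statement. When the sampled $S^\star$-cosets span fewer than $2\nu^\star$ dimensions, $k$ is larger and the failure probability is smaller. It also gives the ``junk eats at most $2\nu^\star$ samples'' picture.

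\textbf{One point to make explicit.} Your steps $S^\star\subseteq C(S^\star)$ and $S^\star\subseteq C(\tilde S)$, used for closure under translation by $s$ and for the witness count, require $S^\star$ to be commuting. That follows from Lemma~\ref{lemma:compatibility}, which the paper also uses implicitly here.
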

\textit{Proof idea.} Think of each good$^\star$ sample as carrying two pieces of information: which $S^\star$-coset it lies in, and where it sits inside that coset. Only the second piece is useful, and since the Bell difference distribution is blind to shifts by $S$, it is uniformly random in $S^\star/\tilde S$. The first piece is junk living in $C(S^\star)/S^\star$, but that space has only $2\nu^\star$ dimensions, so the junk can eat up at most $2\nu^\star$ samples: out of $M_g^\star$ samples, we can always form $M_g^\star-2\nu^\star$ independent combinations whose junk cancels, each a fresh uniform vector in $S^\star/\tilde S$. Spanning this $(\tilde\nu-\nu^\star)$-dimensional space takes $\tilde\nu-\nu^\star$ such vectors, and each extra one halves the chance of missing a direction, so another $\log_2\frac{1}{\delta}$ vectors push the failure probability below $\delta$. Altogether, $2\nu^\star+(\tilde\nu-\nu^\star)+\log_2\frac{1}{\delta}=\tilde\nu+\nu^\star+\log_2\frac{1}{\delta}$.

\begin{proof}
Since $\tilde{S} \subseteq S^\star$, the target $S^\star \subseteq B_g^\star$ is equivalent to $S^\star/\tilde{S} \subseteq B_g^\star/\tilde{S} = \spn\{b_1^\star, \ldots, b_{M_g^\star}^\star\}$ for $b_i^\star = b_i + \tilde{S}$, so it suffices to show $S^\star/\tilde{S} \cap B_g^\star/\tilde{S} = S^\star/\tilde{S}$.

We draw each good$^\star$ sample $b_i$ in two steps: first, we reveal the coset of $S^\star$ that $b_i$ lies in, i.e., $\mathfrak{b}_i \equiv b_i + S^\star$. Then, we reveal the specific $b_i$ within this coset.

In this way, the first step determines the set of all possible linear combinations of good$^\star$ samples that fall within $S^\star$, i.e., $K = \left\{x \in \{0,1\}^{M_g^\star}: \sum_i x_i \mathfrak{b}_i = S^\star\right\} = \ker \phi_{S^\star}$ where $\phi_{S^\star}: x \mapsto \sum_{i=1}^{M_g^\star} x_i \mathfrak{b}_i$. Since $\mathfrak{b}_i \in C(S^\star)/S^\star$ and $\dim C(S^\star)/S^\star = 2 \nu^\star$, $k \equiv \dim K \geqslant M_g^\star - 2 \nu^\star$ (rank--nullity theorem).

The second step thus determines $(B_g^\star/\tilde{S}) \cap (S^\star/\tilde{S}) = \left\{\sum_{i=1}^{M_g^\star} x_i b_i^\star : x \in K\right\} = \phi_{\tilde{S}}(K)$ where $\phi_{\tilde{S}}: x \mapsto \sum_{i=1}^{M_g^\star} x_i b_i^\star$. By Observation~\ref{obs:stab-inv-diff}, the $\mathrm{BellDiff}$ distribution is invariant under translation by $S$. Since $S^\star = S + \tilde{S}$, that means within each $S^\star$-coset, the distribution of a sample's $\tilde{S}$-coset is uniform. Therefore, in the second step, each $b_i^\star$ is uniformly distributed, even though $b_i$ may not. 

Intuitively, this means $(B_g^\star/\tilde{S}) \cap (S^\star/\tilde{S}) = \phi_{\tilde{S}}(K)$ is generated by $k \equiv \dim K$ i.i.d.\ uniform samples from $S^\star/\tilde{S}$. As a formal proof, fix a basis $g_1, \ldots, g_k$ of $K$. We wish to show that $\phi_{\tilde{S}}(\{g\}) = (\phi_{\tilde{S}}(g_1), \ldots, \phi_{\tilde{S}}(g_k))$ are uniformly random when we pick $b_i^\star$ within their respective  $S^\star/\tilde{S}$-cosets according to the second step. Consider shifting each $b_i^\star$ by $\Delta b_i^\star \in S^\star/\tilde{S}$. Then, $(\Delta \phi_{\tilde{S}}({g}))^T = G \cdot (\Delta b_1^\star, \ldots, \Delta b_{M_g^\star}^\star)^T$ where $G = \left\{(g_i)_j\right\} \in \mathbb{F}_2^{k \times M_g^\star}$. Since $k \leqslant M_g^\star$ and $G$ has rank $k$ (because $g_1, \ldots, g_k$ are linearly independent), the map $\{\Delta b_i^\star\}_i \mapsto \Delta \phi_{\tilde{S}}({g})$ is a surjective linear map $(S^\star/\tilde{S})^{M_g^\star} \rightarrow (S^\star/\tilde{S})^k$. Because $\Delta b_i^\star$ are uniform on $S^\star/\tilde{S}$, $\Delta \phi_{\tilde{S}}({g})$ are uniform on $(S^\star/\tilde{S})^k$, and hence so are $\phi_{\tilde{S}}({g})$. So $\phi_{\tilde{S}}(K)$ is effectively generated by $k$ i.i.d.\ uniform samples in $S^\star/\tilde{S}$.

Then $\phi_{\tilde{S}}(K) \neq S^\star/\tilde{S}$ iff these samples are all clustered within a codimension-1 subspace. There are $2^{\dim(S^\star/\tilde{S})} - 1 = 2^{\tilde{\nu} - \nu^\star} - 1$ such subgroups, and each contains all these samples with probability $2^{-k}$. By the union bound,
\begin{equation}
\mathbb{P}[\text{fail}] \leqslant (2^{\tilde{\nu} - \nu^\star} - 1) \cdot 2^{-k} \leqslant 2^{\tilde{\nu} - \nu^\star - k} \leqslant 2^{\tilde{\nu} + \nu^\star - M_g^\star},
\end{equation}
so $\mathbb{P}[\text{success}] \geqslant 1 - \delta$ once $M_g^\star \geqslant \tilde{\nu} + \nu^\star + \log_2\frac{1}{\delta}$.
\end{proof}

In addition to Lemma~\ref{lemma:good-span-S}, we would like $B_g^\star$ to ``approximately'' span $C(S^\star)$, if possible: the good$^\star$ samples, together with $S^\star$, should capture almost all of the current projected purity. This can happen when $M_g^\star \geqslant \Omega(\frac{\nu^\star}{\Delta \varepsilon})$:
\begin{lemma}
\label{lemma:good-span-C(S)}
For $\nu^\star>0$, $0<\delta<1$, $0<\Delta\varepsilon\ll\gamma_{\tilde S}$, and $\ln(1/\delta)/\nu^\star=O(1)$, if $M_g^\star \geqslant \frac{\nu^\star}{\Delta \varepsilon} \left(1 + O\left(\sqrt{\frac{\ln\frac{1}{\delta}}{\nu^\star}}\right) + O\left(\frac{\Delta \varepsilon}{\gamma_{\tilde S}}\right)\right)$, then with probability at least $1 - \delta$, $\gamma_{C(\mathfrak{B}_g^\star)} = 2^n\sum_{i \in \mathfrak{B}_g^\star} a_i^2 \geqslant \gamma_{\tilde{S}} - \Delta \varepsilon$, where $\mathfrak{B}_g^\star = B_g + S^\star$. An exact sufficient sample bound is given in Eq.~\eqref{eq:good-span-exact-bound}. If $\nu^\star=0$ or $\Delta\varepsilon\geqslant\gamma_{\tilde S}$, this coverage condition already holds without samples.
\end{lemma}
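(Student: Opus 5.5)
The plan is a sequential ``escape'' argument driven by the purity--escape bound, Lemma~\ref{lemma:belldiff-and-gamma}. Reveal the good$^\star$ samples $b_1,b_2,\ldots$ one at a time and set $H_k \equiv \spn\{b_1,\ldots,b_k\}+S^\star$, so that $H_0=S^\star$ and $H_{M_g^\star}=\mathfrak{B}_g^\star$. Every good$^\star$ sample lies in $C(S^\star)$, so $S^\star\subseteq H_k\subseteq C(S^\star)$ for all $k$. Hence $\dim(H_k/S^\star)\leqslant\dim(C(S^\star)/S^\star)=2\nu^\star$. Each sample either falls in $H_{k-1}$, leaving the span unchanged, or lies outside it, raising $\dim H_k$ by one. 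So at most $2\nu^\star$ escapes can occur. Since $C(C(H_k))=H_k$, the quantity to control is $\gamma_{C(H_k)}=2^n\sum_{i\in H_k}a_i^2$, as in the statement.

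The key step is to apply Lemma~\ref{lemma:belldiff-and-gamma} with $H=S^\star$ and $H'=C(H_k)$. First I check the hypotheses. $S^\star=S+\tilde S$ is commuting, because $\tilde S$ is commuting and commutes with $S$ by Lemma~\ref{lemma:compatibility}, and $S$ itself is commuting. The inclusion $S^\star\subseteq C(H_k)$ is equivalent to $H_k\subseteq C(S^\star)$, which holds. The inclusion $C(H_k)\subseteq C(S^\star)$ follows from $S^\star\subseteq H_k$. Next, Observation~\ref{obs:add-S-purity} gives $\gamma_{S^\star}=\gamma_{\tilde S}$.

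Now suppose the coverage condition still fails, i.e.\ $\gamma_{\tilde S}-\gamma_{C(H_k)}>\Delta\varepsilon$. Choosing the lemma's free parameter equal to $\Delta\varepsilon$ gives
\begin{equation}
\mathbb{P}\bigl[b\in H_k \bigm| b \text{ good}^\star\bigr]\leqslant 1-2\Delta\varepsilon+\frac{2(\Delta\varepsilon)^2}{\gamma_{\tilde S}}.
\end{equation}
Conditioning on $b\in C(S^\star)$ is exactly conditioning on being good$^\star$, so the next sample escapes with probability at least
\begin{equation}
p\equiv 2\Delta\varepsilon\Bigl(1-\frac{\Delta\varepsilon}{\gamma_{\tilde S}}\Bigr).
\end{equation}
This lower bound holds at every step at which coverage has not yet been reached. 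Once coverage holds it persists, because $H_k$ only grows and so $\gamma_{C(H_k)}$ only increases.

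The probabilistic step, which I expect to be the only delicate point, handles the adaptivity of the events. The escape probability depends on the history through $H_k$, but it is bounded below by $p$ uniformly on the event that coverage has not yet occurred. I would make this precise with a standard coupling. Attach to each sample an independent Bernoulli$(p)$ variable that is dominated by its escape indicator whenever coverage has not occurred, with stopping at coverage. Failure after $M_g^\star$ samples then implies that a Binomial$(M_g^\star,p)$ variable is at most $2\nu^\star-1$. Otherwise we would already have $2\nu^\star$ escapes, forcing $H_k=C(S^\star)$, and then $\gamma_{C(H_k)}=\gamma_{S^\star}=\gamma_{\tilde S}$, which is full coverage.

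A lower-tail Chernoff bound makes this failure probability at most $\delta$ once
\begin{equation}
M_g^\star\geqslant\frac{2\nu^\star+\sqrt{4\nu^\star\ln\frac1\delta}+2\ln\frac1\delta}{p}.
\end{equation}
This bound can serve as the exact form of Eq.~\eqref{eq:good-span-exact-bound}. Substituting $p$ and expanding to leading order, using $\ln(1/\delta)/\nu^\star=O(1)$ and $\Delta\varepsilon\ll\gamma_{\tilde S}$, gives $\frac{\nu^\star}{\Delta\varepsilon}\bigl(1+O(\sqrt{\ln(1/\delta)/\nu^\star})+O(\Delta\varepsilon/\gamma_{\tilde S})\bigr)$, which is the claimed threshold.

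The degenerate cases are immediate. If $\nu^\star=0$, then $C(S^\star)=S^\star$, so already $H_0=C(S^\star)$ carries all of $\gamma_{\tilde S}$. If $\Delta\varepsilon\geqslant\gamma_{\tilde S}$, the condition $\gamma_{C(\mathfrak{B}_g^\star)}\geqslant\gamma_{\tilde S}-\Delta\varepsilon$ is trivial because the left-hand side is nonnegative.
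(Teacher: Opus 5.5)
Your proposal is correct and follows the paper's proof essentially step for step. You use the same sequential escape argument with $H=S^\star$, $H'=C(H_k)$ in the purity--escape bound, the identity $\gamma_{S^\star}=\gamma_{\tilde S}$, and the same stochastic-domination-plus-Chernoff step, where your coupling plays the role of the paper's auxiliary process $W$; the only differences are cosmetic, namely your slightly sharper failure threshold $2\nu^\star-1$ and your slightly looser but still valid closed form $2\nu^\star+2\ln\frac1\delta+2\sqrt{\nu^\star\ln\frac1\delta}$ in place of the exact root $2\nu^\star+\ln\frac1\delta+\sqrt{\ln^2\frac1\delta+4\nu^\star\ln\frac1\delta}$ of Eq.~\eqref{eq:good-span-exact-bound}.
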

\begin{proof}
Similar to the proof of Lemma~\ref{lemma:good-span-S}, for each good$^\star$ sample $b_i$, define $\mathfrak{b}_i = b_i + S^\star$ to be the $S^\star$-coset it lies in. Let $\mathfrak{B}_i = S^\star+\spn\{b_1,\ldots,b_i\}\subseteq C(S^\star)$, so $\mathfrak{B}_i/S^\star=\spn\{\mathfrak{b}_1,\ldots,\mathfrak{b}_i\}$ is the space spanned by good$^\star$ samples as we draw them (up to cosets of $S^\star$). Then $\mathfrak{B}_0 = S^\star$ and $\mathfrak{B}_{M_g^\star} = \mathfrak{B}_g^\star$; set $D_i = \dim(\mathfrak{B}_i/S^\star)$. Since $S^\star\subseteq C(\mathfrak{B}_i)\subseteq C(S^\star)$, Lemma~\ref{lemma:belldiff-and-gamma} applies with $H=S^\star$ and $H'=C(\mathfrak{B}_i)$, even when $H'$ is noncommuting. By Observation~\ref{obs:add-S-purity}, $\gamma_{S^\star}=\gamma_{\tilde S}$. Thus the failure event $\gamma_{C(\mathfrak{B}_i)} < \gamma_{\tilde{S}} - \Delta\varepsilon$ implies
\begin{equation}
\label{eq:fail-implies-low-coverage}
\mathbb{P}_{b \sim \mathrm{good}^\star}[b \in \mathfrak{B}_i] = \mathbb{P}_{b \sim \mathrm{good}^\star}[\mathfrak{b} \in \mathfrak{B}_i/S^\star] \leqslant 1 - 2\Delta\varepsilon + \frac{2(\Delta\varepsilon)^2}{\gamma_{\tilde S}},
\end{equation}
where $\mathrm{good}^\star$ is the distribution of Bell difference samples conditioned on being good$^\star$, and we used $C(C(\mathfrak B_i))=\mathfrak B_i$.

While the coverage condition fails, the next sample $b_{i+1}$ lands outside $\mathfrak{B}_i$ with probability at least $2\Delta\varepsilon(1-\Delta\varepsilon/\gamma_{\tilde S})$, which increases $D$ by one. In other words, while the failure condition holds, $\Delta D$ stochastically dominates $\mathrm{Bernoulli}(2\Delta\varepsilon(1-\Delta\varepsilon/\gamma_{\tilde S}))$.

We then apply the Chernoff bound. For bookkeeping, define another process $W$ parallel to $D$, with $W_0=0$, $\Delta W = \Delta D$ while the failure condition lasts, and $\Delta W = 1$ afterward. The coverage condition is preserved as the span grows. In this way:
\begin{enumerate}
  \item During the failure event, $W = D$. Since $\mathfrak B_i/S^\star\subseteq C(S^\star)/S^\star$, which has dimension $2\nu^\star$, $W \leqslant 2\nu^\star$ on failure.
  \item $W$ stochastically dominates a sum of i.i.d.\ $\mathrm{Bernoulli}(2\Delta\varepsilon(1-\Delta\varepsilon/\gamma_{\tilde S}))$ variables, so we can apply Chernoff to $W$.
\end{enumerate}

So, for $\mu\geqslant2\nu^\star$, we have:
\begin{equation}
\mathbb{P}[\text{failure}] \leqslant \mathbb{P}[W_{M_g^\star} \leqslant 2\nu^\star] \leqslant \exp\!\left(-\frac{(\mu - 2\nu^\star)^2}{2\mu}\right),
\end{equation}
where $\mu = 2\Delta\varepsilon(1-\Delta\varepsilon/\gamma_{\tilde S})M_g^\star$ is the mean of the binomial comparison. Setting the right-hand side $\leqslant \delta$ and solving gives the exact sufficient bound
\begin{equation}
\label{eq:good-span-exact-bound}
M_g^\star\geqslant
\frac{2\nu^\star+\ln(1/\delta)+\sqrt{\ln^2(1/\delta)+4\nu^\star\ln(1/\delta)}}
{2\Delta\varepsilon(1-\Delta\varepsilon/\gamma_{\tilde S})}
,
\end{equation}
valid for all $0<\Delta\varepsilon<\gamma_{\tilde S}$ and $0<\delta<1$. To leading order, this yields
\begin{equation}
\label{eq:good-span-C(S)-bound}
M_g^\star \geqslant \frac{\nu^\star}{\Delta\varepsilon}\left(1 + O\left(\sqrt{\frac{\ln(1/\delta)}{\nu^\star}}\right) + O\left(\frac{\Delta\varepsilon}{\gamma_{\tilde S}}\right)\right).
\end{equation}
Here, Chernoff gives a factor at most $1 + \sqrt{2\frac{\ln \frac{1}{\delta}}{2\nu^\star}} + 2\frac{\ln \frac{1}{\delta}}{2\nu^\star}$. Eq.~\eqref{eq:good-span-C(S)-bound}'s big-$O$ is intended for the regime $\sqrt{\ln \frac{1}{\delta}/\nu^\star} \leqslant O(1)$. The same can be said for all Chernoff bounds in this paper (Lemmas~\ref{lemma:per-batch-intermediate} and \ref{lemma:per-batch-clean}). The purity correction is $O(\Delta\varepsilon)$ whenever $\gamma_{\tilde S}$ has a constant lower bound.

When $\nu^\star=0$, $S^\star=C(S^\star)$, so the initial span already retains purity $\gamma_{\tilde S}$. The case $\Delta\varepsilon\geqslant\gamma_{\tilde S}$ follows from nonnegativity of purity.
\end{proof}

Lemma~\ref{lemma:good-span-S} establishes the intermediate-mode condition; Lemma~\ref{lemma:good-span-C(S)} establishes the additional condition for the clean mode.

At $\lambda=0$ every Bell difference sample is good (Proposition~\ref{prop:pure-state-bell-diff}), and these two lemmas alone give the sample complexity of the noise-free learner, Algorithm~\ref{alg:noise-free-learning}, in our accuracy measure:

\begin{corollary}[Sample complexity of Algorithm~\ref{alg:noise-free-learning}]
\label{lemma:noise-free-samples}
For a pure state $\psi$ with stabilizer nullity $\nu \leqslant t$, $0<\varepsilon<1/2$, and $0<\delta<1$, $M = O\!\left(n + \frac{t + \ln(1/\delta)}{\varepsilon}\right)$ Bell difference samples suffice for Algorithm~\ref{alg:noise-free-learning} to output a commuting group $\tilde S$ satisfying
\begin{equation}
S\subseteq\tilde S,\qquad 1-\gamma_{\tilde S}\leqslant\varepsilon,
\end{equation}
with probability at least $1-\delta$, where $S=\mathrm{Stab}(\psi)$.
\end{corollary}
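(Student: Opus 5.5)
Since the state is pure, the plan is to specialize the two single-batch lemmas to the first batch and then take a union bound. By Proposition~\ref{prop:pure-state-bell-diff}, every Bell difference sample lies in $C(S)$ at $\lambda=0$, so all $M$ samples are good. Algorithm~\ref{alg:noise-free-learning} performs no post-selection. I will therefore view it as a single batch of the up-down framework with $\tilde S=\{I\}$. Then $S^\star=S$, $\tilde\nu=n$, $\nu^\star=\nu$, $M_g^\star=M$, $B_g^\star=B_g=B=\spn\{b_1,\ldots,b_M\}$, and $\gamma_{\tilde S}=\gamma=1$. The output $\tilde S=\rad(B)$ is automatically commuting, because a radical always is. I split the failure budget as $\delta/2+\delta/2$.

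\emph{Containment.} Applying Lemma~\ref{lemma:good-span-S} with confidence $\delta/2$, the condition $M\geqslant n+\nu+\log_2\frac{2}{\delta}$ gives $S\subseteq B$ with probability at least $1-\delta/2$. Every sample lies in $C(S)$, so $B\subseteq C(S)$, which is equivalent to $S\subseteq C(B)$. Combining the two inclusions gives $S\subseteq B\cap C(B)=\rad(B)=\tilde S$.

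\emph{Purity.} Applying Lemma~\ref{lemma:good-span-C(S)} with $\Delta\varepsilon=\varepsilon$, $\gamma_{\tilde S}=1$ and confidence $\delta/2$, the exact bound Eq.~\eqref{eq:good-span-exact-bound} gives $2^n\sum_{i\in B+S}a_i^2\geqslant 1-\varepsilon$ with probability at least $1-\delta/2$. The case $\nu=0$ is covered by the lemma without any samples. On the containment event, $B+S=B$. To pass from $B$ to the radical, I would compute the centralizer of the output: $C(\rad B)=C(B\cap C(B))=C(B)+B\supseteq B$. Hence
\begin{equation}
\gamma_{\tilde S}=2^n\sum_{i\in C(\rad B)}a_i^2\geqslant 2^n\sum_{i\in B}a_i^2\geqslant 1-\varepsilon .
\end{equation}
This is the only step that is not a direct citation, and I expect it to be the main point needing care. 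It works because shrinking $B$ to its radical enlarges the retained centralizer from $B$ to $B+C(B)$, so no purity is lost.

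\emph{Counting.} Since $\varepsilon<1/2$, we have $1-\varepsilon/\gamma_{\tilde S}\geqslant 1/2$. The denominator of Eq.~\eqref{eq:good-span-exact-bound} is therefore at least $\varepsilon$. Using $\sqrt{x^2+4\nu x}\leqslant x+2\nu$, the numerator is $O(\nu+\ln(1/\delta))$. The purity condition thus needs $O((t+\ln(1/\delta))/\varepsilon)$ samples, using $\nu\leqslant t$. The containment condition needs $O(n+\ln(1/\delta))$ samples, and since $\varepsilon<1$ its $\ln(1/\delta)$ term is absorbed into the purity term. Taking $M$ to be the maximum of the two requirements gives $M=O\!\left(n+\frac{t+\ln(1/\delta)}{\varepsilon}\right)$. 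A union bound over the two events gives overall success probability at least $1-\delta$.
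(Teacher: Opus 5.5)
Your proof is correct and follows the paper's own argument: you apply Lemmas~\ref{lemma:good-span-S} and~\ref{lemma:good-span-C(S)} to a single batch with $\tilde S=\{I\}$ and $\Delta\varepsilon=\varepsilon$, get $S\subseteq B\cap C(B)$ from $S\subseteq B\subseteq C(S)$, and finish with a union bound. Your radical step $C(\rad B)=B+C(B)\supseteq B$ is equivalent to the paper's use of $\tilde S\subseteq C(B)$ with Observation~\ref{obs:subgroup-purity}, and your explicit sample count spells out what the paper leaves implicit.
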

\begin{proof}
The radical $\tilde S$ is commuting by construction. Lemmas~\ref{lemma:good-span-S} and~\ref{lemma:good-span-C(S)}, applied with initial group $\{I\}$ and $\Delta\varepsilon=\varepsilon$ and the exact bound of Eq.~\eqref{eq:good-span-exact-bound}, give $S\subseteq\hat{\mathcal G}$ and $\gamma_{C(\hat{\mathcal G})}\geqslant1-\varepsilon$ at the stated sample count, with total failure probability at most $\delta$. Since $\hat{\mathcal G}\subseteq C(S)$, also $S\subseteq C(\hat{\mathcal G})$, so $S\subseteq\tilde S$. Since $\tilde S\subseteq C(\hat{\mathcal G})$, Observation~\ref{obs:subgroup-purity} gives $\gamma_{\tilde S}\geqslant\gamma_{C(\hat{\mathcal G})}\geqslant1-\varepsilon$.
\end{proof}

\subsection{Considering Bad Samples}
\label{sec:bad-samples}

Sec.~\ref{sec:good-samples} established a lower bound on $M_g^\star$ needed for good$^\star$ samples to contain useful information. Here we bound $M_b^\star$ from above to control bad$^\star$ sample contamination. Within each batch, we have knowledge of bounds $q_g^\star, q_b^\star$ (Table~\ref{tab:quantities-batch}) satisfying $p_g^\star \geqslant q_g^\star$ and $p_b^\star \leqslant q_b^\star$, obtained by means discussed later. We also have an upper bound $t$ on $\nu$ (hence $\nu^\star \leqslant \nu \leqslant t$) from the promise of Problem~\ref{prob:main}. We prove two modes of success per batch: the \emph{intermediate} mode (Algorithm~\ref{alg:propose-intermediate}) and the \emph{clean} mode (Algorithm~\ref{alg:propose-clean}).

Both per-batch proofs condition on a batch with (strictly) more than $K_g$ good$^\star$ samples and (strictly) fewer than $K_b$ bad$^\star$ samples; each proof fixes its own thresholds $K_g$ and $K_b$. The following window on $M$ guarantees both.

\begin{lemma}[Sampling window]
\label{lemma:sampling-window}
Let $K_g, K_b > 0$ and $0 < \delta_1, \delta_2 < 1$, with $q_g^\star \leqslant p_g^\star$ and $q_b^\star \geqslant p_b^\star$. If a batch of $M$ Bell difference samples satisfies
\begin{equation}
\label{eq:constraint-on-M}
\frac{K_g}{q_g^\star}\left(1 + O\left(\sqrt{\frac{\ln \frac{1}{\delta_1}}{K_g}}\right)\right) \leqslant M \leqslant \frac{K_b}{q_b^\star}\max\left(1 - O\left(\sqrt{\frac{\ln \frac{1}{\delta_2}}{K_b}}\right), \frac{\delta_2^{\frac{1}{\ceil{K_b}}}}{e}\right),
\end{equation}
then $\mathbb{P}[M_g^\star \leqslant K_g] \leqslant \delta_1$ and $\mathbb{P}[M_b^\star \geqslant K_b] \leqslant \delta_2$.
\end{lemma}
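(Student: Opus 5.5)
Within a batch, $M_g^\star\sim\mathrm{Bin}(M,p_g^\star)$ and $M_b^\star\sim\mathrm{Bin}(M,p_b^\star)$, because each of the $M$ fresh samples is independently good$^\star$ or bad$^\star$. This holds conditionally on the current $\tilde S$ and on the fixed verifier data, as discussed in Sec.~\ref{sec:ver-discussion}. Binomial tails are monotone in the success probability. It therefore suffices to prove the lower-tail statement for $\mathrm{Bin}(M,q_g^\star)$ and the upper-tail statement for $\mathrm{Bin}(M,q_b^\star)$: a coupling shows $M_g^\star$ stochastically dominates the former and $M_b^\star$ is dominated by the latter. The two events are then handled separately, with no union bound needed, since the lemma states each failure probability on its own.

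\textbf{Lower window.} Write $\mu_g=Mq_g^\star$. The multiplicative Chernoff lower tail gives $\mathbb P[X\leqslant K_g]\leqslant\exp(-(\mu_g-K_g)^2/(2\mu_g))$ for $\mu_g\geqslant K_g$, exactly as in the proof of Lemma~\ref{lemma:good-span-C(S)}. Setting this at most $\delta_1$ and solving the quadratic in $\sqrt{\mu_g}$ gives the sufficient condition
\[
\mu_g\geqslant K_g+\ln\tfrac1{\delta_1}+\sqrt{\ln^2\tfrac1{\delta_1}+2K_g\ln\tfrac1{\delta_1}}=K_g\Bigl(1+O\bigl(\sqrt{\ln(1/\delta_1)/K_g}\bigr)\Bigr).
\]
The last equality holds in the regime $\ln(1/\delta_1)/K_g=O(1)$ adopted in the paper's big-$O$ convention. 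This is the left inequality of Eq.~\eqref{eq:constraint-on-M}.

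\textbf{Upper window.} Write $\mu_b=Mq_b^\star\leqslant K_b$. There are two routes, and the lemma takes the better of the two, which explains the $\max$.

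First, the Chernoff upper tail $\mathbb P[X\geqslant K_b]\leqslant\exp(-(K_b-\mu_b)^2/(K_b+\mu_b))$, or the cleaner $\exp(-(K_b-\mu_b)^2/(2K_b))$ version, is at most $\delta_2$ once $\mu_b\leqslant K_b-\sqrt{2K_b\ln(1/\delta_2)}$. This is $K_b(1-O(\sqrt{\ln(1/\delta_2)/K_b}))$.

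Second, when $K_b$ is small compared with $\ln(1/\delta_2)$, the first route is vacuous. In that case use $\mathbb P[X\geqslant k]\leqslant\binom{M}{k}p^k\leqslant(e\mu_b/k)^k$ with $k=\lceil K_b\rceil$. This requires $(e\mu_b/\lceil K_b\rceil)^{\lceil K_b\rceil}\leqslant\delta_2$, i.e.\ $\mu_b\leqslant\lceil K_b\rceil\delta_2^{1/\lceil K_b\rceil}/e$. This is implied by $\mu_b\leqslant K_b\delta_2^{1/\lceil K_b\rceil}/e$, since $\lceil K_b\rceil\geqslant K_b$. Note that $X\geqslant K_b$ is equivalent to $X\geqslant\lceil K_b\rceil$ for integer $X$, which justifies passing to the ceiling.

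Dividing both windows by $q_g^\star$ and $q_b^\star$ respectively gives Eq.~\eqref{eq:constraint-on-M}.

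The main obstacle is bookkeeping rather than conceptual. We have to make sure the big-$O$ forms are honest consequences of the exact quadratic solutions in the stated regime, and that the stochastic-domination step is justified given that $p_g^\star,p_b^\star$ depend on the current $\tilde S$. For the latter, condition on the history up to the batch, so that $\tilde S$, and hence $p_g^\star$ and $p_b^\star$, are fixed, and the batch consists of fresh i.i.d.\ samples.
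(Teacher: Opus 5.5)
Your proposal is correct and follows the paper's proof. Like the paper, you use multiplicative Chernoff bounds for the good$^\star$ lower tail and the bad$^\star$ upper tail; your exact quadratic solution and your threshold $K_b-\sqrt{2K_b\ln(1/\delta_2)}$ are what the paper's footnote compresses into the big-$O$ terms. For the safeguard term you use the same bound $\binom{M}{\lceil K_b\rceil}(q_b^\star)^{\lceil K_b\rceil}\leqslant(eMq_b^\star/K_b)^{\lceil K_b\rceil}$, and your stochastic-domination and conditioning-on-history remarks only make explicit what the paper leaves implicit.
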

\begin{proof}
Intuitively, we have $M_g^\star \gtrapprox M q_g^\star$ and $M_b^\star \lessapprox M q_b^\star$. Chernoff gives both bounds when $M$ lies in the window.\footnote{The big-$O$ term on the left-hand side of Eq.~\eqref{eq:constraint-on-M} comes from $\sqrt{2\frac{\ln 1/{\delta_1}}{K_g}}+2\frac{\ln 1/{\delta_1}}{K_g}$ and is applicable when $\frac{\ln 1/{\delta_1}}{K_g} \leqslant O(1)$, as mentioned in the proof of Lemma~\ref{lemma:good-span-C(S)}; the big-$O$ term on the right-hand side comes from $\sqrt{2\frac{\ln 1/{\delta_2}}{K_b}}$, similarly.} The second term in the max is a safeguard in the event that the first term becomes negative; it comes from $\mathbb{P}[M_b^\star \geqslant K_b] \leqslant \binom{M}{\ceil{K_b}} {q_b^\star}^{\ceil{K_b}} \leqslant \left(\frac{e M q_b^\star}{\ceil{K_b}}\right)^{\ceil{K_b}} \leqslant \left(\frac{e M q_b^\star}{K_b}\right)^{\ceil{K_b}} \leqslant \delta_2$.\footnote{We retain $\ceil{K_b}$ in the exponent and only replace it by $K_b$ in the denominator. Thus the safeguard factor is at least $\delta_2/e$ for every $K_b>0$. If the bad-sample bound is zero, there are no bad samples: omit the bad-sample constraint and take $K_b=1$. Formulas dividing by the bad-sample bound concern positive values; at zero, use the good-sample bounds with the positive accuracy floor below.}
\end{proof}

In either mode, our formulas contain relative corrections of order $\sqrt{\ln(1/\delta)/K}$, such as the $O(\cdot)$ terms in Eq.~\eqref{eq:constraint-on-M}. Here $\delta$ is the per-batch failure probability and $K$ is a sample count set by the current progress, such as $K_g\approx\tilde\nu+t$ and $K_b\approx\tilde\nu-t$ in the intermediate mode; since we don't know $\nu^\star$, these counts already use the bound $\nu^\star\leqslant t$. When $K \gg \ln(1/\delta)$, the corrections are negligible and the formulas simplify. Since we can always rerun a batch multiple times within one iteration to boost success probability, it is tolerable to take $\delta \leqslant O(1)$ per batch. The only remaining regime where the corrections cannot be made negligible is therefore $K = O(1)$, which for the intermediate mode happens near the end, when $\tilde\nu-t = O(1)$ (or negative). Accordingly, Lemma~\ref{lemma:per-batch-intermediate} distinguishes $\tilde\nu-t \gg \ln(1/\delta)$ from $\tilde\nu-t \leqslant O(1)$, and Lemma~\ref{lemma:per-batch-clean} collects the corresponding corrections into constants.

\begin{lemma}[Per-Batch Success Condition of Intermediate Mode]
\label{lemma:per-batch-intermediate}
There are two cases:
\begin{description}
  \item[$\pmb{\tilde{\nu} - t \gg \ln\frac{1}{\delta}}$] If $\frac{q_b^\star}{q_g^\star} \lessapprox \frac{\tilde{\nu} - t}{\tilde{\nu} + t}$, take
  \begin{equation}
    M \approx \frac{\tilde{\nu}  + t}{q_g^\star}, \qquad L \approx 2^{t}\, \ln \frac{1}{\delta}
  \end{equation}
  in Algorithm~\ref{alg:stab} with $\mathsf{propose}$ from Algorithm~\ref{alg:propose-intermediate}. Then with probability at least $1 - \delta$, $\tilde{S}$ is successfully incremented by one Pauli. 
  \item[$\pmb{\tilde{\nu} - t \leqslant O(1)}$ or negative] Let $C_1, C_2 > 1$ and let $0<c<1$ be a constant, where $C_1 \approx 1$ when $t \gg 1$. If $\frac{q_b^\star}{q_g^\star} \lessapprox c\frac{1}{\tilde{\nu} + t}$, take
  \begin{equation}
    M \approx C_1 \frac{\tilde{\nu}  + t}{q_g^\star}, \qquad L \approx C_2 2^{t}
  \end{equation}
  in Algorithm~\ref{alg:stab} with $\mathsf{propose}$ from Algorithm~\ref{alg:propose-intermediate}. 
  \begin{itemize}
    \item Either with constant probability, $\tilde{S}$ is successfully incremented by one Pauli,
    \item or we already have $S \subseteq \tilde{S}$.
  \end{itemize}
\end{description}
\end{lemma}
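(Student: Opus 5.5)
The plan is to condition on the batch's sample-type counts and combine three ingredients. Lemma~\ref{lemma:good-span-S} shows that the good$^\star$ samples capture the target. A purely linear-algebraic argument shows that bad$^\star$ samples erode the radical's share of the target by at most one direction each, while the radical holds at most $t$ directions outside the target. Lemma~\ref{lemma:sampling-window} then turns the count conditions into a batch-size window.

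\emph{Step 1 (sample counts).} Set $K_g\approx\tilde\nu+t+\log_2\frac1\delta$; since $\nu^\star\leqslant t$, this covers the threshold of Lemma~\ref{lemma:good-span-S}. Set $K_b\approx\tilde\nu-t$. Lemma~\ref{lemma:sampling-window} with $\delta_1=\delta_2=\delta/3$ then admits some $M$ exactly when $K_g/q_g^\star\lessapprox K_b/q_b^\star$, which is the hypothesis $q_b^\star/q_g^\star\lessapprox(\tilde\nu-t)/(\tilde\nu+t)$. In the regime $\tilde\nu-t\gg\ln\frac1\delta$ the Chernoff corrections are lower order, so $M\approx(\tilde\nu+t)/q_g^\star$ works. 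On this event, Lemma~\ref{lemma:good-span-S} gives $S^\star\subseteq B_g^\star\subseteq(B')^\star$.

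\emph{Step 2 (structure of the radical).} Let $R^\star=\rad((B')^\star)=R+\tilde S$, and write $B_b$ for the span of the bad$^\star$ samples, so $\dim B_b\leqslant M_b^\star$. There are two claims to establish.
\begin{enumerate}
\item[(a)] Every element of $R^\star$ commutes with $(B')^\star\supseteq S^\star$. Hence $R^\star+S^\star$ is commuting, so $\dim(R^\star+S^\star)\leqslant n$. This gives $\dim R^\star/(R^\star\cap S^\star)\leqslant\nu^\star\leqslant t$: at most $t$ directions lie outside the target.
\item[(b)] Since $S^\star$ commutes with $C(S^\star)\supseteq B_g+\tilde S$, we have $S^\star\cap C((B')^\star)=S^\star\cap C(B_b)$. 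Moreover $S^\star\subseteq(B')^\star$, so $S^\star\cap R^\star=S^\star\cap C(B_b)$. This subspace has codimension at most $M_b^\star$ in $S^\star$. Therefore $\dim\big((R^\star\cap S^\star)/\tilde S\big)\geqslant\tilde\nu-\nu^\star-M_b^\star\geqslant\tilde\nu-t-M_b^\star$, which is at least $1$ when $M_b^\star<K_b$.
\end{enumerate}
Let $d=\dim R^\star/\tilde S$ and $d'=\dim\big((R^\star\cap S^\star)/\tilde S\big)$, so $d'\geqslant1$ and $d-d'\leqslant t$. A uniform element of $R\setminus\tilde S$, taken modulo $\tilde S$, is a uniform nonzero element of $R^\star/\tilde S$. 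It therefore lands in $S^\star\setminus\tilde S$ with probability $(2^{d'}-1)/(2^d-1)\geqslant2^{-t-1}$.

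\emph{Step 3 (verifier and attempts).} Any $r\in S^\star\setminus\tilde S$ satisfies $\tilde S+\{r\}\subseteq\tilde S+S$. By Observations~\ref{obs:subgroup-purity} and~\ref{obs:add-S-purity}, adding $r$ has zero purity drop. It commutes with $\tilde S$ because $S^\star$ is commuting, which follows from Lemma~\ref{lemma:compatibility}. Theorem~\ref{thm:stab-ver} with $\Delta\varepsilon_-=0$ then approves it. Every approved $r$ also respects the budget $\Delta\varepsilon_+$. With $L\approx2^{t}\ln\frac1\delta$ independent draws, the chance of never hitting $S^\star$ is at most $\delta/3$, up to the constant absorbed by $\approx$. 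A union bound over the three failure events gives the first case.

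\emph{Step 4 (endgame case).} When $\tilde\nu-t\leqslant O(1)$, first handle $\tilde\nu=\nu^\star$, which means $S\subseteq\tilde S$ and gives the second alternative. Otherwise $\tilde\nu-\nu^\star\geqslant1$, so by Step~2(b) it suffices that $M_b^\star=0$. Use the safeguard branch of Lemma~\ref{lemma:sampling-window} with $K_b=1$: then $\mathbb P[M_b^\star\geqslant1]\leqslant Mq_b^\star\leqslant C_1c$, which is a constant below $1$ under the hypothesis $q_b^\star/q_g^\star\lessapprox c/(\tilde\nu+t)$. Take $M=C_1(\tilde\nu+t)/q_g^\star$, with $C_1$ chosen so that $M_g^\star>K_g$ holds with constant probability; $C_1\to1$ when $t\gg1$. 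Steps~2 and~3 with $L=C_2 2^t$ then succeed with constant probability.

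I expect the main obstacle to be Step~2(b). The delicate point is to justify $S^\star\cap R^\star=S^\star\cap C(B_b)$ carefully with the starred quotient, and to confirm that drawing from $R\setminus\tilde S$ rather than $R^\star$ preserves uniformity modulo $\tilde S$. Balancing the constants $c,C_1$ in the endgame is routine by comparison.
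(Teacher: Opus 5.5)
Your proof follows the paper's structure step for step. It conditions on the good$^\star$/bad$^\star$ counts via Lemma~\ref{lemma:sampling-window} and applies Lemma~\ref{lemma:good-span-S} with $K_g\approx\tilde\nu+t$. It uses the same two facts, $R^\star\cap S^\star=S^\star\cap C(B_b)$ and $\dim R^\star/(R^\star\cap S^\star)\leqslant\nu^\star$, and the same $K_b=1$ endgame. As written, though, the first case does not deliver the stated $L\approx2^t\ln\frac1\delta$. With $K_b\approx\tilde\nu-t$ you only secure $d'\geqslant1$, so your hit probability per draw is $2^{-t-1}$. Your bound then certifies failure at most $\delta/3$ only for $L\geqslant2^{t+1}\ln(3/\delta)$. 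In this paper $\approx$ means equality up to a factor $1+o(1)$ (Sec.~\ref{sec:asymptotic-notation}), not up to a constant, so this factor of $2$ is not ``absorbed by $\approx$''. The paper avoids the loss by taking $K_b=\tilde\nu-t-\xi$ with $1\ll\xi\ll\tilde\nu-t$. Then $M_b^\star<K_b$ forces $d'>\xi$, and the hit probability is at least $2^{-t}(1-2^{-\xi})\approx2^{-t}$. Since $K_b$ is still $\approx\tilde\nu-t$, the window condition $q_b^\star/q_g^\star\lessapprox(\tilde\nu-t)/(\tilde\nu+t)$ is unchanged.

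Two bookkeeping points need the same care. First, your union bound covers three events, but there are four. The failure of Lemma~\ref{lemma:good-span-S} must be included, and your choice $K_g\approx\tilde\nu+t+\log_2\frac1\delta$ charges it a full $\delta$, so the total as written is $2\delta$. Second, to keep $\ln\frac1\delta$ rather than a shifted $\ln\frac{c'}{\delta}$ in $L$ when $\ln\frac1\delta=O(1)$, the draw event must receive $(1-\eta)\delta$ with $\eta\to0$. The three sampling events then share $\eta\delta$, with $\ln\frac1\eta\ll\tilde\nu-t$ so their Chernoff corrections stay negligible; this is the paper's allocation. When $\ln\frac1\delta\gg1$, an equal four-way split already suffices. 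In the endgame, your bound $\mathbb{P}[M_b^\star\geqslant1]\leqslant C_1c$ is a constant below $1$ only if $c<1/C_1$. The paper makes this explicit by setting $c=C'/C_1$ with $0<C'<1$, and you should tie $c$ to $C_1$ the same way.
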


\textit{Proof idea.} We choose $M$ so that, with high probability, the batch contains more than $K_g \approx \tilde{\nu} + t$ good$^\star$ samples---enough for their span to contain $S^\star$ by Lemma~\ref{lemma:good-span-S}---and fewer than $K_b$ bad$^\star$ samples. Each bad$^\star$ sample imposes at most one linear constraint on the radical, so $R^\star \cap S^\star$ has codimension at most $\strictFloor{K_b}$ in $S^\star$; on the other hand, $R^\star \subseteq C(S^\star)$ forces at least a $2^{-\nu^\star} \geqslant 2^{-t}$ fraction of $R^\star$ to lie in $S^\star$. A uniformly random $r \in R \setminus \tilde{S}$ therefore lands in $S^\star \setminus \tilde{S}$ with probability $\gtrapprox 2^{-t}$, and any such $r$ is accepted by the verifier at no purity cost, so $L \approx 2^t \ln \frac{1}{\delta}$ draws suffice. The condition on $\frac{q_b^\star}{q_g^\star}$ is exactly what makes the lower and upper requirements on $M$ compatible. Figure~\ref{fig:radical-draw} shows the two bounds at work.

\begin{figure}[ht]
\centering
\includegraphics{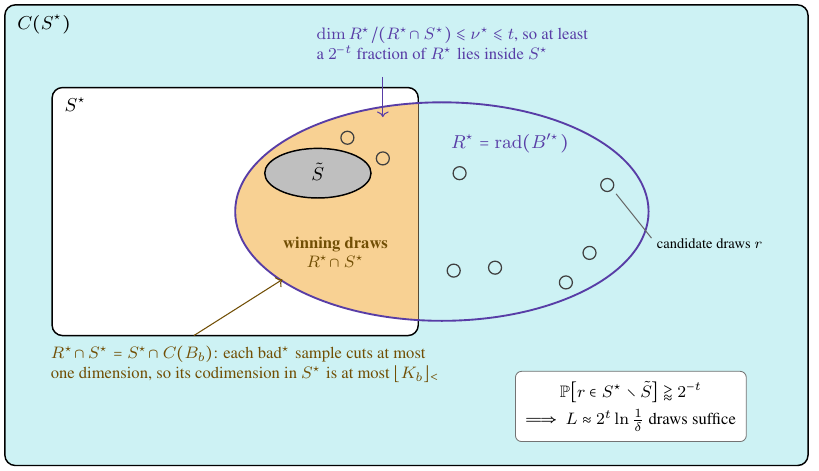}
\caption{Where the $2^t$ of the intermediate mode comes from, in the nesting convention of Fig.~\ref{fig:batch-groups}. Once $S^\star \subseteq B_g^\star$, the radical $R^\star$ lies inside $C(S^\star)$ but straddles $S^\star$. Candidates are drawn from $R \setminus \tilde S$, so the gray region $\tilde S$ of directions already accepted is omitted; a draw landing in the lens $R^\star \cap S^\star$ is guaranteed to be accepted by the verifier at no purity cost. Two bounds squeeze the lens. From above, $R^\star$ cannot escape $S^\star$ by much, so at least a $2^{-t}$ fraction of it lies inside $S^\star$; from below, each bad$^\star$ sample imposes at most one linear constraint, so the lens loses at most $\strictFloor{K_b}$ dimensions. Together they give the $L \approx 2^t \ln\frac{1}{\delta}$ attempts per batch that Algorithm~\ref{alg:propose-intermediate} makes. The regions are not drawn to scale: the guaranteed fraction $2^{-t}$ is far smaller than the lens shown.}
\label{fig:radical-draw}
\end{figure}

\begin{proof}
Our proof is organized into sections. In total, it relies on four different high-probability events: enough good$^\star$ samples ($\delta_1$), few enough bad$^\star$ samples ($\delta_2$), a good$^\star$ span containing $S^\star$ ($\delta_3$), and a useful draw among the $L$ attempts ($\delta_4$). Split the failure probability budget $\delta$ into $\delta_1 + \delta_2 + \delta_3 + \delta_4$; the exact allocation is fixed at the end.

\textit{Conditioning on the sample counts.} In each batch, we draw $M$ Bell difference samples, $M_g^\star$ of which are good$^\star$ and $M_b^\star$ of which are bad$^\star$. We want (strictly) more than $K_g$ good$^\star$ samples, enough for them to be useful, and (strictly) fewer than $K_b$ bad$^\star$ samples, few enough not to be too corrupting, where $K_b$ and $K_g$ will be fixed later. By Lemma~\ref{lemma:sampling-window}, this fails with probability at most $\delta_1+\delta_2$ whenever $M$ lies in the window of Eq.~\eqref{eq:constraint-on-M}. From now on, assume good$^\star$ and bad$^\star$ samples are already decided and $M_g^\star > K_g$, $M_b^\star < K_b$.

\textit{Good$^\star$ sample condition.} Similar to Sec.~\ref{sec:good-samples}, use the $^\star$ superscript to indicate structures coupled to $\tilde{S}$. That is, for each sample $b'_i$ passing post-selection, $b_i'{^\star} = b'_i + \tilde{S}$; for each group $G$, $G^\star = G + \tilde{S}$. Let $B_g$ and $B_b$ denote the spans of good and bad samples, respectively, and $B' = B_g + B_b$ the span of all post-selected samples. (So $B_g^\star = B_g + \tilde{S}$ and $B_g^\star / \tilde{S} = \spn\{b_{g1}^\star, \ldots, b_{gM_g^\star}^\star\}$ where $b_{gi}$ are the good samples; the same applies to $B_b$ and $B'$.)

We set $K_g = \tilde{\nu} + t + \log_2 \frac{1}{\delta_3}$. This way, Lemma~\ref{lemma:good-span-S} applies: with probability $\geqslant 1 - \delta_3$, $S^\star \subseteq B_g^\star$. From now on, assume the samples $b_i$ are drawn and $S^\star \subseteq B_g^\star$.

\textit{The radical in starred form.} Let $R^\star = R + \tilde{S}$. As discussed in Sec.~\ref{sec:post-selection}, uniform $r \in R \setminus \tilde{S}$ induces a uniform non-zero coset in $R^\star / \tilde{S}$. So it suffices to study $R^\star$ instead of $R$. Note $R^\star = (B' \cap C(B')) + \tilde{S} = (B'+\tilde{S}) \cap C(B') = \rad(B'{^\star})$. \footnote{This follows because $B'$ commutes with $\tilde{S}$ and because, for subgroups $A$, $B$, $C$ where $C \leqslant B$, $(A \cap B) + C = (A + C) \cap B$.}

\textit{Surviving target directions, $R^\star \cap S^\star$.} If the $\mathsf{propose}$ function proposes $r \in S^\star / \tilde{S}$, the verifier always accepts because there is no purity loss (Theorem~\ref{thm:stab-ver}, where $\Delta \gamma = 0$ because $\gamma_{r + \tilde{S}} = \gamma_{\tilde{S}} = \gamma_{S^\star}$, by Observation~\ref{obs:add-S-purity}). So we need to bound $(R^\star \cap S^\star) \setminus \tilde{S}$ by bounding $\dim \frac{R^\star \cap S^\star}{\tilde{S}}$. To that end, using $S^\star \subseteq B_g^\star \subseteq B'{^\star}$ and $S^\star \subseteq C(B_g^\star)$,
\begin{equation}
\label{eq:intermediate-bound-intersection}
\begin{aligned}
R^\star \cap S^\star & = B'{^\star} \cap C(B_g+\tilde{S}+B_b) \cap S^\star \\
& =  B'{^\star} \cap C(B_g^\star) \cap C(B_b) \cap S^\star \\
& = S^\star \cap C(B_b) = \{x: x\in S^\star, \expval{x,b} = 0 \; \forall b \in B_b\}
\end{aligned}
\end{equation}
which is $S^\star$ with additional $\dim B_b$ additional constraints, so each bad$^\star$ sample removes at most one independent target direction. Since $\dim B_b < K_b$, we have $\dim B_b \leqslant \strictFloor{K_b}$, where $\strictFloor{K_b} = \ceil{K_b} -1$ is the largest integer smaller than $K_b$. Hence, $\dim (R^\star \cap S^\star) \geqslant \dim S^\star - \strictFloor{K_b}$ and $\dim (R^\star \cap S^\star / \tilde{S}) \geqslant  \tilde{\nu} - \nu^\star - \strictFloor{K_b}$.

\textit{The radical stays close to the target, $\dim R^\star / (R^\star \cap S^\star)$.} Given that $S^\star \subseteq B_g^\star \subseteq B'{^\star}$, $R^\star = \rad(B'{^\star}) \subseteq C(S^\star)$. \footnote{This does not imply $B'^\star \subseteq C(S^\star)$; bad$^\star$ samples may not commute with $S^\star$.} Because any commuting group in $C(S^\star) / S^\star$ has dimension at most $\nu^\star$, $\dim \frac{R^\star + S^\star}{S^\star} \leqslant \nu^\star$ and hence $\dim \frac{R^\star}{R^\star \cap S^\star} = \dim \frac{R^\star + S^\star}{S^\star} \leqslant \nu^\star$ (second isomorphism theorem). Therefore, at least a $2^{-\nu^\star} \geqslant 2^{-t}$ fraction of $R^\star$ lies in $S^\star$.

\textit{Conclusion.} Therefore, each random $r \in R \setminus \tilde{S}$ (or equivalently $r \in R^\star \setminus \tilde{S}$) has success probability:
\begin{equation}
\label{eq:intermediate-bound-delta-4}
\begin{aligned}
\mathbb{P}[r \in S^\star \setminus \tilde{S}] & = \frac{|R^\star \cap S^\star| - |\tilde{S}|}{|R^\star| - |\tilde{S}|} = \frac{\frac{|R^\star \cap S^\star|}{|\tilde{S}|} - 1}{\frac{|R^\star|}{|R^\star \cap S^\star|}\frac{|R^\star \cap S^\star|}{|\tilde{S}|} - 1} \\
& \geqslant \frac{\frac{|R^\star \cap S^\star|}{|\tilde{S}|} - 1}{2^{\nu^\star}\frac{|R^\star \cap S^\star|}{|\tilde{S}|} - 1} \geqslant 2^{-\nu^\star} (1 - 2^{-(\tilde{\nu} - \nu^\star - \strictFloor{K_b})}) 
\end{aligned}
\end{equation}
Two situations:
\begin{description}
  \item[$\pmb{\tilde{\nu} - t \gg \ln \frac{1}{\delta}}$] We will choose the allocation at the end of the proof so that $\ln \frac{1}{\delta_1}, \ln \frac{1}{\delta_2} \ll \tilde{\nu} - t$. Since $K_g$ was set to be $\approx \tilde{\nu} + t \gg \ln \frac{1}{\delta}$, the lower bound of Eq.~\eqref{eq:constraint-on-M} becomes $M \gtrapprox \frac{\tilde{\nu} + t}{q_g^\star}$. So we set $M \approx \frac{\tilde{\nu} + t}{q_g^\star}$ for efficiency. On the bad$^\star$ sample side, we want $K_b$ as large as possible to relax the upper bound on $M$ in Eq.~\eqref{eq:constraint-on-M}. This is done by $K_b = \tilde{\nu} - t - \xi$ for some scale $\xi$ with $1 \ll \xi \ll \tilde{\nu} - t$.\footnote{We substitute $\nu^\star \leqslant t$ already in $K_g$ and $K_b$ rather than at the final step, so that the resulting bound on $q_b^\star/q_g^\star$ depends only on quantities the algorithm observes ($\tilde{\nu}$ and $t$).} Then, by Eq.~\eqref{eq:intermediate-bound-delta-4}, $\mathbb{P}[r \in S^\star \setminus \tilde{S}] \geqslant 2^{-t} (1 - 2^{-\xi}) \approx 2^{-t}$. So setting $L \approx 2^t \ln \frac{1}{\delta_4}$ gives a guaranteed update with probability at least $1 - \delta_4$. Taking the first term of the max, Eq.~\eqref{eq:constraint-on-M} simplifies to $M \approx \frac{\tilde{\nu} + t}{q_g^\star} \lessapprox \frac{\tilde{\nu} - t}{q_b^\star}$, which requires $\frac{q_b^\star}{q_g^\star} \lessapprox \frac{\tilde{\nu} - t}{\tilde{\nu} + t}$.
   
  \item[$\pmb{\tilde{\nu} - t = O(1)}$ or negative] We set $\delta$ (and $\delta_1, \ldots, \delta_4$) to be appropriate constants. Then $K_g = \tilde{\nu} + t + O(1)$, so we take $M \approx C_1\frac{\tilde{\nu} + t}{q_g^\star}$, where constant $C_1 > 1$ and $C_1 \approx 1$ when $\tilde{\nu} + t \gg 1$, i.e., $t \gg 1$.
   
  Unlike in the case $\tilde{\nu} - t \gg \ln \frac{1}{\delta}$, we set $K_b = 1$ such that $\strictFloor{K_b} = 0$, i.e., the batch contains no bad$^\star$ samples. This way, if $\nu^\star < \tilde{\nu}$, by Eq.~\eqref{eq:intermediate-bound-delta-4}, $\mathbb{P}[r \in S^\star \setminus \tilde{S}] \geqslant 2^{-(t+1)}$. Therefore it suffices to take $ L = \left\lceil 2 \times 2^t \ln \frac{1}{\delta_4} \right\rceil = 2^t \times \Theta(1)$; if $\nu^\star = \tilde{\nu}$, then $\dim S^\star = \dim \tilde{S}$, so $S^\star = \tilde{S}$ and we are already done. \footnote{Keeping a nonzero $K_b$ in the proof gives a slightly more general bookkeeping statement that may be useful for some edge-case input regimes. In that case, the final statement becomes: Either we make progress, or we are already $K_b$-close to final success in the sense that $\dim \left(S^\star / \tilde{S}\right) \leqslant K_b$.}
   
  It remains to check that Eq.~\eqref{eq:constraint-on-M} is compatible with this choice. Since the first term in the max in the upper bound of Eq.~\eqref{eq:constraint-on-M} can become negative, we use the second term as a safeguard. With $\delta_2 = \Theta(1)$, this gives the sufficient condition
  \begin{equation}
   M = C_1 \frac{\tilde{\nu} + t}{q_g^\star} \lessapprox C' \frac{1}{q_b^\star}, \qquad 0 < C' < 1
  \end{equation}
  Equivalently, $\frac{q_b^\star}{q_g^\star}\lessapprox c\frac{1}{\tilde{\nu}+t}$ for $c=\frac{C'}{C_1}<1$.
\end{description}

\textit{Failure budget.} In the case $\tilde{\nu} - t \gg \ln \frac{1}{\delta}$, allocate
\begin{equation}
\begin{cases}
  \delta_1 = \delta_2 = \delta_3 = \delta_4 = \frac{\delta}{4}, & \text{if } \ln \frac{1}{\delta} \gg 1 \\
  \delta_4 = (1 - \eta) \delta, \;\; \delta_1 = \delta_2 = \delta_3 = \frac{\eta}{3} \delta, & \text{if } \ln \frac{1}{\delta} = O(1)
\end{cases}
\end{equation}
where $\eta$ is some scale with $1 \ll \ln \frac{1}{\eta} \ll \tilde{\nu} - t$. This keeps $\ln \frac{1}{\delta_1}, \ln \frac{1}{\delta_2} \ll \tilde{\nu} - t$, as used above, and gives $L \approx 2^t \ln \frac{1}{\delta}$. In the case $\tilde{\nu} - t = O(1)$, the $\delta_i$ are the constants chosen there. Either way, $\delta_1+\delta_2+\delta_3+\delta_4 \leqslant \delta$.
\end{proof}

The lemma for the clean mode is similar. Note that in addition to the cases determined by $t$ versus $\ln \frac{1}{\delta}$, the parameters also depend on $\frac{q_b^\star}{q_g^\star}$ versus $\frac{\ln \frac{1}{\delta}}{\chi}$. As discussed above, we take $\delta$ to be a $\Theta(1)$ constant to delay the transition between regimes, thereby absorbing the $\ln \frac{1}{\delta}$ terms into constants on one side.

Assume the verifier is always correct. In the sample bounds, $\gamma_{\tilde S}$ can be replaced by any known lower bound, such as $\gamma-\varepsilon_{\mathrm{acc}}$, provided the accuracy budget is smaller than that bound.

\begin{lemma}[Per-Batch Success Condition of Clean Mode]
\label{lemma:per-batch-clean}
For $0<\Delta\varepsilon\ll\gamma_{\tilde S}$, define $\chi \equiv \max\left\{C_1\frac{t}{\Delta\varepsilon},C_2(\tilde{\nu}+t)\right\}$. We take $M = \frac{\chi}{q_g^\star}$ in Algorithm~\ref{alg:stab} with $\mathsf{propose}$ from Algorithm~\ref{alg:propose-clean}. With probability at least $1 - \delta$, the verifier accepts the whole proposed radical $R$ if queried. If the
update is denoted by $\tilde S\leftarrow \tilde R^\star = R + \tilde{S}$, then
\begin{equation}
\label{eq:clean-mode-bound}
\dim\frac{S^\star}{S^\star \cap \tilde R^\star} \leqslant \strictFloor{K_b}
\end{equation}
where $K_b = C_3 \frac{q_b^\star}{q_g^\star}\chi$ for $q_b^\star>0$ and $\strictFloor{K_b} = \ceil{K_b} - 1$ is the largest integer smaller than $K_b$. In other words, we are $\strictFloor{K_b}$ away from final success. Here, $\delta \leqslant \Theta(1)$, constants $C_1, C_2, C_3 > 1$, $C_1 \approx 1$ when $t \gg \ln \frac{1}{\delta}$, $C_2 \approx 1$ when $\tilde{\nu} + t \gg \ln \frac{1}{\delta}$, and $C_3 \approx 1$ when $\frac{q_b^\star}{q_g^\star} \chi \gg \ln \frac{1}{\delta}$.
\end{lemma}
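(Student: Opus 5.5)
I will mirror the structure of the intermediate-mode proof (Lemma~\ref{lemma:per-batch-intermediate}) and split the failure budget into $\delta_1+\delta_2+\delta_3+\delta_4$. These four parts cover, respectively, enough good$^\star$ samples, few enough bad$^\star$ samples, the good$^\star$ span containing $S^\star$, and the good$^\star$ span covering almost all purity.

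\emph{Sample counts.} Set $K_g=\chi$. By Lemma~\ref{lemma:sampling-window} with $M=\chi/q_g^\star$, the good-sample requirement holds up to the constants $C_1,C_2$, which absorb the $1+O(\sqrt{\ln(1/\delta_1)/K_g})$ correction. For bad samples, choose $K_b=C_3\frac{q_b^\star}{q_g^\star}\chi$, i.e.\ $K_b=C_3 q_b^\star M$, so the upper bound of Eq.~\eqref{eq:constraint-on-M} is met. Here $C_3\approx1$ when $q_b^\star M\gg\ln\frac1\delta$, and otherwise the safeguard branch $\delta_2^{1/\lceil K_b\rceil}/e$ forces a constant $C_3>1$. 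Condition on $M_g^\star>\chi$ and $M_b^\star<K_b$.

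\emph{Good-span events.} Since $\chi\geqslant C_2(\tilde\nu+t)\geqslant\tilde\nu+\nu^\star+\log_2\frac1{\delta_3}$, Lemma~\ref{lemma:good-span-S} gives $S^\star\subseteq B_g^\star$. Since $\chi\geqslant C_1 t/\Delta\varepsilon\geqslant C_1\nu^\star/\Delta\varepsilon$, the exact bound Eq.~\eqref{eq:good-span-exact-bound} in Lemma~\ref{lemma:good-span-C(S)} gives $\gamma_{C(\mathfrak B_g^\star)}\geqslant\gamma_{\tilde S}-\Delta\varepsilon$. The constant $C_1$ is chosen to absorb the $\sqrt{\ln(1/\delta)/\nu^\star}$ and $\Delta\varepsilon/\gamma_{\tilde S}$ corrections, with $C_1\approx1$ when $t\gg\ln\frac1\delta$; the case $\nu^\star=0$ is trivial.

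\emph{Acceptance.} As in the intermediate proof, $\tilde R^\star=R+\tilde S=\rad(B'^\star)$. It is commuting, contains $\tilde S$, and is contained in $B'^\star\subseteq C(\tilde S)$, so the query is legal. The key step, and the one I expect to be the main obstacle, is showing $\gamma_{\tilde R^\star}\geqslant\gamma_{\tilde S}-\Delta\varepsilon$ despite the bad samples. I would use $\tilde R^\star\subseteq C(B'^\star)\subseteq C(B_g^\star)$, hence $C(\tilde R^\star)\supseteq B_g^\star$. I also need $C(\tilde R^\star)\supseteq S^\star$. This follows because $\tilde R^\star\subseteq C(S^\star)$: it lies in $\rad(B'^\star)$ and $S^\star\subseteq B'^\star$. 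Thus $C(\tilde R^\star)\supseteq B_g+S^\star=\mathfrak B_g^\star$, and $\gamma_{\tilde R^\star}=2^n\sum_{i\in C(\tilde R^\star)}a_i^2\geqslant2^n\sum_{i\in\mathfrak B_g^\star}a_i^2\geqslant\gamma_{\tilde S}-\Delta\varepsilon$. So bad samples only shrink $\tilde R^\star$ and hence only help purity. The true drop is therefore at most $\Delta\varepsilon$, which by the relative-slack convention of Eq.~\eqref{eq:verifier-relative-slack} lies at or below the acceptance threshold (up to absorbed slack), and Theorem~\ref{thm:stab-ver} gives approval.

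\emph{Distance to the target.} Repeating the computation of Eq.~\eqref{eq:intermediate-bound-intersection} gives $\tilde R^\star\cap S^\star=S^\star\cap C(B_b)$. This is $S^\star$ cut by at most $\dim B_b\leqslant\lfloor K_b\rfloor_<$ linear constraints, which yields Eq.~\eqref{eq:clean-mode-bound}. Finally, allocate $\delta_1,\dots,\delta_4$ as constants summing to $\delta$. This is where the assumption $\delta\leqslant\Theta(1)$ enters, so that the logarithmic corrections fold into $C_1,C_2,C_3$.
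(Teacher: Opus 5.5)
Your proposal is correct and follows essentially the same route as the paper. It uses the same four-way split of $\delta$, the sampling window for the counts, Lemmas~\ref{lemma:good-span-S} and~\ref{lemma:good-span-C(S)} for containment and coverage, $R^\star=\rad((B')^\star)\subseteq C(B_g^\star)$ for acceptance, and $R^\star\cap S^\star=S^\star\cap C(B_b)$ for the distance bound. Two bookkeeping points need tightening, and the paper handles both this way: condition on $M_g^\star>K_g$ with $K_g$ a Chernoff margin below $\chi$, rather than on $M_g^\star>\chi$, which is not a high-probability event at $M=\chi/q_g^\star$; and invoke the coverage lemma at $\Delta\varepsilon_-=(1-o(1))\Delta\varepsilon$ rather than at $\Delta\varepsilon$, since Theorem~\ref{thm:stab-ver} does not guarantee approval for a drop of exactly $\Delta\varepsilon_+$.
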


\textit{Proof idea.} The bookkeeping is the same as in Lemma~\ref{lemma:per-batch-intermediate}, with one more requirement on the good$^\star$ samples: their span must be large enough that its centralizer retains purity $\gamma_{\tilde S} - \Delta\varepsilon_-$ (Lemma~\ref{lemma:good-span-C(S)}). This costs $K_g \approx \frac{t}{\Delta\varepsilon}$ samples and is where $\chi$ comes from. Since the radical sits inside that centralizer, its purity is at least as large, so the verifier accepts all of $R$ at once. The bad$^\star$ samples again cut at most $\strictFloor{K_b}$ dimensions off $S^\star$, which is the distance to final success in Eq.~\eqref{eq:clean-mode-bound}; $K_b$ is then the smallest value compatible with the batch size $M = \chi / q_g^\star$.

\begin{proof}
\textit{Probability bounds.} Split the failure probability as $\delta=\delta_1+\delta_2+\delta_3+\delta_4$. As in the intermediate case, Lemma~\ref{lemma:sampling-window} gives $M_g^\star > K_g$ and $M_b^\star < K_b$ with probability at least $1-\delta_1-\delta_2$ whenever $M$ lies in the window of Eq.~\eqref{eq:constraint-on-M}, where $\delta_1$ and $\delta_2$ are failure probability budgets for the respective conditions, and $K_b$ and $K_g$ will be fixed later. Assume we have a good$^\star$/bad$^\star$/selected-out assignment with such a property.

\textit{Good$^\star$ sample conditions.} We use the same quantities and their properties as in the intermediate mode proof. This time, however, 
\begin{equation}
\label{eq:K_g-clean}
K_g \approx \max\left\{\frac{t}{\Delta \varepsilon}\left(1 + O\left(\sqrt{\frac{\ln \frac{1}{\delta_4}}{t}}\right) + O\left(\frac{\Delta\varepsilon}{\gamma_{\tilde S}}\right)\right), \tilde{\nu} + t + \log_2\frac{1}{\delta_3}\right\},
\end{equation}
where we used $\Delta\varepsilon_-\approx\Delta\varepsilon$ as in Eq.~\eqref{eq:verifier-relative-slack}. For an exact choice of $K_g$, replace the first term of the maximum by the right-hand side of Eq.~\eqref{eq:good-span-exact-bound}, with $\nu^\star=t$, $\delta=\delta_4$, and $\Delta\varepsilon=\Delta\varepsilon_-$; omit that term if $t=0$. Then, conditioned on the good$^\star$/bad$^\star$/selected-out assignment, Lemma~\ref{lemma:good-span-S} applies with failure probability budget $\delta_3$ and Lemma~\ref{lemma:good-span-C(S)} applies with failure budget $\delta_4$ (where $\Delta \varepsilon$ is replaced with $\Delta \varepsilon_-$). Assume these hold; then $S^\star \subseteq B_g^\star$. Once this containment holds, $\mathfrak{B}_g^\star = B_g + S^\star = B_g^\star$, so the coverage bound reads $\gamma_{C(B_g^\star)} \geqslant \gamma_{\tilde S} - \Delta\varepsilon_-$.

\textit{The verifier accepts the radical.} As in the intermediate-mode proof, $R^\star = \rad(B'{^\star})$, so $R^\star \subseteq C(B_g^\star)$. Therefore, by Observation~\ref{obs:subgroup-purity}, $\gamma_{C(B_g^\star)} \leqslant \gamma_{R^\star}$, so $\gamma_{R+\tilde{S}} = \gamma_{R^\star} \geqslant \gamma_{C(B_g^\star)} \geqslant \gamma_{\tilde S} - \Delta \varepsilon_-$. Hence the verifier accepts the whole proposed radical $R$.

\textit{Distance from $S^\star$.} Moreover, Eq.~\eqref{eq:intermediate-bound-intersection} shows $R^\star \cap S^\star = S^\star \cap C(B_b)$, so $\dim \frac{S^\star}{R^\star \cap S^\star} \leqslant \dim B_b \leqslant M_b^\star < K_b$. Or, $\dim \frac{S}{R^\star \cap S} = \dim \frac{S^\star}{R^\star \cap S^\star} \leqslant K_b$ by applying the second isomorphism theorem twice. Since $\dim \frac{S^\star}{R^\star \cap S^\star}$ is an integer,  $\dim \frac{S^\star}{R^\star \cap S^\star} \leqslant \strictFloor{K_b}$. This matches Eq.~\eqref{eq:clean-mode-bound}, with $K_b$ left for us to determine. 

\textit{Choosing $K_b$.} Lastly, we simplify Eqs.~\eqref{eq:constraint-on-M} and~\eqref{eq:K_g-clean} and choose appropriate $\delta_1, \ldots, \delta_4$. Generally speaking, we pick $\delta_1, \delta_2, \delta_3, \delta_4 = \frac{\delta}{O(1)}$, set $M$ to the left-hand side of Eq.~\eqref{eq:constraint-on-M}, and $\text{LHS} \leqslant \text{RHS}$ of Eq.~\eqref{eq:constraint-on-M} gives the smallest $K_b$ we can take. We can write $M = \chi / q_g^\star$ where $\chi = \max\{C_1 \frac{t}{\Delta \varepsilon}, C_2(\tilde{\nu} + t)\}$, for constants $C_1, C_2$, where, by the corrections in Eq.~\eqref{eq:K_g-clean}, $C_1 \approx 1$ when $t \gg \ln \frac{1}{\delta}$ and $\Delta\varepsilon\ll\gamma_{\tilde S}$, and $C_2 \approx 1$ when $\tilde{\nu} + t \gg \ln \frac{1}{\delta}$. So it suffices to solve:
\begin{equation}
\frac{\chi}{q_g^\star} \leqslant \frac{K_b}{q_b^\star}\max\left(1 - O\left(\sqrt{\frac{\ln \frac{1}{\delta_2}}{K_b}}\right), \frac{\delta_2^{\frac{1}{\ceil{K_b}}}}{e}\right)   
\end{equation}
and find the smallest admissible $K_b$. While the maximum on the right-hand side depends on $K_b$, notice it is always $\Theta(1)$ in the regime of interest. Indeed, as discussed in Sec.~\ref{sec:loop-structure}, our philosophy dictates that we set $\delta = \Theta(1)$, so even when the first term in the max becomes too small, the second, $\delta_2^{1/\ceil{K_b}}/e$, is still $\Theta(1)$. Therefore, $\text{LHS} \leqslant \text{RHS}$ of Eq.~\eqref{eq:constraint-on-M} simplifies to $K_b \gtrapprox C_3 \frac{q_b^\star}{q_g^\star} \chi$, where $C_3 > 1$ and $C_3 \approx 1$ when $\frac{q_b^\star}{q_g^\star} \chi \gg\ln \frac{1}{\delta}$. Deciding exact values of $C_1, C_2, C_3$ is not needed for our purposes; for an explicit choice, Appendix~\ref{app:clean-parameters} fixes $\delta_1=\cdots=\delta_4=\frac{1}{16}$ and applies the concentration bounds directly, without optimizing.
\end{proof}
\begin{remark}
\label{remark:per-batch-clean-done}
When $\frac{q_b^\star}{q_g^\star} \chi \ll 1$, $\strictFloor{K_b} = 0$. Then, Eq.~\eqref{eq:clean-mode-bound} implies $S^\star \subseteq R^\star$ and we are done.
\end{remark}

In conclusion, Lemma~\ref{lemma:per-batch-intermediate} describes the intermediate mode of success, where $O(2^t)$ verifier queries are needed to increment $\tilde S$ by one. In contrast, when the clean-mode conditions of Lemma~\ref{lemma:per-batch-clean} hold, the whole radical can be accepted at once, allowing $\tilde S$ to jump directly to within a controlled remaining distance of the true weak stabilizer.

Both lemmas are single-batch statements: they succeed with probability $1-\delta$ for a per-batch budget $\delta$ that we will take to be a constant, and they are phrased in terms of the bounds $q_g^\star$ and $q_b^\star$ rather than the true probabilities $p_g^\star$ and $p_b^\star$. The next section removes both restrictions---boosting the per-batch probability by repetition and computing $q_g^\star, q_b^\star$ from the noise model---and arrives at the main theorems.

\section{End-to-End Guarantees}
\label{sec:end-to-end}

The per-batch lemmas of Sec.~\ref{sec:up-down-per-batch} hold only with constant probability and are stated relative to bounds $q_g^\star$, $q_b^\star$ that we have not yet computed. This section closes both gaps and delivers the main theorems. Sec.~\ref{sec:batch-to-iter} boosts a batch into an \emph{iteration} that updates $\tilde S$ with high probability; for the clean mode this requires splitting the run into a main stage and a terminal stage, giving the final pseudocode in Algorithm~\ref{alg:stab-clean}. Sec.~\ref{sec:success-modes} explains how iterations are chained and how the accuracy budget $\varepsilon$ is spent across them. The last two subsections then instantiate $q_g^\star$, $q_b^\star$ for depolarizing noise (Sec.~\ref{sec:up-down-depo}, Theorems~\ref{thm:main-intermediate-depo} and~\ref{thm:main-clean-depo}) and for arbitrary Pauli noise (Sec.~\ref{sec:up-down-arbi}, Theorems~\ref{thm:main-intermediate-arbi} and~\ref{thm:main-clean-arbi}).

\subsection{From Batches to Iterations}
\label{sec:batch-to-iter}

Lemmas~\ref{lemma:per-batch-intermediate} and \ref{lemma:per-batch-clean} in Sec.~\ref{sec:up-down-per-batch} described success conditions for each batch. As discussed in Sec.~\ref{sec:loop-structure},  an \emph{iteration} consists of running $T$ batches in order to boost the success probability. For the intermediate mode, this is straightforward because any update to $\tilde{S}$ must increase its dimension by 1.

\begin{observation}[Per-Iteration Success Condition of Intermediate Mode]
\label{obs:per-iter-intermediate}
Let $\delta_0$ be the failure probability budget in Lemma~\ref{lemma:per-batch-intermediate}. By running at most $T = \left\lceil\frac{\ln \frac{1}{\delta}}{\ln \frac{1}{\delta_0}}\right\rceil$ batches and taking the first successful update, the probability of success is boosted to $\geqslant 1 - \delta$.     
\end{observation}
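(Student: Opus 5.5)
The observation is a standard probability amplification by independent repetition, and I would prove it in three short steps. First, I fix the state at the start of the iteration: the current guess $\tilde S$, the verifier dataset (conditioned on its uniform-accuracy event, as agreed in Sec.~\ref{sec:ver-discussion}), and the history. Conditioned on this, I observe that every batch inside the iteration is run with the \emph{same} $\tilde S$. The reason is that the for loop of Algorithm~\ref{alg:stab} breaks as soon as an update is approved, so before that moment $\tilde S$ has not changed. Each batch draws fresh Bell difference samples, independent of the verifier data and of earlier batches. It also uses fresh internal randomness for the $L$ draws in Algorithm~\ref{alg:propose-intermediate}. Hence the events ``batch $j$ fails to produce a successful update'' are independent across $j$ while the iteration is still running. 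By Lemma~\ref{lemma:per-batch-intermediate}, each such event has probability at most $\delta_0$, since the lemma's hypotheses depend only on $\tilde S$, $t$ and the noise bounds $q_g^\star,q_b^\star$, and these are the same for every batch.

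Second, I bound the iteration's failure probability. The iteration fails only if all $T$ batches fail, which happens with probability at most $\delta_0^{T}$. Choosing $T=\lceil \ln\frac1\delta/\ln\frac1{\delta_0}\rceil$ gives $\delta_0^T\leqslant \exp(-\ln\frac1\delta)=\delta$.

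Third, I need to check that ``taking the first successful update'' is consistent with the algorithm, which actually applies the first \emph{approved} update. This is the one point that needs care. In intermediate mode, any approved update is a single Pauli $r\notin\tilde S$ that commutes with $\tilde S$. It raises $\dim\tilde S$ by exactly one, and by the verifier guarantee (Theorem~\ref{thm:stab-ver}) its purity drop is at most $\Delta\varepsilon$. So every approved update already counts as an increment of the kind Lemma~\ref{lemma:per-batch-intermediate} promises, and the remark in the text that ``any update must increase its dimension by 1'' is exactly what allows approval to be identified with success here. If the approved $r$ happens not to lie in $S^\star$, this does not break the iteration: no wrong turn is possible by Lemma~\ref{lemma:compatibility}, and the budget accounting is handled in Sec.~\ref{sec:success-modes}.

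For the second case of Lemma~\ref{lemma:per-batch-intermediate} ($\tilde\nu-t=O(1)$), the alternative outcome is $S\subseteq\tilde S$. In that situation a batch producing no update is not a failure, so the same product bound covers the event ``neither an update nor $S\subseteq\tilde S$''. The main obstacle is only the independence justification in the first step, namely that conditioning on the earlier batches having failed leaves $\tilde S$ and the per-batch bound unchanged.
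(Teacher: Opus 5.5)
Your proposal is correct and follows the paper's reasoning. Each fresh batch is run with the same $\tilde S$ until the first approval, so the iteration fails with probability at most $\delta_0^T\leqslant\delta$. The paper's one-line justification for the intermediate mode is that every approved update raises $\dim\tilde S$ by exactly one, so approval can be identified with success; this is your third step, and you also make explicit the conditioning and independence details that the paper leaves implicit.
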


The clean mode, however, introduces additional complications. The success event for the clean mode is $\dim \frac{S^\star}{S^\star \cap \tilde{S}} \leqslant \strictFloor{K_b}$. However, we cannot observe $\dim \frac{S^\star}{S^\star \cap \tilde{S}}$ directly. Instead, we can only see the consequence $\dim \tilde{S} \geqslant n - t - \strictFloor{K_b}$. This observable consequence can be boosted:
\begin{observation}[Per-Iteration Success Condition of Clean Mode, Main Stage]
\label{obs:per-iter-clean-main}
Let $\delta_0$ be the failure probability budget in Lemma~\ref{lemma:per-batch-clean}. By running at most $T = \left\lceil \frac{\ln \frac{1}{\delta}}{\ln \frac{1}{\delta_0}} \right\rceil$ batches, we obtain a candidate $\tilde{S}'$ approved by the verifier with $\dim \tilde{S}' \geqslant n -t - \strictFloor{K_b}$ with probability at least $1 - \delta$.
\end{observation}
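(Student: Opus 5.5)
The plan is to boost Lemma~\ref{lemma:per-batch-clean} by independent repetition. The key point is that its success event implies an event the algorithm can observe and check before accepting a candidate. Fix the iteration, condition on the history up to its start, and stay on the verifier's uniform-accuracy event. Then, by the discussion in Sec.~\ref{sec:ver-discussion}, each fresh batch in the iteration has its usual distribution. Lemma~\ref{lemma:per-batch-clean} then says that, with probability at least $1-\delta_0$, a single batch has three properties. The verifier approves the whole radical $R$. The resulting candidate $\tilde S'=R+\tilde S$ satisfies $\dim S^\star/(S^\star\cap\tilde S')\leqslant\strictFloor{K_b}$. Finally, $K_b$ depends only on $\tilde S$ and $t$ through $\chi$, $q_g^\star$ and $q_b^\star$, which are fixed for the iteration.

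Next I will convert the unobservable success event into the observable dimension bound. Since $S\subseteq S^\star$ and $\dim S^\star\geqslant\dim S\geqslant n-t$, we get
\[
\dim\tilde S'\;\geqslant\;\dim(S^\star\cap\tilde S')\;\geqslant\;\dim S^\star-\strictFloor{K_b}\;\geqslant\;n-t-\strictFloor{K_b}.
\]
So every per-batch success yields an approved candidate meeting the threshold. The one subtlety is that $R$ might lie inside $\tilde S$, in which case Algorithm~\ref{alg:propose-clean} rejects without querying. In that case, however, $\tilde S'=\tilde S$ already satisfies the bound, so I will count it as success with candidate $\tilde S'=\tilde S$. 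Equivalently, the main-stage loop only needs to report the first batch whose approved candidate reaches dimension $n-t-\strictFloor{K_b}$.

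For the repetition step, the batches within an iteration use fresh, independent Bell difference samples. The iteration modifies $\tilde S$ only after the first approved update, and it stops at that point. So each batch attempted before stopping faces the same $\tilde S$, and hence the same $K_b$. The event that all $T$ batches fail therefore has probability at most $\delta_0^T$. Choosing $T=\lceil\ln(1/\delta)/\ln(1/\delta_0)\rceil$ makes this at most $\delta$.

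I expect the main obstacle to be this stopping rule. An approved candidate with smaller dimension could be applied first and change $\tilde S$, and the later batches would then no longer be i.i.d.\ trials for the original $\tilde S$. I would handle this by making the main stage apply an approved update only if it meets the observable threshold $\dim\tilde S'\geqslant n-t-\strictFloor{K_b}$. Approvals below the threshold are discarded, and the next batch is drawn. The verifier's approval alone guarantees that any applied update stays within the budget $\Delta\varepsilon_+$. With this rule, the failure bound $\delta_0^T$ goes through directly.
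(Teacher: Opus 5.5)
Your argument is correct and follows the paper's route. You convert the unobservable per-batch event $\dim S^\star/(S^\star\cap\tilde S')\leqslant\strictFloor{K_b}$ into the observable bound using $\dim S^\star\geqslant n-t$, then boost with fresh batches, applying an update only when it meets the threshold. This is exactly what Algorithm~\ref{alg:stab-clean} and the proof of Proposition~\ref{prop:clean-schedule} do, and your explicit handling of the stopping rule and of the $R\subseteq\tilde S$ edge case (which the paper's schedule sidesteps by skipping targets that are already reached) just makes the paper's terse justification more precise.
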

To implement this idea, in each iteration, we may take the proposed candidate with the largest dimension, or take the first proposed update with dimension $\geqslant n - t - \strictFloor{K_b}$. 

Admittedly, $\dim \tilde{S} \geqslant n - t - \strictFloor{K_b}$ is a weaker statement than $\dim \frac{S^\star}{S^\star \cap \tilde{S}} \leqslant \strictFloor{K_b}$. As a result, we may accept an update with dimension $n - t - \strictFloor{K_b}$, but not necessarily one that is $\strictFloor{K_b}$-close to containing $S^\star$. This issue becomes the most severe at the end of the algorithm, when we need to take $K_b = 1$ so that $\strictFloor{K_b} = 0$. Here, it follows from Lemma~\ref{lemma:per-batch-clean} (and Remark~\ref{remark:per-batch-clean-done}) that $S^\star \subseteq \tilde{S}$ with constant probability. Nevertheless, boosting will weaken this statement to $\dim \tilde{S} \geqslant n -t$, which may be far from success. The solution is to run the iteration, but keep accepting nontrivial approved updates, such that one of them leads to final success with high probability. This comes with a tradeoff: we could accept up to $\min\{t, T\}$ updates in the last iteration ($t$ coming from the starting dimension being at least $n-t$ and each update increasing $\dim \tilde{S}$ by at least 1), so purity loss $\Delta \varepsilon$ accumulates.

\begin{observation}[Per-Iteration Success Condition of Clean Mode, Terminal Stage]
\label{obs:per-iter-clean-terminal}
Suppose the terminal stage starts with $\dim\tilde S\geqslant n-t$, and the clean-mode conditions with $K_b=1$ hold at each attempt. Let $\delta_0$ be the failure probability budget in Lemma~\ref{lemma:per-batch-clean}. By running at most $T = \left\lceil \frac{\ln \frac{1}{\delta}}{\ln \frac{1}{\delta_0}} \right\rceil$ fresh batches and accepting all nontrivial proposed updates that pass the verifier, we reach $S^\star \subseteq \tilde{S}$ with probability at least $1 - \delta$. This iteration may apply at most $\min\{t,T\}$ updates.
\end{observation}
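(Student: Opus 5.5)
We prove the statement by showing that each terminal-stage batch is an independent trial with success probability at least $1-\delta_0$, and that a success is never undone. Throughout, we work on the verifier's uniform-accuracy event. Fix the history up to the start of some batch $j$ in the terminal iteration, and write $\tilde S_j$ for the current guess.

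The first step is to check that the hypotheses of Lemma~\ref{lemma:per-batch-clean} still hold at batch $j$. Because every applied update only enlarges $\tilde S$, we have $\dim\tilde S_j\geqslant n-t$, so $\tilde\nu_j\leqslant t$. By assumption the clean-mode conditions with $K_b=1$ hold at each attempt. The batch is fresh, so its Bell difference samples are independent of the history. Lemma~\ref{lemma:per-batch-clean} then gives, with conditional probability at least $1-\delta_0$, that the verifier approves the proposed $R$ and that Eq.~\eqref{eq:clean-mode-bound} holds with $\strictFloor{K_b}=0$. By Remark~\ref{remark:per-batch-clean-done}, this means $S^\star_j=S+\tilde S_j\subseteq R+\tilde S_j$.

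One edge case needs care: if $R\subseteq\tilde S_j$, Algorithm~\ref{alg:propose-clean} rejects instead of querying. In that case $R^\star=\tilde S_j$ and the containment reads $S\subseteq\tilde S_j$, so we have already succeeded. Otherwise the update is nontrivial and approved, so we apply it, and afterwards $S\subseteq\tilde S_{j+1}$.

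The second step is monotonicity. Once $S\subseteq\tilde S$ holds, every later applied update has the form $\tilde S\leftarrow\tilde S+\Delta\tilde S$ and so preserves the containment. Hence the event ``$S\not\subseteq\tilde S$ at the end'' requires every one of the $T$ batches to fail its per-batch event. Conditioning successively on the history (the per-batch bound holds for every history) and chaining the conditional probabilities gives $\mathbb P[\text{fail}]\leqslant\delta_0^{T}\leqslant\delta$ for the stated $T$. At that point $S^\star=\tilde S$, i.e.\ $S\subseteq\tilde S$.

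For the count of updates, each applied update is nontrivial and raises $\dim\tilde S$ by at least one. Since $\dim\tilde S$ starts at or above $n-t$ and is at most $n$, there are at most $t$ updates; and there is at most one update per batch, giving at most $T$. Hence at most $\min\{t,T\}$ updates.

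The main obstacle is the conditioning argument. We must justify that the per-batch guarantee of Lemma~\ref{lemma:per-batch-clean} holds conditionally on an adaptively chosen $\tilde S_j$, even though earlier approved updates in the same iteration may have been ``bad'' ones, i.e.\ approved without reaching $S$. The plan handles this in two ways. First, the lemma is uniform over all commuting $\tilde S$ with $\tilde\nu\leqslant t$ satisfying its conditions. Second, the lemma's success event concerns only the fresh batch together with the verifier's fixed uniform event, as stated in Sec.~\ref{sec:ver-discussion}. This makes the product bound over batches valid.
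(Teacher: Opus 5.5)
Your proposal is correct and follows the paper's own argument. It uses per-batch success from Lemma~\ref{lemma:per-batch-clean} with $\strictFloor{K_b}=0$ (Remark~\ref{remark:per-batch-clean-done}), including the case where $R$ adds nothing new; the fact that $S\subseteq\tilde S$ persists under further enlargement; the failure bound $\delta_0^{T}\leqslant\delta$ over fresh batches; and the $\min\{t,T\}$ update count from the dimension range and one update per batch. The paper sketches this in the paragraph just before the observation and carries it out formally in the proof of Proposition~\ref{prop:clean-schedule}.
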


\begin{algorithm}[ht]
\caption{\label{alg:stab-clean} Up-Down Stabilizer Finding Algorithm (Clean Mode)}
\KwIn{Accuracy parameter $\varepsilon$, confidence parameters $\delta$ and $\delta_{\mathrm{ver}}$}
\KwOut{An approximate weak stabilizer $\tilde{S}$ for the density matrix.}
Initialize the adaptive verifier, Algorithm~\ref{alg:stab-ver}, with failure probability $\delta_{\mathrm{ver}}$ and minimum query gap $\Delta\varepsilon_{\mathrm{gap}}$, the smallest gap required by the accuracy schedule specified below; let its guess be $\tilde{S} = \{I\}$\;
\tcp{Main Stage:}
\While{$\dim\tilde S<n-t$}{
    Set $\Delta \varepsilon$, $T$\;
    \For{$i = 1$ \KwTo $T$}{
      Set $M$\;
      Draw $M$ Bell difference samples, say $b = \{b_1, b_2, \ldots, b_M\}$\;
      \textbf{Post-selection:} keep all samples in $C(\tilde{S})$ and remove the others; now we have $b' = \{b_1', b_2', \ldots, b_{M'}'\}$\;
      Set $B' = \spn{b'}$, the span of the selected samples\;
      Run $\mathsf{propose}(B',\Delta \varepsilon)$, which rejects or approves {\bfseries\boldmath candidate update $\tilde{S}'$}\;
      \If{$\mathsf{propose}$ approves {\bfseries\boldmath and $\dim \tilde{S}' \geqslant n - t - \strictFloor{K_b}$}, where $K_b$ is from Lemma~\ref{lemma:per-batch-clean}}{apply the update $\tilde{S} \leftarrow \tilde{S}'$ and break}
    }
    \lIf{no update was applied}{\Return{failure}}
}
\tcp{Terminal Stage:}
Set $\Delta \varepsilon$, $T$\;
\For{$i = 1$ \KwTo $T$}{
    Set $M$\;
    Draw $M$ Bell difference samples, say $b = \{b_1, b_2, \ldots, b_M\}$\;
    \textbf{Post-selection:} keep all samples in $C(\tilde{S})$ and remove the others; now we have $b' = \{b_1', b_2', \ldots, b_{M'}'\}$\;
    Set $B' = \spn{b'}$, the span of the selected samples\;
    Run $\mathsf{propose}(B',\Delta \varepsilon)$\;
    \lIf{verifier approves}{apply the approved update {\bfseries but do not break}}
}
\Return{$\tilde{S}$}\;
\end{algorithm}

Therefore, the up-down algorithm in the clean mode is partitioned into two stages. In the \emph{main stage}, we iteratively grow $\tilde{S}$ using Observation~\ref{obs:per-iter-clean-main}. Once $\dim \tilde{S} \geqslant n - t$, we choose parameters so that $\strictFloor{K_b} = 0$ and reach final success in one iteration. This way, we go to the \emph{terminal stage} and finish our task. The pseudocode for the clean mode is displayed in Algorithm~\ref{alg:stab-clean}.

Allocate failure budgets $\delta_j$ to the main-stage iterations
and $\delta_{\mathrm{term}}$
to the terminal stage, with
$\sum_j\delta_j+\delta_{\mathrm{term}}\leqslant\delta$.
For the asymptotic bounds, we may allocate a constant fraction
of $\delta$ to the terminal stage and comparable shares of
the remaining budget to the main-stage iterations.

To keep the cost polynomial, we impose a lower bound on $\Delta\varepsilon$. If $0<\varepsilon_{\mathrm{tot}}\leqslant\varepsilon$ is the chosen total execution budget, reserve an $o(1)$ fraction of it and arrange that the original update budgets sum to at most the remaining $(1-o(1))\varepsilon_{\mathrm{tot}}$. Since there are at most $n$ nontrivial updates, the choice
\begin{equation}
\label{eq:general-clean-accuracy-floor}
\Delta\varepsilon=\max\left\{\Delta\varepsilon_{\mathrm{original}},
\frac{o(1)\,\varepsilon_{\mathrm{tot}}}{n}\right\},
\end{equation}
with the same reserved fraction, keeps the total loss within $\varepsilon_{\mathrm{tot}}$.

With per-iteration success in hand, what is left is to choose $q_g^\star$, $q_b^\star$, and $\Delta\varepsilon$ at each iteration. These depend on the noise model, which is where we go next: Secs.~\ref{sec:up-down-depo} and~\ref{sec:up-down-arbi} make these choices for depolarizing and arbitrary Pauli noise, and for depolarizing noise, Eq.~\eqref{eq:clean-verifier-gap} gives the resulting verifier gap.

\subsection{Full Learning Algorithm}
\label{sec:success-modes}

We now piece together iterations into a full stabilizer learning algorithm. At the beginning of each iteration, the algorithm observes $\dim \tilde{S}$ and calculates a lower bound $q_g^\star$ on $p_g^\star$ and an upper bound $q_b^\star$ on $p_b^\star$, along with the allocated purity drop $\Delta \varepsilon$. The per-iteration success conditions from Sec.~\ref{sec:batch-to-iter} then ensure that each iteration grows $\tilde{S}$ with high probability, leading to the next.

The exact calculation of $q_g^\star$, $q_b^\star$, and $\Delta \varepsilon$ depends on our assumptions on the input state. For our purposes, suppose that the sampled state $\rho$ is a noisy implementation of an ideal pure state $\rho' = \ket{\psi}\bra{\psi}$:
\begin{equation}
\rho = \mathcal{E}(\rho')
\end{equation}
for some noise channel $\mathcal{E}$.

We analyze two representative settings, in Sec.~\ref{sec:up-down-depo} and Sec.~\ref{sec:up-down-arbi} respectively. The first is depolarizing noise. In this case, the Pauli coefficients are suppressed in a uniform way, and post-selection becomes increasingly effective as $\tilde S$ grows. In particular, the bad$^\star$ probability $p_b^\star$ is suppressed as $\tilde{S}$ grows, allowing progress to become exponentially easier after the first iteration.

The second is arbitrary Pauli noise. Here $\mathcal E$ is any Pauli channel, and we do not assume any favorable structure in the distribution of errors. Consequently, the bad$^\star$ probability need not decay in the same way. This leads to more restrictive noise thresholds for the intermediate and clean modes.

The total time and sample costs of the algorithm have two contributions: the cost of the verifier, Algorithm~\ref{alg:stab-ver}, and the cost of the main up-down algorithm, Algorithms~\ref{alg:stab} and~\ref{alg:stab-clean}. The verifier sample cost follows from Lemma~\ref{lemma:uniform-bell-estimates} and is independent of the number of queries; its runtime still accumulates over those queries. We therefore suppress polynomial factors and focus on the possible exponential dependence on $t$.

\subsection{Guarantees under Depolarizing Noise}
\label{sec:up-down-depo}

This subsection carries out the program of Sec.~\ref{sec:success-modes} for depolarizing noise: we compute $q_b^\star$ and $q_g^\star$, observe that $q_b^\star$ halves with every generator added to $\tilde S$ while $q_g^\star$ remains within a constant factor of its initial value, and turn this into Theorems~\ref{thm:main-intermediate-depo} and~\ref{thm:main-clean-depo}.

Depolarizing noise is one of the simplest and most common noise models. It is defined by
\begin{equation}
\label{eq:depo-noise}
\rho = \mathcal{E}(\rho') \equiv (1 - \lambda) \rho' + \lambda \frac{I}{2^n}
\end{equation}
where the parameter $\lambda$ is given. \footnote{If not, we can estimate $\lambda$ by measuring the purity of $\rho$ with Bell sampling (Lemma~\ref{lemma:bell-sampling-known-quantites}), since $\gamma = \tr(\rho^2) = (1 - \lambda)^2 + \frac{2\lambda - \lambda^2}{2^n} \approx (1 - \lambda)^2$. If this estimate is used to set parameters, its approximation and estimation errors should stay below the available success margin.}

Denote the Pauli coefficients for $\rho$ and $\rho'$ by $a_i$ and $a_i'$, respectively. In other words, $\rho = \sum_{i \in \mathcal{P}^n} a_i \sigma_i$ and $\rho' = \sum_{i \in \mathcal{P}^n} a_i' \sigma_i$. Eq.~\eqref{eq:depo-noise} implies $a_i = (1 - \lambda) a_i' + \lambda \frac{\delta_{i0}}{2^n}$. To paraphrase, $a_i = (1-\lambda) a_i'$ except for the identity element, which remains $a_0 = \frac{1}{2^n}$. Therefore, every bad sample $i \in \mathcal{P}^n \setminus C(S)$ has the same Bell-difference probability:
\begin{equation}
\begin{aligned}
  P_{\mathrm{BellDiff}}(i) &  = 4^n \sum_{j \in \mathcal{P}^n} (- 1)^{\expval{i,j}}a_j^4 && \text{by Eq.~\eqref{eq:bell-diff-dist}} \\ 
   &  = (1-\lambda)^4 \left(4^n \sum_{j \in \mathcal{P}^n} (-1)^{\expval{i,j}} {a'_j}^4\right)  + 4^n \left( \left( \frac{1}{2^n} \right)^4 - \left( \frac{1 - \lambda}{2^n} \right)^4 \right) \\ 
   &  = (1-\lambda)^4 P'_{\mathrm{BellDiff}}(i) + \frac{1 - {(1-\lambda)^4}}{2^{2n}}  && \text{by Eq.~\eqref{eq:bell-diff-dist}} \\ 
   &  = \frac{1 - (1-\lambda)^4}{2^{2n}} && \text{by Proposition~\ref{prop:pure-state-bell-diff}}\\ 
\end{aligned}
\end{equation}
where $P'_{\mathrm{BellDiff}}$ denotes the Bell difference distribution for $\rho'$. Thus, 
\begin{equation}
p_b = |\mathcal{P}^n \setminus C(S)| \times \frac{1 - (1-\lambda)^4}{2^{2n}} \leqslant 1 - (1-\lambda)^4 
\end{equation}
because $|\mathcal{P}^n \setminus C(S)| \leqslant |\mathcal{P}^n|$. So we may use $q_b \equiv 1 - (1- \lambda)^4 \leqslant 4 \lambda$ as an upper bound on $p_b$, the probability of bad samples before post-selection. Then, by the same logic:
\begin{equation}
\label{eq:depo-q-b}
p_b^\star = |C(\tilde{S}) \setminus C(S)| \times \frac{1 - (1-\lambda)^4}{2^{2n}} \leqslant \frac{q_b}{2^{\dim \tilde{S}}}
\end{equation}
where we used $|C(\tilde{S}) \setminus C(S)| \leqslant |C(\tilde{S})| = 2^{2n - \dim \tilde{S}}$. In other words, the upper bound $q_b^\star$ for $p_b^\star$ is set as $q_b^\star = \frac{q_b}{2^{\dim \tilde{S}}} = \frac{1 - (1-\lambda)^4}{2^{\dim \tilde{S}}}$.

So $q_b^\star$ decreases exponentially in the dimension of $\tilde{S}$. At the same time, $q_g^\star$, the lower bound on $p_g^\star$, largely remains the same:
\begin{equation}
p_g^\star \geqslant \gamma_{S^\star}^2 = \gamma_{\tilde{S}}^2 \geqslant (\gamma - \varepsilon_{\text{acc}})^2 \geqslant ((1 - \lambda)^2 - \varepsilon_{\text{acc}})^2 \equiv q_g^\star
\end{equation}
Here, the first inequality follows from Lemma~\ref{lemma:prob-diff-lower}, $\gamma_{S^\star} = \gamma_{\tilde{S}}$ by Observation~\ref{obs:add-S-purity}, and $\varepsilon_{\text{acc}}$ is the accumulated purity-loss budget from all previous iterations: an upper bound on their accumulated purity loss, obtained by summing the budgets of accepted updates. 

In a nutshell, as we increment $\tilde{S}$, the concentration of bad$^\star$ samples drastically decreases, while good$^\star$ samples remain abundant. This works strongly in our favor: as long as we are able to make the first move towards constructing $\tilde{S}$, everything that follows becomes much easier. Hence, we prove Theorem~\ref{thm:main-intermediate-depo} for the intermediate mode, and Theorem~\ref{thm:main-clean-depo} for the clean mode.

For convenience of proof, assume $n \gg 1$. 

\begin{theorem}[Intermediate Regime Under Depolarizing Noise]
\label{thm:main-intermediate-depo}
Assume $n - t \gg 1$. When $\frac{1-(1-\lambda)^4}{(1-\lambda)^4} \lessapprox \frac{n-t}{n+t}$, which simplifies to $\lambda \lessapprox 1 - 2^{-1/4} \approx 0.159$ when $n \gg t$, the intermediate mode of the up-down algorithm succeeds. The algorithm uses $O(\mathrm{poly})$ samples and $O(2^t \cdot \mathrm{poly})$ time to return an approximate weak stabilizer.
\end{theorem}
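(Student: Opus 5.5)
The proof chains Lemma~\ref{lemma:per-batch-intermediate} over iterations via Observation~\ref{obs:per-iter-intermediate}, using the depolarizing bounds just derived. Throughout we take $q_b^\star = (1-(1-\lambda)^4)/2^{\dim\tilde S}$ and $q_g^\star = ((1-\lambda)^2-\varepsilon_{\mathrm{acc}})^2$. In the intermediate mode every accepted update costs zero purity, since it lies in $S^\star$. We can therefore run the verifier with $\Delta\varepsilon_-=0$ and a per-query budget $\Delta\varepsilon_+=\varepsilon/n$ (gap $\varepsilon/n$), exactly as in Algorithm~\ref{alg:top-down}. At most $n$ updates occur, so $\varepsilon_{\mathrm{acc}}\leqslant\varepsilon$ and Lemma~\ref{lemma:compatibility} rules out wrong turns. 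With $\varepsilon$ small relative to $(1-\lambda)^2$, we get $q_g^\star \approx (1-\lambda)^4$ up to a factor $1-O(\varepsilon)$, which is absorbed into the $\lessapprox$.

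\textbf{First iteration ($\dim\tilde S=0$, $\tilde\nu=n$).} Here $n-t\gg1$, so we are in the first case of Lemma~\ref{lemma:per-batch-intermediate} with a per-batch constant $\delta_0$ and $\ln(1/\delta_0)\ll n-t$. The required ratio condition reads $q_b^\star/q_g^\star\lessapprox(n-t)/(n+t)$, which is exactly the theorem's hypothesis $\frac{1-(1-\lambda)^4}{(1-\lambda)^4}\lessapprox\frac{n-t}{n+t}$. For $n\gg t$ the right side tends to $1$, giving $(1-\lambda)^4\gtrapprox 1/2$, i.e.\ $\lambda\lessapprox1-2^{-1/4}$.

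\textbf{Later iterations.} The main step is to show that this single condition suffices for all subsequent iterations. After $d=\dim\tilde S$ updates the left side has shrunk by a factor $2^{d}$. Meanwhile $\tilde\nu=n-d$, so the required bound $(\tilde\nu-t)/(\tilde\nu+t)$ decreases only mildly while $\tilde\nu-t\gg1$. I would verify the inequality
\[
\frac{n-t}{n+t}\,2^{-d}\lessapprox\frac{n-d-t}{n-d+t},
\]
by using $(n-d-t)/(n-d+t)\geqslant 1-2t/(n-d-t)$ and noting that $2^{-d}$ decays far faster, with the slack already available at $d\geqslant1$. This part needs care: the $\lessapprox$ margin at $d=0$ must not be consumed as $d$ grows, and choosing the intermediate scale $\xi$ in Lemma~\ref{lemma:per-batch-intermediate} consistently handles that.

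\textbf{Endgame ($\tilde\nu-t=O(1)$ or negative).} Once $\tilde\nu-t$ is no longer large, we switch to the second case of the lemma, which requires $q_b^\star/q_g^\star\lessapprox c/(\tilde\nu+t)$. Here $d\geqslant n-t-O(1)$, so $q_b^\star\leqslant 2^{-(n-t-O(1))}$ while $q_g^\star=\Theta(1)$. The requirement $2^{-(n-t)}\ll1/(2t)$ therefore holds because $n-t\gg1$; I would phrase this as needing $n-t\gtrsim\log_2 t$, which the asymptotic regime provides. In this case each batch either increments $\tilde S$ with constant probability or already has $S\subseteq\tilde S$. The stopping rule of Algorithm~\ref{alg:stab}, combined with the $\dim\tilde S\geqslant n-t$ test, then certifies success.

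\textbf{Costs.} Observation~\ref{obs:per-iter-intermediate} takes $T=O(\ln(n/\delta))$ batches per iteration, at failure $\delta/(n+1)$ each, over at most $n+1$ iterations, for total finder failure at most $\delta$. Each batch uses $M=O((n+t)/q_g^\star)=O(n)$ Bell difference samples, so the finder needs $O(n^2\ln(n/\delta))$ samples. The verifier needs, by Lemma~\ref{lemma:uniform-bell-estimates} with gap $\varepsilon/n$, $O(n^2(n\ln(n/\varepsilon)+\ln(1/\delta_{\mathrm{ver}}))/\varepsilon^2)$ samples. Both are polynomial. Runtime is dominated by $L=O(2^t\ln(1/\delta_0))$ verifier queries per batch, each costing $\mathrm{poly}(n,M_{\mathrm{ver}})$, plus polynomial linear algebra for spans and radicals. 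The total is $O(2^t\cdot\mathrm{poly})$ time.
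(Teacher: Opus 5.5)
Your architecture matches the paper's. You use the first case of Lemma~\ref{lemma:per-batch-intermediate} at $\dim\tilde S=0$, halve $q_b^\star$ per accepted generator, switch to the second case near the end, and keep a per-query budget $\varepsilon/n$ and the same cost count. The endgame, however, has a genuine gap.

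You bound $q_b^\star\leqslant2^{-(n-t-O(1))}$ using only $q_b\leqslant1$. That forces you to require $2^{-(n-t)}\ll1/(2t)$, i.e.\ $n-t\gtrapprox\log_2 t$, and you assert that the asymptotic regime provides this. It does not. The theorem assumes only $n-t\gg1$, and $t$ may approach $n$. For example, with $t=n-\lceil\sqrt{\log_2 n}\,\rceil$ we have $n-t\to\infty$ but $\log_2 t\gg n-t$, so your endgame condition fails.

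The missing step is to carry the theorem's hypothesis into the endgame: $q_b^\star/q_g^\star\lessapprox2^{-\dim\tilde S}\frac{n-t}{n+t}$.
\begin{itemize}
\item The second-case threshold $c/(\tilde\nu+t)=\Omega(1/t)$ lies below the first-iteration threshold $\frac{n-t}{n+t}$ by a factor $O\bigl(\frac{t(n-t)}{n+t}\bigr)=O(n-t)$.
\item The left side is already suppressed by $2^{-(n-t)+O(1)}$.
\item The product $(n-t)2^{-(n-t)}\to0$ requires only $n-t\gg1$.
\end{itemize}
This is how the paper closes the proof. It is also why the hypothesis keeps the factor $\frac{n-t}{n+t}$: when $t\gtrapprox n-t$, that factor supplies the $1/t$ the endgame needs.

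Your suggested verification of the later iterations has the same blind spot. The bound $\frac{n-d-t}{n-d+t}\geqslant1-\frac{2t}{n-d-t}$ is negative, hence vacuous, whenever $t\gtrapprox n-d-t$. The inequality you wrote is true, but prove it stepwise, as the paper does. Each accepted generator halves the left side. Lowering $\tilde\nu$ by one shrinks $\frac{\tilde\nu-t}{\tilde\nu+t}$ by a factor of at least $1-\frac{1}{\tilde\nu-t}\geqslant\frac12$ once $\tilde\nu-t\geqslant2$. So the condition only weakens from step to step.

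A smaller point: accepted intermediate-mode updates need not cost zero purity. The verifier with $\Delta\varepsilon_+>0$ may approve some $r\in R\setminus S^\star$ of small loss. Your $\varepsilon/n$ budget and the $\varepsilon_{\mathrm{acc}}$ term in $q_g^\star$ already cover this, so only the stated justification is wrong, not the bound.
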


\textit{Proof idea.} The first iteration sets the noise threshold in our analysis; later iterations become easier. Plugging $q_b^\star = 1-(1-\lambda)^4$, $q_g^\star = (1-\lambda)^4$, and $\tilde{\nu} = n$ into Lemma~\ref{lemma:per-batch-intermediate} gives the stated threshold. Every later iteration faces a weaker condition, because $q_b^\star$ halves with each accepted generator while $q_g^\star$ and $\frac{\tilde{\nu}-t}{\tilde{\nu}+t}$ barely move; by the time the lemma switches to its $\tilde{\nu} - t = O(1)$ case and loses a polynomial factor, the exponential suppression $2^{-(n-t)}$ has long paid for it. Figure~\ref{fig:qb-race} plots both effects.

\begin{figure}[ht]
\centering
\includegraphics{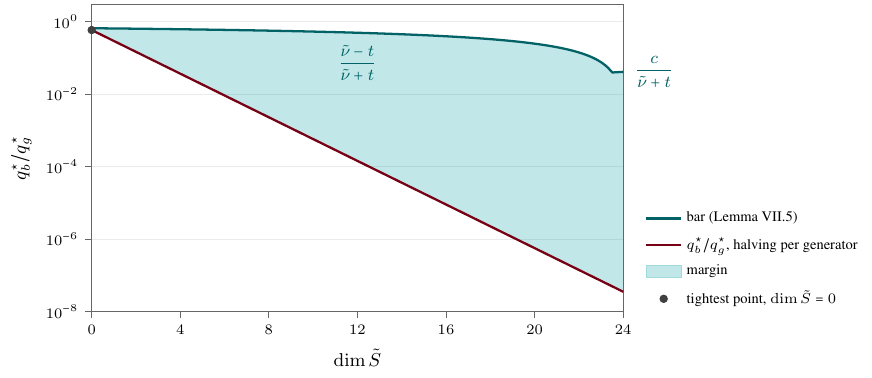}
\caption{The intermediate mode under depolarizing noise, drawn for illustrative $n=30$, $t=6$, $\lambda=0.11$ and $c=0.5$. Every accepted generator halves $q_b^\star$ by Eq.~\eqref{eq:depo-q-b} while $q_g^\star$ stays within a constant factor of its initial value (the plot holds it there, as in the leading-order analysis), so the ratio the finder must clear falls by a factor of two per step---a straight line on this logarithmic axis. The bar, below which Lemma~\ref{lemma:per-batch-intermediate} guarantees success, is the envelope of that lemma's two cases: $\frac{\tilde\nu-t}{\tilde\nu+t}$ through the main stage, flattening onto the second case's floor $\frac{c}{\tilde\nu+t}$, which is what keeps the requirement from collapsing to zero as $\tilde\nu\to t$. Those two cases differ in their conclusions as well as their thresholds, so the bar is an envelope rather than a single condition. The initial step, $\dim\tilde S=0$, imposes the strongest condition: there it reduces to the hypothesis of Theorem~\ref{thm:main-intermediate-depo}, and every later step has more room. The clean mode races differently: there the same ratio enters through $K_b$ in Lemma~\ref{lemma:per-batch-clean}, measuring how far an accepted update still falls short of $S^\star$ rather than whether one is possible. The shape here does not depend on the chosen parameters---changing $\lambda$ slides the line up or down without bending it.}
\label{fig:qb-race}
\end{figure}

\begin{proof}
\leavevmode\par
\textit{First iteration.} At the first iteration, $q_b^\star = q_b = 1 - (1 - \lambda)^4$, $q_g^\star = (1 - \lambda)^4$, and $\tilde{\nu} = n$. So, the first case in Lemma~\ref{lemma:per-batch-intermediate} requires
\begin{equation}
\label{eq:intermediate-condition-first}
\frac{1-(1-\lambda)^4}{(1-\lambda)^4} \lessapprox \frac{n-t}{n+t}.
\end{equation}
When $n \gg t$, $\text{RHS} \approx 1$, so the condition simplifies to $\lambda \lessapprox 1 - 2^{-1/4} \approx 0.159$. 

\textit{All later iterations.} Once the first iteration is satisfied, the conditions for all later iterations are weaker and therefore the algorithm succeeds. Indeed, when $\tilde{\nu} - t = \omega(1)$, the condition needed is:
\begin{equation}
\label{eq:intermediate-condition-main}
\frac{q_b^\star}{q_g^\star} \lessapprox \frac{\tilde{\nu}-t}{\tilde{\nu}+t} 
\end{equation}
where $q_b^\star = (1-(1-\lambda)^4) 2^{-\dim \tilde{S}}$ and $q_g^\star = \left((1-\lambda)^2-\varepsilon_{\text{acc}}\right)^2$. After each successful iteration,
\begin{itemize}
  \item $q_b^\star$ decreases by a factor of $2$;
  \item since the intermediate-mode condition does not depend on $\Delta \varepsilon$, we can limit $\Delta \varepsilon$ to sufficiently small values so that $q_g^\star$ does not change significantly. For example, take $\Delta \varepsilon = \frac{\varepsilon_{\text{tot}}}{n}$ with total maximum purity loss $\varepsilon_{\text{tot}} = \min\{\varepsilon,o((1-\lambda)^2)\} = \min\{\varepsilon,o(1)\}$, where $\varepsilon$ is the input accuracy parameter. 
  \item the RHS $\frac{\tilde{\nu}-t}{\tilde{\nu}+t}$ does not change significantly because of the assumption $\tilde{\nu} - t = \omega(1)$.
\end{itemize}
So the condition becomes weaker at this stage. 

Near the end, when $\tilde{\nu} - t = O(1)$ and we switch to the second case in Lemma~\ref{lemma:per-batch-intermediate}, the condition becomes:
\begin{equation}
\frac{q_b^\star}{q_g^\star} \lessapprox c\frac{1}{\tilde{\nu}+t}, \qquad c=\Theta(1)
\end{equation}
The RHS of the inequality is reduced to $\Omega(\frac{1}{t})$. Compared to Eq.~\eqref{eq:intermediate-condition-first}, this is reduced by a factor at most $O\left(\frac{\frac{n-t}{n+t}}{\frac{1}{t}}\right) = O\left((n-t)\frac{t}{n+t}\right) \leqslant O(n-t)$. Fortunately, by this point $\dim \tilde{S} \geqslant n-t-O(1)$, so the LHS has already been suppressed by a factor of $\Omega(2^{-(n-t)})$. Since we assume $n - t \gg 1$, this exponential suppression beats the additional polynomial loss. Thus the terminal condition remains weaker than the first iteration condition.  

\textit{Cost analysis.} In each batch, the up-down algorithm takes polynomially many fresh Bell difference samples and makes $\tilde{O}(2^t)$ queries to the verifier, where $\tilde{O}$ omits polylogarithmic factors. With $\Delta\varepsilon=\varepsilon_{\mathrm{tot}}/n$, Lemma~\ref{lemma:uniform-bell-estimates} lets the verifier handle all these queries with one polynomial-size dataset. Since there are at most $n$ updates and $O(n\ln(n/\delta))$ batch attempts, the total sample complexity is $O(\mathrm{poly})$, while the time complexity is $O(2^t \cdot \mathrm{poly})$.
\end{proof}

\begin{theorem}[Main Theorem for Clean Regime Under Depolarizing Noise]
\label{thm:main-clean-depo}
Assume $n - t - (\log_2 t + \log_2 \log_2 t) \gg 1$.\footnote{For $t=1$, where $\log_2\log_2 t$ is undefined, interpret this assumption as $n-t\gg1$, here and in the proof.} When $\frac{1-(1-\lambda)^4}{(1-\lambda)^4} \lessapprox \frac{n-t}{n+t}$, and $C \frac{t}{n-t} \frac{1-(1-\lambda)^4}{(1-\lambda)^4} \lessapprox \varepsilon \ll 1$, the clean mode of the up-down algorithm succeeds, where constant $C>1$ and $C\approx 1$ when $t\gg 1$. The algorithm uses $O(\mathrm{poly})$ samples and $O(\mathrm{poly})$ time to return an approximate weak stabilizer.
\end{theorem}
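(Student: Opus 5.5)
The plan is to run Algorithm~\ref{alg:stab-clean} and instantiate Lemma~\ref{lemma:per-batch-clean} with the depolarizing bounds
\[
q_b^\star=(1-(1-\lambda)^4)\,2^{-\dim\tilde S},\qquad q_g^\star=((1-\lambda)^2-\varepsilon_{\mathrm{acc}})^2 .
\]
We then chain iterations with Observations~\ref{obs:per-iter-clean-main} and~\ref{obs:per-iter-clean-terminal}.

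\textbf{First iteration.} The first iteration is the bottleneck, as in Theorem~\ref{thm:main-intermediate-depo}. At $\dim\tilde S=0$ and $\tilde\nu=n$, choose $\Delta\varepsilon_1\approx\varepsilon(1-o(1))$, spending essentially the whole budget. Then
\[
\chi=\max\{C_1 t/\Delta\varepsilon_1,\;C_2(n+t)\},\qquad K_b\approx C_3\,\frac{q_b}{q_g}\,\chi .
\]
The accepted radical satisfies $\dim\tilde S\geqslant n-t-\strictFloor{K_b}$. The hypothesis $C\frac{t}{n-t}\frac{q_b}{q_g}\lessapprox\varepsilon$ gives
\[
\frac{q_b}{q_g}\cdot\frac{t}{\Delta\varepsilon}\lessapprox n-t .
\]
The condition $\frac{q_b}{q_g}\lessapprox\frac{n-t}{n+t}$ controls the other branch of $\chi$. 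Together these yield $K_b\lessapprox n-t$, so after one accepted update $\dim\tilde S\gtrsim n-t-K_b$. Matching the slack constants so that $K_b\leqslant(1-\eta)(n-t)$ shows the first update already buys a dimension of order the full gap.

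\textbf{Later iterations.} Budgets now decrease, for instance $\Delta\varepsilon_j=\varepsilon_{\mathrm{tot}}/(\text{something polynomial})$, floored as in Eq.~\eqref{eq:general-clean-accuracy-floor}. Each new $K_b$ is multiplied by $2^{-\dim\tilde S}$. Once $\dim\tilde S=d$ is a constant fraction of $n-t$, the factor $2^{-d}$ overwhelms the polynomial growth of $\chi\sim tn/\varepsilon$ caused by shrinking $\Delta\varepsilon$. The resulting recursion is
\[
K_b^{(j+1)}\lesssim \mathrm{poly}(n,t,1/\varepsilon)\,2^{-(n-t-K_b^{(j)})},
\]
which collapses to $K_b<1$ within $O(\log^*)$, or at worst $O(\log n)$, iterations. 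At that point the terminal stage applies.

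\textbf{Terminal stage.} Here we need $\frac{q_b^\star}{q_g^\star}\chi\ll1$ with $\dim\tilde S\geqslant n-t$ and per-update budget $\varepsilon_{\mathrm{tot}}/(2t)$, since up to $t$ updates are allowed. This requires roughly
\[
2^{-(n-t)}\,\frac{t^2}{\varepsilon}\ll 1 .
\]
The assumption $n-t-\log_2 t-\log_2\log_2 t\gg1$ is exactly what guarantees this once we use the polynomial floor $\varepsilon\gtrsim t\,q_b/(n-t)$. The $\log_2\log_2 t$ term arises from the $\ln(1/\delta)\sim\log t$ boosting factor over the $\min\{t,T\}$ terminal updates. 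Observation~\ref{obs:per-iter-clean-terminal} then yields $S\subseteq\tilde S$.

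\textbf{Accounting and cost.} The total accuracy is the first update $(1-o(1))\varepsilon$ plus a geometric or $o(1)$ remainder, and we keep $\varepsilon_{\mathrm{acc}}\ll(1-\lambda)^2$ so that $q_g^\star$ is stable. We then check that every $M=\chi/q_g^\star$, $T=O(\ln(n/\delta))$, and the verifier gap $\Omega(\varepsilon_{\mathrm{tot}}/\mathrm{poly})$ are polynomial. Lemma~\ref{lemma:uniform-bell-estimates} then gives polynomial verifier samples, and every query is a single radical, so the runtime is polynomial as well.

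\textbf{Main obstacle.} The hardest step will be the second iteration and the bookkeeping of the budget split. The first update spends almost all of $\varepsilon$, yet it may leave a deficit $K_b$ close to $n-t$, so the suppression $2^{-d}$ must compensate for a much smaller $\Delta\varepsilon$. I would first try to reserve a constant fraction of $\varepsilon$ for the first iteration and show $d\geqslant\Omega(n-t)$. I would then verify that $2^{-\Omega(n-t)}$ beats $\mathrm{poly}/\varepsilon$ using the same assumption $n-t\gg\log t$, tightening the constant $C$ if necessary.
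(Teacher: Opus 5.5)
Your overall architecture matches the paper's: the first iteration is the bottleneck, bad samples are suppressed by $2^{-\dim\tilde S}$, the terminal stage uses $K_b=1$, and budgets are floored to keep the cost polynomial. The gap is in the budget bookkeeping, and that bookkeeping is exactly what fixes the constant $C$.

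\textbf{Overspending.} You spend $(1-o(1))\varepsilon$ on the first update. You then give each terminal update $\varepsilon_{\mathrm{tot}}/(2t)$, over up to $t$ updates, which can add another $\varepsilon/2$. Everything after the first update must fit into an $o(\varepsilon)$ remainder.

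\textbf{Later iterations lose the constant.} Your later-iteration argument needs the first update to land at $d=\Omega(n-t)$, so that $2^{-d}$ beats the factor $\varepsilon/\Delta\varepsilon_j$ incurred by budgets of size $\varepsilon_{\mathrm{tot}}/\mathrm{poly}$. With $C\approx1$, however, the hypothesis only gives $K_b\leqslant(1+o(1))(n-t)$, so the first update is guaranteed only $o(n-t)$ new dimensions. Your fallback does not close this. Reserving a constant fraction $c$ of $\varepsilon$ for the first iteration makes $K_b\approx(n-t)/c$ at threshold. Demanding $d\geqslant\eta(n-t)$ costs a further factor $1/(1-\eta)$. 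Together these prove the theorem only with some constant $C\geqslant 1/(c(1-\eta))$ bounded away from $1$, not with $C\approx1$ for $t\gg1$ as stated.

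\textbf{Missing idea.} A sublinear first jump suffices, provided each later budget is derived from the suppression already achieved rather than prescribed as $\varepsilon_{\mathrm{tot}}/\mathrm{poly}$. Aim the first jump at $\Lambda$ with $1\ll\Lambda\ll n-t$, i.e.\ $K_b=n-t-\Lambda+1$. Take the largest batch this landing allows, $\chi_{\mathrm m}=K_b q_g/(C_3 q_b)$. Its cost $C_1t/\chi_{\mathrm m}\approx C_1\frac{t}{n-t}\frac{q_b}{q_g}$ is exactly the accuracy threshold with $C\approx1$. At each later step, shrink the remaining distance by a factor $R_j$ with $1\ll R_j\ll 2^{\tilde s_j-\tilde s_{j-1}}$. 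Then each new $\chi_{\mathrm m}$ exceeds the previous one by a factor $\gg1$, and each jump still has size $\gg1$, which supplies the next suppression. The budgets therefore form a rapidly decaying series whose sum is $(1+o(1))$ times the first term.

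\textbf{Terminal stage.} There $\chi_{\mathrm m}$ exceeds the first jump's by a factor proportional to $2^{n-t}/(n-t)$. This factor must beat the multiplicity $\min\{t,T\}$ of terminal updates, which gives the condition $t(n-t)2^{-(n-t)}\ll1$. So the $\log_2\log_2 t$ in the hypothesis comes from the factor $n-t\approx\log_2 t$ at the boundary, not from a $\ln(1/\delta)$ boosting factor. Your display $2^{-(n-t)}t^2/\varepsilon\ll1$ also drops the factor $q_b/q_g$, without which the hypothesis does not imply it. Only after this schedule is fixed should the budgets be floored at $o(1)\varepsilon/N_{\mathrm{tot}}$ to keep the cost polynomial.
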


\textit{Proof idea.} The clean mode spends accuracy budget: an iteration that jumps $\dim \tilde{S}$ from $\tilde{s}'$ to $\tilde{s}$ costs $\Delta\varepsilon \approx C_1 t / \chi$, where $\chi$ is the batch size the jump needs. Since $\varepsilon\ll1$ while the purity stays above a constant, each update's budget is small compared with the purity, as Lemma~\ref{lemma:per-batch-clean} requires, and together with $t\gg1$ this gives $C\approx1$. The first jump, $0 \rightarrow \Lambda$ with $1 \ll \Lambda \ll n - t$, costs $\approx C_1 \frac{t}{n-t} \frac{q_b}{q_g}$---the accuracy threshold in the statement---and keeps $q_g^\star$ within a constant factor of its initial value. Later jumps are exponentially cheaper because $q_b^\star$ has been suppressed by $2^{-\Lambda}$, and the terminal stage is cheaper still provided $t(n-t)2^{-(n-t)} \ll 1$, which is the assumption on $n - t$. So the first jump dominates the total. Finally, we raise the exponentially small $\Delta\varepsilon$'s to a floor $o(1)\,\varepsilon/N_{\mathrm{tot}}$ so that the sample cost stays polynomial.

\begin{proof}
We want the clean mode to succeed at all steps under the conditions from Lemma~\ref{lemma:per-batch-clean}. Due to boosting as discussed in Sec.~\ref{sec:batch-to-iter}, we first increment $\tilde{S}$ until $\dim \tilde{S}$ is sufficiently large (main stage, Observation~\ref{obs:per-iter-clean-main}), and then finish it off in one iteration containing multiple updates (terminal stage, Observation~\ref{obs:per-iter-clean-terminal}).

\textit{Cost of one jump.} Each iteration spends accuracy budget, and we must show that the total stays within $\varepsilon$. Let $\Delta \varepsilon (\tilde{s}' \rightarrow \tilde{s})$ be the per-update budget that certifies a jump from $\dim \tilde{S} = \tilde{s}'$ to $\dim \tilde{S} \geqslant \tilde{s}$ in one iteration. In Lemma~\ref{lemma:per-batch-clean}, a smaller budget requires a larger batch scale $\chi$, which raises $K_b = C_3\frac{q_b^\star}{q_g^\star}\chi$. The jump needs $\strictFloor{K_b}\leqslant\tilde\nu-t$ with $\tilde\nu=n-\tilde s$ (Observation~\ref{obs:per-iter-clean-main}), i.e., $K_b\leqslant\tilde\nu-t+1$, so we take the largest $\chi$ this allows. This gives:
\begin{equation}
\label{eq:clean-mode-purity-loss-per-iter}
\Delta \varepsilon(\tilde{s}' \rightarrow \tilde{s}) = \begin{cases}
C_1 \frac{t}{\chi_{\text{m}}(\tilde{s}' \rightarrow \tilde{s})} & \text{if } C_2 (\tilde{\nu}' + t) \leqslant \chi_{\text{m}}(\tilde{s}' \rightarrow \tilde{s}) \\
\infty & \text{otherwise}  
\end{cases} 
\end{equation}
where $\chi_{\text{m}}(\tilde{s}' \rightarrow \tilde{s}) = \frac{1}{C_3} K_{b\text{m}}(\tilde{s})\frac{q_g^\star(\tilde{s}')}{q_b^\star(\tilde{s}')}$ for $K_{b\text{m}}(\tilde{s}) = \tilde{\nu} - t + 1$, $q_g^\star(\tilde{s}') = \left((1-\lambda)^2 - \varepsilon_{\text{acc}}(\tilde{s}')\right)^2$ with $\varepsilon_{\text{acc}}(\tilde{s}')$ the purity-loss budget accumulated before the jump, and $q_b^\star(\tilde{s}') = \frac{q_b}{2^{\tilde{s}'}}= \frac{1-(1-\lambda)^4}{2^{\tilde{s}'}}$. Here, $\tilde{\nu} = n - \tilde{s}$ and $\tilde{\nu}' = n - \tilde{s}'$. The constants $C_1,C_2,C_3$ are from Lemma~\ref{lemma:per-batch-clean}. The $\infty$ case means that even the largest allowed $\chi$ is below the lemma's minimum $C_2(\tilde\nu'+t)$, so our bounds certify no such jump. The first condition of the theorem forces $\lambda\lessapprox1-2^{-1/4}$, so $\gamma\geqslant(1-\lambda)^2\gtrapprox1/\sqrt2$. Every budget we use is at most $\varepsilon\ll1$, and $\gamma_{\tilde S}\geqslant\gamma-\varepsilon\approx\gamma$, so $\Delta\varepsilon\ll\gamma_{\tilde S}$ throughout: Lemma~\ref{lemma:per-batch-clean} applies at every step, with $C_1\approx1$ when $t\gg1$. The same bound keeps the total, floor included, below the cap $(1-\lambda)^2/4\gtrapprox0.18$ of Assumption~\ref{asp:accuracy}.

An iteration may overshoot its target dimension $\tilde s$. This can only help the clean-mode analysis, since $q_b^\star$ decreases, the number of good${}^\star$ samples needed in Lemma~\ref{lemma:good-span-S} decreases, and the remaining conditions, including the budget-based bound $q_g^\star$, are unchanged. Remark~\ref{remark:clean-bookkeeping} shows how to keep the bookkeeping exact.

\textit{The first jump.} We now show that under appropriate assumptions, the total purity loss is dominated by the first update, just as in the intermediate mode. Consider the transition $\tilde{s} = 0 \rightarrow \Lambda$:
\begin{equation}
\Delta \varepsilon(0 \rightarrow \Lambda) = \begin{cases}
C_1 C_3 \frac{t}{n-t+1-\Lambda} \frac{q_b}{q_g} & \text{if } \frac{q_b}{q_g} \leqslant \frac{1}{C_2C_3} \frac{n-t+1-\Lambda}{n+t} \\
\infty & \text{otherwise} \\
\end{cases}
\end{equation}
where $q_g = (1-\lambda)^4$, $q_b = 1-(1-\lambda)^4$, and we used $C_2 \approx 1$ because $\tilde{\nu}' + t = n + t \gg 1$. Since $n - t \gg 1$, we may pick $\Lambda$ such that $1 \ll \Lambda \ll n - t$. In this way, $C_3 \approx 1$, and the first jump is possible so long as $\frac{q_b}{q_g} \lessapprox \frac{n-t}{n+t}$, where
\begin{equation}
\Delta \varepsilon(0 \rightarrow \Lambda) \approx C_1 \frac{t}{n-t} \frac{q_b}{q_g}
\end{equation}
To leading order, this is also the tightest first-jump condition these bounds give.

\textit{$q_g^\star$ stays within a constant factor.} Before turning to later jumps, we check that $q_g^\star$ loses at most a constant factor whenever $\varepsilon_{\text{acc}} \approx C_1 \frac{q_b}{q_g} \frac{t}{n-t}$ is maintained. We have
\begin{equation}
\left\{\begin{gathered}
\varepsilon_{\text{acc}} \lessapprox C_1 \frac{t}{n+t} \\
(1-\lambda)^2 \gtrapprox \sqrt{\frac{n+t}{2n}} \\
\end{gathered}\right.
\end{equation}
Both lines utilize $\frac{q_b}{q_g} = \frac{1-(1-\lambda)^4}{(1-\lambda)^4} \lessapprox \frac{n-t}{n+t}$. Substituting $\xi \equiv \frac{t}{n}$ gives
\begin{equation}
\frac{\varepsilon_{\text{acc}}}{(1-\lambda)^2} \lessapprox C_1 \sqrt{\frac{2\xi^2}{(1+\xi)^3}}
\end{equation}
If $t \gg 1$, then $C_1 \approx 1$, and since $0 < \xi \leqslant 1$, the RHS is at most $\frac{1}{2}$ to leading order. If $t=O(1)$, then $\xi\ll 1$, so $\text{RHS} \ll 1$. Hence $q_g^\star=((1-\lambda)^2-\varepsilon_{\text{acc}})^2 \gtrapprox \frac{1}{4}(1-\lambda)^4=\frac{1}{4}q_g$.

\textit{Later jumps are negligible.} To make the domination by the first jump precise, view the main stage as choosing target dimensions $0 = \tilde{s}_0 < \tilde{s}_1 < \tilde{s}_2 < \cdots$, where $\tilde{s}_1 = \Lambda$, and write $K_j \equiv K_{b\text{m}}(\tilde{s}_j) = n - t - \tilde{s}_j + 1$. The first jump has size $\Lambda \gg 1$, so before the second jump the bad$^\star$ rate has already been suppressed by a factor $2^{-\Lambda} \ll 1$. Therefore, when choosing $\tilde{s}_2$, we may decrease $K_2$ from $K_1$ by any factor $R_1$ satisfying $1 \ll R_1 \ll 2^\Lambda$. This still makes $\chi_{\text{m}}(\tilde{s}_1 \rightarrow \tilde{s}_2)$ larger than $\chi_{\text{m}}(0 \rightarrow \Lambda)$ by a factor $\gg 1$, because the exponential suppression of $q_b^\star$ more than compensates for the decrease in $K_{b\text{m}}$.

The same argument then repeats. As long as $K_j \gg 1$, choose $K_{j+1}$ smaller than $K_j$ by a factor $R_j \gg 1$, but still much smaller than $2^{\tilde{s}_{j} - \tilde{s}_{j-1}}$, the previous reduction factor in $q_b^\star$. Since $K_j - K_{j+1} = \tilde{s}_{j+1} - \tilde{s}_j$, this choice also makes the next dimension jump $\gg 1$, and hence prepares another suppression factor $2^{-(\tilde{s}_{j+1} - \tilde{s}_j)} \ll 1$ for the following step. The remaining quantities in Eq.~\eqref{eq:clean-mode-purity-loss-per-iter}, such as $q_g^\star$, $C_1, C_2, C_3$, and $\tilde{\nu}' + t$, change only by constant factors or in the favorable direction. Thus each later $\chi_{\text{m}}$ increases by a factor $\gg 1$, so each corresponding $\Delta\varepsilon$ decreases by a factor $\gg 1$. Hence the remaining purity-loss tail is negligible, and $\sum \Delta\varepsilon$ is dominated by the first jump from $\tilde{s} = 0$ to $\tilde{s} = \Lambda$.

\textit{Terminal stage.} This continues until $K_{b\text{m}} = O(1)$, i.e., $n - t - \dim \tilde{S} = O(1)$. At this point, we may take $n-t$ as the final main-stage target, i.e., $K_{b\text{m}} = 1$. A successful update then gives $\dim\tilde S\geqslant n-t$, which ends the main stage, and after this update we enter the terminal stage. Indeed, compared to $\chi_\text{m}(0 \rightarrow \Lambda)$, $\chi_\text{m}$ for this final main-stage jump and at the terminal stage is larger by a factor $\propto \frac{2^{n-t}}{n-t}$, while the $(\tilde{\nu}'+t)$ term only decreases. So $n-t \gg 1$ ensures success in both---the same reasoning as in the intermediate mode.

\textit{Total accuracy expenditure.} Adding up, the total purity loss is at most
\begin{equation}
\label{eq:clean-mode-total-purity-loss}
\varepsilon_{\text{acc}}(n-t)  + \min\left\{t, O\left(\ln \frac{1}{\delta}\right)\right\} \Delta \varepsilon_{\text{terminal}},
\end{equation}
where $\varepsilon_{\text{acc}}(n-t)$ is the main-stage purity-loss budget accumulated to reach $\dim\tilde S\geqslant n-t$, and $\Delta\varepsilon_{\text{terminal}}$ is the max purity loss for one batch in the terminal stage. The terminal multiplier counts its updates, at most $\min\{t,T\}$ with $T=O(\ln\frac1\delta)$ (Observation~\ref{obs:per-iter-clean-terminal}). The main-stage part is dominated by the first jump. The terminal part carries the extra factor $\min\{t,O(\ln \frac{1}{\delta})\}$, so to make it negligible in the leading-order purity loss, we need the terminal $\chi_{\text{m}}$ to be larger than the first-jump $\chi_{\text{m}}$ by a factor $\gg t$. Since the terminal $\chi_{\text{m}}$ is larger by a factor proportional to $\frac{2^{n-t}}{n-t}$, it is enough to require $t(n-t)2^{-(n-t)}\ll 1$, which is guaranteed by $n-t-(\log_2 t+\log_2\log_2 t)\gg 1$. Under this assumption, both the later main-stage jumps and the terminal-stage contribution are asymptotically negligible, so the first jump dominates.

\textit{Keeping the cost polynomial.} If we strictly follow the above strategy, the cost may be exponential: the number of samples required by each batch and each verifier query grows polynomially with $\frac{1}{\Delta\varepsilon}$. As $\Delta\varepsilon$ becomes exponentially small, the cost grows exponentially. Let $N_{\mathrm{tot}}$ be the upper bound on nontrivial accepted updates given by the number of main-stage targets plus the terminal-stage bound $\min\{t,T\}$, capped at $n$. Then
\begin{equation}
N_{\mathrm{tot}}=O\!\left(\min\left\{n,\log(n-t+1)+\ln\frac{1}{\delta}\right\}\right).
\end{equation}
Reserve an $o(1)$ fraction of the input accuracy for a threshold floor: choose the original budgets to sum to at most the remaining $(1-o(1))\varepsilon$, and replace each by
\begin{equation}
\label{eq:clean-accuracy-floor}
\Delta\varepsilon=\max\left\{\Delta\varepsilon_{\mathrm{original}},
\frac{o(1)\,\varepsilon}{N_{\mathrm{tot}}}\right\},
\end{equation}
with the same reserved fraction. The total budget increases by at most the reserve and remains within $\varepsilon$. This requires the corresponding vanishing margin in the accuracy condition. With an inverse-logarithmic reserve, both the verifier and finder costs are polynomial. Appendix~\ref{app:clean-parameters} gives one explicit schedule along these lines, with an additive floor instead of the maximum, and shows that it recovers the conditions of the theorem given a fixed margin.
\end{proof}
\begin{remark}
When $1 \ll t \ll n$ and $\lambda \ll 1$, the conditions in Theorem~\ref{thm:main-clean-depo} simplify to:
\begin{equation}
\lambda \lessapprox 1 - 2^{-1/4}, \qquad 4 \lambda \frac{t}{n} \lessapprox \varepsilon \ll 1
\end{equation}
In the weak-noise limit, the window for $\varepsilon$ in Theorem~\ref{thm:main-clean-depo} is open when $\lambda t/(n-t)\ll1$. For $t\ll n$ this is automatic. The theorem does not need $t\ll n$, though: suitably small noise also accommodates $t=\Theta(n)$. Bare $\lambda\ll1$, however, does not give a uniform guarantee as $t$ approaches $n$.
\end{remark}

\begin{remark}[Sufficient-budget bookkeeping]
\label{remark:clean-bookkeeping}
The proof of Theorem~\ref{thm:main-clean-depo} follows one feasible schedule of target dimensions. One can also ask for the smallest main-stage budget that Lemma~\ref{lemma:per-batch-clean} and Observation~\ref{obs:per-iter-clean-main} certify for reaching $\dim\tilde S\geqslant\tilde s$. Taking $\varepsilon_{\text{acc}}(\tilde s)$ to be the smallest cumulative purity-loss budget certified by such a schedule, it satisfies
\begin{equation}
\label{eq:clean-mode-acc-purity-loss}
\varepsilon_{\text{acc}}(0) = 0, \qquad
\varepsilon_{\text{acc}}(\tilde{s}) = \min_{\tilde{s}' < \tilde{s}} \left\{\varepsilon_{\text{acc}}(\tilde{s}') + \Delta \varepsilon (\tilde{s}' \rightarrow \tilde{s})\right\},
\end{equation}
with $\Delta\varepsilon(\tilde s'\rightarrow\tilde s)$ from Eq.~\eqref{eq:clean-mode-purity-loss-per-iter}, and the schedule in the proof bounds it from above. This is sufficient-budget bookkeeping, not an optimality statement about the algorithm: ``smallest'' refers only to what our bounds certify. A schedule certified for one budget also works for any larger $\varepsilon$, although larger accuracy budgets may allow more cost-efficient choices.

The bookkeeping treats each iteration as landing exactly on its target dimension, while Observation~\ref{obs:per-iter-clean-main} only guarantees dimension at least $\tilde s$. For the proof, we may replace the update by an $\tilde{s}$-dimensional subgroup that contains the previously accepted group and lies inside the updated accepted group; extending a basis of the former supplies one, and its incremental purity loss is no larger than that of the full update (Observation~\ref{obs:subgroup-purity}). This is only a proof device: the algorithm keeps the full update, which can only help.
\end{remark}

For completeness, the polynomial factors can be made explicit for the single-dataset verifier above. For the clean and intermediate modes, assume $1\leqslant t\leqslant n$, $0<\varepsilon\leqslant1$, constant lower bounds on retained purity and good-sample probability, and sufficient margin in the depolarizing success conditions for the stated schedule and inverse-logarithmic slacks. The following pairs give upper bounds on (state copies, classical bit operations), respectively:
\begin{equation}
\label{eq:explicit-learning-costs}
\begin{aligned}
\text{clean:}\quad&
\left(\widetilde O(n/\varepsilon^2),\ \widetilde O(n^3/\varepsilon^2)\right),\\
\text{intermediate:}\quad&
\left(\widetilde O(n^3/\varepsilon^2),\ \widetilde O(2^t n^5/\varepsilon^2)\right),\\
\text{top-down:}\quad&
\left(\widetilde O(4^t n+n^3/\varepsilon^2),\ \widetilde O(4^t n^5/\varepsilon^2)\right).
\end{aligned}
\end{equation}
Here $\widetilde O$ suppresses logarithmic factors in $n$, $1/\varepsilon$, $1/\delta$, and $1/\delta_{\mathrm{ver}}$. These are bounds for this implementation; Appendix~\ref{app:verifier-complexity} gives the accounting. The top-down bound only requires the purity lower bound, not the depolarizing success conditions.

\subsection{Guarantees under Arbitrary Pauli Noise}
\label{sec:up-down-arbi}

With depolarizing noise, the bad$^\star$ probability $p_b^\star$ is suppressed exponentially as we increase $\tilde{S}$. For an arbitrary Pauli channel, however, this is not necessarily the case. In particular, $p_b^\star$ may remain the same even as $\tilde{S}$ grows. A particularly stark example is \emph{single-Pauli noise}, a Pauli channel whose noise is concentrated on a single Pauli error:
\begin{equation}
\rho = \mathcal{E}(\rho') \equiv (1 - \lambda) \rho' + \lambda \sigma_k \rho' {\sigma_k}^\dagger
\end{equation}
for some $k \in \mathcal{P}^n \setminus C(S)$. A direct calculation gives
\begin{equation}
P_{\mathrm{BellDiff}}(i) = (1-p_b) P'_{\mathrm{BellDiff}}(i) + p_b P'_{\mathrm{BellDiff}}(i+k)  
\end{equation}
where $p_b = \frac{1 - (1-2\lambda)^4}{2}$ and hence $p_g = 1 - p_b = \frac{1 + (1-2\lambda)^4}{2}$. In other words, the resulting distribution is a mixture of the pure-state distribution and a shifted version of it. This means that $p_b^\star$ and $p_g^\star$ remain largely unchanged as $\tilde{S}$ grows while commuting with $\sigma_k$. If an element of $S$ that anticommutes with $\sigma_k$ is added to $\tilde{S}$, $p_b^\star$ drops to zero.

More concretely, when $\tilde{S}$ commutes with $\sigma_k$, we find:
\begin{subequations}
\label{eq:arbi-bounds}
\begin{empheq}[left=\empheqlbrace]{gather}
p_g^\star \geqslant p_g\left(1-\frac{\varepsilon_{\mathrm{acc}}}{\gamma}\right)^2 \label{eq:arbi-good}
\\
p_b\left(1-\frac{\varepsilon_{\mathrm{acc}}}{\gamma}\right)^2 \leqslant p_b^\star \leqslant p_b \label{eq:arbi-bad}
\end{empheq}
\end{subequations}
because $|a'_{i+s}|=|a'_i|$ for $s\in S$ (each $s\in S$ stabilizes $\psi$ up to sign, so coefficients on a coset of $S$ can differ only in sign), and each coset of $S$ (a) either entirely survives or gets entirely filtered by $\mathcal{T}_{\tilde{S}}$, and (b) exactly half commutes with $k$ and half does not. So $\frac{\gamma_{\tilde{S}}}{\gamma_{\tilde{S}}'} = \frac{1}{2}\left(1+(1-2\lambda)^2\right) = \gamma$. Therefore, a maximum purity loss of $\varepsilon_{\text{acc}}$ in $\rho$ translates to a maximum purity loss of $\frac{\varepsilon_{\text{acc}}}{\gamma}$ in $\rho'$. Then, Lemma~\ref{lemma:prob-diff-lower} gives $\mathbb{P}'_{\mathrm{BellDiff}}(C(S^\star)) \geqslant (1 - \frac{\varepsilon_{\text{acc}}}{\gamma})^2$, which gives Eq.~\eqref{eq:arbi-good}. Since $\frac{p_b^\star}{p_g^\star}$ is a constant, this gives the LHS of Eq.~\eqref{eq:arbi-bad}.

So $p_g^\star$ and $p_b^\star$ only change by $O(\varepsilon_{\text{acc}})$ to leading order. In other words, a larger $\tilde{S}$ has no impact on the problem. If $\tilde{S}$ adds an element of $S$ that anticommutes with $\sigma_k$, $p_b^\star = 0$ while Eq.~\eqref{eq:arbi-good} still holds for $p_g^\star$. This effectively reduces our problem to the much simpler pure-state case in Sec.~\ref{sec:pure-state-learning}, and a solution can be found immediately.

Single-Pauli noise shows why increasing $\tilde{S}$ provides no guaranteed gradual improvement in $\frac{q_b^\star}{q_g^\star}$. For a general Pauli channel acting on the output state, with total nonidentity error probability $\lambda<1/2$, let $p_g,p_b$ denote its good/bad probabilities before post-selection. We have
\begin{equation}
\begin{aligned}
p_b^\star &\leqslant p_b\leqslant1-(1-\lambda)^4\leqslant4\lambda,\\
p_g^\star &\geqslant\gamma_{\tilde S}^2\geqslant(\gamma-\varepsilon_{\text{acc}})^2.
\end{aligned}
\end{equation}
The first bound follows because a Bell difference uses four copies, and if all four Pauli errors are identity, the sample is good. The second follows from Lemma~\ref{lemma:prob-diff-lower} and Observation~\ref{obs:add-S-purity}. We can therefore take
\begin{equation}
q_g^\star=(\gamma-\varepsilon_{\text{acc}})^2,\qquad q_b^\star=p_b.
\end{equation}
Since $\gamma\geqslant(1-\lambda)^2$ from the identity branch of the channel, keeping $\gamma-\varepsilon_{\text{acc}}$ bounded below by a positive constant gives $q_b^\star/q_g^\star=O(p_b/p_g)=O(\lambda)$. The conditions below in terms of $p_b/p_g$ assume that $p_b$, or a constant-factor upper bound, is available; if only $\lambda$ is known, take $q_b^\star=1-(1-\lambda)^4$ instead. A known purity lower bound may replace $\gamma$ in $q_g^\star$, with the same retained-purity margin.
This leads to Theorem~\ref{thm:main-intermediate-arbi} for the intermediate regime and Theorem~\ref{thm:main-clean-arbi} for the clean regime.

\begin{theorem}[Intermediate Regime Under Arbitrary Pauli Noise]
\label{thm:main-intermediate-arbi}
When $\frac{p_b}{p_g} \leqslant \Theta\left(\frac{1}{t}\right)$, or $\lambda \leqslant \Theta\left(\frac{1}{t}\right)$, the intermediate mode of the up-down algorithm succeeds. The algorithm uses $O(\mathrm{poly})$ samples and $O(2^t \cdot \mathrm{poly})$ time to return an approximate weak stabilizer.
\end{theorem}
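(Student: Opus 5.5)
The plan is to apply Lemma~\ref{lemma:per-batch-intermediate} at every iteration with the noise-model bounds $q_g^\star=(\gamma-\varepsilon_{\mathrm{acc}})^2$ and $q_b^\star=p_b$ (or $1-(1-\lambda)^4\leqslant 4\lambda$ if only $\lambda$ is known). Per-batch success is then boosted to per-iteration success via Observation~\ref{obs:per-iter-intermediate}, and at most $n$ iterations are chained as in Sec.~\ref{sec:loop-structure}. Unlike the depolarizing case, there is no halving of $q_b^\star$ with $\dim\tilde S$ to exploit. The condition must therefore hold uniformly in $\tilde\nu$, and the binding requirement is the worst case over both branches of the lemma.

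First, fix the accuracy schedule so that $q_g^\star$ stays within a constant factor of $q_g=\gamma^2$ throughout. As in the proof of Theorem~\ref{thm:main-intermediate-depo}, take $\Delta\varepsilon=\varepsilon_{\mathrm{tot}}/n$ with $\varepsilon_{\mathrm{tot}}=\min\{\varepsilon,o(\gamma)\}$. Since the intermediate condition does not involve $\Delta\varepsilon$, this costs only polynomial factors in the verifier gap (Lemma~\ref{lemma:uniform-bell-estimates}). Because $\gamma\geqslant(1-\lambda)^2$ and $\lambda=O(1/t)$ is small, $\gamma$ is bounded below by a constant. Then $q_b^\star/q_g^\star=O(p_b/p_g)$, which is also $O(\lambda)$ since $p_g\geqslant\gamma^2$ by Lemma~\ref{lemma:prob-diff-lower}. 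Assumption~\ref{asp:accuracy} holds, so Lemma~\ref{lemma:compatibility} and the no-wrong-turns argument apply.

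Next, check the two cases of Lemma~\ref{lemma:per-batch-intermediate} against a bound of the form $q_b^\star/q_g^\star\leqslant c'/t$:
\begin{itemize}
\item When $\tilde\nu-t\gg\ln\frac1\delta$, the requirement is $q_b^\star/q_g^\star\lessapprox(\tilde\nu-t)/(\tilde\nu+t)$. Since $\tilde\nu-t\geqslant1$ and $\tilde\nu\leqslant n$, the right-hand side is at least of order $1/(n+t)$; in the relevant range $\tilde\nu-t\gg1$ it is much larger than $1/t$ unless $\tilde\nu$ is comparable to $n\gg t$, where it is about $1$. More carefully, $(\tilde\nu-t)/(\tilde\nu+t)$ is increasing in $\tilde\nu$, and at the smallest $\tilde\nu$ of this branch, $\tilde\nu-t=\Theta(\ln\frac1\delta)=\Theta(1)$ with constant per-batch $\delta$, it is of order $1/t$. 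So $c'/t$ with small $c'$ suffices.
\item When $\tilde\nu-t\leqslant O(1)$, the requirement $q_b^\star/q_g^\star\lessapprox c/(\tilde\nu+t)$ with $\tilde\nu+t\leqslant 2t+O(1)$ is again of order $1/t$.
\end{itemize}
Hence a single constant $c'$ makes both branches hold at every iteration, which gives the threshold $p_b/p_g\leqslant\Theta(1/t)$, equivalently $\lambda\leqslant\Theta(1/t)$.

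The main obstacle is matching the two branches of Lemma~\ref{lemma:per-batch-intermediate} with constants, near $\tilde\nu-t=\Theta(\ln\frac1\delta)$ where neither asymptotic form is clean. I would handle it by fixing the per-batch $\delta$ to a constant and using the second branch, with $K_b=1$, whenever $\tilde\nu-t$ is below a large constant. That branch then covers the transition with its $c/(\tilde\nu+t)$ bound, and the first branch only needs $\tilde\nu-t$ above that constant, where its ratio is already $\Omega(1/t)$.

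For the cost, each batch uses $M=O((\tilde\nu+t)/q_g^\star)=\mathrm{poly}$ samples and $L=O(2^t)$ verifier queries. There are $O(n\ln(n/\delta))$ batches in total, and the verifier has a single polynomial dataset. This gives $\mathrm{poly}$ samples and $O(2^t\cdot\mathrm{poly})$ time.
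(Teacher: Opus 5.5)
Your proposal is correct and follows essentially the same route as the paper. Both apply Lemma~\ref{lemma:per-batch-intermediate} at every iteration with $q_g^\star=(\gamma-\varepsilon_{\mathrm{acc}})^2$ and $q_b^\star=p_b$, and both note that without the depolarizing halving of $q_b^\star$ the requirement is tightest near the end, where $\tilde\nu-t=O(1)$ and it reduces to $p_b/p_g\leqslant\Theta(1/t)$. Your proposal also spells out two details the paper's short proof leaves implicit: the small-$\Delta\varepsilon$ accuracy schedule, and the use of the lemma's second branch (with $K_b=1$) up to a large constant $\tilde\nu-t$ to cover the crossover between its two cases.
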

\begin{proof}
Similar to the proof of Theorem~\ref{thm:main-intermediate-depo}, we stitch together iterations and slowly increment $\tilde{S}$, while allocating a sufficiently small $\Delta \varepsilon$ for each step. Unlike with depolarizing noise, without exponentially decreasing $q_b^\star$, the condition needed becomes stricter with each batch update, since the ratio $\frac{\tilde{\nu}-t}{\tilde{\nu}+t}$ always decreases as $\tilde{\nu}$ decreases. The strictest condition occurs when $\tilde{\nu}-t = O(1)$ or negative, where we'll need $\frac{p_b}{p_g} \leqslant \Theta\left(\frac{1}{t}\right)$, which is ensured by $\lambda \leqslant \Theta\left(\frac{1}{t}\right)$. This stage will always be reached, even without assumptions such as $n \gg 1$ or $n - t \gg 1$.
\end{proof}

\begin{theorem}[Clean Regime Under Arbitrary Pauli Noise]
\label{thm:main-clean-arbi}
When
\begin{equation}
\frac{p_b}{p_g} \leqslant \Theta\left(\frac{1}{t}\right), \qquad \varepsilon \geqslant \Theta\left(t\frac{p_b}{p_g} \min\{t, \ln \frac{1}{\delta}\} \right) 
\end{equation}
or,
\begin{equation}
\lambda \leqslant \Theta\left(\frac{1}{t}\right), \qquad \varepsilon \geqslant \Theta\left(t \lambda \min\{t, \ln \frac{1}{\delta}\} \right)
\end{equation}
the clean mode of the up-down algorithm succeeds. The algorithm uses $O(\mathrm{poly})$ samples and $O(\mathrm{poly})$ time to return an approximate weak stabilizer.
\end{theorem}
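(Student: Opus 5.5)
We follow the two-stage structure of Theorem~\ref{thm:main-clean-depo}, but with the constant ratio $q_b^\star/q_g^\star$ of the arbitrary-Pauli setting. Keep $\varepsilon_{\mathrm{acc}}$ below a small constant fraction of $\gamma\geqslant(1-\lambda)^2$. Then $q_g^\star=(\gamma-\varepsilon_{\mathrm{acc}})^2=\Theta(p_g)$ and $q_b^\star=p_b$ hold uniformly over the run, so $r\equiv q_b^\star/q_g^\star=\Theta(p_b/p_g)=O(\lambda)$ at every iteration. The condition on $\lambda$ in the statement is then just the translation $p_b/p_g=O(\lambda)$, so it suffices to prove the $p_b/p_g$ version. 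Unlike the depolarizing case, no later iteration becomes easier, so the plan is to choose a single per-update budget $\Delta\varepsilon$ that works from the first batch all the way to the terminal stage.

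\textbf{Main stage.} I would make one large jump. In Lemma~\ref{lemma:per-batch-clean} the batch scale is $\chi=\max\{C_1 t/\Delta\varepsilon,\,C_2(\tilde\nu+t)\}$, and the accepted radical satisfies $\dim(S^\star/S^\star\cap\tilde R^\star)\leqslant\strictFloor{K_b}$ with $K_b=C_3 r\chi$. The terminal stage needs $\strictFloor{K_b}=0$, i.e.\ $K_b\leqslant 1$, which forces $\chi\lesssim 1/(C_3 r)$. Taking $\chi=\Theta(1/r)$ requires two things. First, $C_2(\tilde\nu+t)\leqslant\chi$; since $\tilde\nu\leqslant n$ at the start this would cost $n$, so I would instead aim the main stage only for $\strictFloor{K_b}\leqslant\tilde\nu-t$, as in Observation~\ref{obs:per-iter-clean-main}. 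With $\tilde\nu$ large this allows $K_b\approx\tilde\nu-t$, i.e.\ $r(\tilde\nu+t)\lesssim\tilde\nu-t$, which is automatic for $r$ below a constant. Second, $\Delta\varepsilon\approx C_1 t/\chi=\Theta(t r)$. Repeated main-stage iterations, as in the intermediate proof, shrink $\tilde\nu-t$ geometrically. Each costs $\Theta(tr)$ in budget, and if the targets shrink $K_b$ by constant factors there are $O(\log(n-t))$ of them; I would absorb this count into the $\Theta$ or bound it by $\min\{t,\cdot\}$, since once $\dim\tilde S\geqslant n-t$ only $t$ more dimensions remain. The binding constraint is the final main-stage step and the terminal stage, where $\tilde\nu+t\leqslant 2t$. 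There $\chi\geqslant C_2\cdot 2t$ together with $\chi\lesssim1/r$ gives exactly $p_b/p_g\leqslant\Theta(1/t)$, the first hypothesis.

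\textbf{Terminal stage.} With $\chi=\Theta(1/r)\geqslant\Theta(t)$ and $K_b\leqslant1$, Observation~\ref{obs:per-iter-clean-terminal} gives $S\subseteq\tilde S$ with probability $1-\delta$ after $T=O(\ln\frac1\delta)$ batches. The terminal stage applies at most $\min\{t,T\}$ updates, each with budget $\Delta\varepsilon=\Theta(tr)$. The total spend is therefore $\Theta(tr)\cdot\min\{t,\ln\frac1\delta\}$, plus a lower-order main-stage contribution, which matches the second hypothesis on $\varepsilon$. If single-Pauli-type noise lets $p_b^\star$ drop, the analysis only improves, since $q_b^\star=p_b$ remains a valid upper bound. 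The remaining steps follow the depolarizing proof: verify Assumption~\ref{asp:accuracy} (fine when $\varepsilon$ is small and $\lambda=O(1/t)$), add the floor of Eq.~\eqref{eq:general-clean-accuracy-floor} so that the $1/\Delta\varepsilon$ dependence stays polynomial, and count costs. That gives $M=\chi/q_g^\star=\mathrm{poly}$, $O(n\ln(n/\delta))$ batches, and a verifier gap $\Theta(\Delta\varepsilon)$ that is inverse-polynomial through Lemma~\ref{lemma:uniform-bell-estimates}.

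The main obstacle is the main-stage bookkeeping. With no exponential suppression available, I must show that the number of main-stage iterations, each costing $\Theta(tr)$, does not exceed the terminal term. Otherwise the $\varepsilon$ condition would gain a $\log n$ factor. My first attempt is to observe that a single well-sized batch already reaches $\dim\tilde S\geqslant n-t-\strictFloor{K_b}$ with $K_b=O(rt)\cdot(\text{const})=O(1)$. That would make the main stage a constant number of jumps, so its spend is $O(tr)$ and is dominated by the terminal $\min\{t,\ln\frac1\delta\}$ factor.
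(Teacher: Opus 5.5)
There is a genuine gap in the main-stage accounting, and you identify it without closing it. With one fixed per-update budget $\Delta\varepsilon=\Theta(tr)$, every main-stage iteration is charged $\Theta(tr)$, but the number of main-stage iterations is not $O(1)$.

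Your proposed fix, that a single well-sized batch already lands with $K_b=O(1)$, fails. Lemma~\ref{lemma:per-batch-clean} forces $\chi\geqslant C_2(\tilde\nu+t)$, which is what Lemma~\ref{lemma:good-span-S} needs for $S^\star\subseteq B_g^\star$. So the first batch, at $\tilde\nu=n$, has $K_b=C_3r\chi\geqslant C_2C_3r(n+t)$, which is $\Theta(n/t)$ when $r=\Theta(1/t)$. Iterating $\tilde\nu_{i+1}+t\lessapprox C_2C_3r(\tilde\nu_i+t)+2t$ takes roughly $\log(n/t)/\log(1/r)$ jumps to reach $2t+O(1)$.

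Neither of your fallbacks removes this count. It cannot be absorbed into a $\Theta$, and it cannot be bounded by $\min\{t,\cdot\}$, since all these jumps occur while $\dim\tilde S<n-t$. Your schedule therefore certifies only $\varepsilon\gtrsim tr\,[\log(n/t)/\log(1/r)+\min\{t,\ln\frac1\delta\}]$. That is not the stated condition when $n\gg t$.

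The missing idea is that the budget should depend on the iteration, not stay fixed. Whenever the good-sample floor $C_2(\tilde\nu_i+t)$ already makes the batch large, the coverage requirement $C_1t/\Delta\varepsilon\leqslant\chi$ is met by a much smaller budget, so early jumps are nearly free.

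The paper takes, at each step, the largest batch with the same landing, $\chi_i=\ceil{K_{b,i}}/(C_3r)\geqslant C_2(\tilde\nu_i+t)$; rounding $K_b$ up to $\ceil{K_b}$ leaves $\strictFloor{K_b}$ unchanged. It then charges $\Delta\varepsilon_i=C_1t/\chi_i=C_1C_3\,tr/\ceil{K_{b,i}}$. Because $K_{b,i}$ shrinks geometrically, these budgets form a geometric series dominated by its final terms, where $K_b=O(1)$. The main stage therefore costs $O(tr)$ independently of $n$. The terminal stage's $\min\{t,\ln\frac1\delta\}$ updates at $\Theta(tr)$ each then give exactly the stated accuracy threshold.

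You had half of this already. When you let $K_b$ grow toward $\tilde\nu-t$ in the main stage, you enlarged $\chi$ but kept $\Delta\varepsilon=\Theta(tr)$, instead of lowering it to $\Theta(tr/K_b)$ as the larger batch permits.
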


\textit{Proof idea.} Without the exponential suppression of $q_b^\star$, we make each iteration jump as far as it can. Lemma~\ref{lemma:per-batch-clean} then shrinks $\tilde{\nu}_i + t$ geometrically, down to $2t + O(1)$, once $\frac{p_b}{p_g} \leqslant \Theta(\frac{1}{t})$; taking each batch as large as its landing allows, the per-iteration accuracy costs form a series dominated by its last terms, totaling $O(t \frac{p_b}{p_g})$ for the main stage. The terminal stage adds up to $\min\{t, \ln \frac{1}{\delta}\}$ updates at $\Theta(t \frac{p_b}{p_g})$ each, which gives the accuracy threshold in the statement.

\begin{proof}
Unlike depolarizing noise, a general Pauli channel need not suppress bad$^\star$ samples gradually as $\tilde{S}$ grows, so in each iteration, we aim to increment $\dim \tilde{S}$ to the largest extent. The proof logic then follows similar steps as in Theorem~\ref{thm:main-clean-depo}. 

\textit{Main stage.} We run the main stage when $\tilde{\nu} - t \gg 1$. Suppose at iteration $i$ we have nullity $n - \dim \tilde{S} \equiv \tilde{\nu}_i$. By Lemma~\ref{lemma:per-batch-clean}, with the smallest batch $\chi = C_2(\tilde{\nu}_i+t)$ we get $K_b = C_2C_3\frac{q_b^\star}{q_g^\star}(\tilde{\nu}_i+t)$ and can reach $\tilde{\nu}_{i+1} \leqslant \strictFloor{K_b} + t$, or:
\begin{equation}
\tilde{\nu}_{i+1} + t \lessapprox C_2 C_3 \frac{q_b^\star}{q_g^\star} (\tilde{\nu}_i+t) + 2t - O(1)  
\end{equation} 
where the $O(1)$ term accounts for rounding. At the terminal stage, for the same reason as in Theorem~\ref{thm:main-intermediate-arbi}, we will need $\left(\frac{q_b^\star}{q_g^\star}\right)_{\text{terminal}} \leqslant \Theta\left(\frac{1}{t}\right) < 1$ to pass the main obstacle. This means that at the $i$-th stage, $\frac{q_b^\star}{q_g^\star} \leqslant \left(\frac{q_b^\star}{q_g^\star}\right)_{\text{terminal}} \leqslant O(\frac{1}{t}) < 1$. Furthermore, $C_2, C_3 \approx 1$ until the $\frac{q_b^\star}{q_g^\star} (\tilde{\nu}_i+t)$ term is $O(1)$. As a result, $\tilde{\nu}_i + t$ is suppressed at least geometrically until it is $2t + O(1)$. For the accuracy, though, the smallest batch is wasteful. Where we land only depends on $\strictFloor{K_b}$, which stays the same if we round $K_b$ up to $\ceil{K_b}$. So, as in the proof of Theorem~\ref{thm:main-clean-depo}, we take the largest batch with this landing, $\chi = \ceil{K_b}\frac{q_g^\star}{C_3 q_b^\star} \geqslant C_2(\tilde{\nu}_i+t)$. Step $i$ then requires
\begin{equation}
\label{eq:arbi-clean-step-cost}
\Delta\varepsilon_i = C_1\frac{t}{\chi} = \frac{C_1C_3}{\ceil{K_b}}\, t\frac{q_b^\star}{q_g^\star},
\end{equation}
which is dominated by the end terms as $\tilde{\nu}_i+t$, and with it $K_b$, decreases geometrically.

\textit{Entering the terminal stage.} Once $K_b \leqslant 1$, we have $\ceil{K_b} = 1$ and $\strictFloor{K_b} = 0$, so this iteration already reaches $\dim\tilde S\geqslant n-t$, which ends the main stage. This way, $\Delta \varepsilon$ in the main stage is $O\left(t\frac{q_b^\star}{q_g^\star}\right)$. Turning this into $O\left(t\frac{p_b}{p_g}\right)$ needs $\gamma-\varepsilon_{\text{acc}}$ bounded below by a constant, and Assumption~\ref{asp:accuracy} ensures this automatically: since $\varepsilon_{\text{acc}}\leqslant\varepsilon<(1-2\lambda)^2/4\leqslant\gamma/4$, using $\gamma\geqslant(1-\lambda)^2$, we have $\gamma-\varepsilon_{\text{acc}}>\frac34\gamma$. Hence $q_g^\star=(\gamma-\varepsilon_{\text{acc}})^2=\Theta(\gamma^2)$, so $\frac{q_b^\star}{q_g^\star} = O\left(\frac{p_b}{p_g}\right)$. More generally, this step only requires a constant retained-purity margin and does not require the full strength of Assumption~\ref{asp:accuracy}. The same cap bounds each per-update budget away from $\gamma_{\tilde S}$: all budgets, the floor added below included, total at most $\varepsilon<\gamma/4$, and $\gamma-\varepsilon_{\text{acc}}\leqslant\gamma_{\tilde S}\leqslant\gamma$, so $\Delta\varepsilon<\frac{\gamma}{4}-\varepsilon_{\text{acc}}\leqslant\gamma_{\tilde S}-\frac34\gamma\leqslant\frac14\gamma_{\tilde S}$. This is not necessarily $\ll\gamma_{\tilde S}$, but Lemma~\ref{lemma:per-batch-clean} still applies through the exact bound Eq.~\eqref{eq:good-span-exact-bound}, valid for any $\Delta\varepsilon<\gamma_{\tilde S}$: its purity factor is at most $4/3$, so $C_1=O(1)$ uniformly rather than $C_1\approx1$, which the $\Theta$-level statements absorb. Therefore, $\varepsilon_{\text{acc}} = O\left(t \frac{p_b}{p_g}\right)$.

\textit{Total purity loss.} In addition to $\lambda < \Theta\left(\frac{1}{t}\right)$ as discussed, the terminal stage requires $\Delta \varepsilon = \Theta\left(t \frac{q_b^\star}{q_g^\star}\right) = \Theta\left(t \frac{p_b}{p_g}\right)$ per successful update. Therefore, the total cost of the algorithm is $\varepsilon \geqslant \Theta\left(t \min\{t, \ln \frac{1}{\delta}\} \frac{p_b}{p_g}\right)$. As in the depolarizing case, we can then relax $\Delta \varepsilon$ to keep the cost polynomial.
\end{proof}

Let us take stock. Under depolarizing noise, the intermediate mode succeeds whenever $\lambda \lessapprox 0.159$ (for $n \gg t$) at a cost of $O(2^t \cdot \mathrm{poly})$ time, and the clean mode brings this down to polynomial time once the accuracy budget additionally satisfies $\varepsilon \gtrapprox C \frac{t}{n-t}\frac{1-(1-\lambda)^4}{(1-\lambda)^4}$ (Theorems~\ref{thm:main-intermediate-depo} and~\ref{thm:main-clean-depo}). Under arbitrary Pauli noise, post-selection may no longer suppress bad samples as $\tilde S$ grows, and the thresholds tighten to $\lambda \leqslant \Theta(1/t)$ and $\varepsilon \geqslant \Theta\left(t \lambda \min\{t, \ln\frac{1}{\delta}\}\right)$ (Theorems~\ref{thm:main-intermediate-arbi} and~\ref{thm:main-clean-arbi}). Beyond these thresholds, the top-down algorithm of Sec.~\ref{sec:top-down} remains available at cost $O(4^t \cdot \mathrm{poly})$ as long as the purity $\gamma$ is bounded below by a constant. These boundaries are the regime diagram of Fig.~\ref{fig:phase-diagram}.

\section{Numerical Results}
\label{sec:numerics}

\def\NvAdevMax{0.001}
\def\NvAnBig{512}
\def\NvAnMax{2048}
\def\NvAratioMin{2}
\def\NvAratioMax{256}
\def\NvAhaarDev{0.008}
\def\NvAscanTrialsMin{100}
\def\NvAscanTrialsMax{300}
\def\NvAlamStarEight{0.134}
\def\NvVtrials{360}
\def\NvVL{1024}
\def\NvVdeps{0.01}
\def\NvVwrongMax{7\%}
\def\NvVwrongLaterCount{1}
\def\NvVsettingsLater{55}
\def\NvVdevMax{0.08}
\def\NvVdevLast{0.01}
\def\NvVnLast{10^{4}}
\def\NvVqgDev{0.1\%}
\def\NvLlamDone{0.32}
\def\NvLlamDdeep{0.16}
\def\NvLlamGlobal{0.118}
\def\NvLlamStar{0.111}
\def\NvLlamDeepMin{0.16}
\def\NvLlamDeepMax{0.18}
\def\NvLinst{24}
\def\NvLtrials{100}
\def\NvLweightOne{1.6}
\def\NvLweightDeep{5.7}
\def\NvLharmlessOne{40\%}
\def\NvLharmlessDeep{12\%}
\def\NvLruns{120}
\def\NvLcap{60}
\def\NvLsweepChecked{25}
\def\NvLsweepOk{23}
\def\NvLsweepViol{2}
\def\NvLsweepViolMax{1\%}
\def\NvLsweepSettings{60}
\def\NvLsweepReached{47}
\def\NvLsweepDip{17}
\def\NvLsweepDipMax{0.10}
\def\NvBlamDev{0.004}
\def\NvBpredFactor{4}
\def\NvBpredFrac{98\%}
\def\NvBpredMedian{1.1}
\def\NvBequalOnset{0.100}
\def\NvBequalLargeNone{True}
\def\NvBnuEqual{77}
\def\NvBnuDecr{64}
\def\NvBcoresEqualSmall{77}
\def\NvBcoresEqualLarge{37}
\def\NvBcoresEqualLargeDense{77}
\def\NvBcoresDecrSmall{64}
\def\NvBcoresDecrLarge{32}
\def\NvBcoresDecrLargeDense{64}
\def\NvBtrials{32}
\def\NvBepsFloor{2\times10^{-6}}
\def\NvCinst{16}
\def\NvCtrials{100}
\def\NvCplaceDev{0.011}
\def\NvCpairShallowDev{0.011}
\def\NvCpairDeep{0.126}
\def\NvCbrickDeep{0.156}

Having established the guarantees of the up-down finder, we now test them numerically. Sec.~\ref{sec:numerics-method} describes how we generate the samples and what we measure. Sec.~\ref{sec:numerics-threshold} studies the first batch under global depolarizing noise, where Sec.~\ref{sec:up-down-depo} places the noise threshold, and repeats the study with a verifier that works from Bell samples. Sec.~\ref{sec:numerics-regimes} adds near-stabilizers to the input state and maps out the clean and intermediate regimes. Sec.~\ref{sec:numerics-local} replaces global noise by two-qubit gate noise and runs the intermediate mode to completion.

\subsection{Simulation Method}
\label{sec:numerics-method}

Every simulation starts from an ideal state $\ket{\psi}=C\bigl(\ket{\phi}\otimes\ket{0}^{\otimes(n-t)}\bigr)$, where $C$ is a Clifford encoder and the core $\ket{\phi}$ has $t$ qubits and stabilizer nullity $\nu\leqslant t$. A Pauli channel $\mathcal E$ then gives the noisy state $\rho=\mathcal E(\psi)$. Rather than simulating Bell measurements on pairs of copies, we sample Bell differences directly. One Bell difference sample consumes four copies of $\rho$, two for each Bell sample, and each copy carries an independent error drawn from $\mathcal E$. Denote the Pauli coefficients of $\psi$ by $a'_j$ and the fidelities of $\mathcal E$ by $f_j$, as in Eq.~\eqref{eq:pauli-fidelities}. Eq.~\eqref{eq:bell-diff-dist} then gives
\begin{equation}
    P^{\rho}_{\rm BellDiff}(i)=4^n\sum_{j\in\mathcal P^n}(-1)^{\expval{i,j}}f_j^4\,{a'_j}^4
    =\sum_{k\in\mathcal P^n}q^{(4)}_k\,P^{\psi}_{\rm BellDiff}(i+k),
    \label{eq:numerics-convolution}
\end{equation}
where $q^{(4)}$ is the distribution of $k=k_1+k_2+k_3+k_4$, the sum of the four error labels. In other words, a noisy sample is an ideal sample shifted by four independent errors, some of which may be the identity. For global depolarizing noise, the shift vanishes with probability $(1-\lambda)^4$ and is otherwise uniform, as in Sec.~\ref{sec:up-down-depo}. We draw the ideal samples from the $4^t$ Pauli expectation values of the core. A Clifford encoder acts on Pauli labels by a symplectic map, which preserves spans, radicals and projected purities. We therefore work in the unencoded frame, where the target contains $S_0=\expval{Z_{t+1},\ldots,Z_n}$, and map gate errors into this frame.

For each batch $B$, we compute the radical $R=\rad(B)$ and the number $\ell=\dim(R\cap S_0)$ of target directions it contains. When $\nu<t$, the target also contains stabilizers of the core, so counting only $S_0$ is conservative. In the intermediate mode, a batch with $\ell\geqslant1$ yields an accepted update as soon as a uniform draw from $R$ lands in $S$, and we call such a batch successful. In the clean mode, we also compute the exact purity loss $\Delta\gamma(R)$ from the Pauli expectation values of the core. Thus, except in Fig.~\ref{fig:numerics-threshold}(d), the verifier is replaced by exact projected purities. Fig.~\ref{fig:numerics-threshold}(d) instead runs Algorithm~\ref{alg:stab-ver} on $N$ Bell samples of $\rho$, drawn independently of the samples used by the finder.

The batch sizes of Lemmas~\ref{lemma:per-batch-intermediate} and~\ref{lemma:per-batch-clean} depend on the bound $q_g^\star$. We scan $M$ and report the best value, which can only overstate the success of a fixed rule. Each trial draws one independent stream of samples, and its batch of size $M$ consists of the first $M$ samples of that stream. Different trials are therefore independent, while the batches of one trial at different $M$ are nested; only in Fig.~\ref{fig:numerics-threshold}(d) does each $M$ use fresh batches. A success probability at fixed $M$ is the fraction of successful trials. The rule itself is available in practice: under global depolarizing noise, $q_g^\star=(1-\lambda)^4$ equals $\gamma^2$ up to $O(2^{-n})$, and the purity $\gamma$ can be estimated from the verifier's Bell samples (Sec.~\ref{sec:up-down-depo}). Averaged over the verifier datasets of Fig.~\ref{fig:numerics-threshold}(d) with $N=10^4$, this estimate of $q_g^\star$ agrees with $(1-\lambda)^4$ to within $\NvVqgDev$ at every $\lambda$; the scans themselves use the exact value. All error bars and bands are $1\sigma$ ($68\%$) intervals computed from the simulations themselves. For success probabilities, which include cases with no observed events, we use the Wilson interval: for an estimate $\hat p$ from $N$ trials, it contains every $p$ with $|\hat p-p|\leqslant\sqrt{p(1-p)/N}$, and so keeps a nonzero width when none or all of the trials succeed. Where we maximize over $M$ or locate a threshold, we resample the success count at each $M$ independently from its observed rate and repeat the maximization; this parametric bootstrap ignores the correlation between the nested batches of one trial. In Figs.~\ref{fig:numerics-regimes} and~\ref{fig:numerics-local}(a,b), we bootstrap over the random cores or encoders. Where a value of $M$ or an iteration is selected from the data, as in Figs.~\ref{fig:numerics-threshold}(d) and~\ref{fig:numerics-local}(c,d), the interval is that of the selected cell and does not account for the selection.

\subsection{First Batch under Global Depolarizing Noise}
\label{sec:numerics-threshold}

At the first iteration, $\tilde S=\{I\}$, $q_g^\star=(1-\lambda)^4$ and $q_b^\star=1-(1-\lambda)^4$. Without the confidence margins of Eq.~\eqref{eq:constraint-on-M}, Lemma~\ref{lemma:per-batch-intermediate} with $t=\nu$ requires a batch size in the window
\begin{equation}
    \mathcal W=\left[\frac{n+\nu}{q_g^\star},\,\frac{n-\nu}{q_b^\star}\right],\qquad
    \lambda^\star(n,\nu)=1-\left(\frac{n+\nu}{2n}\right)^{1/4}.
    \label{eq:numerics-window}
\end{equation}
The window is nonempty exactly when $\lambda<\lambda^\star(n,\nu)$. This is the intermediate--noisy boundary of Fig.~\ref{fig:phase-diagram}, and it tends to $1-2^{-1/4}\approx0.159$ as $\nu/n\to0$. We test it on product-magic states,
\begin{equation}
    \ket{\psi}=C\left[(T\ket{+})^{\otimes t}\otimes\ket{0}^{\otimes(n-t)}\right],\qquad \nu=t.
    \label{eq:numerics-product-family}
\end{equation}

Fig.~\ref{fig:numerics-threshold}(a) shows how the success probability depends on the batch size at $n/\nu=8$ and $\lambda=0.11$. With too few samples, the span has not yet reached $S$ and the radical contains no target direction. With too many, bad samples reduce the radical to $\{I\}$. In between, success is close to one. The window $\mathcal W$ lies inside this range, but success already begins below its lower edge, so $\mathcal W$ is a sufficient range rather than the exact one.

Fig.~\ref{fig:numerics-threshold}(b) shows the success probability, maximized over $M$, as a function of $\lambda$. At fixed $n/\nu=8$, the threshold $\lambda^\star=\NvAlamStarEight$ is the same for every $n$. We find that the curves cross near $\lambda^\star$ and steepen as $n$ grows. Let $\lambda_c(n,\nu)$ be the noise strength at which the maximized success probability equals $\tfrac12$. Fig.~\ref{fig:numerics-threshold}(c) compares $\lambda_c$ with $\lambda^\star$ over a wide range of $n/\nu$. For $n\geqslant\NvAnBig$ and $\NvAratioMin\leqslant n/\nu\leqslant\NvAratioMax$, the two agree to within $\NvAdevMax$. Haar-random cores agree with $\lambda^\star$ to within about $\NvAhaarDev$, so the threshold is not specific to product cores.

Finally, Fig.~\ref{fig:numerics-threshold}(d) replaces the exact verifier by Algorithm~\ref{alg:stab-ver} with noisy verification samples, at $n=64$ and $\nu=8$, with $\Delta\varepsilon_-=0$ and $\Delta\varepsilon_+=\NvVdeps$. The finder draws up to $L=\NvVL$ elements of $R$ and stops at the first one the verifier approves. Two kinds of error can occur. A false acceptance approves a candidate outside $S$ and adds a generator that the algorithm cannot remove later. A false rejection discards a correct candidate and only costs further draws. With $N=10$ Bell samples, false acceptances occur in up to about $\NvVwrongMax$ of batches. For $N\geqslant30$, we ran $\NvVsettingsLater$ settings, one for each combination of $N$ and $\lambda\in\{0.06,0.07,\ldots,0.16\}$, with $\NvVtrials$ batches each, and observed $\NvVwrongLaterCount$ false acceptance in total at the values of $M$ shown. Compared with the exact verifier, the probability of a correct update is lower by no more than about $\NvVdevMax$ for every $N$, and about $\NvVdevLast$ at $N=\NvVnLast$. This gap includes false rejections and, at small $N$, false acceptances that preempt a correct update. For this family, the exact verifier is therefore a mild idealization. This is expected: every wrong candidate costs a purity loss of at least $(1-\lambda)^2/4$, far above the threshold $\Delta\varepsilon_+$.

\begin{figure}[t]
    \centering
    \includegraphics[width=\textwidth]{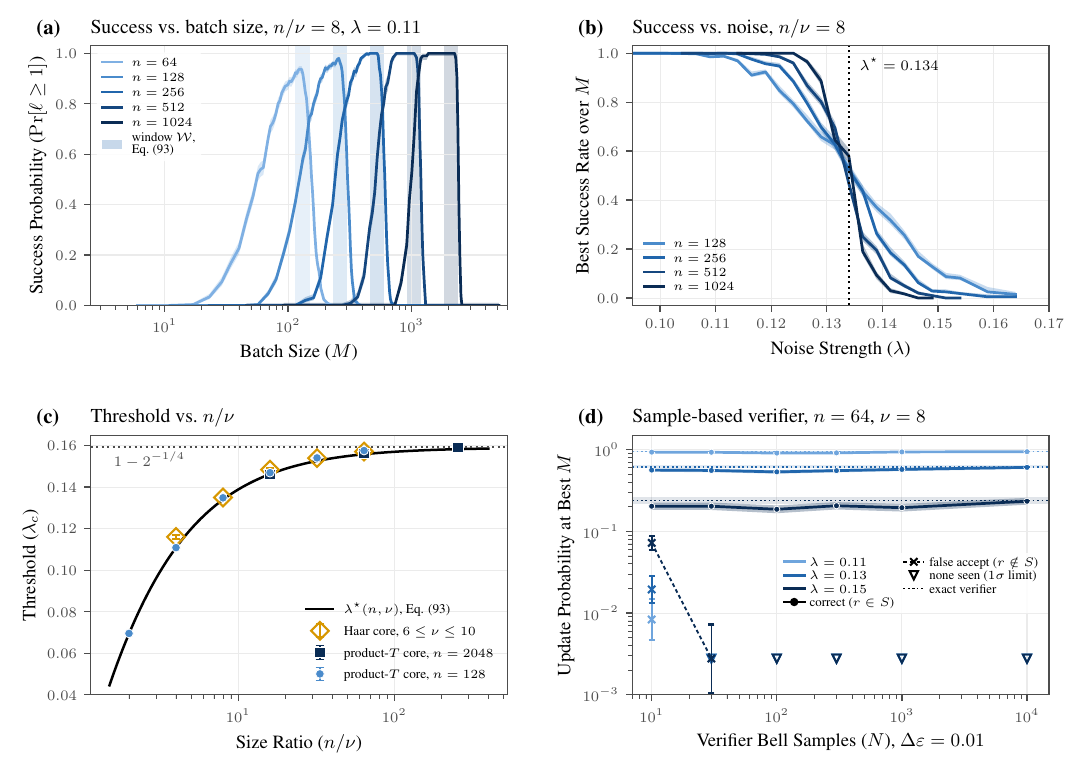}
    \caption{\textbf{First batch under global depolarizing noise.} Product-state-$T$ cores, Eq.~\eqref{eq:numerics-product-family}, with $\nu=t$. Success means that the radical of one batch contains a target direction, $\ell\geqslant1$. (a)~Success probability against the batch size $M$ at $n/\nu=8$ and $\lambda=0.11$, with the window $\mathcal W$ of Eq.~\eqref{eq:numerics-window} shaded; $\NvAscanTrialsMin$--$\NvAscanTrialsMax$ independent batches per point. (b)~Success rate maximized over $M$, against $\lambda$, at $n/\nu=8$. (c)~Noise strength $\lambda_c$ at which the maximized success probability equals $\tfrac12$. The solid curve is $\lambda^\star(n,\nu)$ and the dotted line its limit $1-2^{-1/4}$; diamonds are Haar-random $\nu$-qubit cores with $6\leqslant\nu\leqslant10$. (d)~First update with Algorithm~\ref{alg:stab-ver} on $N$ Bell samples, at $n=64$ and $\nu=8$, with $\Delta\varepsilon_+=\NvVdeps$, $L=\NvVL$ and $\NvVtrials$ batches per point. For each $N$, $M$ maximizes the probability of a correct update. Solid lines give that probability and dashed lines the probability of a false acceptance at the same $M$; dotted lines give the exact verifier. Open triangles mark settings without any false acceptance, plotted at the $1\sigma$ upper limit. Bands and bars are $1\sigma$ intervals (Sec.~\ref{sec:numerics-method}).}
    \label{fig:numerics-threshold}
\end{figure}

\subsection{Clean versus Intermediate Updates}
\label{sec:numerics-regimes}

For product-$T$ cores, every Pauli outside $S$ costs a purity loss of at least $(1-\lambda)^2/4$. An exact verifier then accepts a whole radical exactly when $R\subseteq S$, and the accuracy condition of Theorem~\ref{thm:main-clean-depo} never triggers. The condition matters when the core has near-stabilizers, that is, Paulis outside $S$ whose imposition costs little purity.

We build such cores on $t=8$ qubits from nine Clifford blocks interleaved with eight single-qubit rotations,
\begin{equation}
    \ket{\phi}=C_9\,\mathrm R_z^{(q_8)}(\theta_8)\,C_8\cdots C_2\,\mathrm R_z^{(q_1)}(\theta_1)\,C_1\ket{0}^{\otimes 8},
    \label{eq:numerics-doped-core}
\end{equation}
where each rotation acts on a uniformly random qubit $q_k$. Each block $C_k$ consists of three random all-to-all CNOT layers. In each layer, every qubit first receives a single-qubit Clifford word $H^aS^bH^c$ with independent uniformly random bits $a,b,c$, and four randomly oriented CNOTs then act on randomly chosen disjoint pairs of qubits. We use two angle schedules. For equal angles, $\theta_k=0.15$, near-stabilizer losses are typically about $\sin^2(0.15)/2\approx0.011$. For decreasing angles, $\theta_k=\pi/2^{k+1}$ with $k=1,\ldots,8$, they shrink roughly geometrically, by about a factor of four per near-stabilizer. We keep only cores with $\nu=7$, which have one exact stabilizer and seven near-stabilizers; we compute $\nu$ exactly from the $4^8$ Pauli expectation values. Each core is an independently drawn circuit. The cores are encoded into $n=64$ and $n=256$ qubits; as explained in Sec.~\ref{sec:numerics-method}, the choice of Clifford encoder does not affect any quantity we measure under global depolarizing noise.

For the first update, we set $\Delta\varepsilon=\varepsilon$. For each core and noise strength, we run $\NvBtrials$ independent trials and average the outcomes over trials and cores. We call $(\lambda,\varepsilon)$ clean if, for some $M$, the whole radical is accepted and contains a target direction with probability at least $\tfrac12$. We call it intermediate if this fails but $\ell\geqslant1$ with probability at least $\tfrac12$ for some $M$, and noisy otherwise. This classification tests a single batch of Lemma~\ref{lemma:per-batch-clean} with $M$ scanned, not the explicit clean-mode schedule of Appendix~\ref{app:clean-parameters}.
\begin{figure}[t]
    \centering
    \includegraphics[width=\textwidth]{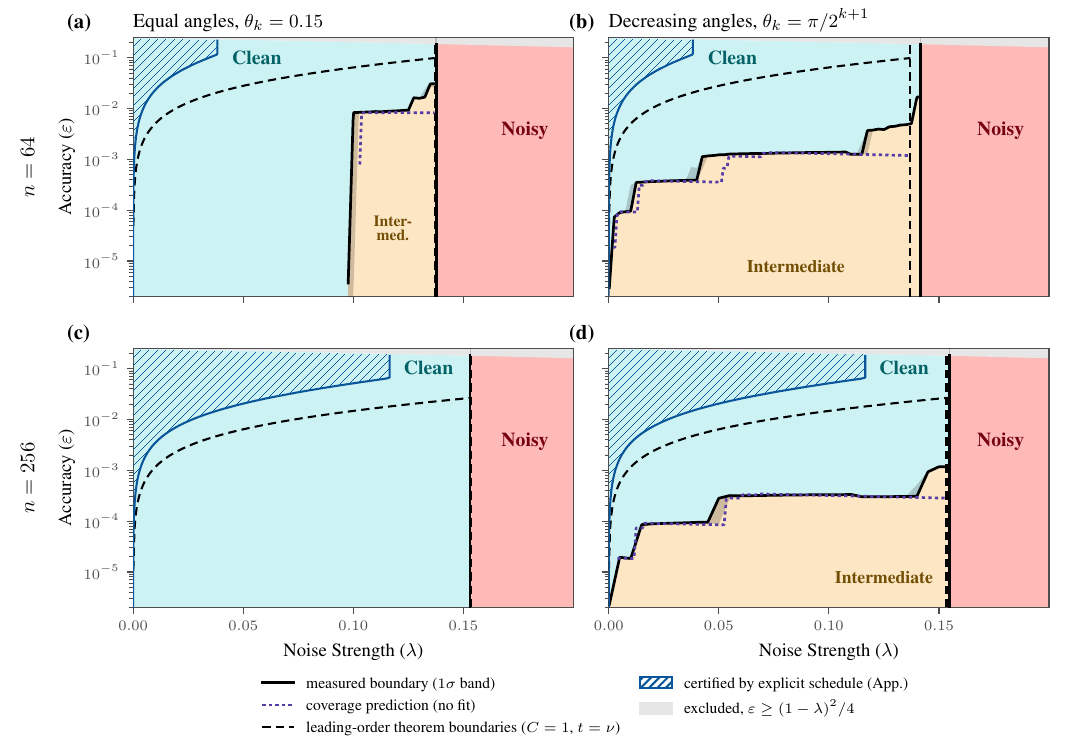}
    \caption{\textbf{Measured first-batch regimes with near-stabilizers} under global depolarizing noise; compare Fig.~\ref{fig:phase-diagram}, noting the logarithmic $\varepsilon$ axis. The cores have $t=8$ qubits and $\nu=7$, and are built from blocks of three random all-to-all CNOT layers, interleaved with eight $\mathrm R_z(\theta_k)$ insertions, Eq.~\eqref{eq:numerics-doped-core}: (a,c)~equal angles, $\theta_k=0.15$; (b,d)~decreasing angles, $\theta_k=\pi/2^{k+1}$. Rows: $n=64$ and $n=256$. Regions follow the classification in the text with $\Delta\varepsilon=\varepsilon$. Solid lines are the measured boundaries, with $1\sigma$ bands from a bootstrap over cores, and dotted lines the coverage prediction of the text. Dashed lines are the leading-order boundaries of Theorems~\ref{thm:main-intermediate-depo} and~\ref{thm:main-clean-depo} with $C=1$ and $t=\nu$. The hatched region is where the explicit clean-mode schedule of Appendix~\ref{app:clean-parameters} passes its check, Eq.~\eqref{eq:clean-check}, with $t=\nu$ and $\delta=0.1$; gray marks the region excluded by Assumption~\ref{asp:accuracy}. Each core is an independently drawn circuit, simulated with $\NvBtrials$ independent trials per noise strength. Equal angles use $\NvBcoresEqualSmall$ cores at $n=64$ and $\NvBcoresEqualLarge$ at $n=256$ ($\NvBcoresEqualLargeDense$ for $\lambda\geqslant0.09$, where we added cores near the intermediate and noisy boundaries); decreasing angles use $\NvBcoresDecrSmall$ and $\NvBcoresDecrLarge$ ($\NvBcoresDecrLargeDense$).}
    \label{fig:numerics-regimes}
\end{figure}
Fig.~\ref{fig:numerics-regimes} shows the measured regimes. The noisy boundary follows $\lambda^\star(n,7)$ to within about $\NvBlamDev$ for both families. For equal angles, an intermediate region appears at $n=64$ above $\lambda\approx\NvBequalOnset$ and is absent at $n=256$. For decreasing angles, the intermediate region lies below an edge that rises with $\lambda$ in steps, one per near-stabilizer, and it moves down as $n$ grows.

A simple coverage argument predicts these edges without fitted parameters. A near-stabilizer of loss $\Delta$ leaves the radical once a good sample anticommutes with it, which happens with probability about $2\Delta$ per good sample. The largest useful batch is the upper edge of $\mathcal W$, so a batch contains about $q_g^\star(n-\nu)/q_b^\star$ good samples. The whole radical is accepted if the near-stabilizers that remain in it have a total loss of at most $\varepsilon$, after rescaling by $(1-\lambda)^2$. We evaluate this condition with the measured near-stabilizer losses of each core and obtain the dotted lines in Fig.~\ref{fig:numerics-regimes}. The prediction falls below the measured edge just under $\lambda^\star$, where the window $\mathcal W$ closes and the largest useful batch is overestimated, and it places the onset of the equal-angle region slightly late. The argument also explains the dependence on $n$: larger $n$ allows larger batches, which remove weaker near-stabilizers. Finally, the measured clean region extends well beyond the boundaries of Theorem~\ref{thm:main-clean-depo}, which assumes the worst-case loss spectrum. The explicit schedule of Appendix~\ref{app:clean-parameters} certifies a smaller region still: for any $\varepsilon$, only $\lambda\lesssim0.04$ at $n=64$ and $\lambda\lesssim0.12$ at $n=256$, while the measured noisy boundary lies near $\lambda^\star$. The difference comes from the finite-size factors that the leading-order lines drop: the constant $C=\bar t/t$ of the appendix, about $1.9$ at $t=7$, the verifier slack $\kappa$, and the confidence margins of each batch, the last two of which shrink only slowly with $n$.

\subsection{Two-Qubit Gate Noise}
\label{sec:numerics-local}

The guarantees for arbitrary Pauli noise, Theorems~\ref{thm:main-intermediate-arbi} and~\ref{thm:main-clean-arbi}, assume nothing about the error distribution. Gate noise in a shallow circuit, by contrast, is a structured Pauli channel. We prepare $(T\ket{+})^{\otimes t}\otimes\ket{0}^{\otimes(n-t)}$ with $(n,t)=(16,4)$, $(32,4)$ and $(32,8)$, so that $\nu=t$, and apply a Clifford encoder of $D\in\{1,2,4,8,16\}$ brickwork layers. Each layer applies random single-qubit Clifford gates and then CNOT gates, each with a random control, on alternating nearest-neighbor pairs of an open chain whose first $t$ qubits carry the core. After each CNOT, two-qubit depolarizing noise applies one of the 15 nonidentity two-qubit Paulis with total probability $p$. Single-qubit gates, state preparation and the Bell measurement are ideal. We report the noise by the total error probability of the Pauli channel that each copy undergoes, as defined in Sec.~\ref{sec:noise-model}: the probability that the faults of one copy, propagated to its output, multiply to a nonidentity Pauli. For global depolarizing noise this is $\lambda(1-4^{-n})$, so the two noise models are compared at equal total error probability. Each point in Figs.~\ref{fig:numerics-local}(a,b) averages over $\NvLinst$ random encoders with $\NvLtrials$ independent trials each. The complete runs of Figs.~\ref{fig:numerics-local}(c,d) and the connectivity study below instead set the noise by $1-e^{-pL}$, where $L$ is the number of noisy gates per copy, which differs from the total error probability by less than $0.02$ over the range shown.
\begin{figure}[t]
    \centering
    \includegraphics[width=\textwidth]{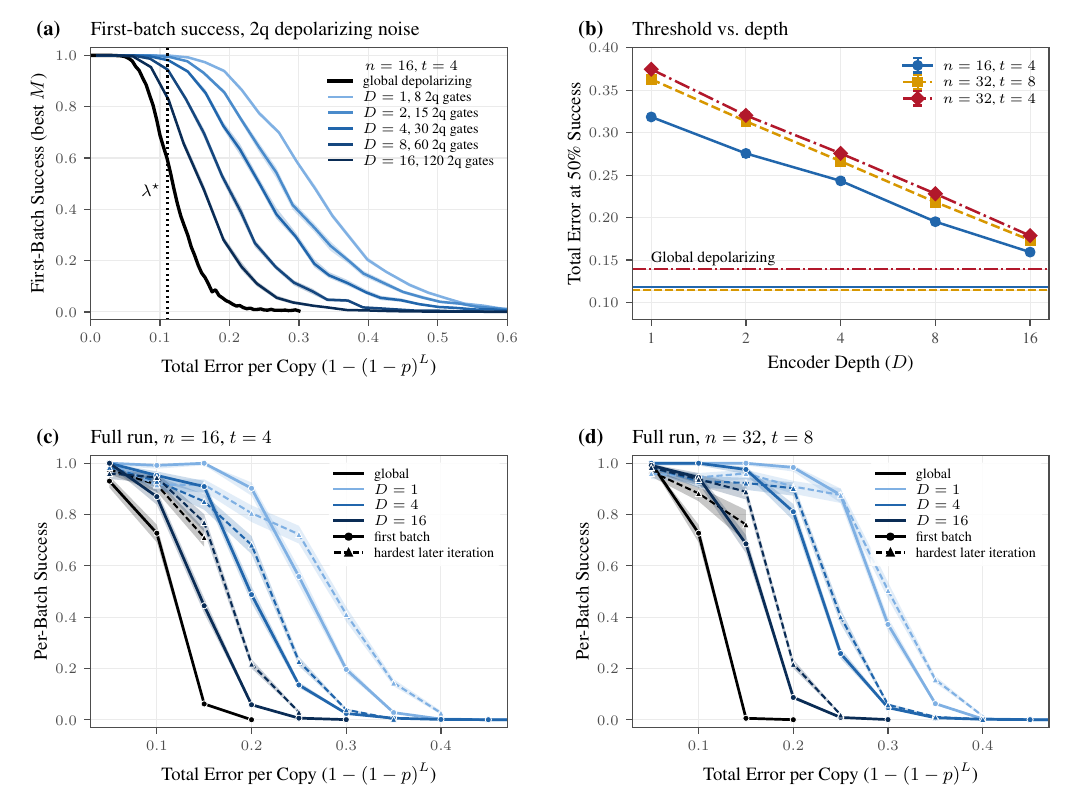}
    \caption{\textbf{Two-qubit gate noise.} Brickwork Clifford encoders of depth $D$ on an open chain, with two-qubit depolarizing noise of rate $p$ after every CNOT and $L$ noisy gates per copy; black lines show global depolarizing noise. (a)~Best first-batch success at $n=16$ and $t=4$ against the total error probability per copy; the dotted line marks $\lambda^\star(16,4)$. (b)~Total error probability at which the best first-batch success equals $\tfrac12$, against $D$; horizontal lines show global depolarizing noise. (c,d)~Complete intermediate-mode runs at $(n,t)=(16,4)$ and $(32,8)$. Solid lines give the per-batch success of the first iteration, the fraction of its batches that succeed, pooled over runs; dashed lines give the lowest such rate among later iterations. The horizontal axis is $1-e^{-pL}$ (Sec.~\ref{sec:numerics-local}). Later iterations are shown only where at least 10 runs reached them, and an iteration that needs more than $\NvLcap$ batches ends the run. Bands and bars are $1\sigma$ intervals, from a bootstrap over the $\NvLinst$ encoders in (a,b) and, in (c,d), from Wilson intervals over the pooled batches, which omit variation between encoders.}
    \label{fig:numerics-local}
\end{figure}

Fig.~\ref{fig:numerics-local}(a) shows the first-batch success at $n=16$ and $t=4$. For this encoder, gate noise is less harmful than global depolarizing noise with the same total error probability. The best first-batch success equals $\tfrac12$ at a total error probability of $\NvLlamDone$ for depth $1$ and $\NvLlamDdeep$ for depth $16$, compared with $\NvLlamGlobal$ for global noise and $\lambda^\star(16,4)=\NvLlamStar$. Fig.~\ref{fig:numerics-local}(b) shows the same trend for all three sizes; at depth $16$, the thresholds lie between $\NvLlamDeepMin$ and $\NvLlamDeepMax$. The advantage shrinks with depth because a deeper encoder spreads each fault over more qubits. In the unencoded frame, a single fault acts on $\NvLweightOne$ qubits on average at depth $1$ and on $\NvLweightDeep$ at depth $16$. Over the same range, the fraction of faults that leave the sample in $C(S)$, and so do no harm, falls from $\NvLharmlessOne$ to $\NvLharmlessDeep$. The errors thus become closer to uniform, and the threshold approaches that of global noise.

The brickwork encoder spreads a fault by at most one site per layer and places the core at one end of the chain. Repeating the first-batch study at $n=16$ and $t=4$ ($\NvCinst$ encoders per depth, $\NvCtrials$ trials each), placing the core at random positions changes each threshold by at most $\NvCplaceDev$. Replacing the nearest-neighbor pairs by a uniformly random perfect matching of all $n$ qubits leaves the thresholds at $D\leqslant2$ within $\NvCpairShallowDev$ but lowers them at larger depth, to $\NvCpairDeep$ at depth $16$, compared with $\NvCbrickDeep$ for brickwork in the same runs and $\NvLlamGlobal$ for global noise. The depth dependence in Fig.~\ref{fig:numerics-local}(b) thus reflects one-dimensional connectivity: an encoder that spreads faults faster brings gate noise closer to global depolarizing noise, as expected for well-scrambling circuits~\cite{dalzell2024}. For every encoder we tested, gate noise was no more harmful than global depolarizing noise with the same total error probability.

Under global depolarizing noise, the first batch is the hardest. Later batches are post-selected on $C(\tilde S)$, and because a bad sample is uniformly random, each accepted generator removes half of the bad samples while keeping every good one (Sec.~\ref{sec:up-down-depo}). For an arbitrary Pauli channel this need not happen, because errors that commute with $\tilde S$ survive post-selection, so a later iteration can have as many bad samples as the first (Sec.~\ref{sec:up-down-arbi}). Gate noise lies between these two cases, so we test it directly. We run the intermediate mode to completion. In each iteration, it accepts a uniformly random valid update: a Pauli that lies in the target $S$, is not yet in $\tilde S$, and lies in the radical of the batch up to elements of $\tilde S$; i.e., a uniformly random element of $(R+\tilde S)\cap S$ outside $\tilde S$. We draw this update directly from the known target, which models an exact verifier with an unlimited proposal budget. Each batch contains $M=\lceil C(n-\dim\tilde S+t)/p_g\rceil$ samples before post-selection, where $p_g$ is the probability of a good sample, estimated from $3000$ samples classified against the known target; $C=1$ except in the end stage $\tilde\nu-t\leqslant2$ of Lemma~\ref{lemma:per-batch-intermediate}, where $C=1.5$. Figs.~\ref{fig:numerics-local}(c,d) compare the first iteration with the hardest later one, with $\NvLruns$ runs per point. Of the $\NvLsweepSettings$ settings (both sizes; $D\in\{1,4,16\}$ and global noise; noise in steps of $0.05$), $\NvLsweepReached$ have a later iteration that at least 10 runs reached. A later iteration succeeds less often than the first in only two cases: in $\NvLsweepViol$ settings where both succeed less than $\NvLsweepViolMax$ of the time, and in $\NvLsweepDip$ settings with first-iteration success above $0.8$, where one of the last four iterations falls short by up to $\NvLsweepDipMax$. This also happens once under global depolarizing noise, so we attribute it to the batch-size rule near the end of a run, when few target directions remain, rather than to errors that survive post-selection. Wherever the first iteration limits success, it remains the bottleneck under brickwork gate noise, as under global depolarizing noise. We do not repeat the complete runs for the other encoders.

\subsection{Scope}
\label{sec:numerics-scope}

Except for Fig.~\ref{fig:numerics-threshold}(d), our simulations use exact projected purities. They therefore test the finder rather than the sample cost of the verifier, Theorem~\ref{thm:stab-ver}. We test the sample-based verifier only for product-$T$ cores, whose wrong candidates carry a large purity loss. Near-stabilizer cores, for which the verifier must resolve losses of order $\varepsilon$, remain to be tested. The complete runs of Figs.~\ref{fig:numerics-local}(c,d) draw each update directly from the known target, so they test neither a finite proposal budget $L$ nor the number of verifier queries. Neither the worst-case channels behind Theorems~\ref{thm:main-intermediate-arbi} and~\ref{thm:main-clean-arbi} nor coherent errors or noisy Bell measurements are simulated. We also do not run the explicit clean-mode schedule of Appendix~\ref{app:clean-parameters}; Fig.~\ref{fig:numerics-regimes} shows only where it is certified.

\FloatBarrier

\section{Conclusion and Outlook}
\label{sec:conclusion}

For pure states, stabilizer learning takes a single step: span enough Bell difference samples and compute the radical. Under noise, each part of this step fails, and most of this paper is spent repairing them. First, the span can no longer be trusted, because bad samples shrink its radical, so every update must be certified. Our verifier does this from one set of Bell samples with a guarantee that holds uniformly over all queries, so the finder may choose its queries adaptively (Theorem~\ref{thm:stab-ver}). Second, certification alone is slow: proposing samples one at a time works at any noise strength but costs $4^t$ (Theorem~\ref{thm:top-down}). Recovering polynomial cost requires spanning batches again, which works because bad samples degrade the radical one direction at a time, and the radical extends beyond the target by at most $t$ directions. Lemmas~\ref{lemma:per-batch-intermediate} and~\ref{lemma:per-batch-clean} identify batch sizes at which good samples reveal the target before bad samples corrupt the radical. Third, the verifier certifies purity while the finder sees only where samples fall. Lemma~\ref{lemma:belldiff-and-gamma} connects the two: any loss of purity shows up as samples leaving the centralizer, which sets how many good samples a batch needs. Finally, Section~\ref{sec:end-to-end} chains batches into a full algorithm. Under depolarizing noise, each accepted direction halves the rate of bad samples in later batches, so the first iteration sets the noise thresholds, $\lambda \lessapprox 0.159$ and, for the clean mode, $\lambda \lessapprox n\varepsilon/4t$.

The learned structure has several uses, and in our clean and intermediate regimes it is learned from polynomially many noisy copies. Here $\lambda$ is the total error probability of the channel; for independent single-qubit errors of probability $p$ it equals $1-(1-p)^n$, so our thresholds require $p$ of order $1/n$, or a channel whose total error probability is suppressed, such as the residual noise on an error-corrected code state. For encoded states, $t$ can be taken as the number of logical qubits, and the learner characterizes the stabilizer structure a device actually prepares, a natural diagnostic in the early fault-tolerant era. The learned Clifford frame identifies the approximate low-nullity structure of the underlying pure state and supplies a structural ingredient for compressed tomography~\cite{grewal2025, hangleiter2024}. Recovering the ideal state additionally requires information about the channel, which we leave to future work. Finally, the nullity $\tilde\nu\leqslant\nu$ bounds the number of $T$ gates needed to prepare the ideal state, a noise-robust version of the bound of Ref.~\cite{hangleiter2024} that can serve as a benchmark of a device's non-Clifford content.

Several questions remain open. Our results fall into three regimes, each served best by a different variant of the finder--verifier algorithm, which raises the issue of lower bounds and the optimality of learning weak stabilizers. Any lower bound exponential in $t$ must be computational: with runtime exponential in $n$, polynomially many copies always suffice, because the verifier can check all $4^n$ Paulis as candidates (Sec.~\ref{sec:ver-discussion}). A related question is whether a single algorithm can interpolate smoothly between the three regimes, and whether such an algorithm gives further insight into the expected lower bounds. More sophisticated finders may also use information from the Bell samples that ours discard, such as the Pauli overlaps $|\tr(\sigma\rho)|^2$. On average, $|\tr(\sigma\rho)|$ scales as $2^{-t/2}$, so post-selecting on these overlaps, measured directly on single copies rather than through their squares in the Bell samples, may lead to a $2^t$ algorithm for the noisy regime. Finally, simulations indicate that the first-batch threshold $\lambda^\star(n,\nu)$, which tends to $0.159$, is sharp for the up-down finder, with a transition that steepens as $n$ grows (Sec.~\ref{sec:numerics-threshold}). We do not know whether other finders can exceed it, or whether the $\Theta(1/t)$ thresholds for Pauli channels are fundamental.

The noise model can also be refined and generalized. Coherent noise breaks even the weak symmetries of a state, but a small rotation changes the approximate-stabilizer target only slightly. We therefore expect an approximate target to make learning more robust to coherent errors, not less. The finder would still need to be adapted, because a coherent rotation spreads the state's Pauli weight outside $C(\mathrm{Stab}(\psi))$, which our analysis of good and bad samples does not account for. Within Pauli noise, we analyzed two extremes: global depolarizing noise and worst-case Pauli channels. We expect realistic, average-case error distributions to fall between these two analyses, and the gap between them to narrow. For two-qubit depolarizing noise on the Clifford encoders we tested, the first batch even tolerates more total error than under global depolarizing noise, and the margin shrinks as the encoder spreads faults more widely. In complete runs under brickwork gate noise, later iterations were harder than the first batch only near the end of a run, where the same happens under global depolarizing noise (Sec.~\ref{sec:numerics-local}). For such distributions, post-selection may still work to a large extent, reducing bad samples down to the grain size of the error distribution, set by its heaviest errors that commute with every accepted generator. For some circuit families, such as random circuits, local noise is known to act on single-copy output distributions as global white noise~\cite{dalzell2024}. Whether the same holds for the two-copy statistics of Bell sampling is open.

\begin{acknowledgments}
The authors thank Michael Gullans and Dominik Hangleiter for their contributions to the early development of this project, including discussions that motivated the study of Bell sampling in the presence of noise. We acknowledge helpful discussions with Daniel Spencer, Emil Khabiboulline, Timothy Connor Mooney, Kaiyan Shi, Zongbo Bao, Guoding Liu, and Qi Ye. L.T. was supported in part by funding through NSF QLCI award OMA-2120757. T.S. was supported by the NSF STAQ program and the U.S. Department of Energy, Office of Science; National Quantum Information Science Research Centers, Quantum Systems Accelerator (Award No. DE-SCL0000121). 

The authors used OpenAI o3 to assist with equation-level derivations for an earlier perturbative argument for Lemma~\ref{lemma:uniform-bell-estimates}, and GPT-6-Astra (OpenAI) to develop it into an exact proof. GPT-5.6-Sol-Pro (OpenAI) was used to search the completed proofs for gaps and to develop the explicit clean-mode schedule of Appendix~\ref{app:clean-parameters}. AI-assisted coding tools, Codex (OpenAI; GPT-5 through GPT-6.1 models) and Claude Code (Anthropic; Claude Opus 5.5 and Sonnet 5.5), were used to run the numerical simulations, to modify and debug the routines that process Bell and Bell difference samples, to implement the algorithm loops studied in Sec.~\ref{sec:numerics}, and to check the reported numerical results against the code and stored data. The direct sampling of Bell differences, Eq.~\eqref{eq:numerics-convolution}, was suggested by and developed with Claude Opus 5.5. AI tools were also used as writing aids, to prepare draft figures, and to assist with literature searches; all cited works were checked against the original sources. The authors directed and reviewed all AI-assisted material and take full responsibility for the content of this manuscript.
\end{acknowledgments}

\bibliography{bib}

\appendix

\section{Bell-Sampling Identities and Deferred Proofs}
\label{app:bell-sampling}

This appendix collects the deferred Bell-sampling derivations from the main text. 

\subsection{Combinatorial Identities}

Before the main derivations, we collect four identities on $\pi_Y$ and the symplectic form $\expval{\cdot,\cdot}$ that we use repeatedly below.

\begin{lemma}
\label{lemma:useful-properties}
The following properties hold:
\begin{enumerate}
    \item $\forall i,j \in \{0,1\}^{2n}, \;\; (-1)^{\pi_Y(i) + \pi_Y(j) + \pi_Y({i + j})} = (-1)^{\expval{i,j}}$;
    \item Consider $i,j \in \{0,1\}^{2n}$ such that $\expval{i,j} = 0$. Then $\pi_Y$ is additive in $\mathbb{Z}_2$ in the sense that $(-1)^{\pi_Y({i+j})} = (-1)^{\pi_Y(i)+\pi_Y(j)} $
    \item $\forall i \in \{0,1\}^{2n},\;\;\sum_{j\in \{0,1\}^{2n}}(-1)^{\expval{i,j}} = 4^n\delta_{i0}$;
    \item $\forall i \in \{0,1\}^{2n},\;\;\sum_{j\in \{0,1\}^{2n}}(-1)^{\pi_Y(j) + \expval{i,j}} = 2^n(-1)^{\pi_Y(i)}$;
\end{enumerate}
\end{lemma}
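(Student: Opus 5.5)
We prove the four identities in order, working throughout with the binary-symplectic labels $i=(i_X,i_Z)$, $\pi_Y(i)=i_X\cdot i_Z \bmod 2$, and $\expval{i,j}=i_X\cdot j_Z+i_Z\cdot j_X \bmod 2$.

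\textbf{Property 1.} This is a direct expansion. Write
\begin{equation}
\pi_Y(i+j)=(i_X+j_X)\cdot(i_Z+j_Z)=i_X\cdot i_Z+j_X\cdot j_Z+i_X\cdot j_Z+j_X\cdot i_Z \pmod 2 .
\end{equation}
Hence $\pi_Y(i)+\pi_Y(j)+\pi_Y(i+j)\equiv 2\pi_Y(i)+2\pi_Y(j)+\expval{i,j}\equiv\expval{i,j}\pmod 2$. Exponentiating $-1$ gives the claim.

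\textbf{Property 2.} This is the special case $\expval{i,j}=0$ of Property~1. In that case $(-1)^{\pi_Y(i+j)}=(-1)^{\pi_Y(i)+\pi_Y(j)}$, using that signs are $\pm1$ and therefore self-inverse.

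\textbf{Property 3.} This is the standard character orthogonality on $\mathbb{F}_2^{2n}$. The map $j\mapsto\expval{i,j}$ is a linear functional. It is nonzero whenever $i\neq0$, because the symplectic form is nondegenerate (concretely, $\expval{i,j}=i'\cdot j$ with $i'=(i_Z,i_X)$). A nonzero linear functional takes the values $0$ and $1$ equally often, so the sum vanishes. For $i=0$ every term equals $1$ and the sum is $4^n$.

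\textbf{Property 4.} This is the only step requiring a small computation, and I expect it to be the main (though mild) obstacle. The sum factorizes over qubits, since $\pi_Y$ and $\expval{\cdot,\cdot}$ are both sums of single-qubit terms. It therefore suffices to check the identity for $n=1$:
\begin{equation}
\sum_{x,z\in\{0,1\}}(-1)^{xz+i_Xz+i_Zx}=2(-1)^{i_Xi_Z}.
\end{equation}
For the verification, first sum over $x$ for each fixed $z$. The inner sum is $\sum_x(-1)^{x(z+i_Z)}=2\,[z=i_Z]$. This leaves the single term $2(-1)^{i_Xi_Z}$. Taking the product of the $n$ single-qubit factors gives $2^n(-1)^{\pi_Y(i)}$.
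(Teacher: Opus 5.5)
Your proof is correct, and it takes a somewhat different route from the paper's for Properties 1 and 4. For Property 1, the paper checks the one-qubit case with the Pauli multiplication table: two single-qubit Paulis anticommute exactly when $\{\sigma_a,\sigma_b,\sigma_{a+b}\}=\{X,Y,Z\}$, a set containing exactly one $Y$. It then adds the qubit contributions. You instead expand $\pi_Y(i+j)$ bilinearly over $\mathbb{F}_2$. This proves the identity for all $n$ at once with no case analysis, and it shows directly that $\expval{\cdot,\cdot}$ is the polarization of the quadratic form $\pi_Y$.

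For Property 4, the paper factorizes only the $i=0$ sum, $\sum_j(-1)^{\pi_Y(j)}=2^n$. It handles general $i$ by substituting Property 1 for $(-1)^{\expval{i,j}}$ and shifting $j\mapsto i+j$. You factorize the general-$i$ sum and evaluate the single-qubit factor directly. The paper's route shows that Property 4 follows from Property 1: $(-1)^{\pi_Y}$ is its own symplectic Fourier transform up to the factor $2^n$ precisely because its polarization is the symplectic form. Your route is a self-contained computation that does not depend on Property 1.

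Property 3 is the same argument in both: a nonzero linear functional, here $j\mapsto\expval{i,j}$ for $i\neq0$, is balanced.
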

\begin{proof}\leavevmode\par

\begin{description}
\item[\textit{Property 1}] First, we show that this statement is true for one qubit: $a,b \in \{0,1\}^2$. $\sigma_a$ and $\sigma_b$ anticommute if and only if $\{\sigma_a, \sigma_b, \sigma_{a+b}\} = \{X,Y,Z\}$; in this case, $\pi_Y(\sigma_a) + \pi_Y(\sigma_b) + \pi_Y(\sigma_{a+b}) = 1$. Otherwise $\pi_Y(\sigma_a) + \pi_Y(\sigma_b) + \pi_Y(\sigma_{a+b}) = 0 \text{ or } 2$. Therefore the statement is true for one qubit.

Hence, this is true for all qubits because we can add the contributions from all the qubits in a string together. Given $i \in \{0,1\}^{2n}$, denote the $k$-th qubit component by $i_k \in \{0,1\}^2$, consisting of the $k$-th and $(k+n)$-th bits of $i$. So $(-1)^{\pi_Y(i) + \pi_Y(j) + \pi_Y(i+j)} = \prod_{k=1}^{n} (-1)^{\pi_Y(i_k) + \pi_Y(j_k) + \pi_Y((i+j)_k)} = \prod_{k=1}^{n} (-1)^{\expval{i_k, j_k}} = (-1)^{\expval{i,j}}$.
\item[\textit{Property 2}] The second property is a special case of the first property in the event that $\sigma_i$ and $\sigma_j$ commute.
\item[\textit{Property 3}] The third property comes from the fact that for any Pauli $\sigma_i$ except the identity, exactly half of all the possible Paulis $\sigma_j$ commute with it, while the other half anticommute.
\item[\textit{Property 4}] First we establish:
\begin{equation}
\sum_{i \in \mathcal{P}^n} (-1)^{\pi_Y(i)} = \sum_{i \in \mathcal{P}^n} (-1)^{\sum_{k=1}^{n} \pi_Y(i_k)} = \prod_{k=1}^{n} \biggl( \sum_{i_k \in \{0,1\}^2} (-1)^{\pi_Y(i_k)} \biggr) = 2^n
\end{equation}
With this, we find:
\begin{equation}
\begin{aligned}
  \sum_{j \in \mathcal{P}^n} (-1)^{\pi_Y(j) + \expval{i,j}} &= \sum_{j \in \mathcal{P}^n} (-1)^{\pi_Y(j) + \pi_Y(i) + \pi_Y(j) + \pi_Y(i+j)} \\
  &= (-1)^{\pi_Y(i)} \sum_{j \in \mathcal{P}^n} (-1)^{\pi_Y(i+j)} \\
  &= 2^n (-1)^{\pi_Y(i)}
\end{aligned}
\end{equation}
\end{description}
\end{proof}

\subsection{Bell Sampling}

We first derive Eq.~\eqref{eq:p-bell}. Expanding both copies of $\rho$ in Paulis and contracting the resulting tensor network collapses the double sum over $(j,k)$ to a single sum over $j$:
\begin{equation}\label{eq:p-bell-derivation}
\begin{aligned}
  P_{\mathrm{Bell}}\left( i \right) &  = \tr \left( {\ket{{\sigma _{i}}}\bra{{\sigma _{i}}}\left( {\rho  \otimes \rho } \right)} \right) \\
   &  = \sum_{j,k \in \left\{ {0,1} \right\}^{2n}} {{a_{j}}{a_{k}}\tr \left( \ket{\sigma_i}\bra{\sigma_i}({\sigma _j}\otimes{\sigma_k}) \right)}  \\
   & = \frac{1}{2^n}\sum_{j,k \in \left\{ {0,1} \right\}^{2n}} \left(a_{j}a_{k}\vcenter{\hbox{\includegraphics[height=4em]{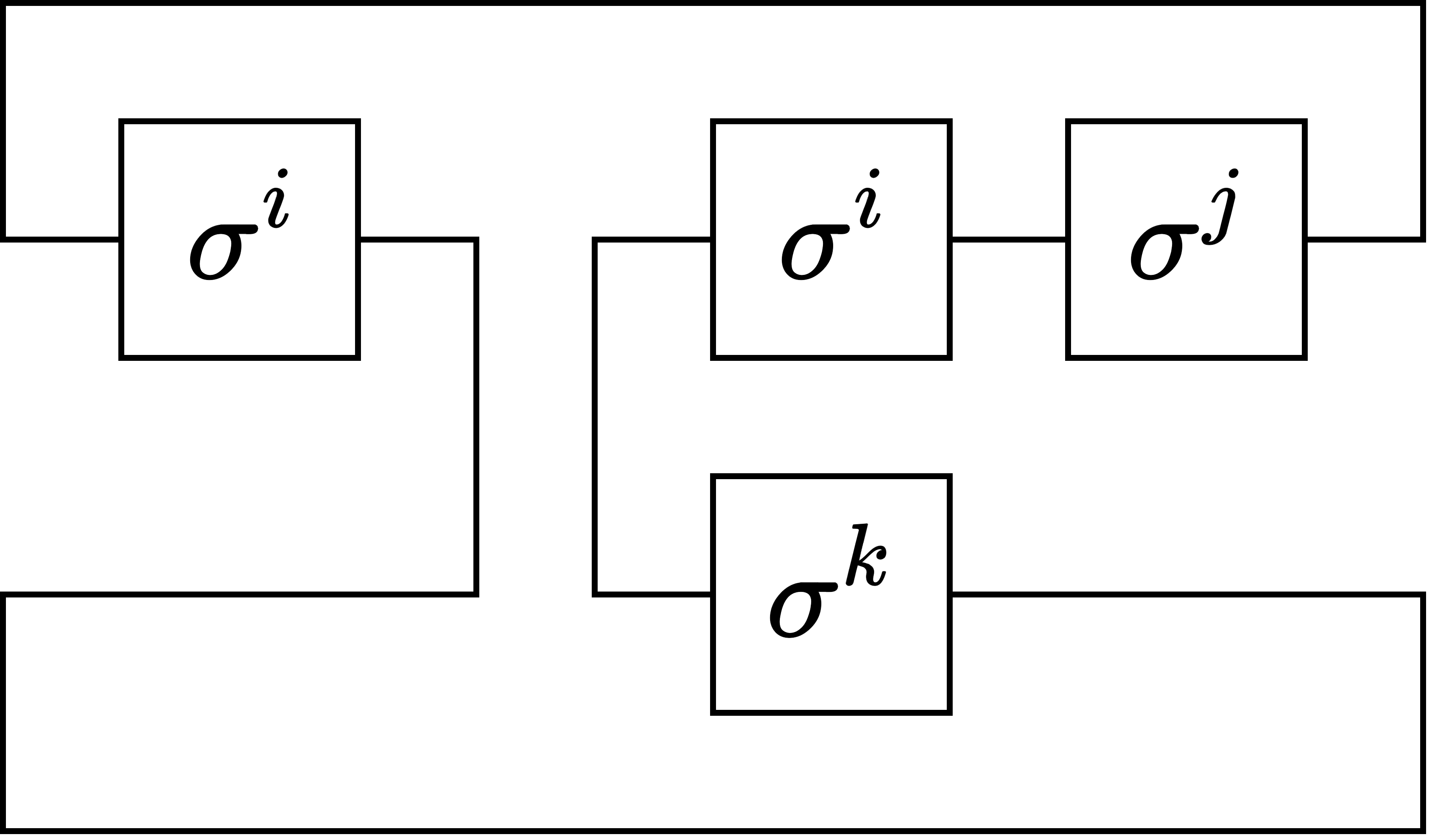}}}\right) \\
   & = \frac{1}{2^n}\sum_{j,k \in \left\{ {0,1} \right\}^{2n}} \left(a_{j}a_{k}\vcenter{\hbox{\includegraphics[height=2em]{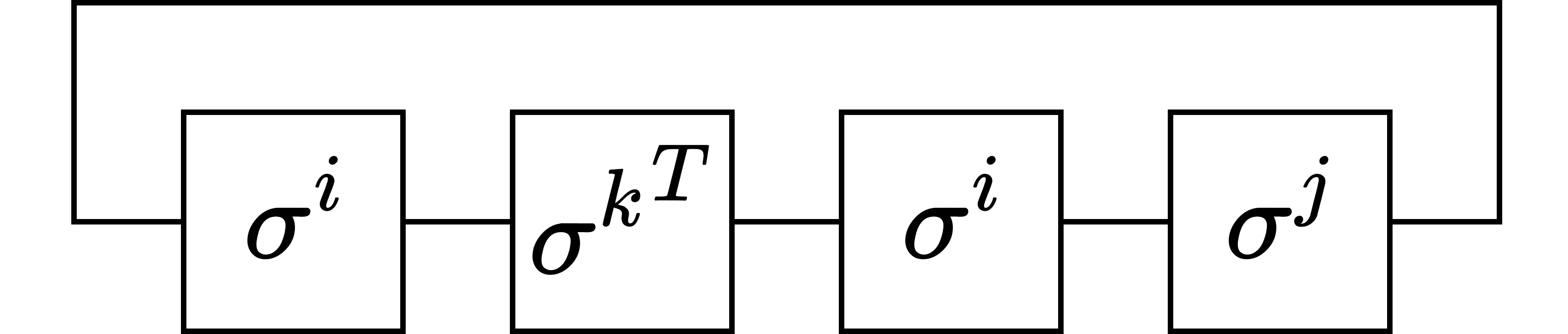}}}\right) \\
   &  = \frac{1}{2^n}\sum_{j,k \in \left\{ {0,1} \right\}^{2n}} {2^n{a_{j}}{a_{k}}{(-1)^{\pi_Y\left(k \right) + \left\langle {i,k} \right\rangle }}{\delta_{kj}}}  \\
   &  = \sum_{j \in \left\{ {0,1} \right\}^{2n}} {{(-1)^{\pi_Y\left(j \right) + \left\langle {i,j} \right\rangle }}a_{j}^{2}}.  \\
\end{aligned}
\end{equation}

Bell sampling, through a filtered expectation, gives access to the projected purity $\gamma_{\tilde{S}}$. We restate Lemma~\ref{lemma:projected-purity} from the main text and give its proof.

\begin{restated}{Lemma~\ref{lemma:projected-purity}}
For a commuting candidate group $\tilde{S} = \expval{g_1, g_2, \ldots, g_m}$, Bell sampling gives
\begin{equation}
\expval{(-1)^{\pi_Y(i)} \prod_{k=1}^{m} \left[(-1)^{\pi_Y(g_k) + \expval{g_k,i}} = 1\right]}_{i \sim \mathrm{Bell}} = \gamma_{\tilde{S}}
\end{equation}
\end{restated}
\begin{proof}
Property numbers below refer to Lemma~\ref{lemma:useful-properties}. Here $g_1,\ldots,g_m$ form an independent basis of $\tilde S$, so $|\tilde S|=2^m$ and expanding the product gives each $g\in\tilde S$ exactly once.
\begin{equation}
\label{eq:projected-purity}
\begin{aligned}
\expval{(-1)^{\pi_Y(i)} \prod_{k=1}^{m} \left[(-1)^{\pi_Y(g_k) + \expval{g_k,i}} = 1\right]}_{i \sim \mathrm{Bell}}
  &= \expval{(-1)^{\pi_Y(i)} \prod_{k=1}^{m} \tfrac{1}{2}\bigl(1 + (-1)^{\pi_Y(g_k) + \expval{g_k,i}}\bigr)}_{i \sim \mathrm{Bell}} \\
  &= \frac{1}{2^m} \sum_{g \in \tilde{S}} \expval{(-1)^{\pi_Y(i) + \pi_Y(g) + \expval{g,i}}}_{i \sim \mathrm{Bell}}
  && \text{\small[Prop.~2]} \\
  &= \frac{2^n}{2^m} \sum_{g \in \tilde{S}} \sum_{i \in \mathcal{P}^n} (-1)^{\expval{g,i}} a_i^2
  && \text{\small[Props.~1, 4]} \\
  &= 2^n \sum_{i \in \mathcal{P}^n} \biggl(\frac{1}{2^m} \sum_{g \in \tilde{S}} (-1)^{\expval{g,i}}\biggr) a_i^2 \\
  &= 2^n \sum_{i \in \mathcal{P}^n} \bigl[\sigma_i \in C(\tilde{S})\bigr]\, a_i^2 \\
  &= \gamma_{\tilde{S}}.
\end{aligned}
\end{equation}
\end{proof}

\subsection{Bell Difference Sampling}

To justify the second equality of Eq.~\eqref{eq:bell-diff-dist}, substitute $P_{\mathrm{Bell}}$ from Eq.~\eqref{eq:p-bell}:
\begin{equation}
\begin{aligned}
P_{\mathrm{BellDiff}}(i) &= \sum_{j \in \mathcal{P}^n} P_{\mathrm{Bell}}(j)\, P_{\mathrm{Bell}}(i+j) \\
&= \sum_{j,k,l \in \mathcal{P}^n} (-1)^{\pi_Y(k) + \pi_Y(l) + \expval{j,k} + \expval{i+j,l}} a_k^2 a_l^2 \\
&= \sum_{k,l \in \mathcal{P}^n} (-1)^{\pi_Y(k) + \pi_Y(l) + \expval{i,l}} a_k^2 a_l^2 \sum_{j \in \mathcal{P}^n} (-1)^{\expval{j,k+l}} \\
&= 4^n \sum_{k \in \mathcal{P}^n} (-1)^{\expval{i,k}} a_k^4
&& \text{\small[Prop.~3 of Lemma~\ref{lemma:useful-properties}]}.
\end{aligned}
\end{equation}

Analogous to Lemma~\ref{lemma:projected-purity} for Bell sampling, Bell difference sampling admits a closed form for the probability that a sample lands in $C(H)$:

\begin{lemma}
\label{lemma:projected-quartic-purity}
For any Pauli label subgroup $H \leqslant \mathcal{P}^n$, whether commuting or not,
\begin{equation}
\label{eq:bell-diff-prob-group}
\mathbb{P}_{b \sim \mathrm{BellDiff}}[b \in C(H)] = \frac{16^n}{|H|} \sum_{i \in H} a_i^4.
\end{equation}
\end{lemma}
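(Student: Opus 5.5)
The plan is to start from the closed form of the Bell difference distribution, Eq.~\eqref{eq:bell-diff-dist}, $P_{\mathrm{BellDiff}}(b)=4^n\sum_{j}(-1)^{\expval{b,j}}a_j^4$, and to sum it over $b\in C(H)$. Exchanging the two finite sums gives
\begin{equation}
\mathbb{P}_{b \sim \mathrm{BellDiff}}[b\in C(H)] = 4^n\sum_{j\in\mathcal P^n} a_j^4 \sum_{b\in C(H)}(-1)^{\expval{b,j}} .
\end{equation}
The remaining step is to evaluate the inner character sum over the subspace $C(H)$. Nothing in this calculation uses that $H$ is commuting, so the statement holds for any subgroup, as claimed.

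The inner sum is a standard orthogonality fact for the symplectic form, a subspace version of Property~3 of Lemma~\ref{lemma:useful-properties}. The map $b\mapsto(-1)^{\expval{b,j}}$ is a character of the group $C(H)$. If $j$ commutes with every element of $C(H)$, this character is trivial and the sum equals $|C(H)|$. Otherwise exactly half of $C(H)$ commutes with $j$, and the sum vanishes. The set of $j$ commuting with all of $C(H)$ is $C(C(H))=H$, using the double-centralizer identity from Sec.~\ref{sec:paulis-cliffords}. Hence the inner sum is $|C(H)|\,[j\in H]$.

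Since $\dim C(H)=2n-\dim H$, we have $|C(H)|=4^n/|H|$. Substituting gives $4^n\cdot\frac{4^n}{|H|}\sum_{j\in H}a_j^4=\frac{16^n}{|H|}\sum_{j\in H}a_j^4$, which is Eq.~\eqref{eq:bell-diff-prob-group}.

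The only point that needs care is the character-sum step. We must argue that a nontrivial character on an $\mathbb F_2$-vector space has balanced fibres: pick $b_0\in C(H)$ with $\expval{b_0,j}=1$; then translation by $b_0$ gives a bijection between the $+1$ and $-1$ elements. We must also justify $C(C(H))=H$, which follows from nondegeneracy of the symplectic form. Both facts were already recorded in the preliminaries, so no real obstacle is expected.
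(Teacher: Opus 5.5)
Your proof is correct and follows essentially the paper's route: sum the Fourier form of $P_{\mathrm{BellDiff}}$ over $C(H)$, exchange the finite sums, and collapse the character sum by orthogonality, with no use of commutativity of $H$. The only difference is that the paper writes the indicator of $C(H)$ as $\frac{1}{|H|}\sum_{m\in H}(-1)^{\langle m,b\rangle}$ and then applies Property~3 over all of $\mathcal{P}^n$, which sidesteps the double-centralizer identity and the count $|C(H)|=4^n/|H|$ that your direct evaluation on $C(H)$ needs.
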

\begin{proof}
Write $H = \expval{g_1, \ldots, g_h}$. Property numbers refer to Lemma~\ref{lemma:useful-properties}.
\begin{equation}
\begin{aligned}
\mathbb{P}_{b \sim \mathrm{BellDiff}}[b \in C(H)]
  &= \sum_{b \in C(H)} P_{\mathrm{BellDiff}}(b) \\
  &= 4^n \sum_{k \in \mathcal{P}^n} a_k^4 \sum_{b \in C(H)} (-1)^{\expval{k,b}} \\
  &= \frac{4^n}{|H|} \sum_{k \in \mathcal{P}^n} a_k^4 \sum_{m \in H} \sum_{b \in \mathcal{P}^n} (-1)^{\expval{k+m,b}} \\
  &= \frac{16^n}{|H|} \sum_{k \in \mathcal{P}^n} a_k^4 \sum_{m \in H} \delta_{k+m,0}
  && \text{\small[Prop.~3]} \\
  &= \frac{16^n}{|H|} \sum_{k \in H} a_k^4.
\end{aligned}
\end{equation}
\end{proof}

Applied to pure states, this pins the support of $P_{\mathrm{BellDiff}}$ inside $C(S)$.

\begin{restated}{Proposition~\ref{prop:pure-state-bell-diff}}
For a pure state $\psi$, $P_{\mathrm{BellDiff}}(i) \neq 0$ only if $i \in C(S)$.
\end{restated}
\begin{proof}
By Observation~\ref{obs:pure-state-stab}, $a_j = 0$ for $j \notin C(S)$, so Eq.~\eqref{eq:bell-diff-dist} restricts to a sum over $j \in C(S)$. Suppose $i \notin C(S)$; pick $g \in S$ with $\expval{i,g} = 1$. The map $j \mapsto j+g$ is a fixed-point-free involution on $C(S)$, and $a_j^4 = a_{j+g}^4$ since $\sigma_g$ stabilizes $\rho$. Since $\expval{i, j+g} = \expval{i, j} + 1$, terms cancel in pairs: $P_{\mathrm{BellDiff}}(i) = 0$.
\end{proof}

\section{Eigenspace-Weight Representation and \texorpdfstring{$\tau$}{tau} Coefficients}

\label{app:fourier}

\subsection{Simultaneous Eigenspaces and \texorpdfstring{$\tau$}{tau} Coefficients}

In the main paper, we have decomposed the density matrix in terms of the Paulis, $\rho=\sum_{i} a_i \sigma_i$. In this section, however, we introduce a more useful representation. Consider a commuting Pauli group $H = \expval{g_1, \ldots, g_h}$. For a bitstring $x \in \{0,1\}^h$, let $\Pi_x$ be the projector onto the simultaneous eigenspace in which $g_k$ has eigenvalue $(-1)^{x_k}$, and define its weight
\begin{equation}
\tau_x = \tr(\Pi_x \rho) = \tr(\Pi_x \rho \Pi_x).
\end{equation}
The $\Pi_x$ decompose the Hilbert space into the joint eigenspaces of $H$, and $\tau_x$ is the weight $\rho$ puts on each: the probability of the joint outcome $x$ when measuring $g_1, \ldots, g_h$. So $\tau_x \geqslant 0$ and $\sum_x \tau_x = 1$. These weights are defined for any $\rho$. Only when $\rho$ has Pauli support in $C(H)$ does it commute with every $\Pi_x$ and become block diagonal, $\rho = \sum_x \Pi_x \rho \Pi_x$; in general it is not.

Expanding $\Pi_x = \prod_{k=1}^h \frac12\bigl(I + (-1)^{x_k} g_k\bigr)$ writes each projector as a combination of the elements of $H$,
\begin{equation}
\Pi_x = \frac{1}{|H|} \sum_{i \in H} (-1)^{x(i)} \sigma_i .
\end{equation}
Here we choose the signs of $\sigma_i$, $i \in H$, so that $\sigma_i \sigma_j = \sigma_{i+j}$, which is possible because $H$ is commuting. Writing $\sigma_i = g_1^{y_1} \cdots g_h^{y_h}$, we set $x(i) = x \cdot y$, the bitwise dot product, thinking of $x$ as a function on $H$. Hence
\begin{equation}
\label{eq:a-to-tau}
\tau_x = \tr(\rho \Pi_x) = \frac{1}{|H|}\sum_{i \in H}(-1)^{x(i)} \tr(\rho \sigma_i) = \frac{2^n}{|H|}\sum_{i \in H}(-1)^{x(i)} a_i ,
\end{equation}
a Fourier transform over $H$. Since $\sum_x (-1)^{x(i+j)} = |H|\,\delta_{ij}$, it inverts to
\begin{equation}
a_i = \frac{1}{2^n}\sum_{x\in\{0,1\}^h} (-1)^{x(i)}\tau_x, \qquad i \in H.
\end{equation}
\subsection{BellDiff Mass in the \texorpdfstring{$\tau$}{tau} Representation}

With this expression, we can convert relations we found in the main text into this formalism. In particular, we are interested in the probability that a given Bell difference sample lies in $C(H)$ via Eq.~\eqref{eq:bell-diff-prob-group}:
\begin{equation}
\label{eq:bell-diff-prob-group-tau}
\begin{aligned}
\mathbb{P}_{b \sim \mathrm{BellDiff}}[b \in C(H)] &= \frac{16^n}{|H|} \sum_{i \in H} a_i^4 \\
  &= \frac{16^n}{|H|} \sum_{i \in H} \sum_{x,y,z,w \in \{0,1\}^h} \frac{1}{16^n} (-1)^{(x+y+z+w)(i)} \tau_x \tau_y \tau_z \tau_w \\
  &= \frac{1}{|H|} \sum_{x,y,z,w \in \{0,1\}^h} \biggl( \sum_{i \in H} (-1)^{(x+y+z+w)(i)} \biggr) \tau_x \tau_y \tau_z \tau_w \\
  &= \frac{1}{|H|} \sum_{x,y,z,w \in \{0,1\}^h} |H|\, \delta(x+y+z+w)\, \tau_x \tau_y \tau_z \tau_w \\
  &= \sum_{x,y,z,w \in \{0,1\}^h} \tau_x \tau_y \tau_z \tau_w\, \delta(x+y+z+w)
\end{aligned}  
\end{equation}
where the $\delta$ function takes the value 1 if and only if its argument is zero, and 0 otherwise.
Alternatively, we can also write:
\begin{equation}
\label{eq:bell-diff-prob-group-tau-conv}
\begin{aligned}
\mathbb{P}_{b \sim \mathrm{BellDiff}}[b \in C(H)] &= \sum_{x,z,s,t \in \{0,1\}^h} \tau_x \tau_{x+s} \tau_z \tau_{z+t} [s = t] \\
  &= \sum_{x,z,s \in \{0,1\}^h} \tau_x \tau_{x+s} \tau_z \tau_{z+s} \\
  &= \sum_{s \in \{0,1\}^h} \biggl( \sum_{x \in \{0,1\}^h} \tau_x \tau_{x+s} \biggr)^2 \\
  &= \norm{\tau * \tau}_2^2
\end{aligned}
\end{equation}
Here $*$ is ordinary XOR convolution, $(f*g)(s)=\sum_x f(x)g(s+x)$, with no normalization factor.

\subsection{Pauli Twirling and \texorpdfstring{$\gamma_H$}{gamma\_H}}

Recall from Sec.~\ref{sec:purity-norm} that the Pauli twirl $\mathcal{T}_H$ is a projector onto Paulis in $C(H)$: $\mathcal{T}_H(\rho) = \sum_{i \in C(H)} a_i \sigma_i$, so $\gamma_H = \tr(\mathcal{T}_H(\rho)^2)$. Here we describe $\mathcal{T}_H$'s action on the block structure of $\rho$ in the simultaneous eigenspace basis of $H$.
For a commuting Pauli group $H$, $\rho$ is not necessarily block diagonal in the joint eigenspace decomposition of its generators. In this case,
\begin{equation}
\rho = \sum_{x,y \in \{0,1\}^h} \Pi_x \rho \Pi_y
\end{equation}
where $\Pi_x \rho \Pi_y$ represents each submatrix in this form. What would $\mathcal{T}_H$ do to each term? For $i \in H$, $\sigma_i \Pi_x = \Pi_x \sigma_i = (-1)^{x(i)} \Pi_x$, since $\sigma_i$ acts as the sign $(-1)^{x(i)}$ on the eigenspace of $\Pi_x$. Averaging over $H$ then gives
\begin{equation}
\begin{aligned}
  \mathcal{T}_H(\Pi_x \rho \Pi_y) &= \frac{1}{|H|} \sum_{i \in H} \sigma_i \Pi_x \rho \Pi_y \sigma_i \\
  &= \frac{1}{|H|} \Pi_x \rho \Pi_y \sum_{i \in H} (-1)^{(x+y)(i)} \\
  &= \Pi_x \rho \Pi_y\, \delta_{xy}
\end{aligned} 
\end{equation}
In other words, $\mathcal{T}_H$ dephases all the off-diagonal blocks of the density matrix. This gives the following proposition:
\begin{proposition}
\label{prop:twirl-dephase}
$\mathcal{T}_H (\rho) = \sum_{x \in \{0,1\}^h} \Pi_x \rho \Pi_x$. 
\end{proposition}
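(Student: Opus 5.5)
The plan is to reduce the claim to the block computation carried out just above, and then sum over blocks. We start from the resolution of the identity furnished by the joint eigenprojectors of $H$. Since the $\Pi_x$ are orthogonal projectors summing to $I$, we have
\begin{equation}
\rho = \Bigl(\sum_x \Pi_x\Bigr)\rho\Bigl(\sum_y \Pi_y\Bigr) = \sum_{x,y} \Pi_x\rho\Pi_y .
\end{equation}
By linearity of $\mathcal T_H(\rho)=\frac{1}{|H|}\sum_{i\in H}\sigma_i\rho\sigma_i$, it then suffices to compute $\mathcal T_H$ on each block $\Pi_x\rho\Pi_y$.

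For a single block, fix the sign convention $\sigma_i\sigma_j=\sigma_{i+j}$ on $H$, which is possible because $H$ is commuting. Global phases cancel in the conjugation $\sigma_i(\cdot)\sigma_i$, so this choice does not change the twirl. With this convention, each $\sigma_i$ with $i\in H$ acts on the range of $\Pi_x$ as the scalar $(-1)^{x(i)}$. Hence $\sigma_i\Pi_x=\Pi_x\sigma_i=(-1)^{x(i)}\Pi_x$. It follows that
\begin{equation}
\sigma_i\Pi_x\rho\Pi_y\sigma_i=(-1)^{(x+y)(i)}\Pi_x\rho\Pi_y .
\end{equation}

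Averaging over $H$ uses the character orthogonality $\frac{1}{|H|}\sum_{i\in H}(-1)^{(x+y)(i)}=\delta_{xy}$. This holds because $i\mapsto(-1)^{(x+y)(i)}$ is a nontrivial character of $H\cong\mathbb F_2^h$ whenever $x\neq y$. This is the same identity already used to invert Eq.~\eqref{eq:a-to-tau}. Summing the blocks then gives $\mathcal T_H(\rho)=\sum_x\Pi_x\rho\Pi_x$.

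The only point needing care is consistency between definitions. One should check that this $\mathcal T_H$ coincides with the Pauli-basis description $\sum_{i\in C(H)}a_i\sigma_i$ used in Sec.~\ref{sec:purity-norm}. It does, since both are the same averaging formula. One should also note that the result does not depend on the chosen generators $g_k$: relabeling $x$ merely permutes the projectors. I expect no real obstacle beyond this bookkeeping.
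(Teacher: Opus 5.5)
Your proposal is correct and follows the paper's argument essentially verbatim. You decompose $\rho=\sum_{x,y}\Pi_x\rho\Pi_y$, use $\sigma_i\Pi_x=\Pi_x\sigma_i=(-1)^{x(i)}\Pi_x$ together with character orthogonality on $H$ to get $\mathcal T_H(\Pi_x\rho\Pi_y)=\delta_{xy}\,\Pi_x\rho\Pi_y$, and sum over blocks, exactly as the paper does. Your extra remarks on the sign convention (only $\pm1$ signs arise, so the twirl is unchanged) and on independence from the choice of generators are harmless bookkeeping that the paper leaves implicit.
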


The block form also compares projected purity with the collision probability of $\tau$:
\begin{proposition}
\label{prop:tau^2-lower-bound}
$\sum_{x \in \{0,1\}^h} \tau_x^2 \geqslant \gamma_H$.
\end{proposition}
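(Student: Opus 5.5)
We need to prove $\sum_x \tau_x^2 \geqslant \gamma_H$, where $\tau_x=\tr(\Pi_x\rho\Pi_x)$ and, by Proposition~\ref{prop:twirl-dephase}, $\gamma_H=\tr(\mathcal T_H(\rho)^2)=\sum_x \tr\bigl((\Pi_x\rho\Pi_x)^2\bigr)$. The cross terms vanish because $\Pi_x\Pi_y=\delta_{xy}\Pi_x$, so the twirled state is block diagonal with orthogonal blocks. The plan is to compare blockwise: show $\tr(\rho_x^2)\leqslant(\tr\rho_x)^2=\tau_x^2$ for each block $\rho_x\equiv\Pi_x\rho\Pi_x$, then sum over $x$.

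For the blockwise inequality, first note that $\rho_x$ is positive semidefinite, being a compression of the PSD operator $\rho$. Let $\mu_k\geqslant0$ be its eigenvalues. Then
\begin{equation}
\tr(\rho_x^2)=\sum_k\mu_k^2\leqslant\Bigl(\sum_k\mu_k\Bigr)^2=\tau_x^2 .
\end{equation}
This is just the fact that the purity of an unnormalized PSD operator is at most its squared trace. Summing over $x$ gives $\gamma_H=\sum_x\tr(\rho_x^2)\leqslant\sum_x\tau_x^2$, as claimed.

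The only step needing care is the identification $\gamma_H=\sum_x\tr(\rho_x^2)$. It requires two facts:
\begin{itemize}
\item the Pauli-basis definition $\gamma_H=2^n\sum_{i\in C(H)}a_i^2$ agrees with the Hilbert--Schmidt norm $\|\mathcal T_H(\rho)\|_{\mathrm{HS}}^2$, which was established in Eq.~\eqref{eq:purity-as-norm};
\item the orthogonality of distinct blocks, which kills the cross terms $\tr(\rho_x\rho_y)$ for $x\neq y$.
\end{itemize}
Neither presents a real obstacle, so the proof is short.
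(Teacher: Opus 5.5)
Your proof is correct and follows essentially the same route as the paper: both use Proposition~\ref{prop:twirl-dephase} to write $\gamma_H=\sum_x\tr\bigl((\Pi_x\rho\Pi_x)^2\bigr)$ and then bound each positive semidefinite block by $\tr\bigl((\Pi_x\rho\Pi_x)^2\bigr)\leqslant\bigl(\tr(\Pi_x\rho\Pi_x)\bigr)^2=\tau_x^2$. Your explicit remarks on why the cross terms vanish and why each block is positive semidefinite fill in steps the paper leaves implicit.
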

\begin{proof}
$\gamma_H = \tr \left(\mathcal{T}_H (\rho)^2\right) = \sum_x \tr\left((\Pi_x\rho\Pi_x)^2\right) \leqslant \sum_x \left(\tr(\Pi_x\rho\Pi_x)\right)^2 = \sum_x\tau_x^2$.
\end{proof}
This directly leads to a simple lower bound for $\mathbb{P}_{b\sim{\text{BellDiff}}}\left[b \in C(H)\right]$:
\begin{proposition}
\label{prop:belldiff-lower-bound}
$\mathbb{P}_{b\sim{\text{BellDiff}}}\left[b \in C(H)\right] \geqslant \gamma_H^2$    
\end{proposition}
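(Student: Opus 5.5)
The plan is to combine the two representations just established: the exact formula for the Bell-difference mass as a squared $\ell_2$ norm of a self-convolution, and the collision-probability bound of Proposition~\ref{prop:tau^2-lower-bound}. By Eq.~\eqref{eq:bell-diff-prob-group-tau-conv},
\begin{equation*}
\mathbb{P}_{b\sim\mathrm{BellDiff}}[b\in C(H)] = \sum_{s\in\{0,1\}^h}\Bigl(\sum_{x}\tau_x\tau_{x+s}\Bigr)^2 .
\end{equation*}
Every summand is a square of a nonnegative number, since $\tau_x\geqslant 0$. I will therefore drop all terms except $s=0$ and obtain the lower bound $\bigl(\sum_x\tau_x^2\bigr)^2$.

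Next I apply Proposition~\ref{prop:tau^2-lower-bound}, which gives $\sum_x\tau_x^2\geqslant\gamma_H$. Both sides are nonnegative, so squaring preserves the inequality, and we get $\mathbb{P}[b\in C(H)]\geqslant\gamma_H^2$. This requires $H$ to be commuting, because the eigenspace projectors $\Pi_x$ and the $\tau$ representation are only defined in that case. That matches the hypothesis of Lemma~\ref{lemma:prob-diff-lower}, which this proposition proves.

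I do not expect a genuine obstacle. The only points to check are that the convolution identity already uses the sign convention $\sigma_i\sigma_j=\sigma_{i+j}$ on $H$ (it does, through Eq.~\eqref{eq:a-to-tau}), and that discarding the $s\neq0$ terms is legitimate, which follows from nonnegativity of the $\tau$ weights.
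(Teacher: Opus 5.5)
Your proposal is correct and is exactly the paper's argument: keep only the $s=0$ term of $\|\tau*\tau\|_2^2$, which equals $\bigl(\sum_x\tau_x^2\bigr)^2$, and then square the collision bound $\sum_x\tau_x^2\geqslant\gamma_H$. As a minor point, discarding the $s\neq0$ terms needs only that each is the square of a real number, so nonnegativity of $\tau$ is not required for that step.
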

\begin{proof}
Note that $(\tau * \tau)(0) = \sum_{x \in \{0,1\}^h} \tau_x^2$. So $\mathbb{P}_{b\sim{\text{BellDiff}}}\left[b \in C(H)\right] = \norm{\tau * \tau}_2^2 \geqslant \left((\tau * \tau)(0)\right)^2 = \left(\sum_{x \in \{0,1\}^h} \tau_x^2\right)^2  \geqslant \gamma_H^2$, where the last inequality is Proposition~\ref{prop:tau^2-lower-bound}.
\end{proof}

\subsection{Purity Loss and Trace Distance}

To justify Eq.~\eqref{eq:purity-trace-distance}, let $\Delta\gamma_\psi(H)$ denote the purity loss of the pure state $\ket{\psi}$, with $\tau_x=\bra{\psi}\Pi_x\ket{\psi}$. Each block $\Pi_x\proj{\psi}\Pi_x$ has rank at most one, so $\Delta\gamma_\psi(H)=1-\sum_x\tau_x^2$. The closest pure state stabilized by $H$ is the normalized projection onto a most likely joint eigenspace. Since $(\max_x\tau_x)^2\leqslant\sum_x\tau_x^2\leqslant\max_x\tau_x$,
\begin{equation}
\frac{\Delta\gamma_\psi(H)}{2}
\leqslant\inf_{\phi:\,H\subseteq\mathrm{Stab}(\phi)}D_{\rm tr}(\psi,\phi)^2
=1-\max_x\tau_x
\leqslant\Delta\gamma_\psi(H).
\end{equation}
For the noisy state $\rho$, Eq.~\eqref{eq:pauli-fidelities} gives $1-2\lambda\leqslant f_i\leqslant1$ when the total Pauli error probability is $\lambda<1/2$. Applying these bounds to the squared Pauli coefficients outside $C(H)$ gives
\begin{equation}
(1-2\lambda)^2\Delta\gamma_\psi(H)
\leqslant\Delta\gamma(H)
\leqslant\Delta\gamma_\psi(H).
\end{equation}
Combining the two inequalities with $H=\tilde S$ proves Eq.~\eqref{eq:purity-trace-distance}. For depolarizing noise, the exact relation is $\Delta\gamma(H)=(1-\lambda)^2\Delta\gamma_\psi(H)$, so the upper bound instead has denominator $1-\lambda$.

\section{Proof of Lemma~\ref{lemma:belldiff-and-gamma}}
\label{app:density-matrix-property}

We relate the purity lost when extending $H$ to $H'$ to the probability that a Bell difference sample in $C(H)$ falls outside $C(H')$.

Let $H \subseteq H' \subseteq C(H)$ with $H$ commuting, and write $\Delta\gamma=\gamma_H-\gamma_{H'}$. For a Pauli label subgroup $G$, set
\begin{equation}
P_G=\Pr_{b\sim\mathrm{BellDiff}}[b\in C(G)]
=\frac{1}{|G|}\sum_{g\in G}\bigl(\tr(\rho\sigma_g)\bigr)^4,
\qquad R=\frac{P_{H'}}{P_H}.
\end{equation}
The fourth-moment identity is Lemma~\ref{lemma:projected-quartic-purity}; it applies even when $G$ is noncommuting. We prove the bound in two cases.

\subsection{Commuting extensions}

Suppose first that $H'$ is commuting. Choose its generators to extend those of $H$, and let $\Pi_x$ and $\Pi_{xy}$ be the coarse and refined eigenspace projectors. As in Appendix~\ref{app:fourier}, their weights satisfy
\begin{equation}
\tau_x=\tr(\Pi_x\rho),\qquad
\tau_{xy}=\tr(\Pi_{xy}\rho),\qquad
\sum_y\tau_{xy}=\tau_x.
\end{equation}
The two collision quantities we need are
\begin{equation}
\gamma_\tau=\sum_x\tau_x^2,\qquad
\Delta\gamma_\tau=\sum_x\tau_x^2-\sum_{x,y}\tau_{xy}^2.
\end{equation}
They need not equal the corresponding purities of a mixed state. Instead,
\begin{equation}
\label{eq:tau^2-bound}
\gamma_\tau\geqslant\gamma_H,\qquad
\Delta\gamma_\tau\geqslant\Delta\gamma.
\end{equation}
The first inequality is Proposition~\ref{prop:tau^2-lower-bound}. For the second, twirling by $H'$ removes the off-diagonal blocks within each $H$-eigenspace (Proposition~\ref{prop:twirl-dephase}), so
\begin{equation}
\label{eq:F-norm-sum}
\Delta\gamma
=\sum_x\sum_{y\ne z}\|\Pi_{xy}\rho\Pi_{xz}\|_{\mathrm{HS}}^2
\leqslant\sum_x\sum_{y\ne z}\tau_{xy}\tau_{xz}
=\Delta\gamma_\tau.
\end{equation}
To see the inequality, write $\rho=VV^\dagger$ and $V_{xy}=\Pi_{xy}V$. Then $\|\Pi_{xy}\rho\Pi_{xz}\|_{\mathrm{HS}}=\|V_{xy}V_{xz}^\dagger\|_{\mathrm{HS}}\leqslant\|V_{xy}\|_{\mathrm{HS}}\|V_{xz}\|_{\mathrm{HS}}=\sqrt{\tau_{xy}\tau_{xz}}$.

Write $\tau_H(x)=\tau_x$ and $\tau_{H'}(x,y)=\tau_{xy}$ to distinguish the convolution domains. Equation~\eqref{eq:bell-diff-prob-group-tau-conv} gives
\begin{equation}
\label{eq:ratio-form}
P_H=\|\tau_H*\tau_H\|_2^2,\qquad
P_{H'}=\|\tau_{H'}*\tau_{H'}\|_2^2.
\end{equation}
Marginalization commutes with convolution: $\sum_y(\tau_{H'}*\tau_{H'})(s,y)=(\tau_H*\tau_H)(s)$. Thus
\begin{equation}
\begin{aligned}
P_H-P_{H'}
&=\sum_s\left[\left(\sum_y(\tau_{H'}*\tau_{H'})(s,y)\right)^2
-\sum_y(\tau_{H'}*\tau_{H'})(s,y)^2\right]\\
&\geqslant\gamma_\tau^2-\sum_y(\tau_{H'}*\tau_{H'})(0,y)^2.
\end{aligned}
\end{equation}
Every summand is nonnegative, so we kept only $s=0$. In this sector, the total mass is $\gamma_\tau$, the origin entry is $\gamma_\tau-\Delta\gamma_\tau$, and the remaining entries sum to $\Delta\gamma_\tau$. Bounding the sum of their squares by the square of their sum gives
\begin{equation}
P_H-P_{H'}\geqslant
\gamma_\tau^2-(\gamma_\tau-\Delta\gamma_\tau)^2-(\Delta\gamma_\tau)^2
=2\Delta\gamma_\tau(\gamma_\tau-\Delta\gamma_\tau).
\end{equation}
We also need a bound that stays useful for large collision loss. By Cauchy--Schwarz, every entry satisfies
\begin{equation}
(\tau_{H'}*\tau_{H'})(0,y)
=\sum_{x,z}\tau_{xz}\tau_{x,z+y}
\leqslant\sum_{x,z}\tau_{xz}^2
=\gamma_\tau-\Delta\gamma_\tau.
\end{equation}
Multiplying this entrywise bound by the sector's total mass shows that
\begin{equation}
\label{eq:convolution-defect-bounds}
P_H-P_{H'}\geqslant
\max\left\{2\Delta\gamma_\tau(\gamma_\tau-\Delta\gamma_\tau),\,
\gamma_\tau\Delta\gamma_\tau\right\}.
\end{equation}

Now fix any $0\leqslant\Delta\varepsilon\leqslant\Delta\gamma$. If $\Delta\gamma_\tau\leqslant\gamma_\tau/2$, the quadratic in Eq.~\eqref{eq:convolution-defect-bounds} is increasing up to $\gamma_\tau/2$, and $\Delta\varepsilon\leqslant\Delta\gamma_\tau$ gives
$P_H-P_{H'}\geqslant2\Delta\varepsilon(\gamma_\tau-\Delta\varepsilon)$.
If $\Delta\gamma_\tau\geqslant\gamma_\tau/2$, the second bound gives the same conclusion, since
$\gamma_\tau\Delta\gamma_\tau\geqslant\gamma_\tau^2/2\geqslant2\Delta\varepsilon(\gamma_\tau-\Delta\varepsilon)$.

Finally, $\tau_H*\tau_H$ is a probability distribution, and each entry is at most its origin entry $\gamma_\tau$, again by Cauchy--Schwarz. Hence $P_H\leqslant\gamma_\tau$. Dividing the preceding bound by $P_H$, using $\Delta\varepsilon\leqslant\gamma_H\leqslant\gamma_\tau$, yields
\begin{equation}
1-R\geqslant2\Delta\varepsilon-\frac{2(\Delta\varepsilon)^2}{\gamma_\tau}
\geqslant2\Delta\varepsilon-\frac{2(\Delta\varepsilon)^2}{\gamma_H}.
\end{equation}
This proves Eq.~\eqref{eq:exact-escape-bound} for commuting $H'$. The same estimates also give the global bound $1-R\geqslant\Delta\gamma_\tau\geqslant\Delta\gamma$.

Figure~\ref{fig:tau-manifold} illustrates the relation between the equality sets and the positive-loss region. The bound above quantifies the escape throughout that region without expanding around an equality point.
\begin{figure}[ht]
\centering
\includegraphics[width=0.78\linewidth]{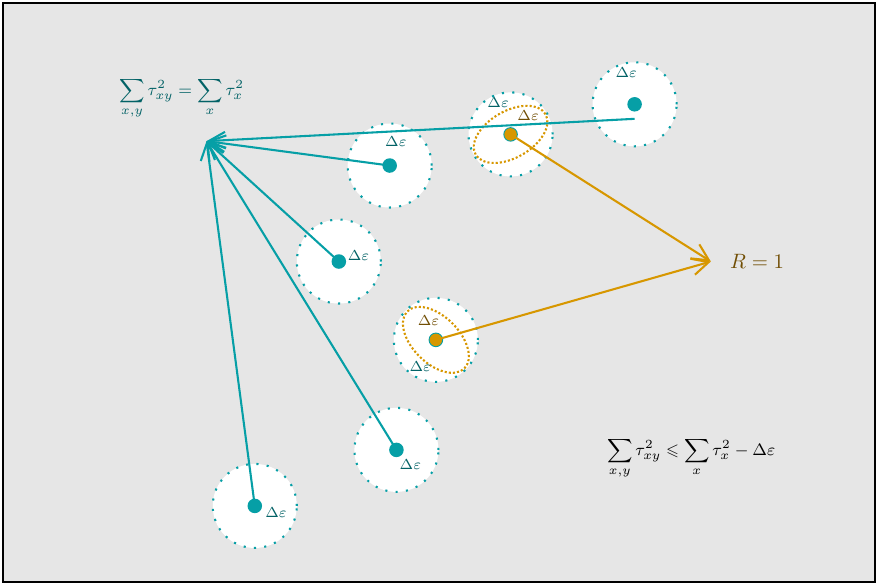}
\caption{Schematic of refined eigenspace weights for commuting $H\subseteq H'$, at fixed coarse weights $\tau_x$ and a small threshold $\Delta\varepsilon>0$. Teal points have $\Delta\gamma_\tau=0$; gold points also have $R=1$. Teal points without gold emphasize that zero collision loss need not imply $R=1$. Teal contours enclose $\Delta\gamma_\tau<\Delta\varepsilon$, while gold contours enclose $1-R<\Delta\varepsilon$ and lie within the teal contours because $1-R\geqslant\Delta\gamma_\tau$. The gray region has $\Delta\gamma_\tau\geqslant\Delta\varepsilon$. Contour labels indicate function values, not distances.}
\label{fig:tau-manifold}
\end{figure}

\FloatBarrier

\subsection{Noncommuting extensions}

Suppose now that $H'$ is noncommuting. Choose anticommuting Hermitian Paulis $\sigma_{i_1},\sigma_{i_2}$ in $H'$ and set $i_3=i_1+i_2$. The three observables $\sigma_{i_a}$ anticommute pairwise and commute with $H$. The label subgroup $L=H+\spn\{i_1,i_2\}$ has four $H$-cosets and is contained in $H'$.

We use only the eigenspaces of $H$. In each block, write
\begin{equation}
\rho_x=\Pi_x\rho\Pi_x,\qquad
\tau_x=\tr\rho_x,\qquad
m_a(x)=\tr(\rho_x\sigma_{i_a}).
\end{equation}
Their expectations obey $\sum_{a=1}^3m_a(x)^2\leqslant\tau_x^2$: for any real unit vector $u$, $(\sum_a u_a\sigma_{i_a})^2=I$, so its expectation in the positive block $\rho_x$ has magnitude at most $\tau_x$. H\"older's inequality therefore gives, for any four block labels,
\begin{equation}
\sum_{a=1}^3\prod_{j=1}^4m_a(x_j)
\leqslant\sum_{a=1}^3\prod_{j=1}^4|m_a(x_j)|
\leqslant\prod_{j=1}^4\left(\sum_{a=1}^3|m_a(x_j)|^4\right)^{1/4}
\leqslant\prod_{j=1}^4\tau_{x_j}.
\end{equation}
Choose compatible Hermitian representatives $\sigma_h$ for $h\in H$, as in Appendix~\ref{app:fourier}. Then
$\tr(\rho\sigma_h\sigma_{i_a})=\sum_x(-1)^{x(h)}m_a(x)$ and
$\tr(\rho\sigma_h)=\sum_x(-1)^{x(h)}\tau_x$.
Expanding their fourth powers and summing over $h$ retains exactly the tuples with $x_1+x_2+x_3+x_4=0$. The preceding inequality gives
\begin{equation}
\sum_{h\in H}\sum_{a=1}^3\bigl(\tr(\rho\sigma_h\sigma_{i_a})\bigr)^4
\leqslant\sum_{h\in H}\bigl(\tr(\rho\sigma_h)\bigr)^4.
\end{equation}
The fourth-moment formula, applied to the four $H$-cosets of $L$, now yields $P_L\leqslant P_H/2$. Since $C(H')\subseteq C(L)$,
\begin{equation}
\label{eq:noncommuting-escape}
R\leqslant\frac{P_L}{P_H}\leqslant\frac12.
\end{equation}
This implies Eq.~\eqref{eq:exact-escape-bound}, because
\begin{equation}
2\Delta\varepsilon-\frac{2(\Delta\varepsilon)^2}{\gamma_H}
=\frac{\gamma_H}{2}-\frac{2}{\gamma_H}\left(\Delta\varepsilon-\frac{\gamma_H}{2}\right)^2
\leqslant\frac12.
\end{equation}

The leading constant $2$ is sharp already for one qubit. For the pure state $\sqrt{p}\ket{0}+\sqrt{1-p}\ket{1}$, take $H=\{I\}$ and $H'=\expval{Z}$. Then $\gamma_H=1$, $\Delta\gamma=2p(1-p)$, and
\begin{equation}
1-R=\frac{1-(2p-1)^4}{2}=2\Delta\gamma-2(\Delta\gamma)^2,
\end{equation}
which attains the bound at $\Delta\varepsilon=\Delta\gamma$.

\section{Uniform Accuracy of the Reused Samples}
\label{app:uniform-verification}

We prove Lemma~\ref{lemma:uniform-bell-estimates} using only a count of filter patterns and Hoeffding's inequality. The point is that the verifier can only ever see finitely many distinct filter patterns on a fixed dataset, so a union bound over patterns stands in for a union bound over the (exponentially many) commuting groups. Write $d=2n+1$. For any $N$ Bell labels, the values of $Q_H$ select their intersection with the affine solution set
\begin{equation}
\{b:\expval{g_k,b}=\pi_Y(g_k)\text{ for every basis generator }g_k\in H\}.
\end{equation}
Each nonempty intersection is determined by at most $d$ of the selected points: take an affine basis of their affine hull. This hull lies in the solution set, so its intersection with the sampled labels is exactly the original selected set. Including the empty pattern, the number of possible filter patterns is therefore at most\footnote{To see the inequality, multiply the sum by $(d/N)^d$, bound it by $(1+d/N)^N$, and use $(1+d/N)^N\leqslant e^d$.}
\begin{equation}
\sum_{j=0}^{d}\binom Nj\leqslant(eN/d)^d\qquad(N\geqslant d).
\end{equation}
Each difference $Q_H-Q_{H'}$ is determined by two filter patterns, so there are at most $(eN/d)^{2d}$ difference patterns. Multiplication by the fixed parity signs does not change the number of patterns. The signed difference takes values in $\{-1,0,1\}$\footnote{In fact, $Q_H-Q_{H'}\in\{0,1\}$, since the constraints defining $Q_{H'}$ include those defining $Q_H$, and hence $Q_{H'}(b)\leqslant Q_H(b)$.} and has mean $\Delta\gamma_{H,H'}$.

Let $\widehat{\Delta\gamma}'_{H,H'}$ be the same estimate on an independent \emph{ghost} set of $M$ Bell samples---a standard symmetrization device that exists only inside this proof. If $M(\Delta\varepsilon_{\mathrm{gap}})^2\geqslant64$, then, for each fixed pair $H\subseteq H'$, Hoeffding's inequality gives
\begin{equation}
\Pr\!\left[|\widehat{\Delta\gamma}'_{H,H'}-\Delta\gamma_{H,H'}|>\Delta\varepsilon_{\mathrm{gap}}/4\right]
\leqslant2e^{-M(\Delta\varepsilon_{\mathrm{gap}})^2/32}\leqslant\tfrac12.
\end{equation}
On the event that some pair has $|\widehat{\Delta\gamma}_{H,H'}-\Delta\gamma_{H,H'}|>\Delta\varepsilon_{\mathrm{gap}}/2$, choose one such pair using only the first dataset. The second estimate is within $\Delta\varepsilon_{\mathrm{gap}}/4$ of its mean with probability at least $1/2$. Throughout, the suprema run over commuting groups. Hence
\begin{equation}
\Pr\!\left[\sup_{H\subseteq H'}|\widehat{\Delta\gamma}_{H,H'}-\Delta\gamma_{H,H'}|>\Delta\varepsilon_{\mathrm{gap}}/2\right]
\leqslant2\Pr\!\left[\sup_{H\subseteq H'}|\widehat{\Delta\gamma}_{H,H'}-\widehat{\Delta\gamma}'_{H,H'}|>\Delta\varepsilon_{\mathrm{gap}}/4\right].
\end{equation}
Now flip a fair coin for each of the $M$ sample pairs and swap the pair's two members on heads; this leaves the joint distribution unchanged. Conditional on the paired observations, each fixed group pair's difference of estimates is an average of independent, mean-zero random variables in $[-2,2]$. Hoeffding's inequality bounds its tail at $\Delta\varepsilon_{\mathrm{gap}}/4$ by $2e^{-M(\Delta\varepsilon_{\mathrm{gap}})^2/128}$. There are at most $(2eM/d)^{2d}$ value patterns on the $2M$ observations, so
\begin{equation}
\label{eq:uniform-bell-tail}
\Pr\!\left[\sup_{H\subseteq H'}|\widehat{\Delta\gamma}_{H,H'}-\Delta\gamma_{H,H'}|>\Delta\varepsilon_{\mathrm{gap}}/2\right]
\leqslant4(2eM/d)^{2d} e^{-M(\Delta\varepsilon_{\mathrm{gap}})^2/128},
\end{equation}
provided $M\geqslant\max\{d,64/(\Delta\varepsilon_{\mathrm{gap}})^2\}$. For example, the explicit choice
\begin{equation}
\label{eq:uniform-bell-sample-choice}
M=\left\lceil\frac{256}{(\Delta\varepsilon_{\mathrm{gap}})^2}
\left[2(2n+1)\ln\frac{2048}{(\Delta\varepsilon_{\mathrm{gap}})^2}+\ln\frac4\delta\right]\right\rceil
\end{equation}
makes the right-hand side at most $\delta$ for all $0<\Delta\varepsilon_{\mathrm{gap}}\leqslant1$ and $0<\delta<1$. This proves the lemma. The constants here are only convenient sufficient choices.

For Algorithm~\ref{alg:stab-ver}, assign the verifier failure probability $\delta_{\mathrm{ver}}$. Its dataset size is then
\begin{equation}
\label{eq:verifier-dataset-cost}
M_{\mathrm{ver}}=O\!\left(
\frac{n\ln(2/\Delta\varepsilon_{\mathrm{gap}})+\ln(1/\delta_{\mathrm{ver}})}
{(\Delta\varepsilon_{\mathrm{gap}})^2}\right).
\end{equation}
The algorithm never enumerates groups; the pattern count lives only in the proof. Each query evaluates the empirical average for its proposed generators.

\section{An Explicit Clean-Mode Schedule}
\label{app:clean-parameters}

The proof of Theorem~\ref{thm:main-clean-depo} keeps the clean mode's target dimensions, update budgets and batch sizes symbolic and works at leading order. Here we make one choice of all of them explicit, for global depolarizing noise with supplied $\lambda$ and $1\leqslant t<n$: the total budget, the target dimensions and attempt caps, the per-update budgets with their verifier thresholds, and the batch sizes. Whether this choice is certified is decided by a single inequality, Eq.~\eqref{eq:clean-check}, computed from the inputs before any data are taken. If it holds, the clean mode returns an $\varepsilon$-approximate weak stabilizer with probability at least $1-\delta-\delta_{\mathrm{ver}}$, with $\delta$ and $\delta_{\mathrm{ver}}$ the finder and verifier budgets of Sec.~\ref{sec:ver-discussion} (Proposition~\ref{prop:clean-schedule}). We then show that the check passes in the regime of the theorem, given a fixed margin. The constants are convenient rather than optimized, so a failed check only means that this particular choice is not certified by our bounds. Throughout, $D=n-t$, the four per-batch failure probabilities $\delta_1,\ldots,\delta_4$ in the proof of Lemma~\ref{lemma:per-batch-clean} are all $\frac{1}{16}$, and $\ell=\ln16$.

We first fix the total budget, the purity it guarantees, and the factor $\kappa$:
\begin{equation}
\label{eq:clean-param-basics}
\varepsilon_{\mathrm{tot}}=\min\left\{\varepsilon,\frac{(1-\lambda)^2}{8}\right\},
\qquad
\gamma_{\min}=(1-\lambda)^2-\varepsilon_{\mathrm{tot}},
\qquad
\kappa=\frac{1}{4+\ln(n+1)},
\end{equation}
where $\kappa$, inverse-logarithmic in $n$, will set both the verifier slack and the reserved floor, and the cap on $\varepsilon_{\mathrm{tot}}$ sits safely below that of Assumption~\ref{asp:accuracy}. As long as the accumulated purity loss stays within $\varepsilon_{\mathrm{tot}}$, Lemma~\ref{lemma:compatibility} applies and $\gamma_{\tilde S}\geqslant\gamma-\varepsilon_{\mathrm{tot}}\geqslant\gamma_{\min}$. By Lemma~\ref{lemma:prob-diff-lower} and Eq.~\eqref{eq:depo-q-b}, each raw Bell difference sample, before post-selection, is then good$^\star$ with probability $p_g^\star\geqslant\gamma_{\min}^2$ and bad$^\star$ with probability $p_b^\star\leqslant q_b\,2^{-\dim\tilde S}$, where $q_b=1-(1-\lambda)^4$.

The main stage aims at the target dimensions
\begin{equation}
\label{eq:clean-targets}
\tilde s_0=0,
\qquad
\tilde s_1=\left\lceil\sqrt{D}\,\right\rceil,
\qquad
\tilde s_{j+1}=D-\left\lfloor\frac{D-\tilde s_j}{2}\right\rfloor
\quad\text{until }\tilde s_J=D .
\end{equation}
The first target plays the role of $\Lambda$ in the proof of Theorem~\ref{thm:main-clean-depo}; after it, each target halves the remaining distance to $D$, so $J\leqslant1+\lceil\log_2D\rceil$. These are guaranteed dimensions, and an update may overshoot them. We count the terminal stage as stage $J+1$, with $\tilde s_{J+1}\equiv D$. Each stage gets at most $T=\lceil\log_4[(J+1)/\delta]\rceil$ batch attempts. The main stage applies at most one update per target and the terminal stage at most $m\equiv\min\{t,T\}$ (Observation~\ref{obs:per-iter-clean-terminal}), so at most $N_{\mathrm{tot}}=J+m$ updates are applied in total.

The per-update budgets follow from how many bad$^\star$ samples each stage can tolerate. To land at $\tilde s_i$, a batch may contain at most $K_{b,i}-1$ bad$^\star$ samples, with $K_{b,i}=D-\tilde s_i+1$ (Eq.~\eqref{eq:clean-mode-bound}); the terminal stage, with $K_{b,J+1}=1$, tolerates none. For an integer $K\geqslant1$, a binomial count $X$ with mean at most
\begin{equation}
\label{eq:clean-mu-max}
\mu_{\max}(K)=\max\left\{\frac{K}{16},\;K-\sqrt{2K\ell}\right\}
\end{equation}
reaches $K$ with probability at most $\frac{1}{16}$: by Markov's inequality for the first entry, and for the second by the Chernoff bound $\Pr[X\geqslant K]\leqslant e^{-(K-\mu)^2/2K}$, valid when the mean is at most $\mu\leqslant K$. Note that $K/16\leqslant\mu_{\max}(K)<K$, $\mu_{\max}(1)=\frac{1}{16}$, and $\mu_{\max}(K)=K[1-O(K^{-1/2})]$. We weight the stages by
\begin{equation}
w_i=\frac{2^{-\tilde s_{i-1}}}{\mu_{\max}(K_{b,i})},
\qquad
W=\sum_{j=1}^{J}w_j+m\,w_{J+1},
\end{equation}
and give each main-stage update ($i\leqslant J$) and each terminal update ($i=J+1$) the budget and thresholds
\begin{equation}
\label{eq:clean-allocation}
\Delta\varepsilon_i=\varepsilon_{\mathrm{tot}}\left[(1-\kappa)\frac{w_i}{W}+\frac{\kappa}{N_{\mathrm{tot}}}\right],
\qquad
\Delta\varepsilon_{i,+}=\Delta\varepsilon_i,
\qquad
\Delta\varepsilon_{i,-}=(1-\kappa)\,\Delta\varepsilon_i .
\end{equation}
The weighted part spends budget where bad$^\star$ samples are still frequent: a stage whose bad$^\star$ samples are rare, or which tolerates many of them, can afford a large batch and gets little. The additive part is the floor, added here rather than taken as a maximum as in Eq.~\eqref{eq:clean-accuracy-floor}. With the terminal budget counted $m$ times,
\begin{equation}
\label{eq:clean-allocation-sum}
\sum_{j=1}^{J}\Delta\varepsilon_j+m\,\Delta\varepsilon_{J+1}=\varepsilon_{\mathrm{tot}},
\qquad
\Delta\varepsilon_i\geqslant\frac{\kappa\,\varepsilon_{\mathrm{tot}}}{N_{\mathrm{tot}}} .
\end{equation}
Every query gap $\kappa\Delta\varepsilon_i$ is therefore at least $\Delta\varepsilon_{\mathrm{gap}}=\kappa^2\varepsilon_{\mathrm{tot}}/N_{\mathrm{tot}}$, and the verifier draws its dataset once, by Eq.~\eqref{eq:uniform-bell-sample-choice} with this gap and $\delta_{\mathrm{ver}}$.

For the batch sizes, let $\bar t=t+\frac12\bigl(\ell+\sqrt{\ell^2+4t\ell}\bigr)$, so that $\bar t/t\to1$ as $t\to\infty$. A batch that starts at dimension at least $\tilde s$ with budget $\Delta\varepsilon$, and hence lower threshold $\Delta\varepsilon_-=(1-\kappa)\Delta\varepsilon$, draws
\begin{equation}
\label{eq:clean-batch-size}
K_g=\left\lceil\max\left\{n-\tilde s+t+4,\;\frac{\bar t}{\Delta\varepsilon_-(1-\Delta\varepsilon_-/\gamma_{\min})}\right\}\right\rceil,
\qquad
M=\left\lceil\frac{K_g+\ell+\sqrt{\ell^2+2K_g\ell}}{\gamma_{\min}^2}\right\rceil
\end{equation}
raw Bell difference samples, and stage $i$ uses $M_i=M(\tilde s_{i-1},\Delta\varepsilon_i)$. The first stage starts before any update, with no purity loss yet, so $M_1$ uses $(1-\lambda)^4$ in place of $\gamma_{\min}^2$. The first entry of $K_g$ is Lemma~\ref{lemma:good-span-S}, since $\tilde\nu+\nu^\star\leqslant n-\tilde s+t$. The second is Eq.~\eqref{eq:good-span-exact-bound} with $\nu^\star\to t$, $\delta\to\frac{1}{16}$, $\gamma_{\tilde S}\to\gamma_{\min}$, and $\Delta\varepsilon\to\Delta\varepsilon_-$, the threshold below which the verifier must approve. These replacements only enlarge the bound, and $\Delta\varepsilon_-/\gamma_{\min}\leqslant\frac17$ because $\Delta\varepsilon_-\leqslant\varepsilon_{\mathrm{tot}}\leqslant\frac{(1-\lambda)^2}{8}\leqslant\frac{\gamma_{\min}}{7}$. Since $p_g^\star\geqslant\gamma_{\min}^2$, and $p_g^\star\geqslant(1-\lambda)^4$ in the first stage, the mean good$^\star$ count is at least $K_g+\ell+\sqrt{\ell^2+2K_g\ell}$, the larger root of $(\mu-K_g)^2=2\mu\ell$, so the Chernoff lower tail $e^{-(\mu-K_g)^2/2\mu}$ leaves fewer than $K_g$ good$^\star$ samples with probability at most $\frac{1}{16}$.

It remains to check the bad$^\star$ counts. In stage $i$, the bad$^\star$ count of a batch has mean at most $M_iq_b\,2^{-\tilde s_{i-1}}=q_bM_iw_i\,\mu_{\max}(K_{b,i})$. The choice is certified if this stays within $\mu_{\max}(K_{b,i})$ at every stage,
\begin{equation}
\label{eq:clean-check}
q_b\max_{1\leqslant i\leqslant J+1}M_iw_i\leqslant1 .
\end{equation}
At $\lambda=0$, $q_b=0$ and the check holds trivially.

We then run Algorithm~\ref{alg:stab-clean} with these parameters. The main stage visits the targets in order, skipping any that an overshoot has already reached. At target $j$, each attempt draws a fresh batch of $M_j$ samples, post-selects it onto $C(\tilde S)$ for the current $\tilde S$, and proposes the whole radical with thresholds $\Delta\varepsilon_{j,\pm}$. An approved update is applied, ending the stage, only if it reaches $\tilde s_j=n-t-\strictFloor{K_{b,j}}$, as in Algorithm~\ref{alg:stab-clean}; one that falls short spends no budget. After $T$ failed attempts we report failure. The terminal stage runs $T$ fresh batches of $M_{J+1}$ samples and applies every nontrivial approved update.

\begin{proposition}
\label{prop:clean-schedule}
If Eq.~\eqref{eq:clean-check} holds, the clean mode with the parameters above returns an $\varepsilon_{\mathrm{tot}}$-approximate, hence $\varepsilon$-approximate, weak stabilizer with probability at least $1-\delta-\delta_{\mathrm{ver}}$. It uses at most $(J+1)T$ batches and at most one verifier query per batch.
\end{proposition}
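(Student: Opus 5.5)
We work on the intersection of the verifier's uniform-accuracy event, which has probability at least $1-\delta_{\mathrm{ver}}$ by Lemma~\ref{lemma:uniform-bell-estimates}, with the finder's success event. Every query has gap $\kappa\Delta\varepsilon_i\geqslant\Delta\varepsilon_{\mathrm{gap}}$, so Theorem~\ref{thm:stab-ver} applies to all queries at once. Each applied update therefore loses at most its $\Delta\varepsilon_{i,+}=\Delta\varepsilon_i$. The main stage applies at most one update per target and the terminal stage at most $m$. By Eq.~\eqref{eq:clean-allocation-sum} the accumulated loss stays within $\varepsilon_{\mathrm{tot}}$ at all times. This gives three consequences. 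First, Lemma~\ref{lemma:compatibility} holds throughout, since $\varepsilon_{\mathrm{tot}}<(1-\lambda)^2/4$. Second, $\gamma_{\tilde S}\geqslant\gamma_{\min}$ at every stage. Third, $p_g^\star\geqslant\gamma_{\min}^2$ and $p_b^\star\leqslant q_b2^{-\dim\tilde S}$, with the first stage using $(1-\lambda)^4$ instead. The batch and query counts follow immediately from the loop structure: at most $T$ batches per stage, $J+1$ stages, and one whole-radical query per batch.

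Next we fix one stage $i$ and one batch. We show that, conditionally on any history that keeps the invariants above, the batch succeeds with probability at least $1-\tfrac14$. We run the four events of the proof of Lemma~\ref{lemma:per-batch-clean}, each with failure probability $\tfrac1{16}$.
\begin{itemize}
\item \emph{Enough good$^\star$ samples.} The mean good$^\star$ count exceeds the larger Chernoff root by the choice of $M$ in Eq.~\eqref{eq:clean-batch-size}, so there are at least $K_g$ good$^\star$ samples.
\item \emph{Few bad$^\star$ samples.} The current dimension is at least $\tilde s_{i-1}$, so the bad$^\star$ mean is at most $M_iq_b2^{-\tilde s_{i-1}}=q_bM_iw_i\mu_{\max}(K_{b,i})\leqslant\mu_{\max}(K_{b,i})$ by Eq.~\eqref{eq:clean-check}. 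Eq.~\eqref{eq:clean-mu-max} then gives fewer than $K_{b,i}$ bad$^\star$ samples.
\item \emph{Good$^\star$ span contains the target.} Lemma~\ref{lemma:good-span-S} applies with the first entry of $K_g$: conditioning on at least $K_g\geqslant\tilde\nu+\nu^\star+4$ good$^\star$ samples gives $S^\star\subseteq B_g^\star$.
\item \emph{Purity coverage.} The exact bound Eq.~\eqref{eq:good-span-exact-bound} applies with $\nu^\star\leqslant t$ and $\Delta\varepsilon_-$. Monotonicity in $\nu^\star$ and $\gamma_{\tilde S}$ lets us replace these by $\bar t$ and $\gamma_{\min}$, giving $\gamma_{C(B_g^\star)}\geqslant\gamma_{\tilde S}-\Delta\varepsilon_{i,-}$.
\end{itemize}
On these events the argument of Lemma~\ref{lemma:per-batch-clean} goes through. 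The radical satisfies $\gamma_{R^\star}\geqslant\gamma_{\tilde S}-\Delta\varepsilon_{i,-}$, so the verifier approves it. Moreover $\dim S^\star/(R^\star\cap S^\star)\leqslant K_{b,i}-1=D-\tilde s_i$, so the update has dimension at least $\tilde s_i$ and is applied. For the terminal stage $K_{b,J+1}=1$, which forces $S^\star\subseteq R^\star$, so $S\subseteq\tilde S$ after that update.

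Finally, we boost across attempts. Within each stage the batches are fresh, and the success bound of $\tfrac34$ holds conditionally on the past, so $T$ failures occur with probability at most $4^{-T}\leqslant\delta/(J+1)$. The main-stage overshoot needs no separate treatment. A target already reached is skipped, and a larger starting dimension only decreases $p_b^\star$ and the required $K_g$, while $\gamma_{\min}$ is unchanged. In the terminal stage, later approved updates after $S\subseteq\tilde S$ keep the containment, and by Lemma~\ref{lemma:compatibility} earlier ones never block it. A union bound over the $J+1$ stages and the verifier gives total failure probability at most $\delta+\delta_{\mathrm{ver}}$. The output then satisfies $S\subseteq\tilde S$ and $\gamma-\gamma_{\tilde S}\leqslant\varepsilon_{\mathrm{tot}}\leqslant\varepsilon$.

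The main obstacle is the second bullet, the bad$^\star$ bound. The check Eq.~\eqref{eq:clean-check} is phrased in terms of $\tilde s_{i-1}$, but during the terminal stage (and after overshoots) the invariants must hold for the actual current $\tilde S$, which is always at least as large. We also need the conditioning on the history to be legitimate. I would handle this by stating a one-batch lemma conditioned on the $\sigma$-algebra of the past, with the current $\tilde S$ and the accumulated loss fixed. This is valid because the batch samples are independent of the verifier data, as noted in Sec.~\ref{sec:ver-discussion}.
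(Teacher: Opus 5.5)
Your proposal is correct and follows the paper's proof essentially step for step: you fix the verifier's uniform event so the accumulated loss stays within $\varepsilon_{\mathrm{tot}}$ on every history, bound each conditioned batch's failure by four events of probability $\tfrac{1}{16}$ each, invoke the argument of Lemma~\ref{lemma:per-batch-clean} to get an applied update (or terminal containment), and boost to $(J+1)4^{-T}\leqslant\delta$. The one detail the paper states that you leave implicit is the terminal case where $R$ adds nothing new and propose rejects; there $S^\star\subseteq R^\star=\tilde S$ already gives $S\subseteq\tilde S$ without any update.
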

\begin{proof}
We begin with the budget. Fix the verifier dataset on the uniform-accuracy event of Lemma~\ref{lemma:uniform-bell-estimates}, which fails with probability at most $\delta_{\mathrm{ver}}$. Each applied update then loses at most its $\Delta\varepsilon_{i,+}$ (Theorem~\ref{thm:stab-ver}). At most one update is applied per target and at most $m$ in the terminal stage, so by Eq.~\eqref{eq:clean-allocation-sum} the accumulated loss stays within $\varepsilon_{\mathrm{tot}}$ on every execution history, and the bounds on $\gamma_{\tilde S}$, $p_g^\star$ and $p_b^\star$ above hold throughout.

Next, we bound the failure of a single batch. Condition on the verifier dataset and on the history before an attempt in stage $i$, so that $\dim\tilde S\geqslant\tilde s_{i-1}$. The attempt fails only if (a) it has fewer than $K_g$ good$^\star$ samples, (b) it has at least $K_{b,i}$ bad$^\star$ samples, or, given enough good$^\star$ samples, (c) their span misses $S^\star$ or (d) the coverage bound fails. Each has probability at most $\frac{1}{16}$: (a) by the choice of $M_i$, (b) by Eqs.~\eqref{eq:clean-mu-max} and~\eqref{eq:clean-check}, and (c) and (d) by the two lemmas behind the entries of $K_g$. A union bound, which needs no independence between the two counts, leaves success probability at least $\frac34$.

On a successful batch, the proof of Lemma~\ref{lemma:per-batch-clean} applies unchanged: the verifier approves the whole radical $R$, and $\tilde S'=R+\tilde S$ satisfies $\dim S^\star/(S^\star\cap\tilde S')\leqslant K_{b,i}-1$. Since $\dim S^\star\geqslant n-t$, in main stage $j$ this gives $\dim\tilde S'\geqslant n-t-(K_{b,j}-1)=\tilde s_j$, so the update is applied. In the terminal stage, $K_{b,J+1}=1$ gives $S\subseteq\tilde S'$; if $R$ adds nothing new, $S\subseteq\tilde S$ already.

Finally, until a stage reaches its goal, $\dim\tilde S\geqslant\tilde s_j$ in the main stage and $S\subseteq\tilde S$ in the terminal stage, each attempt succeeds with conditional probability at least $\frac34$. The stage thus fails with probability at most $4^{-T}$, and the $J+1$ stages together with probability at most $(J+1)4^{-T}\leqslant\delta$.
\end{proof}

To leading order, the check reproduces the conditions of Theorem~\ref{thm:main-clean-depo}. Let $D\to\infty$ with $tD2^{-D}\to0$, as in its proof, and $\varepsilon_{\mathrm{tot}}\to0$ with $(1-\lambda)^2$ bounded below. Since $K_{b,1}=D-\lceil\sqrt{D}\,\rceil+1$, the first weight is $w_1=[1+O(D^{-1/2})]/D$, with $1/w_1\leqslant D$. Every later stage starts at dimension at least $\lceil\sqrt{D}\,\rceil$, and $\mu_{\max}\geqslant\frac{1}{16}$, so
\begin{equation}
0\leqslant\frac{W-w_1}{w_1}\leqslant16D\left[(J-1)\,2^{-\lceil\sqrt{D}\rceil}+t\,2^{-D}\right]\to0 .
\end{equation}
Thus $W=(1+o(1))/D$: as in the proof of the theorem, the first jump takes almost all of the weighted budget. This only needs the later weights to be small in sum, not each smaller than the one before. Moreover, $\kappa\to0$, $\Delta\varepsilon_-/\gamma_{\min}\to0$, $\gamma_{\min}^2=(1-\lambda)^4(1+o(1))$, and $K_g\geqslant\bar t/\varepsilon_{\mathrm{tot}}\to\infty$, so the concentration and rounding corrections in Eq.~\eqref{eq:clean-batch-size} vanish uniformly over stages. With $\Delta\varepsilon_{i,-}\geqslant(1-\kappa)^2\varepsilon_{\mathrm{tot}}w_i/W$ in $M_i$, the left-hand side of Eq.~\eqref{eq:clean-check} is at most
\begin{equation}
\label{eq:clean-check-leading}
(1+o(1))\max\left\{\frac{n+t}{n-t},\;\frac{\bar t}{(n-t)\,\varepsilon_{\mathrm{tot}}}\right\}\frac{1-(1-\lambda)^4}{(1-\lambda)^4},
\end{equation}
where the first entry comes from the first stage, since later stages enter it only through the negligible weights $w_i\leqslant W-w_1$. Requiring each entry, times the last factor, to be at most one gives the two conditions of the theorem: the first is its noise condition, and the second its accuracy condition with $C=\bar t/t$, which tends to one for $t\gg1$ and is an explicit constant for bounded $t$; once $\varepsilon\ll1$, also $\varepsilon_{\mathrm{tot}}=\varepsilon$. So if both products stay below one by a fixed margin, the check passes for all large enough $D$. This certifies neither the boundary itself nor margins that vanish in the limit, and at fixed $\varepsilon$ it is Eq.~\eqref{eq:clean-check} itself that decides.

Appendix~\ref{app:verifier-complexity} counts the cost of this schedule; with retained purity bounded below, it gives the clean-mode bounds in Eq.~\eqref{eq:explicit-learning-costs}.

\section{Costs of the Three Finder Modes}
\label{app:verifier-complexity}

We count classical bit operations using ordinary binary linear algebra. A batch of $M_{\mathrm{batch}}$ Bell difference samples costs $O(n^2M_{\mathrm{batch}}+n^3)$ to post-select, form its span, and compute the radical. The cached verifier costs $O(nM_{\mathrm{ver}}+n^2)$ for a single-generator query and $O(n^2M_{\mathrm{ver}}+n^3)$ for a radical query. Its initialization costs $O(nM_{\mathrm{ver}})$. One Bell sample uses two state copies and one independent Bell difference sample uses four, so verifier and finder sample costs add. The main text leaves three vanishing factors as $o(1)$: the verifier slack, the reserved fraction of the accuracy budget, and the factor in the intermediate-mode budget $\varepsilon_{\mathrm{tot}}$. For explicit costs we set all three to one inverse-logarithmic value $\kappa$.

\textit{Top-down mode.} The gap is $\varepsilon/n$, giving
\begin{equation}
M_{\mathrm{ver}}=O\!\left(
\frac{n^2}{\varepsilon^2}
\left[n\ln(2n/\varepsilon)+\ln(1/\delta_{\mathrm{ver}})\right]\right).
\end{equation}
The proof of Theorem~\ref{thm:top-down}, keeping its confidence terms, requires
$T=O(4^t(n+\ln(1/\delta))/\gamma^2)$ Bell difference samples. There are at most $T$ single-generator queries. Thus the algorithm uses $2M_{\mathrm{ver}}+4T$ state copies and
$O(T(nM_{\mathrm{ver}}+n^2)+nM_{\mathrm{ver}})$ classical bit operations. For $\gamma=\Omega(1)$ and $0<\varepsilon\leqslant1$, these give the top-down bounds in Eq.~\eqref{eq:explicit-learning-costs}.

\textit{Intermediate mode.} There are at most $n$ accepted updates and at most $n+1$ iterations in total, each with $O(\ln(n/\delta))$ batch attempts. Under a fixed margin in the per-batch success conditions, each batch uses $O((n+t+1)/q_{\min})$ Bell difference samples and makes $O(2^t)$ single-generator queries, where $q_{\min}>0$ is a lower bound on $q_g^\star$ throughout the run. Consequently,
\begin{equation}
N_{\mathrm{diff}}=O\!\left(\frac{n(n+t+1)\ln(n/\delta)}{q_{\min}}\right),
\qquad
N_{\mathrm{query}}=O(2^t n\ln(n/\delta)).
\end{equation}
In the depolarizing analysis, $\Delta\varepsilon_{\mathrm{gap}}=\varepsilon_{\mathrm{tot}}/n$, with $\varepsilon_{\mathrm{tot}}=\min\{\varepsilon,\kappa(1-\lambda)^2\}$. If $(1-\lambda)^2$ is bounded below, Eq.~\eqref{eq:verifier-dataset-cost} gives $M_{\mathrm{ver}}=\widetilde O(n^3/\varepsilon^2)$. Summing the batch-processing and query costs gives
\begin{equation}
O\!\left(n^2N_{\mathrm{diff}}+n^4\ln(n/\delta)
+N_{\mathrm{query}}(nM_{\mathrm{ver}}+n^2)\right).
\end{equation}
For $q_{\min}=\Omega(1)$ and $t\leqslant n$, this is $\widetilde O(2^t n^5/\varepsilon^2)$, while the total state-copy cost is $\widetilde O(n^3/\varepsilon^2)$.

\textit{Clean mode.} Use the explicit schedule of Appendix~\ref{app:clean-parameters}. It has logarithmically many main-stage targets. Their total number of batch attempts, including the terminal stage, satisfies
\begin{equation}
N_{\mathrm{batch}}=O\!\left(
\log(n-t+1)\ln\frac{1+\log(n-t+1)}{\delta}
+\ln\frac1\delta\right).
\end{equation}
Both $N_{\mathrm{batch}}$ and the update bound $N_{\mathrm{tot}}$ are polylogarithmic in $n$ and $1/\delta$. Unsuccessful targets terminate with a reported failure, so these are also runtime caps. Redundant proposals are skipped.

The floor in Eq.~\eqref{eq:clean-allocation-sum} gives
\begin{equation}
\label{eq:clean-verifier-gap}
\Delta\varepsilon_{\min}=\frac{\kappa\varepsilon_{\mathrm{tot}}}{N_{\mathrm{tot}}},
\qquad
\Delta\varepsilon_{\mathrm{gap}}=\frac{\kappa^2\varepsilon_{\mathrm{tot}}}{N_{\mathrm{tot}}},
\end{equation}
the second being the verifier slack times the floor, with $\varepsilon_{\mathrm{tot}}=\min\{\varepsilon,(1-\lambda)^2/8\}$. Eq.~\eqref{eq:verifier-dataset-cost} then yields $M_{\mathrm{ver}}=\widetilde O(n/\varepsilon_{\mathrm{tot}}^2)$. The batch sizes of Eq.~\eqref{eq:clean-batch-size} have coverage thresholds between $(1-\kappa)\Delta\varepsilon_{\min}$ and $\gamma_{\min}/7$, so with $q_{\min}=\gamma_{\min}^2$ they give
\begin{equation}
N_{\mathrm{diff}}=O\!\left(
\frac{N_{\mathrm{batch}}}{q_{\min}}
\left[n+\frac{(t+1)N_{\mathrm{tot}}}{\kappa\varepsilon_{\mathrm{tot}}}\right]\right).
\end{equation}
There is at most one radical query per batch, including rejected proposals, so the total classical cost is
\begin{equation}
O\!\left(n^2N_{\mathrm{diff}}+
N_{\mathrm{batch}}(n^3+n^2M_{\mathrm{ver}})\right).
\end{equation}
With $q_{\min}=\Omega(1)$, $t\leqslant n$, and $0<\varepsilon\leqslant1$, so that $\varepsilon_{\mathrm{tot}}=\Theta(\varepsilon)$, these give $\widetilde O(n/\varepsilon^2)$ copies and $\widetilde O(n^3/\varepsilon^2)$ classical bit operations. The factors involving $q_{\min}$ and retained purity must be kept when those quantities are not bounded below by constants.

\end{document}